\title{Understanding the worst-kept secret of\\ high-frequency trading\footnote{This research benefited from the financial support of the chairs ``Deep finance \& Statistics'' and ``Machine Learning \& systematic methods in finance'' of \'Ecole polytechnique. The authors thank Ali Baouan for valuable comments.}}
\author{Sergio Pulido\thanks{Universit\'e Paris-Saclay, CNRS, ENSIIE, Univ Evry, Laboratoire de Math\'ematiques et Mod\'elisation d'Evry (LaMME). \textbf{Email:} sergio.pulidonino@ensiie.fr}  \and Mathieu Rosenbaum\thanks{Centre de Mathématiques Appliquées (CMAP), CNRS, École polytechnique, Institut Polytechnique de
Paris. \textbf{Email:} mathieu.rosenbaum@polytechnique.edu} \and Emmanouil Sfendourakis\thanks{Centre de Mathématiques Appliquées (CMAP), CNRS, École polytechnique, Institut Polytechnique de
Paris. \textbf{Email:} emmanouil.sfendourakis@polytechnique.edu}}
\newcommand{\qmax}{\bar{q}}
\newcommand{\defeq}{\vcentcolon=}
\newcommand{\diff}{\mathrm{d}}
\newcommand{\integerInterval}[2][1]{\lbrace #1,\dots, #2 \rbrace}
\newcommandx{\PnL}[1][1={}]{\ifthenelse{\isempty{#1}}{\mathrm{PnL}^{t,Q,y,q}}{\mathrm{PnL}^{#1}}}
\newcommandx{\Proba}[2][1={}, 2={}]{\ifthenelse{\isempty{#1}}{\mathbb{P}}{\mathbb{P}^{#1}}\ifthenelse{\isempty{#2}}{}{\left(#2\right)}}
\newcommandx{\E}[2][1={}, 2={}]{\ifthenelse{\isempty{#1}}{\mathbb{E}}{\mathbb{E}^{#1}}\ifthenelse{\isempty{#2}}{}{\left[#2\right]}}
\newcommandx{\X}[1][1={}]{\ifthenelse{\isempty{#1}}{X^{t,Q,x,y,p,q}}{X^{#1}}}
\newcommandx{\Q}[1][1={}]{\ifthenelse{\isempty{#1}}{Q^{t,Q,q}}{Q^{#1}}}
\newcommandx{\Pm}[1][1={}]{\ifthenelse{\isempty{#1}}{P^{t,y,p}}{P^{#1}}}
\newcommandx{\Pe}[1][1={}]{\ifthenelse{\isempty{#1}}{S^{t,y,p}}{S^{#1}}}
\newcommandx{\Y}[1][1={}]{\ifthenelse{\isempty{#1}}{Y^{t,y}}{Y^{#1}}}
\newcommand{\Qa}{\mathcal{Q}^{+,a}_{n}}
\newcommand{\Qb}{\mathcal{Q}^{+,b}_{n}}
\newcommand{\Qi}{\mathcal{Q}^{+,i}_{n}}
\newcommand{\qa}{\bar{q}^a}
\newcommand{\qb}{\bar{q}^b}
\newcommand{\qi}{\bar{q}^i}
\newcommand{\Qmax}{\bar{Q}}
\newcommand{\ybar}{\bar{y}}
\newcommand{\yplus}{y_{+}}
\newcommand{\yminus}{y_{-}}
\newcommand{\D}{\mathcal{D}}
\DeclareMathOperator*{\argmax}{arg\,max}
\newenvironment*{dummyenv}{}{}
\newtheorem{lemma}{Lemma}
\newtheorem{proposition}{Proposition}
\newtheorem{corollary}{Corollary}
\newtheorem{theorem}{Theorem}
\theoremstyle{definition}
\newtheorem{definition}{Definition}
\theoremstyle{remark}
\newtheorem{rem}{Remark}
\begin{document}

\maketitle

\begin{abstract}
    Volume imbalance in a limit order book is often considered as a reliable indicator for predicting future price moves. In this work, we seek to analyse the nuances of the relationship between prices and volume imbalance. To this end, we study a market-making problem which allows us to view the imbalance as an optimal response to price moves. In our model, there is an underlying efficient price driving the mid-price, which follows the model with uncertainty zones. A single market maker knows the underlying efficient price and consequently the probability of a mid-price jump in the future. She controls the volumes she quotes at the best bid and ask prices. Solving her optimization problem allows us to understand endogenously the price-imbalance connection and to confirm in particular that it is optimal to quote a predictive imbalance. Our model can also be used by a platform to select a suitable tick size, which is known to be a crucial topic in financial regulation. The value function of the market maker's control problem can be viewed as a family of functions, indexed by the level of the market maker's inventory, solving a coupled system of PDEs. We show existence and uniqueness of classical solutions to this coupled system of equations. In the case of a continuous inventory, we also prove uniqueness of the market maker's optimal control policy.
    \\[2ex] 
    \noindent{\textbf{Keywords:} market microstructure, volume imbalance, high-frequency market-making, optimal tick size, stochastic optimal control, Hamilton–Jacobi–Bellman equation, classical solutions.}
\end{abstract}

\section{Introduction}

Volume imbalance's high predictive power of mid-price moves is the \textit{worst-kept secret of high-frequency trading}\footnote{Quoting Sasha Stoikov in a research seminar in 2014.}. In a limit order book, the volume imbalance is defined as
\begin{equation*}
    I = \frac{q^b - q^a}{q^b + q^a}
\end{equation*}
where $q^b$ is the quantity of an asset posted on the best bid price, and $q^a$ is the quantity posted on the best ask price. Recent empirical studies have confirmed the predictive nature of the imbalance: when it is close to 1, the mid-price is likely to jump upwards, and when it is close to -1, the mid-price is likely to jump downwards \parencite{huang_simulating_2015,lehalle_limit_2017,sfendourakis_lob_2023}.

Given the importance of the volume imbalance to explain price moves, practitioners commonly use volume imbalance to estimate an \enquote{efficient} price, representing the belief of all the market participants about the value of the asset. One simple way to estimate an efficient price is to consider the weighted mid-price $P^w \defeq \frac{I+1}{2}P^a + \frac{1-I}{2}P^b$, where $P^a$ and $P^b$ denote the ask and bid price, respectively. The weighted mid-price $P^w$ is closer to the ask price when the imbalance is higher, reproducing the empirical intuition. \textcite{stoikov_micro-price_2018} introduces the more sophisticated notion of \enquote{micro-price}, which is defined in terms of observable quantities among which the volume imbalance plays an important role. The micro-price provides a nice measure of the efficient price because it is a martingale and it is generally less noisy than the weighted mid-price. See also \parencite{delattre2013estimating} for the estimation of the efficient price from the order flow. 

In this work, we propose to take the opposite viewpoint, namely to understand the volume imbalance as a response to prices. We consider a market maker who is the only liquidity provider of the market and who knows an efficient price driving the mid-price. The limit orders of the market maker are executed by market takers. The aggregation of market takers is also well-informed about the efficient price: market orders arrive at a rate depending on the efficient price. The market maker quotes volumes at the best bid and ask prices. The posted volumes do not influence the price moves nor the arrival of market orders. We then investigate the character of the optimal posted volumes and, in particular, whether these volumes generate a predictive imbalance in the limit order book. Our model explains, in an endogenous fashion, the subtle price-imbalance connection.

The market-making literature in continuous time is rich. In the pioneering work of \textcite{avellaneda_high-frequency_2008}, which was studied in more mathematical detail by \textcite{gueant_dealing_2013}, the market maker quotes a fixed quantity of the asset on the bid and the ask side and, contrary to our approach, controls the price of these quotes. Their model, well-suited for over-the-counter markets, has been enriched in various ways: allowing to quote different volumes \parencite{bergault_size_2021,barzykin_algorithmic_2023}, or considering principal-agent problems related to the optimal make-take fees of trading platforms \parencite{el_euch_optimal_2021,baldacci2021optimal,baldacci_market_2023}. Models where the market maker controls the price of her quotes are less suited to large-tick assets where most of the trading activity occurs at a half-tick distance from the mid-price.

Models considering a fixed tick size are adapted to the limit order book framework. This is the point of view taken in \parencite{guilbaud_optimal_2013,guilbaud_optimal_2015,fodra_high_2015} who use a Markovian model for the limit order book. The best limits are observed, and the market maker can quote at them, or in the spread. More recently, \textcite{abergel_algorithmic_2020} propose a model where the whole (up to some depth) order book is modeled, and the market maker can have quotes pending anywhere. In these models, however, there is no efficient price driving the mid-price fluctuations.

In \parencite{baldacci_bid_2020}, two tick grids are considered -- one for the bid, and one for the ask. There is an underlying efficient price observed by the market maker. She can only quote at the considered \enquote{fair} bid and ask prices, lying on their respective tick grid, and evolving according to the model with uncertainty zones, introduced by \textcite{robert_new_2011}: when the efficient price hits a defined barrier, the fair bid or ask price jumps. In this work, however, the market maker has no control on the posted volumes.

We adopt a framework close to the one of \parencite{baldacci_bid_2020}: there is an underlying efficient price, following a Bachelier model, and a fair mid-price that jumps up if the efficient price hits the upper barrier and jumps down if it hits the lower barrier. A large-tick stock is considered, hence the market maker is only allowed to quote at half-tick distance from the mid-price. She controls the volume that she posts on each side. The market orders are represented by marked Poisson processes, where the mark is their volume, and they depend on the distance between the efficient price and the mid-price. Consequently, market orders are not controlled by the market maker, which rules out price manipulation in our model.

The market maker's goal is to maximize the exponential utility of her terminal gains. A Hamilton-Jacobi-Bellman system of PDEs with nonlocal boundary conditions is associated to this control problem. Contrary to the majority of the above mentioned studies on optimal market-making, where viscosity solutions are considered, we prove, using PDE techniques, the existence of a classical solution to this system. Using a verification argument, we show that this solution is equal to the value function, and we give a characterization of the optimal controls. In addition, and remarkably, we are able to derive the uniqueness of the optimal control policy. Numerical approximations of the HJB equation -- and consequently of the optimal posted volumes -- elucidate the connection between prices and volume imbalance. Our numerical experiments indicate that for reasonable choices of the model parameters, the optimal posted volume imbalance is predictive of price moves. More precisely, under reasonable scenarios, the optimal volume imbalance is a monotone function of the distance between the efficient and mid-prices, which in turn determine the price moves via the model with uncertainty zones. 

The paper is organized as follows. In Section \ref{sec:notation}, we introduce the mathematical notation used throughout the paper. In Section \ref{sec:control_problem}, we present the market dynamics and the market maker's control problem. In Section \ref{sec:results}, we state the main existence and uniqueness results of the control problem. Numerical illustrations are given in Section \ref{sec:numerics}. In particular, we present an interesting application of our framework to determine optimal tick sizes. We prove the verification theorem in Section \ref{sec:verification}. In Section \ref{sec:hjb}, we prove the existence of a classical solution to the HJB equation. Section \ref{sec:uniqueness} contains the proof of the uniqueness of the control policy. Some additional proofs are relegated to the appendices.

\section{Spaces and notations}\label{sec:notation}

We use many notations from \parencite{friedman_partial_1983} for the domains and norms. All the functions mentioned in this section are real-valued.

For a subset $D$ of $\mathbb{R}^n$, we denote by $\partial D$ its boundary. Let $T> 0$ and $a,b \in \mathbb{R}$, with $a<b$. For $P, Q \in [0,T] \times [a,b]$, $P = (t,y)$, $Q=(t',y')$, we denote $d(P,Q) \defeq \sqrt{|y-y'|^2 + |t-t'|}$ the parabolic distance between $P$ and $Q$. Let $D \subset [0,T] \times [a,b]$. For $P=(t,y) \in D$, we define
\begin{equation*}
    d^D_P \defeq \min\left\{d(P,Q): Q=(t',y') \in \partial D, t' \geqslant t \right\}
\end{equation*}
the distance from the \enquote{parabolic boundary} (here we consider parabolic PDEs with terminal condition and not initial one, as it suits our problem better). For $P, Q \in [0,T] \times [a,b]$, we set $d^D_{PQ} \defeq \min(d_P, d_Q)$.
When $D$ is the whole space $[0,T] \times [a,b]$, we omit the superscript $D$.

For a function $u$ defined on a space $X$, we write $|u|_{\infty} \defeq \sup_X |u|$. We do not specify $X$ when it is clear from the context. Let $C([0,T]\times [a,b])$ be the space of continuous functions on $[0,T] \times [a,b]$ and $C^{1,2}([0,T) \times (a,b))$ the functions $u$ on $[0,T) \times (a,b)$ that are one time differentiable with respect to the first variable (we write $\partial_t u$ for this derivative) and two times differentiable with respect to the second variable (we denote by $\partial_y u$ and $\partial^2_{yy}u$ these derivatives) and such that $\partial_t u$, $\partial_y u$ and $\partial^2_{yy}u$ are continuous on $[0,T) \times (a,b)$.

Let $k \geqslant 0$ and let $u$ be a function defined on $[0,T) \times (a,b)$. We define
\begin{equation*}
    |d^k u|_{\infty} \defeq \sup_{P \in [0,T) \times (a,b)} \left|d_P^k u(P) \right|.
\end{equation*}
For $\delta \in (0,1)$, we write
\begin{equation*}
    |u|_{\delta} \defeq |u|_{\infty} + \sup_{\substack{P,Q \in [0,T) \times (a,b) \\ P \neq Q}}d^{\delta}_{PQ}\frac{|u(P) - u(Q)|}{d(P,Q)^{\delta}}.
\end{equation*}
We write $u \in C^{\delta}$ if and only if $|u|_{\delta} < \infty$. For $k \geqslant 0$, we define
\begin{equation*}
    |d^ku|_{\delta} \defeq |d^ku|_{\infty} + \sup_{\substack{P,Q \in [0,T) \times (a,b) \\ P \neq Q}}d^{k+\delta}_{PQ}\frac{|u(P) - u(Q)|}{d(P,Q)^{\delta}}.
\end{equation*}
Finally, for $u \in C^{1,2}([0,T) \times (a,b))$, we define
\begin{equation*}
    |u|_{2+\delta} \defeq |u|_{\delta} + |d^2 \partial_t u|_{\delta} + |d\partial_y u|_{\delta} + |d^2 \partial^2_{yy} u|_{\delta},
\end{equation*}
and we write $u \in C^{2+\delta}$ if and only if  $|u|_{2+\delta} < \infty$.
We remark that for all $k \geqslant 0$, there exists a constant $C_k$ depending only on the domain such that $|d^k u|_{\delta} \leqslant C_k|u|_{\delta}$.

For a subset $D$ of $[0,T) \times (a,b)$, we define the \enquote{classical} Hölder seminorm of $u$ on $D$ by
\begin{equation*}
    H^{\delta}_D (u) \defeq \sup_{\substack{P,Q \in D \\ P \neq Q}} \frac{|u(P) - u(Q)|}{d(P,Q)^{\delta}}.
\end{equation*}
Let $K$ be a compact included in $[0,T) \times (a,b)$. It is easy to see that there exists a constant $C > 0$ (depending on $K$) such that, for $u$ defined on $[0,T) \times (a,b)$,
\begin{equation*}
    H^{\delta}_K (u) \leqslant C |u|_{\delta}.
\end{equation*}
We also have a constant $C' > 0$ (depending on $K$) such that, for $u \in C^{1,2}([0,T) \times (a,b))$,
\begin{equation*}
    H^{\delta}_K (u) + \sup_K|\partial_t u| + H^{\delta}_K (\partial_t u)+ \sup_K|\partial_y u| + H^{\delta}_K (\partial_y u) + \sup_K|\partial^2_{yy} u| + H^{\delta}_K (\partial^2_{yy} u) \leqslant C' |u|_{2+ \delta}.
\end{equation*}

For a finite measure $\mu$ on an interval $I$ of $\mathbb{R}$ with finite support (i.e. $\mu = \sum_{i \in J}a_i\delta_{i}$ where $J$ is a finite subset of $I$ and $\delta_i$ is the Dirac measure at $i$) and a function $f$ defined on $I$, we write$\int_{I} f(z) \mu(\diff z) = \sum_{i \in J}a_i f(i)$ without supposing any measurability condition on $f$. In general, \enquote{measurability} on an interval on $\mathbb{R}$ shall always be understood in the Borel sense.

We denote by $\mathbb{N}$ the set of natural integers containing 0, and $\mathbb{N}^* \defeq \mathbb{N} \setminus \{0\}$.
We use the notations $x \wedge y \defeq \min\{x,y\}$ and $x \vee y \defeq \max\{x,y\}$. For a subset $A\subset \mathbb{R}$,  $\bar{A}$ is the closure of $A$.
Whenever we take an expectation with respect to a probability measure $\mathbb{Q}$, we write $\mathbb{E}^{\mathbb{Q}}$. If $\mathbb{Q}=\mathbb{P}$, we omit the superscript in the expectation.

\section{Formulation of the control problem}
\label{sec:control_problem}

We consider an optimal market-making problem on a large-tick stock whose mid-price is not controlled by the market maker. The mid-price lies on the inter-tick grid and  follows a variant of the model with uncertainty zones as in \parencite{robert_new_2011}. The (unique) market maker can only quote volumes at the best bid and ask prices which are located half-tick distance from the mid-price. The admissible quotes keep her inventory bounded over a finite interval. She has information about the probability of future mid-price moves due to the fact that she observes the efficient price governing the mid-price dynamics. The knowledge of the efficient price can be interpreted as a signal for the market maker. Market orders arrive on the bid and ask sides with an intensity that depends on the distance between the efficient price and the mid-price, and consequently they cannot be influenced or manipulated by the actions of the market maker. The objective of the market maker is to maximize the expected utility of the profits over a finite time horizon. We consider and exponential utility function for the market maker. We now describe in detail the notation and the specific mathematical setting of this market-making control problem.
\begin{rem}
    \label{rem:efficient_price_interpretation}
    Following a large part of the economic literature on the topic, we adopt a price discovery approach (in contrast to price formation) where the fair price is determined and discovered by market participants through the trading process. In practice, the efficient price is not perfectly observable, and market participants use limit order book information, order flow, and transaction data to estimate it, see \parencite{delattre2013estimating, robert_new_2011, stoikov_micro-price_2018}. Another source of information can be a primary market when the market maker has to quote on a secondary market, as is the case for FX or fixed income. 
\end{rem}

\subsection{The market}

Let $T \in (0,\infty)$ be the trading horizon. The market maker's inventory is bounded in absolute value by $\Qmax$. We consider two different frameworks depending on whether the market maker can quote volumes in a discrete or continuous fashion. In the first case, we fix $n\in\mathbb{N}^*$, and consider traded volumes which are multiples of $\frac{\Qmax}{n}$. For the continuous case, labeled with $n=\infty$, volumes can take values over a continuous interval. The bid and ask quoted volumes should be bounded by $\qa$ and $\qb$, respectively. Necessarily, $\qa,\qb\leqslant 2\Qmax$ and $(\qa, \qb) \in \left(\frac{\Qmax}{n}\mathbb{N}^*\right)^2$ (if $n < \infty$), $(\qa, \qb) \in (0,\infty)^2$ (if $n = \infty$).


If $n < \infty$, we introduce the sets of possible traded volumes $\Qa \defeq \left\{0,\frac{\Qmax}{n},\dots,\qa - \frac{\Qmax}{n},\qa\right\}$ and $\Qb \defeq \left\{0,\frac{\Qmax}{n},\dots,\qb - \frac{\Qmax}{n},\qb\right\}$. For $n=\infty$, let $\mathcal{Q}_{\infty}^{+,a} \defeq [0,\qa]$ and $\mathcal{Q}_{\infty}^{+,b} \defeq [0,\qb]$. Let $\mu^a$ (resp. $\mu^b$) be a probability measure on $\Qa$ (resp. $\Qb$) such that $\qa$ (resp. $\qb$) belongs to the support of $\mu^a$ (resp. $\mu^b$). The volumes of the arriving market orders will be independent, with distribution $\mu^a$ on the ask side and $\mu^b$ on the bid side.

\begin{rem}
    The assumption $\qa \in \mathrm{supp}(\mu^a)$ and the corresponding one on the bid side are made for identifiability purposes. Otherwise, one can always restrict $\Qa$ to its elements smaller than $\max \mathrm{supp}(\mu^a)$.
\end{rem}

To describe the dynamics of the mid-price and the market orders, we consider a filtered probability space $(\Omega, \mathcal{F}, (\mathcal{F}_t)_{t \in [0,T]}, \mathbb{P})$ satisfying the usual hypotheses which supports a Brownian motion $W$ and two marked Poisson processes $N^a$ (on $[0,T] \times \Qa$) and $N^b$ (on $[0,T] \times \Qb$), all independent. We assume that $N^a(\diff t ,\diff z)$ and $N^b(\diff t ,\diff z)$ have intensity kernels $\diff t\mu^a(\diff z)$ and  $\diff t\mu^b(\diff z)$, respectively. The processes $N^a,N^b$ model the dynamics of the market orders.

As in \parencite{avellaneda_high-frequency_2008}, we consider a stock whose efficient price $S$ follows a Bachelier model $\diff S_t = \sigma \diff W_t$ with volatility parameter $\sigma>0$. The tick size is $\delta>0$ and the mid-price $P$ is endogenous taking values on the inter-tick grid $\frac{\delta}{2} + \delta \mathbb{Z}$. The mid-price moves are modeled following a variant of the model with uncertainty zones defined by \textcite{robert_new_2011} with relative size of the uncertainty zone $\eta>0$: when $S$ hits the barrier $P_{t-} + \delta \left(\frac{1}{2} + \eta\right)$, then $P$ jumps upwards (i.e. $P_t = P_{t-} + \frac{\delta}{2}$), and when $S_t = P_{t-}-\delta\left(\frac{1}{2}+\eta\right)$, then $P$ jumps downwards (i.e. $P_t = P_{t-} - \frac{\delta}{2}$). The quantity $\delta \eta> 0$ is the size of the uncertainty zone. A similar model for price jumps was used by \textcite{baldacci_bid_2020}. A sample path is drawn in Figure \ref{fig:path_uncertainty_zones}.

The size of the uncertainty zone $\eta$ can be seen as a mean-reversion ratio: the lower it is, the higher the chances that the mid-price reverts to its previous value quickly after a change. As argued in \parencite{robert_new_2011}, it can be estimated empirically with $\hat{\eta} =\frac{N^c}{2 N^a}$, where $N^c$ is the number of \enquote{continuations} (occurring  when the mid-price moves in the same direction as its previous move) and $N^a$ is the number of \enquote{alternations} (occurring when the mid-price moves in the opposite direction).

\begin{rem}
    When no volume is quoted on one side, say the ask, $P_t$ is no longer the mid-price in the strict sense. It can be seen as a \enquote{reference mid-price}, as in \parencite{huang_simulating_2015}.
\end{rem}

\begin{figure}
    \centering
    \includegraphics[width=\textwidth]{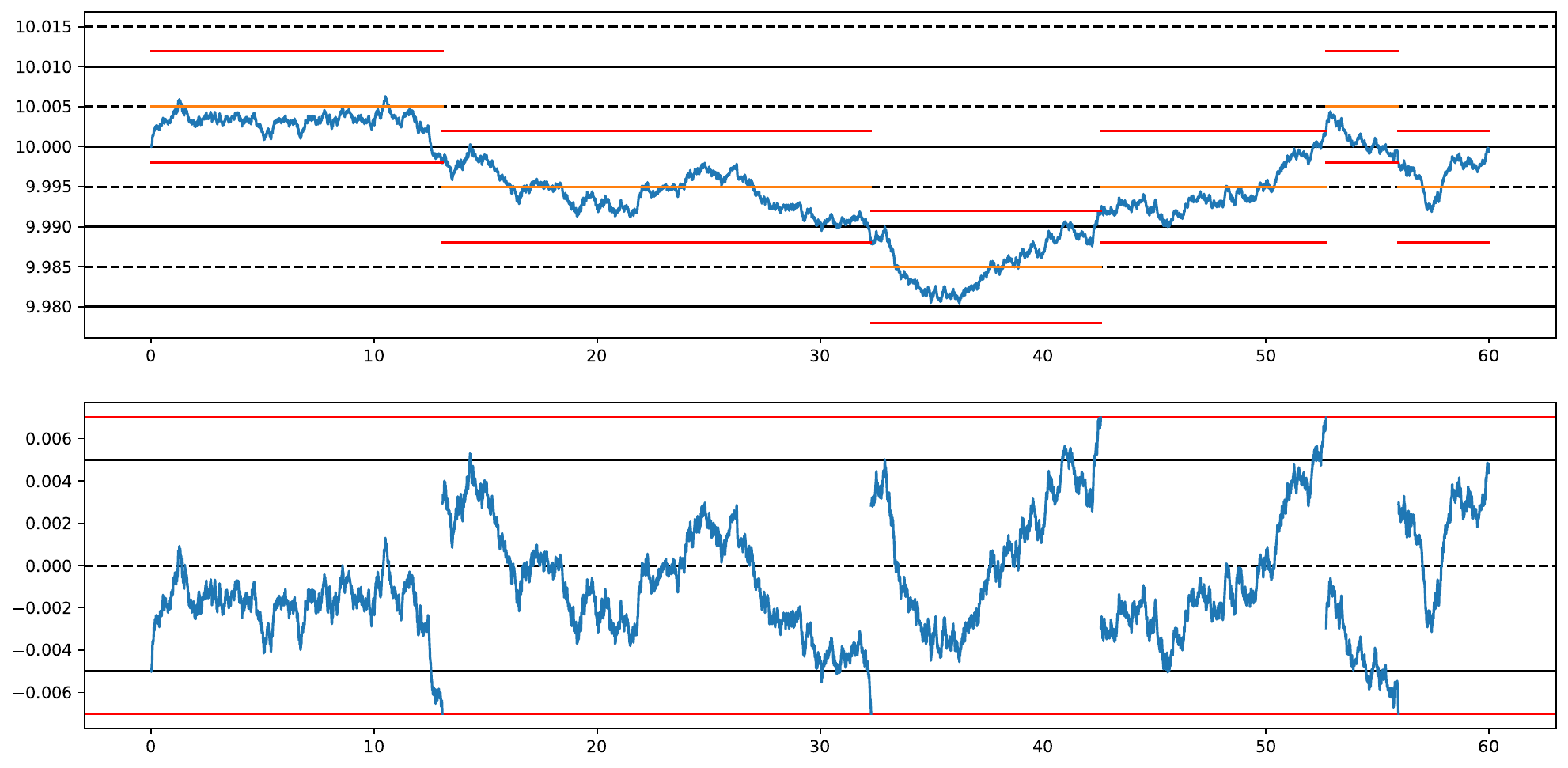}
    \caption{Above: One sample path of $S$ (in blue) and $P$ (in orange) on $[0,60]$. The parameters are $\sigma=0.002$, $\delta=0.01$, $\eta=0.2$. The uncertainty zones are drawn in red. Below: The corresponding $Y$ process.}
    \label{fig:path_uncertainty_zones}
\end{figure}

Instead of considering $S$ and $P$ separately, it is convenient to use the signed distance between the efficient price and the mid-price $Y = S-P$. We denote by $\Y$ this signed distance with $\Y_s = y$ for $s \leqslant t$. More rigorously, $\Y_s = Y(t,s,y,\sigma W)$ where $Y$ is the process taking values in $\mathcal{Y} \defeq \left(-\delta\left(\eta + \frac{1}{2}\right), \delta\left(\eta + \frac{1}{2}\right)\right)$ built in Appendix \ref{sec:midprice_construction} with $a = - \delta\left(\eta + \frac{1}{2}\right)$, $b = \delta\left(\eta + \frac{1}{2}\right)$, $a_0=\delta\left(-\eta + \frac{1}{2}\right)$, and $b_0=\delta\left(\eta - \frac{1}{2}\right)$. Let $p \in \frac{\delta}{2} + \mathbb{Z}$. From $\Y$, we can define $\Pm$ the mid-price process starting at $p$ at time $t$ as 
\begin{equation*}
     \Pm_s \defeq p + \delta\sum_{u \in [0,s]} \left(\mathds{1}_{\left\{\Delta \Y_u < 0\right\}} - \mathds{1}_{\left\{\Delta \Y_u > 0\right\}}\right),\quad s \in [0,T]
\end{equation*}
where $\Delta \Y_s = \Y_s - \Y_{s-}$. In order to lighten the notation, we will often write $\bar{y} \defeq \delta\left(\eta + \frac{1}{2}\right)$, $y_+ \defeq \delta\left(\eta - \frac{1}{2}\right)$ and $y_- \defeq \delta\left(-\eta + \frac{1}{2}\right)$.

Let $\Lambda^a, \Lambda^b:[0,T) \times \mathcal{Y} \mapsto (0,\infty)$ be two functions belonging to $C^{\alpha}$ for some $\alpha > 0$. In particular, $\Lambda^a, \Lambda^b$ are bounded by a constant $\Lambda^* > 0$. 
Let $\mathbb{P}^{t,y}$ a probability on $(\Omega, \mathcal{F}, (\mathcal{F}_t)_{t \in [0,T]})$, equivalent to $\Proba$, under which $N^a(\diff t, \diff z)$ and $N^b(\diff t, \diff z)$ have the stochastic intensity kernels $\Lambda^a(t,\Y_t) \mu^a(\diff z) \diff t$ and $\Lambda^b(t,\Y_t) \mu^b(\diff z) \diff t$, respectively. More precisely, 
\begin{equation*}
    \frac{\diff \mathbb{P}^{t,y}}{\diff \mathbb{P}} \defeq L^{t,y}_T
\end{equation*}
where $L^{t,y}$ is the Doléans-Dade exponential martingale given by
\begin{equation*}
    L^{t,y}_s = \prod_{i \in \{a,b\}} \exp\left(
        \int_{(0,s] \times \Qi} \ln\left(\Lambda^i(u,\Y_u)\right) N^{i}(\diff u, \diff z)
        - \int_0^s (\Lambda^i(u,\Y_u) - 1) \diff u
     \right), \quad s\in[0,T].
\end{equation*}
$L^{t,y}$ indeed defines a martingale, since $\Lambda^i$ is bounded \parencite[Section VI, Theorem 4]{bremaud_point_1981}.
In this way, under $\mathbb{P}^{t,y}$, the intensity of arrival of market orders depends on the distance between the efficient price and the mid-price. Whenever we take an expectation with respect to $\mathbb{P}^{t,y}$ we will write $\E[t,y][]$.

\subsection{The market maker's problem}

The market maker can quote, at prices $P \pm \frac{\delta}{2}$, volumes in $\Qa$ and $\Qb$ on the ask and bid side, respectively. We denote by $\mathcal{A}^{big}$ the space of predictable $\Qa \times \Qb$-valued processes.

Following the interpretation at the end of Remark \ref{rem:efficient_price_interpretation}, we suppose that in the primary market, the considered stock is a large-tick stock, meaning the bid-ask spread is almost always equal to one tick, and the best quotes are placed at $P \pm \frac{\delta}{2}$. If the market maker, on the platform where she trades, places her quote further, no one will trade with her since her price is worse than the one they would get on the other platform.

\begin{rem}
    When the market maker decides not to quote at the best bid or ask, the spread effectively increases. For example, if the efficient price is higher than $P + \frac{\delta}{2}$ and she decides not to place an ask order at $P + \frac{\delta}{2}$, she could place a bid limit at $P + \frac{\delta}{2}$ and still make a profit. Since she can also have her bid order executed at $P - \frac{\delta}{2}$, earning $\delta$ more (albeit with a lower probability), she would have to choose where to place the bid order. This would complicate the control problem, so we will not consider this case.
\end{rem}

Her inventory at time $s \in [0,T]$, starting at $t$ with value $Q$, if she chooses a control $q \in \mathcal{A}^{big}$, is
\begin{equation*}
    \Q_s \defeq Q + \int_{(t,s \vee t] \times \Qb} \left(q^b_u \wedge z\right) N^b(\diff u, \diff z) - \int_{(t,s \vee t] \times \Qa} \left(q^a_u \wedge z\right) N^a(\diff u, \diff z).
\end{equation*}

\begin{rem}
    In the formula above we have supposed that if a market order of size $z$ occurred, but the market maker only quoted $q < z$, then the whole $q$ she posted is consumed.
\end{rem}

As in \parencite{gueant_dealing_2013}, \parencite{el_euch_optimal_2021}, \parencite{barzykin_algorithmic_2023}, the market maker wishes to have an inventory bound in order to limit her exposure to big price moves. She therefore uses admissible controls from the space
\begin{equation*}
    \mathcal{A}^{t,Q} \defeq \left\{ q \in \mathcal{A}^{big} : q^a_s \leqslant \Q_{s-} + \Qmax \text{ and } q^b_s \leqslant -\Q_{s-} + \Qmax \text{, for all $s \in [0,T]$}\right\} 
\end{equation*}
if her inventory at time $t$ is $Q \in \mathcal{Q}_n$, where $\mathcal{Q}_n \defeq \left\{ -\Qmax, - \Qmax + \frac{\Qmax}{n},\dots, \Qmax - \frac{\Qmax}{n}, \Qmax\right\}$ if $n < \infty$ and $\mathcal{Q}_{\infty} \defeq \left[-\Qmax, \Qmax\right]$ if $n = \infty$.

If at time $t$ the inventory of the market maker is $Q \in \mathcal{Q}_n$, the mid-price is $p$, and the efficient price is $p+y$, her profit at the end of trading period using the control policy $q \in \mathcal{A}^{t,Q}$ is
\begin{equation*}
    \begin{split}
        \Q_T S_T -& \int_{(t,s \vee t] \times \Qb} \left(q^b_u \wedge z\right)\left(\Pm_u - \frac{\delta}{2}\right) N^b(\diff u, \diff z)\\
         &+ \int_{(t,s \vee t] \times \Qa} \left(q^a_u \wedge z\right) \left(\Pm_u + \frac{\delta}{2}\right)N^a(\diff u, \diff z),
    \end{split}
\end{equation*}
assuming that she is able to liquidate her position at the efficient price in the end. By a simple application of Itô's formula, this quantity is equal to $\PnL_T$ where, for $s\in[0,T]$,
\begin{equation*}
    \begin{aligned}[t]
        \PnL_s = 
        &\int_{(t,s \vee t] \times \Qa} \left(q^a_u \wedge z\right) \left(\frac{\delta}{2} - \Y_u\right)N^a(\diff u, \diff z)\\
        &+
        \int_{(t,s \vee t] \times \Qb} \left(q^b_u \wedge z\right)\left(\frac{\delta}{2} + \Y_u\right) N^b(\diff u, \diff z)\\
        &+\sigma\int_t^{s \vee t} \Q_u \diff W_u.
    \end{aligned}
\end{equation*}

Let $\ell:\mathcal{Q}_n \mapsto \mathbb{R}$ be a \enquote{penalty} function for the inventory held after the trading period. If $n = \infty$, we assume that $\ell$ is continuous. The market maker optimizes an exponential utility function with risk aversion coefficient $\gamma > 0$. More precisely, at time $t$, if $Y = y$ and her inventory is $Q$, she aims to optimize
\begin{equation*}
    J(t,Q,y,q) \defeq \E[t,y][-e^{-\gamma\left(\PnL_T - \ell(\Q_T)\right)}]
\end{equation*}
over $q \in \mathcal{A}^{t,Q}$. We denote henceforth her optimal value function as
\begin{equation}
    \label{eq:value_function}
    U(t,Q,y) \defeq \sup_{q \in \mathcal{A}^{t,Q}} J(t,Q,y,q),\quad (t,Q,y) \in [0,T] \times \mathcal{Q}_n \times \mathcal{Y}.
\end{equation}

\section{Main results}
\label{sec:results}

Before stating the Hamilton-Jacobi-Bellman equation associated to the control problem \eqref{eq:value_function}, we define the corresponding Hamiltonians.
\begin{definition}\label{def:Hamiltonians}
    For $Q \in \mathcal{Q}_n$, $y \in \bar{\mathcal{Y}}$ and a bounded function $\phi:\mathcal{Q}_n \mapsto \mathbb{R}$, we define the Hamiltonians of our problem, provided that they are well-defined (i.e. either $\phi$ is measurable or the measures have finite support):
    \begin{align*}
        H_n^a(\phi,Q,y, \mu^a) & \defeq \sup_{q \in \Qa \cap [0, Q + \Qmax]}
        \int_{\Qa}e^{-\gamma (q \wedge z)\left(\frac{\delta}{2}-y\right)}\phi(Q - z \wedge q)\mu^a(\diff z)\\
        H_n^b(\phi,Q,y, \mu^b) & \defeq \sup_{q \in \Qb \cap [0, -Q + \Qmax]}
        \int_{\Qb}e^{-\gamma (q \wedge z)\left(\frac{\delta}{2}+y\right)}\phi(Q + z \wedge q)\mu^b(\diff z).
    \end{align*}
    We shall omit the arguments $n$, $\mu^a$ and $\mu^b$ whenever it is clear from the context.
\end{definition}

The Hamilton-Jacobi-Bellman (HJB) equation for our problem is, for $(t,y) \in [0,T) \times \mathcal{Y}$ and $Q \in \mathcal{Q}_n$,

\begin{equation}
    \label{eq:HJB_interior}
    \begin{aligned}
        0 = \partial_t u^Q(t,y) &+ \frac{\sigma^2}{2}\partial^2_{yy} u^Q(t,y)
        -\sigma^2 \gamma Q \partial_y u^Q(t,y) + \frac{\sigma^2 \gamma^2 Q^2}{2}u^Q(t,y) - \left(\Lambda^a + \Lambda^b\right)(t,y)u^Q(t,y)\\
        &+\Lambda^a(t,y) H^a\left(\left(u^R(t,y)\right)_{R \in \mathcal{Q}_n}, y, Q\right)
        +\Lambda^b(t,y) H^b\left(\left(u^R(t,y)\right)_{R \in \mathcal{Q}_n}, y, Q\right)
    \end{aligned}
\end{equation}
with boundary conditions, for $(t,y,Q) \in [0,T] \times \bar{\mathcal{Y}} \times \mathcal{Q}_n$,
\begin{equation}
    \label{eq:HJB_border}
    \left\{
        \begin{array}{ll}
            u^Q(T, y) &=-e^{\gamma\ell(Q)} \\
            u^Q\left(t, \ybar\right) &= u^Q\left(t, \yplus\right) \\
            u^Q\left(t, -\ybar\right) &= u^Q\left(t, \yminus\right).
        \end{array}
    \right.
\end{equation}

\begin{rem}
    In equation \eqref{eq:HJB_interior} we use the following convention: for each $(t,y)$, $\left(u^R(t,y)\right)_{R \in \mathcal{Q}_n}$ denotes the real-valued function $R \in \mathcal{Q}_n \mapsto u^R(t,y)$.
\end{rem}
Theorem \ref{thm:existence} below is our first main result. The first part, which is proven in Section \ref{sec:hjb}, states the existence of classical solutions of the HJB equation \eqref{eq:HJB_interior}-\eqref{eq:HJB_border}. The second part, proved in Appendix \ref{sec:convergence_discrete_continuous}, shows that the continuous inventory case can be uniformly approximated by the discrete inventory case. 

\begin{theorem}[Existence and convergence]
    \label{thm:existence}
    $ $
    \begin{itemize}
        \item[(i)] There exists a family of functions $(u^Q)_{Q \in \mathcal{Q}_n}$ and $\beta \in (0,1)$ such that: 
        \begin{itemize}
            \item $(t,y,Q)\mapsto u^Q(t,y)$ is continuous on $[0,T] \times \bar{\mathcal{Y}} \times \mathcal{Q}_{n}$,
            \item for all $Q \in \mathcal{Q}_n$, $u^Q \in  C^{1,2}([0,T)\times \mathcal{Y})$,
            \item $\sup\limits_{Q \in \mathcal{Q}_n}|u^Q|_{2+\beta}<\infty$,
            \item the HJB equation \eqref{eq:HJB_interior}-\eqref{eq:HJB_border} holds.
        \end{itemize}
        Furthermore, the constant $\beta$ can be chosen to depend only on $\alpha$, $T$, $\Qmax$, $\delta$, $\gamma$, $\sigma$, $\eta$, $\Lambda^*$ and not on $n$, $\mu^a$, $\mu^b$, $\qa$, $\qb$.
        \item[(ii)] Fix $\qa, \qb \in (0,2\Qmax]^2$. For $n \in \mathbb{N}^*$, let $\qa_{n} \defeq \frac{\Qmax}{n}\left\lfloor \frac{n}{\Qmax}\qa \right\rfloor$ and $\qb_{n} \defeq \frac{\Qmax}{n}\left\lfloor \frac{n}{\Qmax}\qb \right\rfloor$. Define, just for the scope of this theorem, $\mathcal{Q}_{n}^{+,a} \defeq \left\{0,\frac{\Qmax}{n},\dots,\qa_{n} - \frac{\Qmax}{n},\qa_{n}\right\}$ and $\mathcal{Q}_{n}^{+,b} \defeq \left\{0,\frac{\Qmax}{n},\dots,\qb_{n} - \frac{\Qmax}{n},\qb_{n}\right\}$.
    Let $(u^Q)_{Q \in \mathcal{Q}_{\infty}}$ be a solution given by point (i) with $n=\infty$, and $\mu^a$, $\mu^b$ be two probability measures on $\mathcal{Q}^{+,a}_{\infty}$ and $\mathcal{Q}^{+,b}_{\infty}$, respectively.\\
    Then, there exist two sequences of probability measures $(\mu^a_{n})_{n \in \mathbb{N}^*}$, $(\mu^b_{n})_{n \in \mathbb{N}^*}$ such that for $n \in \mathbb{N}^*$, $\mu^a_{n}$ (resp. $\mu^b_{n}$) is a measure on $\mathcal{Q}_{n}^{+,a}$ (resp. $\mathcal{Q}_{n}^{+,b}$) and $\qa_{n}$ (resp. $\qb_{n}$) is in its support, and if $(u_n^Q)_{Q \in \mathcal{Q}_{n}}$ verifies all the conditions of point (i), with associated measures $\mu^a_{n}$, $\mu^b_{n}$, we have
    \begin{equation*}
        \lim_{n \to \infty} \sup_{Q \in \mathcal{Q}_{n}}\left|u^{Q}_{n}-u^Q \right|_{\infty} = 0.
    \end{equation*}
    \end{itemize}
\end{theorem}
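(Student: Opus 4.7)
The plan is to establish part (i) via a Schauder fixed-point argument in the weighted Hölder spaces of Section \ref{sec:notation}, and part (ii) by a compactness-plus-identification argument based on the uniform Hölder bound obtained in (i). I fix $n \in \mathbb{N}^* \cup \{\infty\}$ throughout.

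\textbf{Part (i).} An a priori bound on the value function $U$, using the explicit expression of $\PnL_T$ together with the uniform boundedness of $\Y$, $|\Q|$, $\qa$, $\qb$ and of $\ell$ on $\mathcal{Q}_n$, yields a constant $M > 0$ such that $-e^{\gamma M} \leqslant U \leqslant 0$. I would introduce the closed convex set $\mathcal{C}$ of continuous functions $v:[0,T] \times \bar{\mathcal{Y}} \times \mathcal{Q}_n \to [-e^{\gamma M},0]$ and define a map $\Phi$ sending $v$ to $u = \Phi(v)$, where $u^Q$ solves the \emph{linear} parabolic PDE obtained from \eqref{eq:HJB_interior} by freezing the Hamiltonian's function argument to $(v^R)_{R \in \mathcal{Q}_n}$, with terminal data $u^Q(T,y) = -e^{\gamma\ell(Q)}$ and linear Dirichlet conditions $u^Q(t,\pm\ybar) = v^Q(t,\pm\yplus)$ at the parabolic side boundary. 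The first technical step is to verify that $H_n^a, H_n^b$ of Definition \ref{def:Hamiltonians} are Lipschitz in their function argument and $C^\alpha$ in $y$ with constants independent of $n, \mu^a, \mu^b$; this uses the uniform boundedness of the exponential factors $e^{-\gamma(q \wedge z)(\delta/2 \pm y)}$ when $y \in \bar{\mathcal{Y}}$ and $q \wedge z \leqslant 2\Qmax$.

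The core of part (i) is then to apply the Schauder theory of \parencite{friedman_partial_1983} to the linear PDE for each fixed $Q$: the source and the boundary data are $C^\beta$ for some $\beta \in (0,1)$ depending only on $\alpha$ and the Lipschitz modulus of $v \mapsto H^{a,b}((v^R),\cdot,\cdot)$, which gives $|\Phi(v)^Q|_{2+\beta} \leqslant C_0$ with $C_0$ depending only on $\alpha, T, \Qmax, \delta, \gamma, \sigma, \eta, \Lambda^*$. A maximum-principle argument confirms $\Phi(\mathcal{C}) \subset \mathcal{C}$, and joint continuity of $\Phi(v)^Q(t,y)$ in $(t,y,Q)$ transfers from the smooth dependence of the coefficients on $Q$ together with the $C^{2+\beta}$ estimates. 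The uniform weighted $C^{2+\beta}$ bound, combined with Arzel\`a--Ascoli on compact subsets of $[0,T) \times \mathcal{Y}$ (where the Friedman weights $d_P^k$ stay bounded away from zero), produces relative compactness of $\Phi(\mathcal{C})$ in the topology of uniform convergence, so Schauder's fixed-point theorem yields a fixed point which is a classical solution with the desired regularity. The main obstacle I expect is the nonlocal boundary condition \eqref{eq:HJB_border}: it has to be absorbed into the iteration via the choice of boundary data for $\Phi(v)^Q$ as above, and only at the fixed point does it collapse to the correct self-consistent identity.

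\textbf{Part (ii).} I would construct $\mu^a_n$ as the pushforward of $\mu^a$ under the floor-projection $z \mapsto \frac{\Qmax}{n}\lfloor \frac{nz}{\Qmax}\rfloor$, modified on the atom at $\qa_n$ so that $\qa_n \in \mathrm{supp}(\mu^a_n)$, and analogously for $\mu^b_n$; this gives $\mu^a_n \to \mu^a$, $\mu^b_n \to \mu^b$ weakly. Applying part (i), the family $(u^Q_n)_{n \in \mathbb{N}^*, Q \in \mathcal{Q}_n}$ satisfies $\sup_n \sup_{Q \in \mathcal{Q}_n} |u^Q_n|_{2+\beta} < \infty$ with $\beta$ independent of $n$. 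Extending each $u^Q_n$ piecewise-linearly in $Q$ to $[-\Qmax,\Qmax]$ and invoking Arzel\`a--Ascoli on compact subsets of $[0,T) \times \mathcal{Y} \times [-\Qmax,\Qmax]$ yields a subsequential limit $\tilde u$ in $C([0,T] \times \bar{\mathcal{Y}} \times \mathcal{Q}_\infty)$ that is $C^{2+\beta}$ in $(t,y)$. Uniform convergence of $u_n$ on compacts, together with weak convergence $\mu^a_n \to \mu^a$, gives $H_n^a((u^R_n),Q_n,y,\mu^a_n) \to H_\infty^a((\tilde u^R),Q,y,\mu^a)$ whenever $Q_n \to Q$ (and similarly on the bid side), so passing to the limit in the discrete HJB shows that $\tilde u$ solves the continuous HJB \eqref{eq:HJB_interior}--\eqref{eq:HJB_border} for the data $(\mu^a,\mu^b)$. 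By the verification theorem of Section \ref{sec:verification} identifying both $\tilde u$ and $u$ with the value function \eqref{eq:value_function}, one concludes $\tilde u = u$; since every subsequence of $(u^Q_n)$ has a further subsequence converging uniformly to the same limit $u$, the whole sequence converges uniformly as claimed. The main obstacle of part (ii) is precisely the uniform-in-$n$ Hölder control, which is the key additional content of the last clause of part (i).
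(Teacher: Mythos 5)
There is a genuine gap in each part. In part (i), your Schauder fixed-point map $\Phi$ is defined on the set $\mathcal{C}$ of \emph{merely continuous} functions with values in $[-e^{\gamma M},0]$, but then the frozen source term $\Lambda^a H^a((v^R),\cdot,\cdot)+\Lambda^b H^b((v^R),\cdot,\cdot)$ inherits only the modulus of continuity of $v$ in $(t,y)$: Lipschitz dependence of the Hamiltonians on their function argument does not make the source Hölder in $(t,y)$ unless $v$ itself is Hölder. Consequently the Schauder estimate you invoke, which requires a bound on the Hölder norm $|f|_{\beta}$ of the source (cf. Corollary \ref{corol:linear_existence}), does not yield the claimed uniform bound $|\Phi(v)^Q|_{2+\beta}\leqslant C_0$ on all of $\mathcal{C}$ — indeed for merely continuous data $\Phi(v)$ need not even be classical. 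This is precisely where the paper's argument needs the Krylov--Safonov estimate (Corollary \ref{corol:krylovsafonov}): it gives a Hölder exponent $\beta_0$ and a bound $|u|_{\beta}\leqslant C(|u|_{\infty}+|f|_{\infty})$ depending only on the coefficient bounds, which is what makes the $C^{2+\beta}$ estimates close up along the iteration and makes $\beta=\alpha\wedge\beta_0$ independent of $n$, $\mu^a$, $\mu^b$ — your claim that $\beta$ comes from Schauder theory and the Lipschitz modulus of the Hamiltonian is not tenable. (Your repair would be to take $\mathcal{C}$ to be a uniform Hölder ball and prove invariance, which de facto reproduces the paper's scheme; note also that the paper replaces your "frozen Dirichlet data" device by an actual solvability theory for the nonlocal boundary conditions in Appendix \ref{sec:linear_nonlocal_pde}, which avoids corner-compatibility issues and is still needed somewhere.)

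In part (ii), the compactness argument is missing its key ingredient in the inventory variable. The uniform bound $\sup_n\sup_Q|u^Q_n|_{2+\beta}<\infty$ controls regularity in $(t,y)$ only; it gives no equicontinuity in $Q$ uniformly in $n$, and your piecewise-linear extension in $Q$ does not create any. Without a modulus of continuity in $Q$ that is uniform in $n$, the Arzelà--Ascoli step on $[0,T)\times\mathcal{Y}\times[-\Qmax,\Qmax]$ fails, and even a pointwise-in-$Q$ diagonal argument cannot deliver the stated conclusion $\sup_{Q\in\mathcal{Q}_n}|u^Q_n-u^Q|_{\infty}\to 0$, since comparing grid points of $\mathcal{Q}_n$ with fixed dyadic inventories again requires exactly that uniform modulus. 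The paper supplies it through the Feynman--Kac/Grönwall argument of Section \ref{subsec:continuity}, which shows that $L\varpi$ is a modulus of continuity of $Q\mapsto u_n^Q(t,y)$ with $L$ independent of $n$ and of the execution measures; it then concludes not by compactness but by a direct stability estimate (Lemma \ref{lem:ineq_H_different_n} together with Corollary \ref{corol:bound_on_u} and Grönwall), which even yields the rate $\varpi(\Qmax/n)$. Your identification-via-verification idea and your choice of $\mu^a_n$, $\mu^b_n$ are fine, but as written the proposal does not prove either the uniform-in-$n$ continuity in $Q$ or, therefore, the convergence statement.
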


Our second main result is the following verification theorem, whose proof can be found in Section \ref{sec:verification}. It states that the value function is the solution of the HJB equation and the optimal controls maximize the corresponding Hamiltonians. Moreover, there exist Markovian optimal controls.

\begin{theorem}[Verification]
    \label{thm:verification}
    Let $(u^Q)_{Q \in \mathcal{Q}_n}$ be a family of functions verifying the conditions of (i) in Theorem \ref{thm:existence}.
    \begin{itemize}
        \item[(i)] There exist two measurable functions $\hat{q}^a$, $\hat{q}^b:[0,T] \times \mathcal{Q}_n \times \bar{\mathcal{Y}}\mapsto [0,\infty)$ such that, for all $(t,Q,y)\in [0,T] \times \mathcal{Q}_n \times \bar{\mathcal{Y}}$, $\hat{q}^a(t,Q,y) \leqslant Q + \Qmax $, $\hat{q}^b(t,Q,y) \leqslant -Q + \Qmax$ and
    \begin{align*}
        H^a\left(\left(u^R(t,y)\right)_{R \in \mathcal{Q}_n}, y, Q\right)
        & = \int_{\Qa} e^{-\gamma \left(\hat{q}^a(t,Q,y) \wedge z\right)\left(\frac{\delta}{2} - y\right)}u^{Q - \hat{q}^a(t,Q,y) \wedge z}(t,y)\mu^a(\diff z),\\
        H^b\left(\left(u^R(t,y)\right)_{R \in \mathcal{Q}_n}, y, Q\right) 
        & = \int_{\Qb}e^{-\gamma \left(\hat{q}^b(t,Q,y) \wedge z\right)\left(\frac{\delta}{2} + y\right)}u^{Q + \hat{q}^b(t,Q,y) \wedge z}(t,y) \mu^b(\diff z).
    \end{align*}
        \item[(ii)] For all $(t,Q,y) \in [0,T) \times \mathcal{Q}_n \times \mathcal{Y}$, there exists a unique (up to an evanescent set) control $q^*=(q^{*a}, q^{*b}) \in \mathcal{A}^{t,Q}$ such that, almost surely, for all $s\in [0,T)$
        \begin{equation}
        \label{eq:existence_control}
        q^{*a}_s = \hat{q}^a\left(s \vee t, Q^{t,Q,q^*}_{s-}, \Y_s\right),\quad q^{*b}_s = \hat{q}^b\left(s \vee t, Q^{t,Q,q^*}_{s-}, \Y_s\right).
        \end{equation} 
        \item[(iii)] For all $(t,Q,y) \in [0,T] \times \mathcal{Q}_n \times \mathcal{Y}$, $u^Q(t,y) = U(t,Q,y)$. Furthermore, $U(t,Q,y) = J(t,Q,y,q^*)$.
        \item[(iv)] Let $(t,Q,y) \in [0,T) \times \mathcal{Q}_n \times \mathcal{Y}$ and $q = (q^a, q^b)\in \mathcal{A}^{t,Q}$ such that $J(t,Q,y,q) = U(t,Q,y)$. Then, on $[t,T] \times \Omega$,
    \begin{align*}
        q^a_s &\in \argmax_{q \in \Qa \cap [0,\Q_s+\Qmax]}\int_{\Qa}e^{-\gamma (q \wedge z)\left(\frac{\delta}{2} - \Y_s\right)}u^{\Q_s - z \wedge q}\left(s,\Y_s\right)\mu^a(\diff z),\\
        q^b_s &\in \argmax_{q \in \Qb \cap [0,-\Q_s+\Qmax]}\int_{\Qb}e^{-\gamma (q \wedge z)\left(\frac{\delta}{2} + \Y_s\right)}u^{\Q_s + z \wedge q}\left(s,\Y_s\right)\mu^b(\diff z)
    \end{align*}
    $\diff s \otimes \Proba[][\diff \omega]$-almost everywhere.   
    \end{itemize}
\end{theorem}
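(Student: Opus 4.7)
The plan is a standard verification argument built around an exponential change of variable and It\^o's formula, using the HJB system \eqref{eq:HJB_interior}--\eqref{eq:HJB_border} to cancel the drift of a suitable test process. For part (i), the integrand defining $H^a$ (resp.\ $H^b$) is jointly continuous in $q$ and measurable in $(t,Q,y)$ (continuity in $Q$ coming from the continuity of $(t,Q,y)\mapsto u^Q(t,y)$ asserted in Theorem~\ref{thm:existence}), and the maximization runs over a compact set (finite if $n<\infty$, a closed interval if $n=\infty$). The argmax correspondence has nonempty compact values and is measurable, so the Kuratowski--Ryll--Nardzewski selection theorem produces measurable selectors $\hat{q}^a,\hat{q}^b$ with the required inventory bounds. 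For part (ii), I would construct $(q^*, \Q[t,Q,q^*])$ recursively along the jump times of $N^a,N^b$. Between two consecutive Poisson jumps the inventory is constant, and setting $q^*_s \defeq \hat{q}\bigl(s\vee t, \Q[t,Q,q^*]_{s-}, \Y_s\bigr)$ defines a predictable process once one observes that at any jump time of $N^a,N^b$ one has $\Y_s=\Y_{s-}$ almost surely (the $\Y$-jump times, being functionals of $W$, are a.s.\ disjoint from the independent Poisson times). The recursion terminates since $\Lambda^a,\Lambda^b\leqslant\Lambda^*$ implies finitely many Poisson jumps in $[0,T]$; uniqueness up to evanescence follows because two predictable processes obeying \eqref{eq:existence_control} must coincide at the countably many Poisson times, hence everywhere on the predictable $\sigma$-algebra.

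For part (iii), set
\begin{equation*}
    Z_s \defeq e^{-\gamma \PnL_s}\, u^{\Q_s}\!\bigl(s,\Y_s\bigr), \qquad s\in[t,T],
\end{equation*}
and apply It\^o's formula for semimartingales with jumps. The continuous part of the drift combines $\partial_t u$, $\partial_y u$, $\partial^2_{yy} u$, the quadratic variation $\sigma^2\Q_s^2$ from the $\sigma\Q_s\diff W_s$ contribution in $\PnL$, and the cross-variation between $u^{\Q_s}(s,\Y_s)$ and $\PnL^c$; this reproduces exactly the five ``local'' terms of \eqref{eq:HJB_interior}. The jump part splits into (a) Poisson jumps, whose compensated form contributes, for $i\in\{a,b\}$,
\begin{equation*}
    \Lambda^i(s,\Y_s)\Bigl[\int_{\Qi} e^{-\gamma (q^i_s\wedge z)(\delta/2\mp\Y_s)}\, u^{\Q_{s-}\mp z\wedge q^i_s}(s,\Y_s)\,\mu^i(\diff z) - u^{\Q_{s-}}(s,\Y_s)\Bigr]
\end{equation*}
(with the obvious sign conventions for $i$), and (b) jumps of $\Y$ at $\pm\ybar$, which produce no net change because \eqref{eq:HJB_border} equates $u^Q(t,\pm\ybar)$ with $u^Q(t,y_\pm)$ (this is the algebraic shadow of the fact that mid-price jumps leave the portfolio value $\Q S$ untouched, since $S$ is continuous). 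Invoking \eqref{eq:HJB_interior} and the definition of the Hamiltonians, the total drift of $Z$ collapses to
\begin{equation*}
    e^{-\gamma \PnL_{s-}}\sum_{i\in\{a,b\}}\Lambda^i(s,\Y_s)\Bigl[\Phi^i(q^i_s;\Q_{s-},\Y_s) - H^i\bigl((u^R(s,\Y_s))_R,\Y_s,\Q_{s-}\bigr)\Bigr]\leqslant 0,
\end{equation*}
where $\Phi^i$ denotes the integral appearing in the display above, with equality pointwise iff $q^i_s$ attains the Hamiltonian argmax. Hence $Z$ is a local supermartingale; after localization and using the uniform boundedness of $(u^Q)_{Q\in\mathcal{Q}_n}$ from Theorem~\ref{thm:existence}, together with sub-Gaussian control of $\exp(-\gamma\sigma\int_t^{\cdot}\Q_u\,\diff W_u)$ via $|\Q|\leqslant\Qmax$, $Z$ is a genuine supermartingale. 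Taking expectations and using $Z_T=-e^{-\gamma(\PnL_T-\ell(\Q_T))}$ yields $u^Q(t,y)\geqslant J(t,Q,y,q)$, with equality along the $q^*$ of part (ii), which proves (iii).

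Part (iv) is then immediate: if $J(t,Q,y,q)=U(t,Q,y)=u^Q(t,y)$, the supermartingale $Z$ is forced to be a martingale on $[t,T]$, so the drift computed above must vanish $\diff s\otimes\mathbb{P}$-a.e.; since each of the two summands indexed by $i$ is nonpositive, each must vanish separately, which is exactly the argmax characterization. The main technical obstacles I anticipate are: first, the measurable selection in the continuous-inventory case, where joint continuity in $(Q,q)$ of the Hamiltonian integrand must be extracted from the continuity of $(t,Q,y)\mapsto u^Q(t,y)$; second, promoting $Z$ from a local to a true supermartingale, which rests on integrability estimates for $\exp(-\gamma\PnL_T)$, best handled by the uniform bound $|\Q|\leqslant\Qmax$ together with a Novikov/BDG argument; and third, the careful bookkeeping of the $\Y$-jumps at $\pm\ybar$ inside the jump decomposition of It\^o's formula, whose cancellation through \eqref{eq:HJB_border} is the key structural feature linking the mid-price dynamics to the HJB boundary condition.
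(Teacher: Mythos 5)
Your treatment of (i) and (ii) follows essentially the paper's route: a measurable-selection argument (the paper invokes Bertsekas--Shreve rather than Kuratowski--Ryll--Nardzewski, but either works once one notes the joint continuity of the Hamiltonian integrand), and a recursive construction of $q^*$ along the Poisson jump times. One small correction for (ii): uniqueness does not follow from ``coincidence at the Poisson times, hence everywhere''; the correct argument is an induction on the jump times showing that the two \emph{inventory} paths coincide, after which the feedback relation \eqref{eq:existence_control} forces the controls to agree at every $s$ (this is how the paper argues).

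The genuine gap is in (iii)--(iv), in the step ``apply It\^o's formula for semimartingales with jumps'' to $Z_s = e^{-\gamma \PnL_s}u^{\Q_s}(s,\Y_s)$ on all of $[t,T]$. Theorem \ref{thm:existence}(i) gives $u^Q \in C^{1,2}$ only on the \emph{open} set $[0,T)\times\mathcal{Y}$, with the weighted bound $\sup_Q |u^Q|_{2+\beta}<\infty$; this allows $\partial_y u^Q$ to blow up like $d_P^{-1}$ and $\partial_t u^Q,\ \partial^2_{yy}u^Q$ like $d_P^{-2}$ near the parabolic boundary, and these derivatives are not even defined at $y=\pm\ybar$ or $t=T$. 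The path $(s,\Y_s)$ reaches $y=\pm\ybar$ as a left limit at every jump of $\Y$ and reaches $t=T$, so a direct jump-It\^o expansion is not legitimate: the separate drift integrals involving $\partial_t u$ and $\partial^2_{yy}u$ are a.s.\ not absolutely convergent near the hitting times, and $\mathbb{E}^{t,y}\big[\int |\partial_y u^{\Q_r}(r,\Y_r)|^2\,\diff r\big]$ can be infinite near the boundary, so the stochastic integral cannot simply be discarded as a martingale. Your localization addresses only the integrability of the exponential factors ($|\Q|\leqslant\Qmax$, moments of $e^{-\gamma\PnL}$), not this regularity/blow-up issue, which is the actual crux. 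The paper's proof is built around it: inside each inter-jump interval of $\Y$ it introduces the stopping times $\tau_i^{(k)}$ keeping $(s,\Y_s)$ in the compact set $\left[0,T-\frac1k\right]\times\left[-\ybar+\frac1k,\ybar-\frac1k\right]$, applies It\^o only there (where all derivatives are bounded, so the $\diff W$ integrals are true martingales), takes expectations, and then lets $k\to\infty$ by dominated convergence --- this is where $\mathbb{E}^{t,y}[N^2]<\infty$ (Lemma \ref{lem:finite_activity_expectation}) and the uniform bound of Lemma \ref{lem:uniform_maj_pnl} enter --- using the continuity of $u^Q$ up to the boundary and the nonlocal conditions \eqref{eq:HJB_border} to identify the limits at the $\Y$-jump times and at $T$. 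Without this space--time localization and limiting argument, your supermartingale claim for $Z$, the resulting inequality $u^Q(t,y)\geqslant J(t,Q,y,q)$, and the a.e.\ vanishing of the drift used for (iv) are unsupported; once you carry it out, your argument collapses onto the paper's identity \eqref{eq:proofverification3}, from which (iii) and (iv) follow exactly as you describe.
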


\begin{rem}
    Theorem \ref{thm:verification}(iii) yields the uniqueness of families of functions $(u^Q)_{Q \in \mathcal{Q}_n}$ verifying the conditions of Theorem \ref{thm:existence}(i). It allows us to extend $U$ to $[0,T] \times \mathcal{Q}_n \times \bar{\mathcal{Y}}$ by continuity.
\end{rem}

Our last main result, Theorem \ref{thm:uniqueness}, proves the uniqueness of the optimal control policies in the case of continuous inventory. This follows from Proposition \ref{prop:log_concavity} which shows that the negative of the value function is log-convex. Section \ref{sec:uniqueness} contains the proofs of these results.

\begin{proposition}
    \label{prop:log_concavity}
    Suppose $n=\infty$.
    \begin{itemize}
        \item[(i)] For all $(t,Q,y) \in [0,T] \times \mathcal{Q}_{\infty} \times \bar{\mathcal{Y}}$, $U(t,Q,y) < 0$.
        \item[(ii)] Suppose $\ell$ is convex. Then, for all $(t,y) \in [0,T) \times \bar{\mathcal{Y}}$, the function $Q \in \mathcal{Q}_{\infty} \mapsto -\ln\left(-U(t,Q,y)\right)$ is strictly concave.
    \end{itemize}
\end{proposition}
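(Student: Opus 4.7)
The plan is to handle (i) via the verification theorem plus standard integrability, and (ii) via a Girsanov transformation combined with a pathwise inequality argument yielding log-convexity through Hölder's inequality. For (i), I invoke Theorem \ref{thm:verification}(iii) to write $U(t,Q,y) = \E[t,y][-e^{-\gamma(\PnL_T - \ell(\Q_T))}]$, evaluated at the optimal Markovian control $q^*$. The jump terms of $\PnL$ are pathwise bounded by $2\Qmax\,\delta(\eta+1)$ times the total number of market orders of $N^a, N^b$ on $[t,T]$, which is a.s.\ finite by boundedness of the stochastic intensities; the Brownian contribution $\sigma\int_t^T\Q_u\,\diff W_u$ is a true martingale (as $|\Q_u|\le\Qmax$) and hence a.s.\ finite at $T$; and $\ell$ is bounded on the compact $[-\Qmax,\Qmax]$ by continuity. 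Thus $-e^{-\gamma(\PnL_T - \ell(\Q_T))}<0$ almost surely, forcing the expectation to be strictly negative.

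For (ii), the first step is to absorb the Brownian-driven piece $\sigma\int_t^T\Q_u\,\diff W_u$ of $\PnL$ via the Doléans--Dade exponential with integrand $-\gamma\sigma\Q_u$. Under the new measure $\widetilde{\mathbb{P}}^{t,y}$ (which leaves the stochastic intensities of $N^a, N^b$ unchanged), one obtains the representation
\begin{equation*}
-J(t,Q,y,q) = \E[\widetilde{\mathbb{P}}^{t,y}][\exp F(Q,q,\omega)], \qquad
F(Q,q,\omega) \defeq \tfrac{\gamma^2\sigma^2}{2}\int_t^T\Q_u^2\,\diff u - \gamma M_T^a - \gamma M_T^b + \gamma\ell(\Q_T),
\end{equation*}
where $M^a, M^b$ collect the jump contributions of $\PnL$ and $\Q_u = Q + \Delta_u(q)$ depends affinely on the initial inventory $Q$. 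Using $\int_t^T \Q_u^2\,\diff u = (T-t)Q^2 + 2Q\int_t^T\Delta_u\,\diff u + \int_t^T \Delta_u^2\,\diff u$, one sees that for fixed $q$ and fixed $\omega$, the map $Q\mapsto F(Q,q,\omega)$ is the sum of the strictly convex quadratic $\tfrac{\gamma^2\sigma^2(T-t)}{2}Q^2$, an affine piece in $Q$, and $\gamma\ell(Q+\Delta_T(q))$; with $\ell$ convex it is therefore strictly convex on $\mathcal{Q}_{\infty}$ for every $t<T$.

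Next, fix $Q_1\neq Q_2$ in $\mathcal{Q}_{\infty}$, $\lambda\in(0,1)$, and $Q_\lambda = \lambda Q_1 + (1-\lambda)Q_2$; let $q_1^*, q_2^*$ denote the unique optimal controls (granted by Theorem \ref{thm:verification}(ii)). The key step is to construct an admissible $q_\lambda\in\mathcal{A}^{t,Q_\lambda}$ satisfying, $\widetilde{\mathbb{P}}^{t,y}$-almost surely,
\begin{equation*}
F(Q_\lambda, q_\lambda,\omega) \leq \lambda F(Q_1, q_1^*,\omega) + (1-\lambda)F(Q_2, q_2^*,\omega).
\end{equation*}
Given such $q_\lambda$, exponentiation and Hölder's inequality with conjugate exponents $1/\lambda, 1/(1-\lambda)$ give $-U(t,Q_\lambda,y) \leq -J(t,Q_\lambda,y,q_\lambda) \leq (-U(t,Q_1,y))^\lambda(-U(t,Q_2,y))^{1-\lambda}$; taking logarithms delivers the concavity of $Q\mapsto -\ln(-U(t,Q,y))$, while strictness for $Q_1\neq Q_2$ follows from the strict convexity of $F(\cdot,q,\omega)$ in its quadratic piece.

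The main obstacle is the construction of $q_\lambda$. Because the effective traded volume $q\wedge z$ is concave (not linear) in $q$, the naive choice $q_\lambda = \lambda q_1^* + (1-\lambda)q_2^*$ produces at each jump an \emph{overshoot} $\varepsilon = (q_\lambda\wedge z) - \lambda(q_1^*\wedge z) - (1-\lambda)(q_2^*\wedge z)\geq 0$, which perturbs both the inventory dynamics and the jump contributions of $\PnL$. Handling this requires a careful analysis of the signs of $\delta/2\pm \Y_u$ (which dictate whether each overshoot improves or worsens the pathwise inequality on $F$) together with a bookkeeping argument ensuring that the admissibility constraints defining $\mathcal{A}^{t,Q_\lambda}$ remain satisfied by $q_\lambda$.
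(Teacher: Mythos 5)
Part (i) of your proposal is fine and is essentially the paper's own argument: by the verification theorem $U(t,Q,y)=J(t,Q,y,q^{*})$, the (finite, by Lemma \ref{lem:uniform_maj_pnl}) expectation of an almost surely strictly negative random variable, and the boundary identities extend the conclusion to $y=\pm\bar{y}$.

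For part (ii) there is a genuine gap, and it sits exactly where you yourself locate it. First, the reduction is already problematic: the Dol\'eans--Dade exponential you use to absorb $\sigma\int_t^T Q^{t,Q,q}_u\,\mathrm{d}W_u$ has integrand $-\gamma\sigma Q^{t,Q,q}_u$, so the new measure $\widetilde{\mathbb{P}}^{t,y}$ depends on the control and on the initial inventory; the three quantities you want to combine, $-U(t,Q_1,y)$, $-U(t,Q_2,y)$ and $-J(t,Q_\lambda,y,q_\lambda)$, are then expectations under three \emph{different} measures, and H\"older's inequality cannot be applied across them. (Staying under $\mathbb{P}^{t,y}$, which is control-independent, the exponent $-\gamma \mathrm{PnL}_T+\gamma\ell(Q_T)$ is indeed pathwise convex in $Q$ for a fixed control, since the Brownian integral is affine in $Q$; but note that for a fixed control one only gets that $Q\mapsto-\ln(-J(t,Q,y,q))$ is concave, and $-\ln(-U)=\sup_q\bigl(-\ln(-J)\bigr)$ is a supremum of concave functions, which is not concave in general --- so everything hinges on the comparison control.) Second, and decisively, the pathwise inequality $F(Q_\lambda,q_\lambda,\cdot)\leqslant\lambda F(Q_1,q_1^{*},\cdot)+(1-\lambda)F(Q_2,q_2^{*},\cdot)$ cannot be obtained by the sign bookkeeping you sketch. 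The overshoot $\varepsilon=(q_\lambda\wedge z)-\lambda(q_1^{*}\wedge z)-(1-\lambda)(q_2^{*}\wedge z)\geqslant0$ at a fill perturbs the inventory from that time onwards, hence perturbs the future stochastic integral $\int Q_u\,\mathrm{d}W_u$ (or, after your measure change, the quadratic term and the drifted dynamics), and Brownian increments have no sign, so no analysis of $\frac{\delta}{2}\pm Y_u$ at the jump times can dominate this pathwise. Even the instantaneous cash flow $\varepsilon\left(\frac{\delta}{2}-Y_u\right)$ of the overshoot can be unfavourable: the optimal policy may quote on a side where the per-trade cash flow is negative (this is visible in the paper's numerics for large intensities), so the overshoot is not always beneficial. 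Admissibility of $q_\lambda$ in $\mathcal{A}^{t,Q_\lambda}$ also ceases to be automatic once the inventories decouple from the convex combination. In short, the heart of the proof --- the construction of the comparison control --- is missing, and the route you outline for it would fail. The paper avoids this entirely: it works on $v^Q=-\ln(-U(\cdot,Q,\cdot))$, which solves the transformed HJB \eqref{eq:log_HJB_interior}--\eqref{eq:log_HJB_border}, and runs a Korevaar-type concavity maximum principle on the concavity defect $\mathcal{C}(t,y,Q,Q',\lambda)$, using the concave envelope and the structural properties of the log-Hamiltonians (Lemmas \ref{lem:bid_ask_max} and \ref{lem:envelope_inequality}); strict concavity then follows by comparing the slope of a putative affine piece with $\gamma\left(\frac{\delta}{2}-y\right)$ and $-\gamma\left(\frac{\delta}{2}+y\right)$.
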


\begin{theorem}
    \label{thm:uniqueness}
    Suppose $n = \infty$ and $\ell$ is convex. Let $(t,Q,y) \in [0,T) \times \mathcal{Q}_{\infty} \times \bar{\mathcal{Y}}$. Let $q^* \in \mathcal{A}^{t,Q}$ be a control policy given by Theorem \ref{thm:verification}(ii). Then, the following hold:
    \begin{itemize}
        \item[(i)]{\small
    \begin{align*}
        \argmax_{q \in \mathcal{Q}_{\infty}^{+,a} \cap [0,Q+\Qmax]}\int_{\mathcal{Q}_{\infty}^{+,a}}e^{-\gamma (q \wedge z)\left(\frac{\delta}{2} - y\right)}U\left(t,Q - q \wedge z,y\right)\mu^a(\diff z)
        &=
        \argmax_{q \in \mathcal{Q}_{\infty}^{+,a} \cap [0,Q+\Qmax]}e^{-\gamma q\left(\frac{\delta}{2} - y\right)}U\left(t,Q - q,y\right),\\
        \argmax_{q \in \mathcal{Q}_{\infty}^{+,b} \cap [0,-Q+\Qmax]}\int_{\mathcal{Q}_{\infty}^{+,b}}e^{-\gamma (q \wedge z)\left(\frac{\delta}{2} + y\right)}U\left(t,Q + q \wedge z,y\right)\mu^b(\diff z)
        &=
        \argmax_{q \in \mathcal{Q}_{\infty}^{+,b} \cap [0,-Q+\Qmax]}e^{-\gamma q\left(\frac{\delta}{2} + y\right)}U\left(t,Q + q,y\right),
    \end{align*}}
    and these sets contain only one element.
        \item[(ii)] For every control $q \in \mathcal{A}^{t,Q}$ such that $J(t,Q,y,q) = U(t,Q,y)$, $q_s(\omega) = q_s^*(\omega)$ $\diff s \otimes \Proba[][\diff \omega]$-almost everywhere on $[t,T] \times \Omega$.
        \item[(iii)] The functions $\hat{q}^a$ and $\hat{q}^b$ defined in Theorem \ref{thm:verification}(i) (which are uniquely determined on $[0,T) \times \mathcal{Q}_{\infty} \times \bar{\mathcal{Y}}$ thanks to (i) above) are continuous on $[0,T) \times \mathcal{Q}_{\infty} \times \bar{\mathcal{Y}}$.
    \end{itemize}
\end{theorem}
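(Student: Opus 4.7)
The key input is Proposition \ref{prop:log_concavity}(ii): for every $(t,y) \in [0,T) \times \bar{\mathcal{Y}}$, the map $Q \mapsto \ln(-U(t,Q,y))$ is strictly convex on $\mathcal{Q}_\infty$. Given this, the substance of the argument is in (i); (ii) and (iii) then follow from the verification theorem and Berge's maximum theorem, respectively. Fix $(t,Q,y) \in [0,T) \times \mathcal{Q}_\infty \times \bar{\mathcal{Y}}$ and treat the ask side (the bid side is symmetric). Set
\[
G(q) \defeq e^{-\gamma q (\delta/2 - y)}\, U(t, Q - q, y), \qquad q \in I \defeq \bigl[0, \min(\qa, Q + \Qmax)\bigr].
\]
Since $U < 0$, we have $-G > 0$, and $\ln(-G(q)) = -\gamma q (\delta/2 - y) + \ln(-U(t, Q - q, y))$ is the sum of an affine function and a strictly convex one (Proposition \ref{prop:log_concavity}(ii) precomposed with $q \mapsto Q - q$). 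Hence $\ln(-G)$ is strictly convex and $G$ is strictly concave on $I$, so $G$ admits a unique maximiser $q^\star \in I$; this settles the right-hand argmax in (i).

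For the integral version $\Phi(q) \defeq \int G(q \wedge z)\, \mu^a(\diff z)$ on $I$, strict concavity of $G$ with maximiser $q^\star$ yields $G$ strictly increasing on $I \cap [0, q^\star]$ and strictly decreasing on $I \cap [q^\star, \qa]$. For $q_1 < q_2$ in $I \cap [0, q^\star]$, $G(q_2 \wedge z) \geq G(q_1 \wedge z)$ for every $z \in [0,\qa]$, with strict inequality on $\{z > q_1\} = (q_1, \qa]$; since $q_1 < q^\star \leq \qa$ and $\qa \in \operatorname{supp}(\mu^a)$, the open neighbourhood $(q_1, 2\qa - q_1)$ of $\qa$ has positive $\mu^a$-mass, giving $\mu^a((q_1,\qa]) > 0$ and hence $\Phi(q_2) > \Phi(q_1)$. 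Symmetrically $\Phi$ is strictly decreasing on $I \cap [q^\star, \qa]$, so $\argmax_I \Phi = \{q^\star\}$, finishing (i). The technical subtlety here, which I expect to be the main obstacle, is that $\mu^a$ may carry atoms, so this monotonicity has to be extracted pointwise from strict log-convexity plus the support hypothesis rather than from a derivative.

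For (ii), Theorem \ref{thm:verification}(iv) shows that every optimal $q \in \mathcal{A}^{t,Q}$ satisfies $q^a_s, q^b_s$ in the corresponding integral argmax sets, $\diff s \otimes \Proba[][\diff \omega]$-a.e.; by (i) these sets are the singletons $\{\hat q^a(s, \Q_{s-}, \Y_s)\}$ and $\{\hat q^b(s, \Q_{s-}, \Y_s)\}$, so after modifying $q$ on a null set the fixed-point relation \eqref{eq:existence_control} holds with $q$ in place of $q^*$, and uniqueness in Theorem \ref{thm:verification}(ii) identifies $q$ with $q^*$ up to evanescence. For (iii), $\hat q^a(t, Q, y)$ is the (unique, by (i)) maximiser on $[0, \min(\qa, Q + \Qmax)]$ of the jointly continuous function $(t, Q, y, q) \mapsto e^{-\gamma q (\delta/2 - y)} u^{Q-q}(t, y)$, joint continuity of $(t,Q,y) \mapsto u^Q(t,y)$ being Theorem \ref{thm:existence}(i). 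The feasible-set correspondence $Q \mapsto [0, \min(\qa, Q + \Qmax)]$ is compact-valued and continuous, so Berge's maximum theorem delivers continuity of $\hat q^a$; the bid side is analogous.
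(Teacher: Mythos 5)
Your parts (i) and (iii) are correct. Part (i) is in substance a self-contained re-derivation of what the paper isolates in Lemma \ref{lem:basic_max_concave} and Corollary \ref{corol:max_integral}: you use strict concavity of $q \mapsto e^{-\gamma q(\delta/2-y)}U(t,Q-q,y)$ (from Proposition \ref{prop:log_concavity}(ii)) together with $\bar{q}^a\in\mathrm{supp}(\mu^a)$ to make the integral functional strictly monotone on either side of the pointwise maximiser; this is the same mechanism as the paper's, just argued inline. Part (iii) via Berge's maximum theorem is a clean repackaging of the paper's sequential subsequence argument and is fine, since joint continuity of $(t,Q,y)\mapsto u^Q(t,y)$ and single-valuedness of the argmax are available.

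Part (ii), however, has a genuine gap, located exactly at the sentence \enquote{after modifying $q$ on a null set the fixed-point relation \eqref{eq:existence_control} holds with $q$ in place of $q^*$}. The identity $q^a_s=\hat{q}^a(s,Q^{t,Q,q}_{s-},Y^{t,y}_s)$ (and its bid analogue) obtained from Theorem \ref{thm:verification}(iv) and your (i) holds only $\diff s\otimes\mathbb{P}$-almost everywhere. For each fixed $\omega$ the jump times of $N^a,N^b$ form a Lebesgue-null set of times, so this identity carries no information about the values of $q$ at those times; yet those are precisely the values that determine the inventory $Q^{t,Q,q}$, the PnL and hence $J$. Moreover, \eqref{eq:existence_control} is self-referential through $Q^{t,Q,q}_{s-}$: if you redefine $q$ on the exceptional null set (which a priori may contain jump times), the inventory generated by the modified control changes, so the modified control need not satisfy the feedback relation with respect to its own inventory, and the uniqueness statement of Theorem \ref{thm:verification}(ii) cannot be invoked. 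The missing step --- which is the actual content of the paper's proof --- is to show that almost surely no jump time falls in the exceptional set and to propagate equality of the inventories across jumps. The paper does this by induction on the jump times $t_i$ of $N = N^a(\cdot\times\mathcal{Q}^{+,a}_{\infty})+N^b(\cdot\times\mathcal{Q}^{+,b}_{\infty})$, using predictability of $q,q^*$ and the compensator identity $\mathbb{P}\left(\{t_i<T\}\cap\{q_{t_i}\neq q^*_{t_i}\}\right)=2\,\mathbb{E}\left[\int_{(t,T)}\mathds{1}_{(t_{i-1},t_i)}(s)\mathds{1}_{\{q_s\neq q^*_s\}}\,\diff s\right]$, whose right-hand side vanishes by the induction hypothesis; this gives $q_{t_i}=q^*_{t_i}$ a.s., hence $Q^{t,Q,q}=Q^{t,Q,q^*}$ on $[t_i,t_{i+1})$, and only then does the a.e.\ identification $q=q^*$ follow. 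You need to supply this argument (or an equivalent one showing that a predictable $\diff s\otimes\mathbb{P}$-null set is a.s.\ not charged by the jump measure, which is the same compensator computation) to close part (ii).
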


\section{Numerical results}
\label{sec:numerics}
We approximate the value function $U$ and the corresponding optimal controls $\hat{q}^a$, $\hat{q}^b$ using an implicit-explicit finite difference scheme for the Hamilton-Jacobi-Bellman equation. The scheme is implicit over the linear part of the equation and explicit over the non-linear part corresponding to the Hamiltonians. We discretize $[0,T]$ into $n_t = 7500$ subintervals and $\mathcal{Y}$ into $n_y = 350$ subintervals.

In our experiments, we use $n=100$, $T=3600$, $\delta = 0.01$, $\eta=0.2$, $\Qmax = 50$. We consider symmetric, time-independent, affine intensities at the bid and ask sides: $\Lambda^a:y \in \mathcal{Y}\mapsto Ay + B$, $\Lambda^b:y \in \mathcal{Y}\mapsto -Ay + B$ where $B > 0$ and $A \in \mathbb{R}$ such that $|A|\delta\left(\eta + \frac{1}{2}\right) < B$ (this guarantees that $\Lambda^a$ and $\Lambda^b$ remain positive). We choose $\qa = \qb = 100$ and the same power-law execution measure on the bid and ask sides: for $q \in \integerInterval[0]{100}$, $\mu^a\left(\left\{q\right\}\right) = \mu^b\left(\left\{q\right\}\right) = \frac{0.9^q}{\sum_{i=0}^{100}0.9^i}$. We consider a quadratic penalty function $\ell : Q \in \mathcal{Q}_n \mapsto 0.001 \cdot Q^2$.

\subsection{General observations}


In this subsection we consider positive values of $A$. The same observations can be made, even when $A$ is negative, which is the (unrealistic) case where more market orders occur when $S_t - P_t$ is disadvantageous for the market takers.

For the largest part of the trading period, and for fixed levels of $(Q,y)$, the market maker's optimal quotes $\hat{q}^a(t,Q,y)$ and $\hat{q}^b(t,Q,y)$ are constant with respect to $t$ (Figure \ref{fig:convergence_controls}). At the end of the trading interval, the market maker tends to choose a control policy that will bring her close to an inventory of $0$, since the profit made by a trade hardly compensates the penalty $\ell$.

\begin{figure}
    \centering
    \includegraphics[width=\textwidth,page=1]{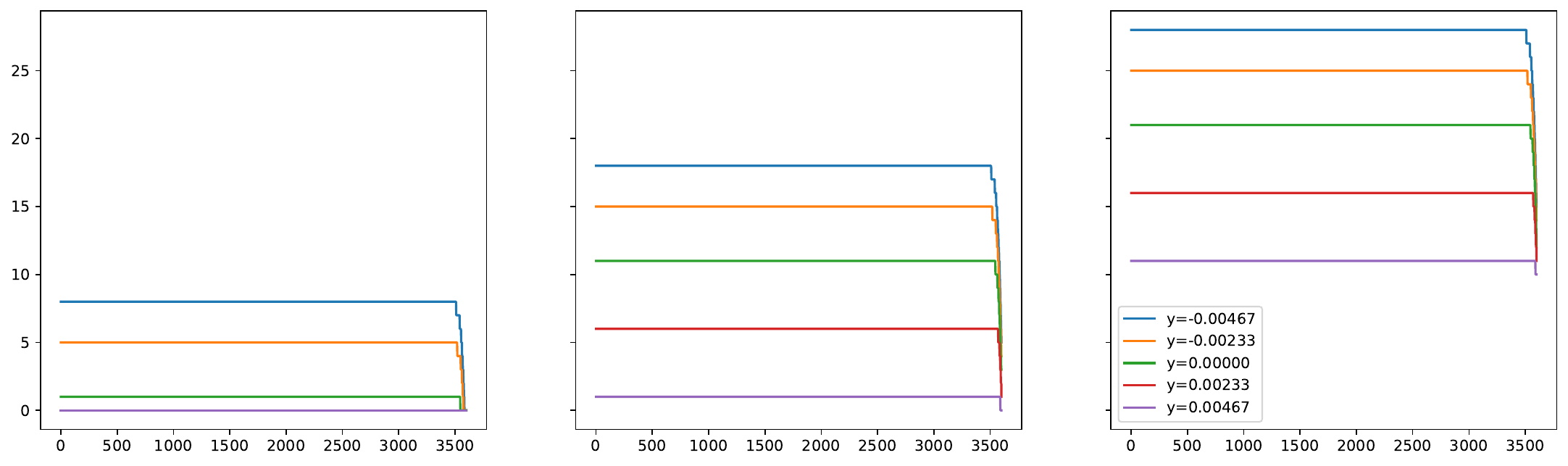}
    \caption{The optimal ask quote $\hat{q}^a(t,Q,y)$ over time, for different values of $y$. Left: $Q = -10$. Center: $Q=0$. Right: $Q=10$. $\gamma = 1$, $\sigma=0.005$, $A=10$ and $B = 0.1$.}
    \label{fig:convergence_controls}
\end{figure}

As expected, for fixed $(t,Q)$, $\hat{q}^a(t,Q,\cdot)$ is nonincreasing and $\hat{q}^b(t,Q,\cdot)$ is nondecreasing, see Figure \ref{fig:optimal_controls}. Indeed, the profit made by selling a unit of stock decreases with $y$. Similarly, the profit made by buying a unit of stock increases with $y$. We also observe that for some values of $y$, the market maker has no interest in buying or selling: if she sells when $y > \frac{\delta}{2}$, she \enquote{loses} money from the sell. Even when $y < \frac{\delta}{2}$, the gain might not be enough to compensate for the inventory risk. Of course, when her inventory is negative, she is willing to buy more and sell less in order to bring it back close to zero, making her immune to price changes. Similar behavior arises when her inventory is positive.

\begin{figure}
    \centering
    \includegraphics[width=\textwidth,page=1]{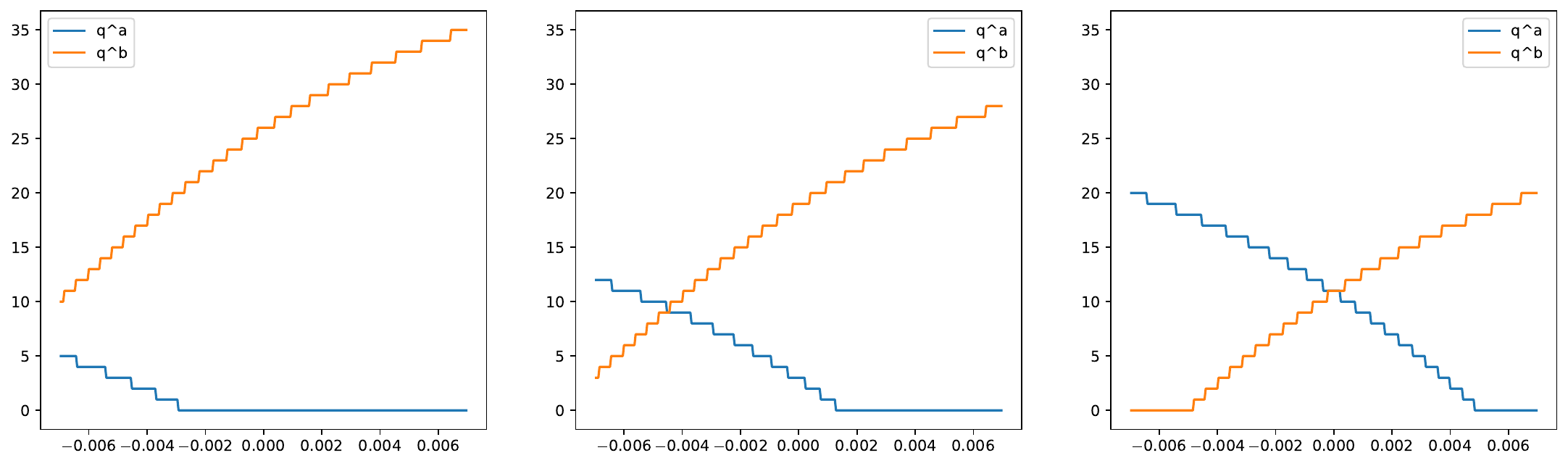}
    \caption{The optimal bid and ask quotes, as a function of $y$. Left: $Q = -15$. Center: $Q=-8$. Right: $Q=0$. $\gamma = 1$, $\sigma=0.005$, $A=10$ and $B = 0.1$.}
    \label{fig:optimal_controls}
\end{figure}

When quantities $q^a$ and $q^b$ are quoted on the ask and bid side,  respectively, we define the volume imbalance to be $\frac{q^b-q^a}{q^b + q^a}$ (undefined here when $q^a=0$ and $q^b=0$). It is a known empirical fact \parencite{huang_simulating_2015,lehalle_limit_2017,stoikov_micro-price_2018,sfendourakis_lob_2023} that volume imbalance has high predictive power for the direction of the next mid-price move: the closer it is to one, the more likely it is for the mid-price to go up. In our model, this translates to: the higher the imbalance at time $t$, the higher $Y_t$. We confirm numerically this fact in Figure \ref{fig:imbalance}. There is, however, a dependence on the inventory $Q_t$ of the market maker -- which is, in principle unknown to the external observer. This figure elucidates in a precise fashion the subtle connection between prices and volume imbalance. In particular, we can observe that the form of the price-imbalance curve depends on the chosen parameters. In some cases, given a fixed level of the inventory and ignoring the quoting bounds, the relationship is essentially affine.

\begin{figure}
    \centering
    \includegraphics[width=0.32\textwidth,page=2]{Pics/imbalance_plots_1.pdf}
    \includegraphics[width=0.32\textwidth,page=2]{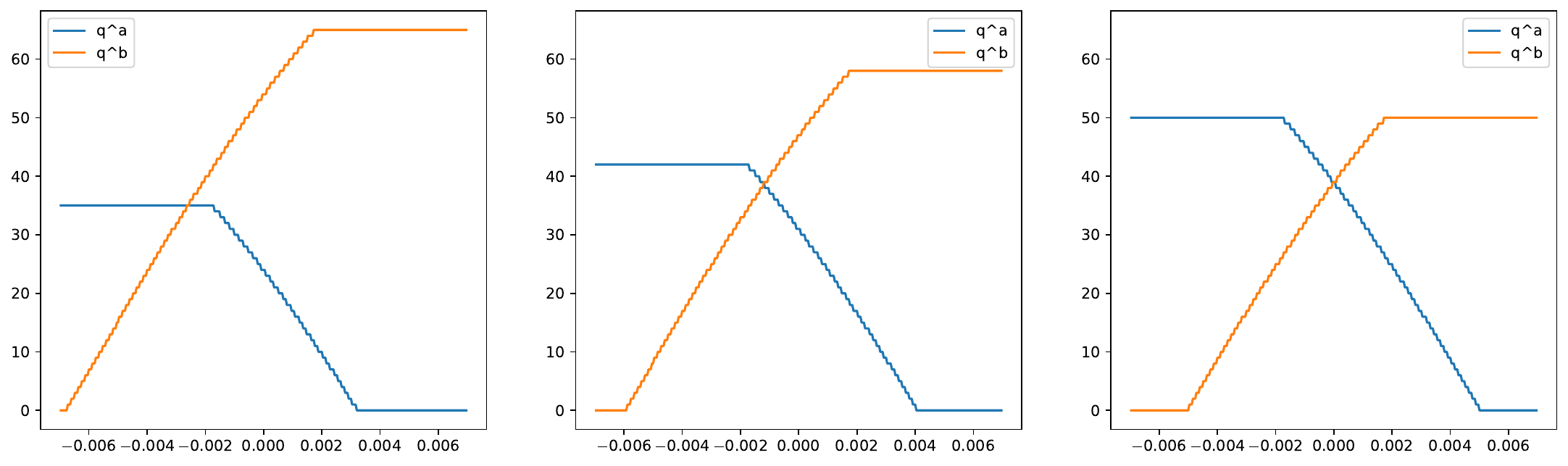}
    \includegraphics[width=0.32\textwidth,page=2]{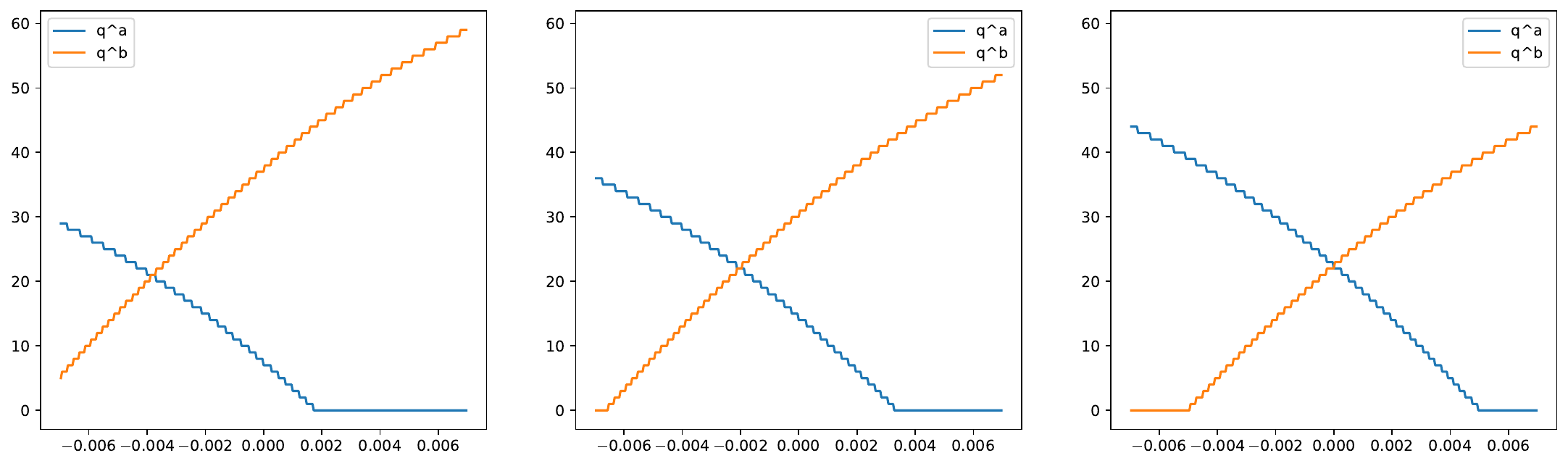}
    \caption{The quoted imbalance as a function of $y$, for different values of the market maker's inventory. Left: $\gamma = 1$, $\sigma=0.005$, $A=10$ and $B = 0.1$. Center: $\gamma = 0.5$, $\sigma=0.002$, $A=10$ and $B = 0.1$. Right: $\gamma = 0.5$, $\sigma=0.005$, $A=15$ and $B = 0.2$.}
    \label{fig:imbalance}
\end{figure}

\subsection{Dependence on the parameters}

In this subsection, we study the sensitivity of the optimal controls with respect to the parameters of the model.

When the risk sensitivity $\gamma$ increases, the market maker wants her inventory to stay closer to 0 (Figure \ref{fig:different_gamma}). Indeed, she becomes more sensitive to price moves. Formally, the efficient price being Gaussian, her \enquote{utility loss} while having nonzero inventory increases in a quadratic way with $\gamma$, while the gain from a trade only increases linearly.

\begin{figure}
    \centering
    \includegraphics[width=\textwidth]{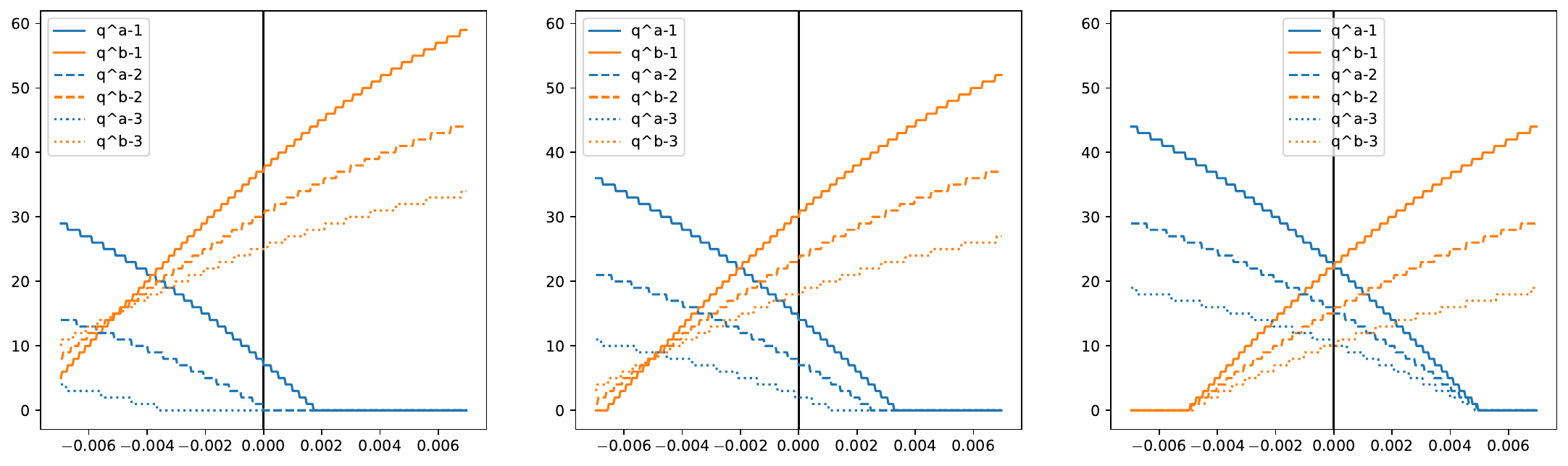}
    \caption{The optimal bid and ask quotes, as a function of $y$. Left: $Q = -15$. Center: $Q=-8$. Right: $Q=0$. $\sigma=0.005$, $A = 15$ and $B=0.2$. Configuration 1: $\gamma=0.5$. Configuration 2: $\gamma=1$. Configuration 3: $\gamma=2$.}
    \label{fig:different_gamma}
\end{figure}

The same behavior is observed when the volatility of the efficient price $\sigma$ increases (Figure \ref{fig:different_sigma}). In this situation, the price tends to move further, and it puts the market maker at risk.

\begin{figure}
    \centering
    \includegraphics[width=\textwidth]{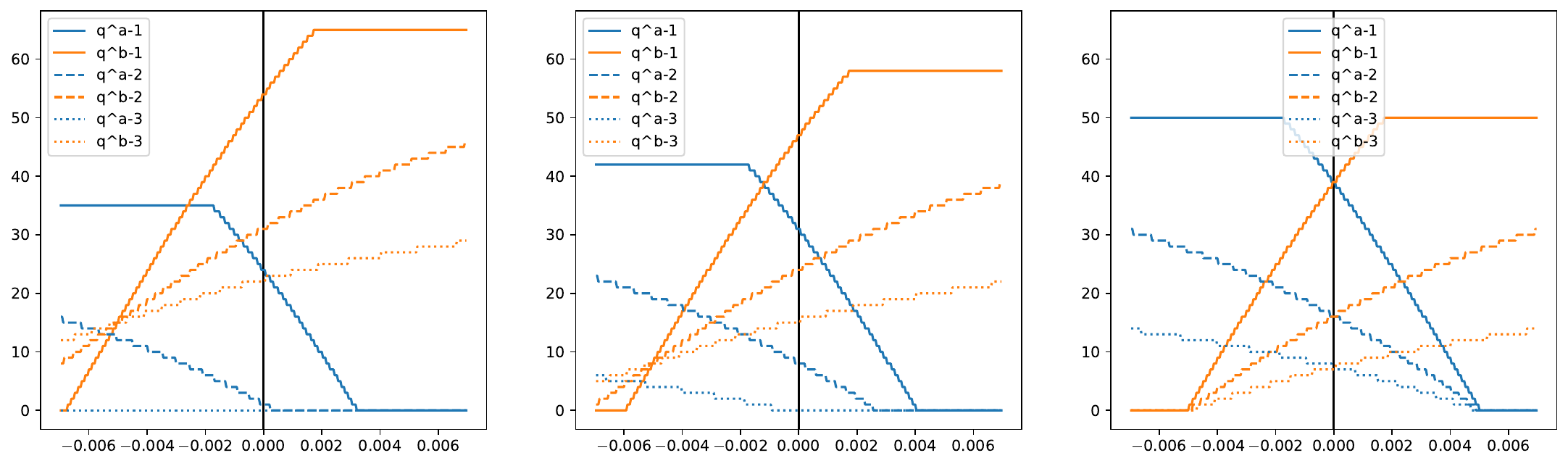}
    \caption{The optimal bid and ask quotes, as a function of $y$. Left: $Q = -15$. Center: $Q=-8$. Right: $Q=0$. $\gamma=0.5$, $A = 10$ and $B=0.1$. Configuration 1: $\sigma=0.002$. Configuration 2: $\sigma=0.005$. Configuration 3: $\sigma=0.01$.}
    \label{fig:different_sigma}
\end{figure}

The market maker tends to post bigger quotes when the market order intensities are bigger (Figure \ref{fig:different_intensities}). When the market orders are more frequent, the market maker can tolerate having a large inventory: she knows she can liquidate it rather quickly. In addition, for high intensities of the market orders (configuration 4 in Figure \ref{fig:different_intensities}), the market maker can quote even when she \enquote{loses} money from this very trade. For example, if we concentrate on the ask side, if her inventory is 0, $Y_t=y > \frac{\delta}{2}$ and she quotes some quantity $q > 0$ to sell, she expects to lose $q\left(y-\frac{\delta}{2}\right)$ from this trade (additionally, she exposes herself to price moves). However, she expects this quantity to be bought again quickly, earning $q\left(y+\frac{\delta}{2} -\varepsilon\right)$ ($\varepsilon$ being the small efficient price move in between), making a net profit of $q\delta - \varepsilon$. 

\begin{figure}
    \centering
    \includegraphics[width=\textwidth]{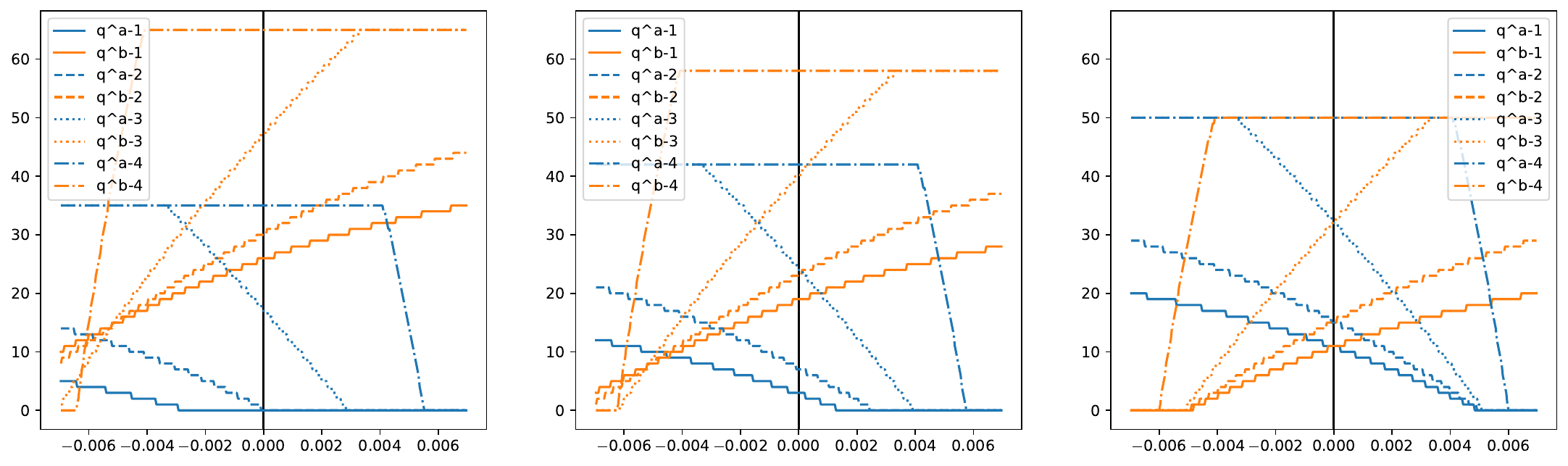}
    \caption{The optimal bid and ask quotes, as a function of $y$. Left: $Q = -15$. Center: $Q=-8$. Right: $Q=0$. $\gamma=1$ and $\sigma=0.005$. Configuration 1: $B=0.1$ and $A = 10$. Configuration 2: $B=0.2$ and $A = 10$. Configuration 3: $B=1$ and $A = 10$. Configuration 4: $B=40$ and $A = 0$.}
    \label{fig:different_intensities}
\end{figure}

\subsection{Platform's optimal tick size}

In this section, we present an interesting application of our framework to determine a tick size that is optimal in a sense to be described below.

We consider a platform aiming to attract liquidity by maximizing the total traded volume. The motivation behind this objective could be, for example, to maximize profits derived from fixed transaction costs per unit traded. A principal-agent approach to achieve the platform's objective would involve offering a contract to the market maker, encouraging her to quote in a specific manner. This approach was studied in \parencite{el_euch_optimal_2021}.

Changing the tick size also impacts market liquidity, as examined, for example, by \textcite{laruelle2019assessing}. Consequently, as discussed in \parencite{baldacci_bid_2020}, selecting an appropriate tick size can be an alternative strategy to maximize traded volumes. In our setting, we adopt this strategy by computing the total expected traded volume across a grid of tick sizes and selecting the one that maximizes it.

For a fixed tick size, the value function (total expected traded volume) of the platform is, for $(t,Q,y) \in [0,T] \times \mathcal{Q}_n \times \mathcal{Y}$:
\begin{equation}
    \begin{aligned}[t]
    W(t,Q,y) \defeq \mathbb{E}^{t,y}\Bigg[\int_{(t,T]\times\Qa}&\left(\hat{q}^a(u, Q^{t,Q,\hat{q}}_u, Y^{t,y}_u)\wedge z\right) N^a(\diff u, \diff z) \\
    &+ \int_{(t,T]\times\Qb}\left(\hat{q}^b(u, Q^{t,Q,\hat{q}}_u, Y^{t,y}_u)\wedge z\right) N^b(\diff u, \diff z)\Bigg]
    \end{aligned}
\end{equation}
where $\hat{q}_u \defeq (\hat{q}^a(u, Q^{t,Q,\hat{q}}_u, Y^{t,y}_u), \hat{q}^b(u, Q^{t,Q,\hat{q}}_u, Y^{t,y}_u))$ is the optimal strategy of the market maker as considered in Theorem \ref{thm:verification}. The value function verifies, at least formally, the system (over $Q \in \mathcal{Q}_n$) of PDEs:
\begin{equation}
    \label{eq:hjb_platform_interior}
    \begin{aligned}
        0 =&\partial_t W(t,Q,y) + \frac{\sigma^2}{2}\partial^2_{yy} w(t,Q,y)\\
        &+\Lambda^a(t,y) \int_{\Qa}\left(W(t, Q - \hat{q}^a(t,Q,y)\wedge z, y) - W(t,Q,y) + \hat{q}^a(t,Q,y)\wedge z\right)\mu^a(\diff z)\\
        &+ \Lambda^b(t,y)
        \int_{\Qb}\left(W(t, Q + \hat{q}^b(t,Q,y)\wedge z, y) - W(t,Q,y) + \hat{q}^b(t,Q,y)\wedge z\right)\mu^b(\diff z).
    \end{aligned}
\end{equation}
for $(t,Q,y) \in [0,T] \times \mathcal{Q}_n \times \mathcal{Y}$, with boundary conditions
\begin{equation}
    \label{eq:HJB_platform_border}
    \left\{
        \begin{array}{ll}
            W(T, Q, y) &=0 \\
            W\left(t, Q, \ybar\right) &= W\left(t, Q, \yplus\right) \\
            W\left(t, Q, -\ybar\right) &= W\left(t, Q, \yminus\right).
        \end{array}
    \right.
\end{equation}
We compute $W(0,0, \cdot)$ from \eqref{eq:hjb_platform_interior}-\eqref{eq:HJB_platform_border} using as before an implicit-explicit finite-difference scheme, for each considered tick size.

We conduct numerical experiments with intensity functions $\Lambda^a:y\mapsto Ae^{(By) \wedge 0}$, $\Lambda^b:y\mapsto Ae^{(-By) \wedge 0}$ with $A=1.5$ and $B=200$. The other parameters are $T=240$ (the time horizon plays a minimal role as observed in Figure \ref{fig:convergence_controls}), $\qa = \qb = 100$, power-law execution measures: for $q \in \integerInterval[0]{100}$, $\mu^a\left(\left\{q\right\}\right) = \mu^b\left(\left\{q\right\}\right) = \frac{0.9^q}{\sum_{i=0}^{100}0.9^i}$, and a penalty function $\ell : Q \in \mathcal{Q}_n \mapsto 0.005 \cdot Q^2$.

As observed empirically by \textcite[equation (5)]{dayri2015large}, the uncertainty zone parameter $\eta$ depends on the tick size $\delta$ according to $\eta = \eta_0\sqrt{\frac{\delta_0}{\delta}}$ (one could take a more general power-law dependence, depending on the typical limit order book shape). We choose here $\eta_0=0.2$ and $\delta_0=0.1$. A similar parametrization is used by \textcite{baldacci_bid_2020}. We only consider tick sizes $\delta$ for which $\eta<\frac12$ since we study large-tick assets.

The value function of the platform, averaged over the argument $y$, as a function of the tick size is plotted in Figure \ref{fig:platform_tick} for different levels of volatility. With a volatility $\sigma=0.005$, we observe an optimal tick size of $0.0032$ for the platform. With a smaller tick size, the market maker makes a low profit per trade and thus is not incentivized to trade much. With a bigger tick size, the proposed price by the market maker is often further away from the efficient price, reducing the number of incoming market orders. A similar effect is observed with different parameters $\sigma$.

Also, as $\sigma$ increases, the value function of the platform decreases: the market maker takes a higher risk holding inventory, hence trades smaller quantities. In that case, it is better for the platform to offer her a larger tick size so she can earn more per unit traded: the optimal tick sizes for volatilities $\sigma$ equal to $0.005$, $0.0075$, $0.01$ and $0.015$ are $0.0032$, $0.0044$, $0.0064$ and $0.015$, respectively.

\begin{figure}
    \centering
    \includegraphics[width=0.45\textwidth]{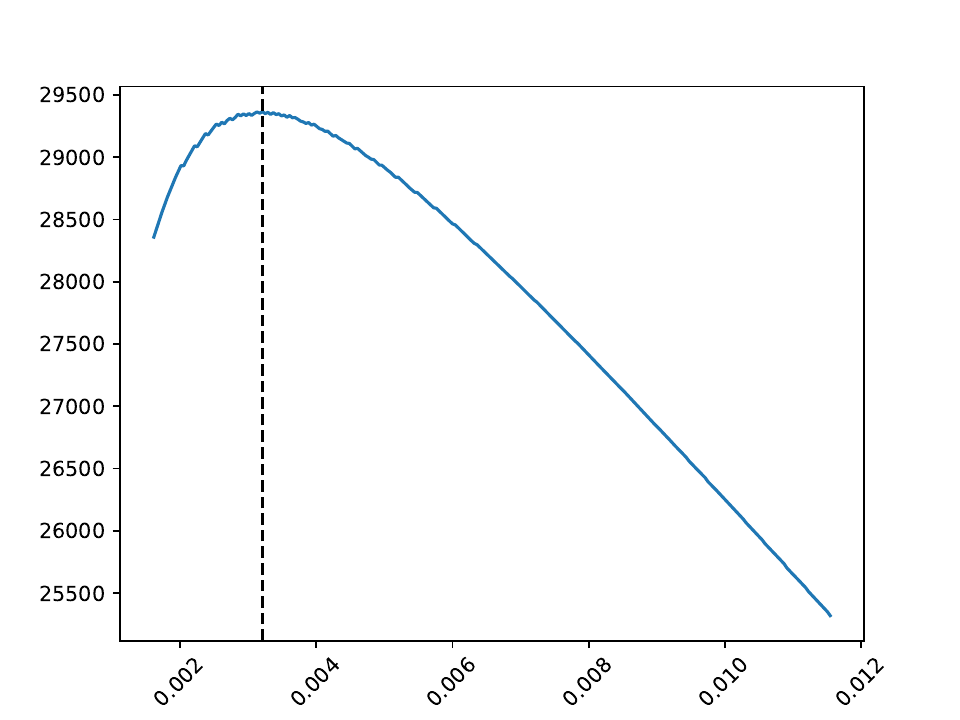}
    \includegraphics[width=0.45\textwidth]{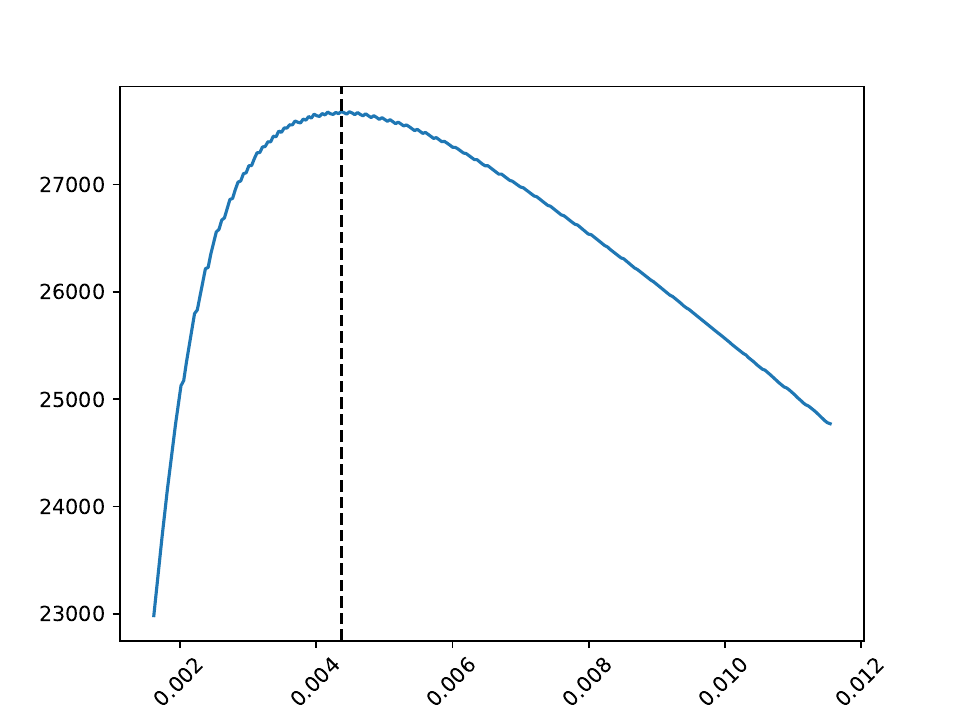}
    \includegraphics[width=0.45\textwidth]{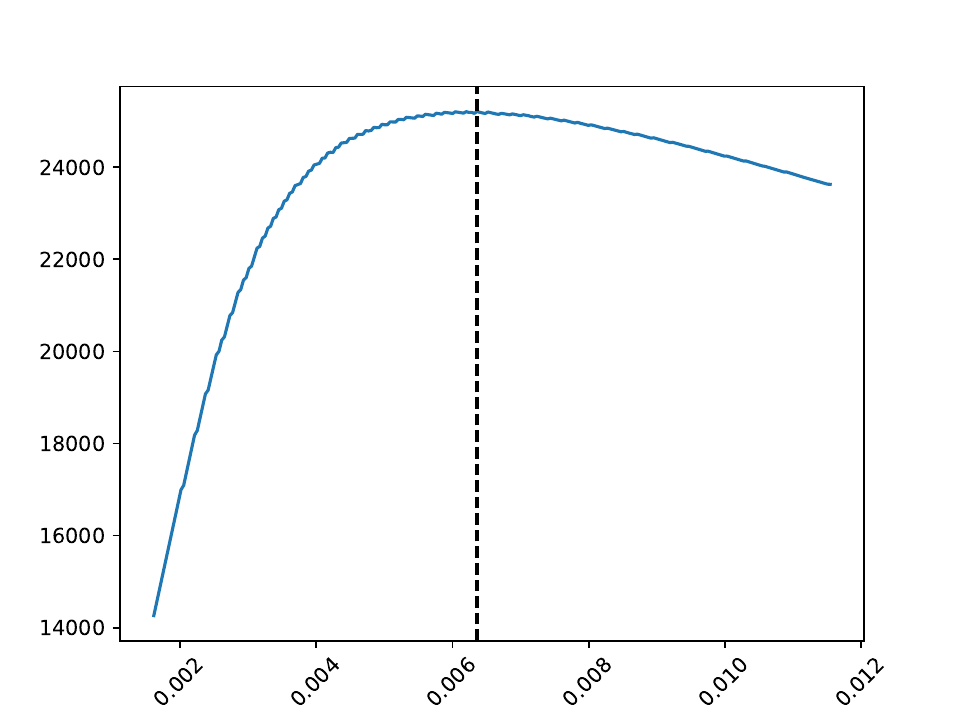}
    \includegraphics[width=0.45\textwidth]{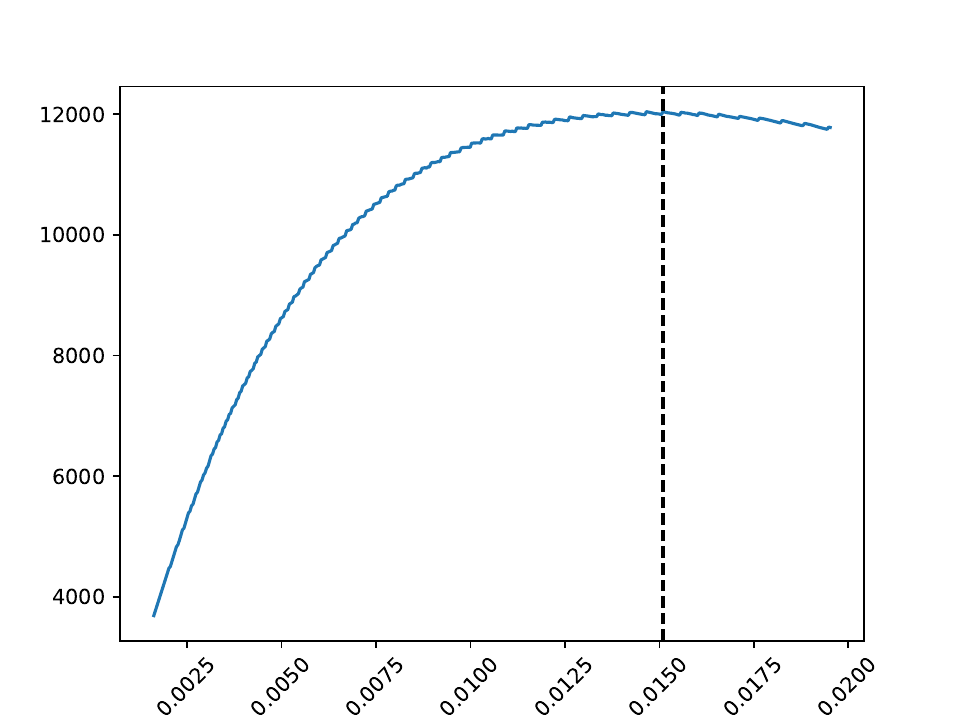}
    \caption{The platform's value function $W$, averaged uniformly over $y$, as a function of the tick size $\delta$, at $t=0$, $Q=0$. Upper left: $\sigma=0.005$. Upper right: $\sigma=0.0075$. Lower left: $\sigma=0.01$. Lower right: $\sigma=0.015$.}
    \label{fig:platform_tick}
\end{figure}

\section{Proof of the verification theorem}
\label{sec:verification}
In this section we prove Theorem \ref{thm:verification}.
\subsection{Proof of point (i)}

Point (i) follows from a standard measurable selection argument. Define
\begin{align*}
D^a &\defeq \left\{(t,Q,y,q^a) \in [0,T] \times \mathcal{Q}_n \times \bar{\mathcal{Y}}\times \Qa : q^a \leqslant Q + \Qmax\right\},\\
D^b &\defeq \left\{(t,Q,y,q^b) \in [0,T] \times \mathcal{Q}_n \times \bar{\mathcal{Y}}\times \Qb : q^b \leqslant -Q + \Qmax\right\}.
\end{align*}
The sets $D^a$ and  $D^b$ are closed. In addition, the functions 
\begin{align*}
    (t,Q,y,q^a) \in D^a &\mapsto \int_{\Qa}e^{-\gamma (q^a \wedge z)\left(\frac{\delta}{2}-y\right)}u^{Q-q^a \wedge z}\left(t,y\right) \mu^a(\diff z),\\
    (t,Q,y,q^b) \in D^b &\mapsto \int_{\Qb}e^{-\gamma (q^b \wedge z)\left(\frac{\delta}{2}+y\right)}u^{Q+q^b \wedge z}\left(t,y\right) \mu^b(\diff z).
\end{align*}
are continuous.
Hence, by \parencite[Proposition 7.33]{bertsekas_stochastic_1996}, there exist two measurable nonnegative functions $\hat{q}^a$ and $\hat{q}^b$ on $[0,T] \times \mathcal{Q}_n \times \bar{\mathcal{Y}}$ such that for all $(t,Q,y) \in [0,T] \times \mathcal{Q}_n \times \bar{\mathcal{Y}}$, $\hat{q}^a(t,Q,y) \leqslant Q + \Qmax$, $\hat{q}^b(t,Q,y) \leqslant -Q + \Qmax$, and
\begin{align*}
    H^a\left(\left(u^R(t,y)\right)_{R \in \mathcal{Q}_n}, y, Q\right) & = \int_{\Qa}e^{-\gamma (\hat{q}^a(t,Q,y) \wedge z)\left(\frac{\delta}{2}-y\right)}u^{Q-\hat{q}^a(t,Q,y) \wedge z}\left(t,y\right) \mu^a(\diff z)\\
    H^b\left(\left(u^R(t,y)\right)_{R \in \mathcal{Q}_n}, y, Q\right) & = \int_{\Qb}e^{-\gamma (\hat{q}^b(t,Q,y) \wedge z)\left(\frac{\delta}{2}+y\right)}u^{Q+\hat{q}^b(t,Q,y) \wedge z}\left(t,y\right) \mu^b(\diff z).
\end{align*}

\subsection{Proof of point (ii)}

Let $(t_i)_{1 \leqslant i \leqslant N}$ be the ordered jump times of $N^a$ and $N^b$ on $[t,T]$ (which are almost surely distinct and different from $t$ and $T$), $(z_i)_{1 \leqslant i \leqslant N}$ the associated marks and $((e_i, s_i))_{1 \leqslant i \leqslant N}$ the random variable that takes the value $(-1,a)$ if the index is associated to a jump of $N^a$ and $(1,b)$ for a jump of $N^b$. For convenience, we set $t_0 = -1$, $t_{N+1} = T$, $z_0=z_{N+1} = 0$, $e_0=e_{N+1} = 0$. Define $Q_0 \defeq Q$ and, recursively, $Q_{i+1} \defeq Q_i + e_{i+1}z_{i+1}\wedge\hat{q}^{s_i}(t_{i+1}, Q_i, \Y_{t_{i+1}})$, which is $\mathcal{F}_{t_{i+1}}$-measurable (by definition of $\hat{q}^{s_i}$, it is valued in $\mathcal{Q}_n$). For $s \in [0,T]$, define
\begin{equation*}
    q^{*a}_s \defeq \sum_{i=0}^N \mathds{1}_{\{t_i < s \leqslant t_{i+1}\}} \hat{q}^a(s \vee t, Q_i, \Y_s) \text{ and } q^{*b}_s \defeq \sum_{i=0}^N \mathds{1}_{\{t_i < s \leqslant t_{i+1}\}} \hat{q}^b(s \vee t, Q_i, \Y_s).
\end{equation*}
The process $q^* = (q^{a*}, q^{*b})$ is in $\mathcal{A}^{t,Q}$. It is easy to check that for $s \in [t_i,t_{i+1})$, $Q^{t,Q,q^*}_s = Q_i$. This proves the existence part.

We prove uniqueness by induction. We fix $\omega \in \Omega$ (omitting the dependence on $\omega$ for concision) such that the $(t_i)_{1 \leqslant i \leqslant N}$ are distinct and different from $t$ and $T$.
Let $q \in \mathcal{A}^{t,Q}$ be a control verifying \eqref{eq:existence_control} on $\omega$. Since $\Q$ is constant on $[0, t_1)$ and equal to $Q$, $q=q^{*}$ on $[0,t_1]$ by \eqref{eq:existence_control}. Suppose that $q = q^{*}$ on $[0,t_i]$ and $\Q = Q^{t,Q,q^*}$ on $[0, t_i)$, $i \leqslant N$. Since $q_{t_i} = q^*_{t_i}$, we have $\Q_{t_i} = Q^{t,Q,q^*}_{t_i}$. As before, this implies that $Q^{t,Q,q}= Q^{t,Q,q^*}$ on $[t_i, t_{i+1})$, which yields $q = q^{*}$ on $(t_i,t_{i+1}]$ thanks to \eqref{eq:existence_control}.

\subsection{A useful lemma}

To prove the verification theorem we need the following lemma. As a byproduct, we get the finiteness of the value function $U$.

\begin{lemma}
    \label{lem:uniform_maj_pnl}
    There exists a nonnegative random variable $M$ such that
    \begin{equation*} 
        e^{-\gamma \PnL_s} \leqslant M\quad\text{and}\quad \E[t,y][M^k] < \infty 
    \end{equation*}
    for all $(t,s,Q,y) \in [0,T] \times [0,T] \times \mathcal{Q}_n \times \mathcal{Y}$ and for all $k \geqslant 0$.
\end{lemma}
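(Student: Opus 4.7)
The plan is to decompose $-\gamma\PnL_s$ into two jump integrals and a Brownian stochastic integral, dominate each piece pathwise (uniformly in $s$), and then check moments via H\"older's inequality, Doob's maximal inequality, and the Dol\'eans--Dade identity.

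First I would bound the jump parts. The integrands $(q^i_u\wedge z)\bigl(\tfrac{\delta}{2}\mp\Y_u\bigr)$ are uniformly bounded in absolute value by $2\Qmax\delta(\eta+1)$, since $0\le q^i_u\wedge z\le\qi\le 2\Qmax$ by admissibility and $|\Y_u|\le\ybar=\delta(\eta+\tfrac12)$. It follows that each of the two jump contributions to $\PnL_s$ is at most $2\Qmax\delta(\eta+1)\,N_T$ in absolute value, where $N_T:=N^a([0,T]\times\Qa)+N^b([0,T]\times\Qb)$ is the total number of jumps of $N^a$ and $N^b$ in $[0,T]$ --- a bound uniform in $s$ and in the control.

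For the Brownian piece $I_s:=-\gamma\sigma\int_t^{s\vee t}\Q_u\,\diff W_u$, the estimate $|\Q_u|\le\Qmax$ makes Novikov's condition trivial, so $\mathcal{E}(I)_s:=\exp\bigl(I_s-\tfrac12[I]_s\bigr)$ is a true $\Proba[t,y]$-martingale; observe that $W$ remains a Brownian motion under $\Proba[t,y]$ since the Girsanov change only rescales the jump compensators. From $[I]_s\le\gamma^2\sigma^2\Qmax^2T$ and the identity $e^{I_s}=\mathcal{E}(I)_s\,e^{[I]_s/2}$ we get $\sup_{s\in[0,T]}e^{I_s}\le e^{\gamma^2\sigma^2\Qmax^2T/2}\sup_{s\in[0,T]}\mathcal{E}(I)_s$. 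Combining this with the jump bound, I would define
\begin{equation*}
M\;:=\;\exp\!\bigl(2\gamma\Qmax\delta(\eta+1)\,N_T+\tfrac12\gamma^2\sigma^2\Qmax^2T\bigr)\cdot\sup_{s\in[0,T]}\mathcal{E}(I)_s,
\end{equation*}
so that $e^{-\gamma\PnL_s}\le M$ for every $s\in[0,T]$.

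To verify $\E[t,y][M^k]<\infty$ for all $k\ge 0$, I would split the two factors via H\"older with conjugate exponents $p,q>1$. For the Poisson factor, the boundedness $\Lambda^i\le\Lambda^*$ makes $\exp\!\bigl(\lambda N^i_s-\int_0^s\Lambda^i(u,\Y_u)(e^\lambda-1)\,\diff u\bigr)$ a $\Proba[t,y]$-supermartingale, giving $\E[t,y][e^{\lambda N^i_T}]\le\exp\!\bigl(\Lambda^*T(e^\lambda-1)\bigr)$ for every $\lambda\ge 0$, so $e^{cN_T}$ has all moments. For the stochastic-exponential factor, Doob's $L^m$-inequality reduces matters to $\E[t,y][\mathcal{E}(I)_T^m]$; writing $\mathcal{E}(I)_T^m=\mathcal{E}(mI)_T\cdot\exp\bigl(\tfrac12(m^2-m)[I]_T\bigr)\le e^{(m^2-m)\gamma^2\sigma^2\Qmax^2T/2}\,\mathcal{E}(mI)_T$ and using that $\mathcal{E}(mI)$ is again a true martingale (Novikov still applies at each power $m$, thanks to $|\Q_u|\le\Qmax$) closes the estimate. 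The only real obstacle is bookkeeping: no new ingredient beyond Novikov and Doob is required, provided one carefully tracks that $W$ remains Brownian under $\Proba[t,y]$ and that the quadratic-variation bound on $I$ is independent of the control.
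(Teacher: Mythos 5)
There is a genuine gap, and it sits exactly at the word ``uniform'' in the lemma. Your dominating variable is built from $\sup_{s\in[0,T]}\mathcal{E}(I)_s$ with $I_s=-\gamma\sigma\int_t^{s\vee t}\Q_u\,\diff W_u$, and this stochastic exponential depends on the initial time $t$, the initial inventory $Q$, and above all on the control $q$ through the integrand $\Q_u$. So what you produce is a family $M^{t,Q,y,q}$ rather than a single nonnegative random variable $M$ satisfying $e^{-\gamma\PnL_s}\leqslant M$ simultaneously for all $(t,s,Q,y)$ (and, as the verification argument in Section \ref{sec:verification} requires, for all admissible controls $q\in\mathcal{A}^{t,Q}$ at once). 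There is no easy patch along your route: taking a supremum of $\sup_s\mathcal{E}(I)_s$ over the uncountable family of admissible controls is neither obviously measurable nor obviously integrable, and the identity $e^{I_s}=\mathcal{E}(I)_s e^{[I]_s/2}$ is intrinsically tied to one fixed integrand. The paper avoids this by never exponentiating a control-dependent martingale: it uses that $\Q$ is piecewise constant between the jump times $t_i$ of $N=N^a(\cdot\times\Qa)+N^b(\cdot\times\Qb)$ to bound the Brownian integral pathwise by $4\Qmax\sup_{u\in[0,T]}|W_u|+\Qmax\sum_i|W_{t_{i+1}}-W_{t_i}|$, a quantity involving only $W$ and the jump times of $N$, hence independent of $(t,s,Q,y,q)$; the moments are then handled by Cauchy--Schwarz, the reflection principle, and a conditional Gaussian estimate given $(N^a,N^b)$.

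Your remaining ingredients are essentially sound: the bound $2\Qmax\delta(\eta+1)N_T$ on the jump part is correct and control-free, the exponential moments of $N_T$ under $\E[t,y][]$ follow from $\Lambda^i\leqslant\Lambda^*$ as you say, and the Novikov/Doob computation for $\mathcal{E}(I)$ is arithmetically right (you would just need to justify, e.g.\ via orthogonality of $W$ and the purely discontinuous density $L^{t,y}$, that $W$ is still a Brownian motion under $\Proba[t,y]$, rather than assert it). But as written the proof establishes a pointwise-in-$(t,Q,y,q)$ domination with finite moments, not the uniform statement of Lemma \ref{lem:uniform_maj_pnl}; to close the gap you must replace the stochastic-exponential bound on the Brownian term by a pathwise bound that does not involve the control, as in the paper's decomposition over the jump intervals of $N$.
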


\begin{proof}
    Let $(t,s,Q,y) \in [0,T] \times [0,T] \times \mathcal{Q}_n \times \mathcal{Y}$. We have
    \begin{equation*}
        e^{-\gamma \PnL_s} \leqslant \exp\left(\gamma\delta\eta N_T + \gamma\sigma \left|\int_t^{t \vee s} \Q_r \diff W_r \right|\right)
    \end{equation*}
    where $N \defeq N^a(\cdot \times \Qa) + N^b(\cdot \times \Qb)$.

    Let $t_0< t_1 < \dots$ be the jump times of $N$ with $t_0 = 0$ and $t_i = T$ after the last jump. The process $\Q$ remains constant in each interval $[t_i, t_{i+1})$, hence we can write
    \begin{equation*}
        \int_{t}^{s \vee t} \Q_u \diff W_u = \sum_{i = 0}^{\infty}\Q_{t_{i}\vee t} \left( W_{\left(t_{i+1} \wedge s\right)\vee t} - W_{\left(t_i \wedge s \right) \vee t }\right).
    \end{equation*}
   This sum is well-defined (almost surely) because $N_T < \infty$ and therefore only a finite number of terms are non-zero. 
    
    Suppose that $s \geqslant t$ and fix $\omega \in \Omega$. There exist two integers $i_0$ and $i_1$ such that $t \in [t_{i_0}(\omega), t_{i_0+1}(\omega))$ and $s \in [t_{i_1}(\omega), t_{i_1 +1}(\omega))$. Hence,
    \begin{equation*}
        \left(\int_{t}^{s\vee t} \Q_u \diff W_u\right)(\omega) =
        Q \left( W_{t_{i_0+1}} - W_{t}\right)(\omega)
        + \Q_s \left( W_{s} - W_{t_{i_1}}\right)(\omega)
        +\sum_{i = i_0+1}^{i_1-1}\Q_{t_{i}} \left( W_{t_{i+1}} - W_{t_i}\right)(\omega).
    \end{equation*}

    Thus, 
    \begin{equation*}
        \left|\int_{t}^{s \vee t} \Q_u \diff W_u \right|
        \leqslant
        4 \Qmax \sup_{u \in [0,T]} \left|W_u \right|
        + \Qmax
        \sum_{i = 0}^{\infty} \left| W_{t_{i+1}} - W_{t_i}\right|
    \end{equation*}
    and the inequality is trivially verified if $s < t$. Define
    \begin{equation*}
        M \defeq \exp\left(\gamma\delta\eta N_T +  4 \gamma \sigma\Qmax \sup_{u \in [0,T]} \left|W_u \right|
        + \gamma\sigma\Qmax
        \sum_{i = 0}^{\infty} \left| W_{t_{i+1}} - W_{t_i}\right|\right).
    \end{equation*}
    The random variable $M$ does not depend on $(t,s,y,Q)$ and
    \begin{equation*} 
        e^{-\gamma \PnL_s} \leqslant M,\quad (t,s,Q,y) \in [0,T] \times [0,T] \times \mathcal{Q}_n \times \mathcal{Y}.
    \end{equation*}
    Let $k \geqslant 0$ and $(t,y) \in [0,T] \times \mathcal{Y}$. By the three factor Cauchy-Schwarz inequality,
    \begin{equation*}
        \begin{split}
            \E[t,y][M^k] \leqslant \mathbb{E}^{t,y}&\left[\exp\left(3k \gamma \delta\eta N_T\right)\right]^{\frac{1}{3}} \times
            \mathbb{E}^{t,y}\left[\exp\left(12k \gamma\sigma \Qmax \sup_{u \in [0,T]} \left|W_u \right|\right)\right]^{\frac{1}{3}} \\
            &\times \mathbb{E}^{t,y}\left[\exp\left(3k \gamma \sigma \Qmax \sum_{i = 0}^{\infty} \left| W_{t_{i+1}} - W_{t_i}\right|\right)\right]^{\frac{1}{3}}.
        \end{split}
    \end{equation*}
    It is easy to see that the first two factors are finite. For the first factor, it suffices to go back to the probability $\Proba$ and use the fact that $N_T$ is a Poisson random variable under $\Proba$. For the second factor, one can use the reflection principle for Brownian motion.

    We now show that the third factor is also finite. Define $A \defeq 6 k \gamma \sigma \Qmax$. A change of probability measure and the Cauchy-Schwarz inequality yield
    \begin{align*}
        \E[t,y][\exp\left(\frac{A}{2} \sum_{i=0}^{\infty}\left|W_{t_{i+1}} - W_{t_i}\right|\right)]
        & \leqslant e^{2T}\E[][e^{\ln(\Lambda^*)N_T}\exp\left(\frac{A}{2} \sum_{i=0}^{\infty}\left|W_{t_{i+1}} - W_{t_i}\right|\right)] \\
        & \leqslant e^{2T}\E[][e^{2\ln(\Lambda^*)N_T}]^{\frac{1}{2}}\E[][\exp\left(A \sum_{i=0}^{\infty}\left|W_{t_{i+1}} - W_{t_i}\right|\right)]^{\frac{1}{2}}. 
    \end{align*}
    The first two factors of the last expression are finite, hence we only have to prove that it is also the case for the third one. We have
    \begin{align*}
        \E[][\exp\left(A \sum_{i=0}^{\infty}\left|W_{t_{i+1}} - W_{t_i}\right|\right)]
        & \leqslant \sum_{n = 0}^{\infty} \E[][\mathds{1}_{\{N_T = n\}}
        \exp\left(A \sum_{i=0}^{n}\left|W_{t_{i+1}} - W_{t_i}\right|\right)]
        \\
        & \leqslant \sum_{n = 0}^{\infty} \E[][\mathds{1}_{\{N_T = n\}}
        \prod_{i=0}^n\exp\left(A \left|W_{t_{i+1}} - W_{t_i}\right|\right)].
    \end{align*}
Since the time $t_i$ is $\sigma(N^a, N^b)$-measurable, the factors are independent conditionally on $(N^a, N^b)$. Hence,
\begin{equation}\label{lem5:1}
    \E[][\exp\left(A \sum_{i=0}^{\infty}\left|W_{t_{i+1}} - W_{t_i}\right|\right)]
         \leqslant \sum_{n = 0}^{\infty} \mathbb{E}\left[\mathds{1}_{\{N_T = n\}}
        \prod_{i=0}^n \mathbb{E} \left[ \exp\left(A \left|W_{t_{i+1}} - W_{t_i}\right|\right)|N^a,N^b\right]\right].
\end{equation}
For fixed $0 \leqslant s \leqslant t \leqslant T$,
\begin{equation}\label{lem5:2}
    \E[][\exp(A |W_t-W_s|)] = 2 \exp\left(\frac{A^2}{2}(t-s)\right)
    \mathcal{N}\left(A\sqrt{t-s}\right)
    \leqslant 2 \exp\left(\frac{A^2}{2}(t-s)\right)
\end{equation}
where $\mathcal{N}$ is the cumulative distribution function of the standard normal law. Recalling that $W$ is independent of $(N^a,N^b)$, combining the inequalities \eqref{lem5:1} and \eqref{lem5:2} yields
\begin{align*}
    \E[][\exp\left(A \sum_{i=0}^{\infty}\left|W_{t_{i+1}} - W_{t_i}\right|\right)]
        & \leqslant \sum_{n = 0}^{\infty} \mathbb{E}\left[\mathds{1}_{\{N_T = n\}}
        2^n \exp\left(\frac{A^2}{2}T \right)\right] \\
        & \leqslant \sum_{n = 0}^{\infty} \exp\left(\frac{A^2}{2}T \right)
        e^{- 2T} \frac{\left(4 T\right)^n}{n!} \\
        & \leqslant \exp\left(\frac{A^2}{2}T + 2T\right) < \infty.
\end{align*}
\end{proof}

\subsection{Proof of points (iii) and (iv)}

Let $(t,y,Q) \in [0,T) \times \mathcal{Y} \times \mathcal{Q}_n$ and $q \in \mathcal{A}^{t,Q}$. We denote by $t_1< \dots < t_{N^{ab}_T}$ the jump times of $N^{ab} \defeq N^a(\cdot \times \Qa) + N^b(\cdot \times \Qb)$. We denote by $\tau_1 <\dots <\tau_{N-1}$ the jump times of $\Y$ ($N$ is a random variable), and define $\tau_0= t$, $\tau_N = T$. Since the $t_i$'s are totally inaccessible and the $\tau_i$'s are predictable (they are the limit of $(\tau_i^{(k)})_k$ defined in \eqref{eq:tauui_predictable} below), we have almost surely
\begin{equation}\label{eq:disjoint_jumps}
    \{t_i: i \in {1\dots,N^{ab}_T}\} \cap \{\tau_i: i \in \{1\dots,N\}\} = \emptyset.
\end{equation}

Let $k_0 \in \mathbb{N}^*$ be such that $\min\{2\delta \eta, y + \bar{y}, \bar{y} - y\} < \frac{1}{k_0}$. Fix $k \geqslant k_0$ and define for $i \leqslant N-1$,
\begin{equation}\label{eq:tauui_predictable}
    \tau_i^{(k)} \defeq \inf\left\{s \in (\tau_i, \tau_{i+1}) : \Y_s \in\left\{-\bar{y} + \frac{1}{k}, \bar{y} - \frac{1}{k}\right\}\right\}\wedge\left(T - \frac{1}{k}\right).
\end{equation}
By the intermediate value theorem, we have that for all $k \geqslant k_0$ and $i\leqslant N-1$, $\tau_i \leqslant \tau_i^{(k)} \leqslant \tau_{i+1}$. Also, for all $s \in [\tau_i, \tau_{i}^{(k)}]$, $(s,\Y_s) \in \left[0, T - \frac{1}{k}\right] \times \left[-\bar{y} + \frac{1}{k}, \bar{y} - \frac{1}{k}\right]$. Furthermore, for each $i$, $(\tau_i^{(k)})_{k \geqslant k_0}$ is nondecreasing. One can show that for all $i \leqslant N-1$,
\begin{equation*}
    \tau_i^{(k)} \xrightarrow[k \to \infty]{} \tau_{i+1}.
\end{equation*}

Fix $k \geqslant k_0$ and define $A  \defeq - e^{-\gamma \left(\PnL_T - \ell\left(\Q_T\right)\right)} - u^Q(t,y)$. Then,
\begin{equation*}
    \begin{split}
     A =
    \sum_{i=0}^{N-1}& \left(e^{-\gamma \PnL_{\tau^{(k)}_i}}u^{\Q_{\tau^{(k)}_i}}\left(\tau^{(k)}_{i}, \Y_{\tau^{(k)}_{i}}\right) -  e^{-\gamma\PnL_{\tau_i}}u^{\Q_{\tau_i}}(\tau_i, \Y_{\tau_i})\right) \\
    &+
    \sum_{i=0}^{N-1} \left( e^{-\gamma\PnL_{\tau_{i+1}}}u^{\Q_{\tau_{i+1}}}\left(\tau_{i+1}, \Y_{\tau_{i+1}}\right) - e^{-\gamma \PnL_{\tau^{(k)}_i}}u^{\Q_{\tau^{(k)}_i}}\left(\tau^{(k)}_{i}, \Y_{\tau^{(k)}_{i}}\right)\right).
    \end{split}
\end{equation*}
Using Itô's formula with jumps on each interval $[\tau_i, \tau_i^{(k)}]$ (on which $\Y_s- \Y_{\tau_i} = \sigma\left(W_s - W_{\tau_i}\right)$), we get
\begin{equation}\label{eq:proofverification1}
    \begin{split}
        A = 
        &\sum_{i=0}^{N-1} \left( e^{-\gamma\PnL_{\tau_{i+1}}}u^{\Q_{\tau_{i+1}}}\left(\tau_{i+1}, \Y_{\tau_{i+1}}\right) - e^{-\gamma \PnL_{\tau^{(k)}_i}}u^{\Q_{\tau^{(k)}_i}}\left(\tau^{(k)}_{i}, \Y_{\tau^{(k)}_{i}}\right)\right)\\
        &+\sum_{i=0}^{N-1}
        \int_{\tau_i}^{\tau_i^{(k)}} e^{-\gamma \PnL_r} \left(\partial_t u^{\Q_r}(r, \Y_r) + \frac{\sigma^2}{2} \partial^2_{yy} u^{\Q_r}(r, \Y_r) \right)\diff r\\
        &+\sum_{i=0}^{N-1}
        \sigma\int_{\tau_i}^{\tau_i^{(k)}} e^{-\gamma \PnL_r}\partial_y u^{\Q_r}(r, \Y_r) \diff W_r \\
        &+\sum_{i=0}^{N-1}
        \frac{1}{2}\int_{\tau_i}^{\tau_i^{(k)}} \gamma^2 \sigma^2 \left(\Q_r\right)^2 e^{-\gamma \PnL_r} u^{\Q_r}(r, \Y_r) \diff r \\
        &-\sum_{i=0}^{N-1}
        \int_{\tau_i}^{\tau_i^{(k)}} \gamma \sigma^2 \Q_r e^{-\gamma \PnL_r} \partial_y u^{\Q_r}(r, \Y_r) \diff r \\
        &-\sum_{i=0}^{N-1}\int_{\tau_i}^{\tau_i^{(k)}} \gamma \sigma \Q_r e^{-\gamma \PnL_r} u^{\Q_r}(r, \Y_r) \diff W_r\\
        &+ \sum_{i=0}^{N-1}
        \int_{(\tau_i,\tau_i^{(k)}] \times \Qa} e^{-\gamma \PnL_{r-}}\left[e^{\gamma (q^{a}_r \wedge z) \left(\frac{\delta}{2} - Y^{t,y}_r\right)}u^{\Q_{r-} - (q^{a}_r \wedge z)}(r, \Y_r) - u^{\Q_{r-}}(r, \Y_r)\right] N^a(\diff u, \diff z) \\
        &+ \sum_{i=0}^{N-1}
        \int_{(\tau_i,\tau_i^{(k)}] \times \Qb} e^{-\gamma \PnL_{r-}}\left[e^{\gamma (q^{b}_r \wedge z)\left(\frac{\delta}{2} + Y^{t,y}_r\right)}u^{\Q_{r-} + (q^{b}_r \wedge z)}(r, \Y_r) - u^{\Q_{r-}}(r, \Y_r)\right] N^b(\diff u, \diff z).
        \end{split}
\end{equation}

The integrals in $\diff W$ have zero expectation (under $\mathbb{P}^{t,y}$). Indeed, since $\sup_{R \in \mathcal{Q}_n}|u^R|_{2+\beta} < \infty$, $u^Q$ and $\partial_y u^Q$ are bounded in the compact set $\left[0,T-\frac{1}{k}\right] \times \left[-\bar{y} + \frac{1}{k}, \bar{y}-\frac{1}{k}\right]$, uniformly in $Q \in \mathcal{Q}_n$, by some constant $m > 0$. Then, letting $M$ be the random variable given by Lemma \ref{lem:uniform_maj_pnl},
\begin{align*}
    & \mathbb{E}^{t,y}\left[\int_0^T \left(\sum_{i=0}^{N-1}\mathds{1}_{\left[\tau_i,\tau_i^{(k)}\right]}(r) e^{-\gamma \PnL_r}\partial_y u^{\Q_r}(r, \Y_r)\right)^2 \diff r\right]
    \leqslant Tm^2\E[t,y][M^2] < \infty,\\
    & \mathbb{E}^{t,y}\left[\int_0^T \left(\sum_{i=0}^{N-1}\mathds{1}_{\left[\tau_i,\tau_i^{(k)}\right]}(r)\gamma\sigma\Q_r e^{-\gamma \PnL_r}u^{\Q_r}(r, \Y_r)\right)^2 \diff r\right]
    \leqslant \gamma^2\sigma^2\Qmax^2 Tm^2\E[t,y][M^2] < \infty.
\end{align*}

Taking expectation in \eqref{eq:proofverification1}, and using the fact that $(u^Q)_{Q \in \mathcal{Q}_n}$ solves \eqref{eq:HJB_interior}, yields
\begin{equation}\label{eq:proofverification2}
    \begin{split}
        J(t,Q,&y,q) - u^Q(t,y)\\
        & = \E[t,y][\sum_{i=0}^{N-1} \left( e^{-\gamma\PnL_{\tau_{i+1}}}u^{\Q_{\tau_{i+1}}}\left(\tau_{i+1}, \Y_{\tau_{i+1}}\right) - e^{-\gamma \PnL_{\tau^{(k)}_i}}u^{\Q_{\tau^{(k)}_i}}\left(\tau^{(k)}_{i}, \Y_{\tau^{(k)}_{i}}\right)\right)] \\
        & + \mathbb{E}^{t,y}\left[\sum_{i=0}^{N-1}
        \int_{\tau_i}^{\tau_i^{(k)}} \int_{\Qa} e^{-\gamma \PnL_{r-}}\left[e^{\gamma (q^{a}_r \wedge z) \left(\frac{\delta}{2} - Y^{t,y}_r\right)}u^{\Q_{r-} - (q^{a}_r \wedge z)}(r, \Y_r) \right] \Lambda^a(\Y_r) \mu^a(\diff z) \diff r\right] \\
        & + \mathbb{E}^{t,y}\left[\sum_{i=0}^{N-1}
        \int_{\tau_i}^{\tau_i^{(k)}} \int_{\Qb} e^{-\gamma \PnL_{r-}}\left[e^{\gamma (q^{b}_r \wedge z) \left(\frac{\delta}{2} + Y^{t,y}_r\right)}u^{\Q_{r-} + (q^{b}_r \wedge z)}(r, \Y_r) \right] \Lambda^b(\Y_r)\mu^b(\diff z)\diff r\right] \\
        & -\mathbb{E}^{t,y}\left[
            \sum_{i=0}^{N-1}
            \int_{\tau_i}^{\tau_i^{(k)}}
            e^{-\gamma \PnL_{r}} H^a\left(\left(u^R\right)_{R \in \mathcal{Q}_n}, \Q_r, \Y_r\right)\Lambda^a(\Y_r)\diff r
        \right] \\
        & - \mathbb{E}^{t,y}\left[
            \sum_{i=0}^{N-1}
            \int_{\tau_i}^{\tau_i^{(k)}}
            e^{-\gamma \PnL_{r}}H^b\left(\left(u^R\right)_{R \in \mathcal{Q}_n}, \Q_r, \Y_r\right)\Lambda^b(\Y_r)\diff r
        \right].
    \end{split}
\end{equation}

The term inside the expectation in the first term is bounded by $2\sup\limits_{R \in \mathcal{Q}_n}|u^{R}|_{\infty}N M$. The Cauchy-Schwarz inequality and Lemma \ref{lem:finite_activity_expectation} imply that
\begin{equation*}
    \E[t,y][2\sup\limits_{R \in \mathcal{Q}_n}|u^{R}|_{\infty}N M] \leqslant
    2\sup\limits_{R \in \mathcal{Q}_n}|u^{R}|_{\infty} \sqrt{\E[t,y][N^2]} \sqrt{\E[t,y][M^2]} < \infty.
\end{equation*} 
Thus, we can use the dominated converge theorem in the first expectation. By \eqref{eq:disjoint_jumps}, almost surely, for all $i$, $\PnL_{\tau_i^{(k)}} \xrightarrow[k \to \infty]{} \PnL_{\tau_{i+1}}$ and $\Q_{\tau_i^{(k)}} \xrightarrow[k \to \infty]{} \Q_{\tau_{i+1}}$. Combined with the continuity conditions \eqref{eq:HJB_border}, we conclude that the first term tends to 0 when $k \to \infty$.

The four other expectations in \eqref{eq:proofverification2} have integrands bounded by $\Lambda^* M e^{\gamma (\qa \vee \qb) \delta (1+\eta)} \sup_{R \in \mathcal{Q}_n}|u^{R}|_{\infty}$ which has finite expectation, therefore we can again use the dominated convergence theorem to obtain
\begin{equation}\label{eq:proofverification3}
    \begin{split}
        J(t,Q,y,q) &- u^Q(t,y)=\\
        &\mathbb{E}^{t,y}\left[
        \int_{t}^{T} \int_{\Qa} e^{-\gamma \PnL_{r}}\left[e^{\gamma (q^{a}_r \wedge z) \left(\frac{\delta}{2} - Y^{t,y}_r\right)}u^{\Q_{r} - (q^{a}_r \wedge z)}(r, \Y_r) \right] \Lambda^a(\Y_r) \mu^a(\diff z) \diff r\right] \\
        & + \mathbb{E}^{t,y}\left[
        \int_{t}^{T} \int_{\Qb} e^{-\gamma \PnL_{r}}\left[e^{\gamma (q^{b}_r \wedge z) \left(\frac{\delta}{2} + Y^{t,y}_r\right)}u^{\Q_{r} + (q^{b}_r \wedge z)}(r, \Y_r) \right] \Lambda^b(\Y_r)\mu^b(\diff z)\diff r\right] \\
        & -\mathbb{E}^{t,y}\left[
            \int_{t}^{T}
            e^{-\gamma \PnL_{r}} H^a\left(\left(u^R\right)_{R \in \mathcal{Q}_n}, \Q_r, \Y_r\right)\Lambda^a(\Y_r)\diff r
        \right] \\
        & - \mathbb{E}^{t,y}\left[
            \int_{t}^{T}
            e^{-\gamma \PnL_{r}}H^b\left(\left(u^R\right)_{R \in \mathcal{Q}_n}, \Q_r, \Y_r\right)\Lambda^b(\Y_r)\diff r
        \right].
    \end{split}
\end{equation}
Hence, by the definition of the Hamiltonians, $J(t,Q,y,q) \leqslant u^Q(t,y)$ and $U(t,Q,y) \leqslant u^{Q}(t,y)$. Furthermore, with $q = q^*$, $u^Q(t,y) = J(t,Q,y,q^*)$ and point (iii) follows. Point (iv) is also a direct consequence of equation \eqref{eq:proofverification3} since $\Lambda^a, \Lambda^b >0$.

\section{Existence of a solution to the Hamilton-Jacobi-Bellman equation}
\label{sec:hjb}
\begin{dummyenv}
\renewcommand{\Q}{\mathcal{Q}_n}
\renewcommand{\Y}{\mathcal{Y}}

This section is devoted to the proof of Theorem \ref{thm:existence}(i), namely the existence of a classical solution of the HJB equation \eqref{eq:HJB_interior} with boundary conditions \eqref{eq:HJB_border}. The proof of Theorem \ref{thm:existence}(ii) is postponed to Appendix \ref{sec:convergence_discrete_continuous}. For this section, we define $\D \defeq C([0,T] \times \bar{\Y}) \cap C^{1,2}((0,T] \times \mathcal{Y})$ and let $n \in \mathbb{N}^* \cup \{\infty\}$.

We now fix some constants for the whole section. Let $\qmax \defeq \max\{\qa, \qb\}$. For $\beta \leqslant \alpha$, define $\Lambda_{\beta}^* \defeq \max\left\{|\Lambda^a|_{\beta},  |\Lambda^b|_{\beta}\right\}$. Let $K_0 \defeq \max \left\{\frac{\sigma^2}{2}, \frac{2}{\sigma^2}, \sigma^2 \gamma \Qmax,\frac{\sigma^2 \gamma^2 \Qmax^2}{2} + 2 \Lambda^*\right\}$, and $C_0 > 0$ and $\beta_0 \in (0,1)$ be constants given by 
Corollary \ref{corol:krylovsafonov} (Krylov-Safonov) for coefficients bounded by $K_0$ and coefficient in front of $\partial^2_{yy}$ greater than $K_0^{-1}$. We fix $\beta \defeq \alpha \wedge \beta_0$, and let $K_1 \defeq \max\{1, T, \bar{y}, \bar{y}^2\} \cdot \max\left\{\frac{\sigma^2}{2}, \sigma^2 \gamma \Qmax,\frac{\sigma^2 \gamma^2 \Qmax}{2} + 2 \Lambda^* + 2 \Lambda^*_{\beta}\right\}$ and $C_1 > 0$ be a constant given by Corollary \ref{corol:linear_existence} with $a = -\bar{y}$, $b = \bar{y}$, $a_0 = y_-$, $b_0 = y_+$, $\delta = \beta$, $\rho_1=\rho_2 = 1$, $ \alpha \equiv \frac{\sigma^2}{2}$.

We denote by $\ell^*$ the maximum of $\ell$. If $n = \infty$, let $\varpi$ be a finite modulus of continuity of $\ell$. We suppose that for $\Delta \in [0, \infty)$, $\varpi(\Delta) \geqslant \Delta$, otherwise we can replace $\varpi(\Delta)$ by $\max\{\Delta, \varpi(\Delta)\}$.

In Subsection \ref{subsec:discrete_case}, we show using a contraction argument the existence of a uniformly bounded family of functions $(u_Q)_{Q \in \Q}$ in $\D$ verifying \eqref{eq:HJB_interior}-\eqref{eq:HJB_border} if $n < \infty$ or $\mu^a$ and $\mu^b$ have finite support. This way, there is no measurability issue with the definition of the Hamiltonians. We also show that there exists a constant $C$, depending on $T, \Qmax, \bar{y}, \sigma, \gamma, \Lambda^*, \beta, \Lambda_{\beta}, \ell^*$ but not on $n$, $\mu^a$ or $\mu^b$ such that $\sup_{Q \in \Q} |u^Q|_{2+\beta} \leqslant C$.

In Subsection \ref{subsec:continuity}, we show that, for $n=\infty$ and $\mu^a$, $\mu^b$ having finite support, any family of functions $(u_Q)_{Q \in \mathcal{Q}_{\infty}}$ in $\D$ verifying \eqref{eq:HJB_interior}-\eqref{eq:HJB_border} with $\sup_{Q \in \mathcal{Q}_{\infty}}|u_n|_{\beta} < \infty$ is necessarily continuous on $[0,T] \times \mathcal{Q}_{\infty} \times \bar{\Y}$.

In Subsection \ref{subsec:existence_general_case}, we show, for $n=\infty$ and any $\mu^a$, $\mu^b$, the existence of a solution of \eqref{eq:HJB_interior}-\eqref{eq:HJB_border}, as a limit of solutions associated to finitely supported measures.

\subsection{Discrete case}
\label{subsec:discrete_case}

We suppose that $n < \infty$ or $\mu^a$ and $\mu^b$ have finite support. We build a sequence of solutions to an iterated linear problem. We then show that it converges uniformly towards some family of continuous functions $(u^Q)_{Q \in \Q}$. Finally, we derive the regularity of $(u^Q)_{Q \in \Q}$ and show that it is a solution of \eqref{eq:HJB_interior}-\eqref{eq:HJB_border}.

\subsubsection{Construction of a sequence of solutions to an iterated linear problem}

For all $(t,y,Q) \in [0,T] \times \bar{\Y} \times \Q$, we define $u^Q_0(t,y) \defeq -e^{\gamma \ell(Q)}$. Let $k \in \mathbb{N}$ and suppose we have built a family $(u_k^Q)_{Q \in \Q}$ in $\D$ such that $\sup_{Q \in \Q}|u^Q_k|_{2+\beta} < \infty$.

Let $Q \in \Q$. For $(t,y) \in [0,T) \times \Y$, define
\begin{equation}
    \label{eq:definition_fk}
    f^Q_k(t,y) \defeq \Lambda^a(t,y)H^a\left((u^Q_k(t,y))_{Q\in\Q}, Q, y, \mu^a\right) + \Lambda^b(t,y)H^b\left((u^Q_k(t,y))_{Q\in\Q}, Q, y, \mu^b\right).
\end{equation}
For $(P, P') \in ([0,T) \times \Y)^2$
\begin{equation*}
    \sup_{R \in \Q} \left|u^R_k(P) - u^R_k(P')\right| \leqslant \frac{d(P,P')}{d_{PP'}^{\beta}}\sup_{R \in \Q} |u^R_k|_{\beta} < \infty.
\end{equation*}
Hence, by Lemmas \ref{lem:H_ineq_ty} and \ref{lem:H_ineq_w}, $|f_k^Q|_{\beta} \leqslant C \sup_{R \in \Q}|u^R_k|_{\beta} < \infty$ for some constant $C$ depending only on $T$, $\gamma$, $\delta$, $\eta$, $\Qmax$, $\Lambda^*$, $\Lambda^*_{\beta}$. Hence, by Corollary \ref{corol:linear_existence}, there exists a unique function $u_{k+1}^Q \in \D$ solving, for $(t,y) \in [0,T) \times \Y$,
\begin{equation}
    \label{eq:hjb_iter}
    0 = \partial_t u^Q_{k+1}(t,y) + \frac{\sigma^2}{2}\partial^2_{yy} u^Q_{k+1}(t,y)
        -\sigma^2 \gamma Q \partial_y u^Q_{k+1}(t,y) + \left(\frac{\sigma^2 \gamma^2 Q^2}{2}- \left(\Lambda^a + \Lambda^b\right)(y)\right)u^Q_{k+1}(t,y)+
            f^Q_k(t,y)
\end{equation}
and for $(t,y) \in [0,T] \times \bar{\Y}$,
\begin{equation*}
    \left\{
        \begin{array}{ll}
            u_{k+1}^Q(T, y) &=-e^{\gamma\ell(Q)} \\
            u_{k+1}^Q\left(t, \bar{y}\right) &= u_{k+1}^Q\left(t, y_+\right) \\
            u_{k+1}^Q\left(t, -\bar{y}\right) &= u_{k+1}^Q\left(t, y_-\right).
        \end{array}
    \right.
\end{equation*}
Furthermore,
\begin{equation}
    \label{eq:maj_holder_norm_uk}
    |u^Q_{k+1}|_{2+\beta} \leqslant C_1(e^{\gamma \ell^*} + |f_k^Q|_{\beta}) \leqslant C_1 \left(e^{\gamma \ell^*} + C\sup_{R \in \Q}|u^R_{k}|_{\beta}\right) < \infty.
\end{equation}

The construction follows iteratively.

\subsubsection{Convergence of the sequence}

We have that, for $k\geqslant 1$ and $Q\in\Q$, $v \defeq u^Q_{k+1} - u^{Q}_k$ solves 
\begin{equation*}
    \begin{aligned}
        0 =& \left(\partial_t v + \frac{\sigma^2}{2}\partial^2_{yy} v
        -\sigma^2 \gamma Q \partial_y v + \left(\frac{\sigma^2 \gamma^2 Q^2}{2}- \left(\Lambda^a + \Lambda^b\right)(y)\right)v+
            f^Q_k - f^Q_{k-1}\right)(t,y)
    \end{aligned}
\end{equation*}
with boundary conditions
\begin{equation*}
    \left\{
        \begin{array}{ll}
            v(T, y) &=0 \\
            v\left(t, \bar{y}\right) &= v\left(t, y_+\right) \\
            v\left(t, -\bar{y}\right) &= v\left(t, y_-\right).
        \end{array}
    \right.
\end{equation*}

By Lemma \ref{lem:H_ineq_w}, there exists a constant $C$ depending only on $\gamma$, $\delta$, $\eta$, $\Qmax$, $\Lambda^*$ such that for all $(t,y) \in [0,T) \times \Y$, $\left|f^Q_{k}(t,y) - f^Q_{k-1}(t,y)\right| \leqslant C \sup_{R \in \Q} \left|u^R_{k}(t,y) - u^R_{k-1}(t,y)\right|$. Define
\begin{equation}
    \label{eq:g_k}
    g_k(t) \defeq e^{ct} \sup_{\substack{R \in \Q \\ y \in \bar{\Y}}}\left|u^R_{k+1}(t,y) - u^R_{k}(t,y)\right|,\quad k\geqslant 0,\, t \in [0,T]
\end{equation}
where $c \defeq \frac{\sigma^2\gamma^2\Qmax^2}{2} + 2\Lambda^*$. The function $g_k$ is measurable (see Lemma \ref{lem:meas_gk}), hence, by Corollary \ref{corol:bound_on_u},
\begin{equation*}
    g_k(t) \leqslant C\int_t^T g_{k-1}(s) \diff s, \quad t \in [0,T].
\end{equation*}
By induction, we get, for all $k \in \mathbb{N}^*$ and $t \in [0,T]$,
\begin{equation*}
    g_k(t) \leqslant C^{k-1}\frac{(T-t)^{k-1}}{(k-1)!}\sup_{t' \in [0,T]}g_1(t')
    \leqslant 3C_1e^{\gamma \ell^*} C^{k-1}\frac{(T-t)^{k-1}}{(k-1)!}.
\end{equation*}.
Thus, there exists a finite constant $C' > 0$, depending only on $T$, $\Qmax$, $\gamma$, $\sigma$, $\delta$, $\eta$, $\Lambda^*$, $\ell^*$ such that
\begin{equation}\label{eq:cauchy_sum}
    \sum_{k=0}^{\infty}\sup_{R\in \Q}|u^R_{k+1} - u^R_k|_{\infty} \leqslant \sum_{k=0}^{\infty}\sup_{t \in [0,T]} g_k(t) \leqslant C' < \infty.
\end{equation}

We conclude that for $Q \in \Q$, $(u^Q_k)_{k \in \mathbb{N}}$ is a Cauchy sequence in $C([0,T] \times \bar{\Y})$, and consequently it converges (with respect to the sup norm) towards a continuous function $u^Q$. The convergence is also uniform in $Q$, i.e.
\begin{equation*}
    \sup_{Q \in \Q}|u^Q - u_k^{Q}|_{\infty} \xrightarrow[k \to \infty]{} 0.
\end{equation*}

\subsubsection{Uniform estimates of the Hölder norms of the sequence}
\label{subsubsec:uniform_majoration}

Let $k \in \mathbb{N}^*$ and $Q \in \Q$. By the Krylov-Safonov estimate in Corollary \ref{corol:krylovsafonov}, we have
\begin{equation*}
    |u_k^Q|_{\beta} \leqslant C_0 (|u_k^Q|_{\infty} + |f_{k-1}^Q|_{\infty})
    \leqslant C_0 \left(|u_k^Q|_{\infty} + 2\Lambda^*e^{\gamma \delta(\eta+1)\qmax}\sup_{R\in \Q}|u^R_{k-1}|_{\infty}\right) \leqslant C'' \sup_{l \in \mathbb{N}} \sup_{R\in \Q} |u^R_l|_{\infty}
\end{equation*}
where $C'' > 0$ is a constant depending only on $T$, $\Qmax$, $\gamma$, $\sigma$, $\delta$, $\eta$, $\Lambda^*$. From \eqref{eq:cauchy_sum}, it is immediate that
\begin{equation*}
    |u_k^Q|_{\beta} \leqslant C''(e^{\gamma \ell^*} + C').
\end{equation*}
The inequality \eqref{eq:maj_holder_norm_uk} then yields
\begin{equation*}
    \sup_{k \in \mathbb{N}} \sup_{Q \in \Q}|u^Q_k|_{2+\beta} \leqslant C''' < \infty
\end{equation*}
for some finite constant $C''' > 0$ depending on $T$, $\Qmax$, $\gamma$, $\sigma$, $\delta$, $\eta$, $\Lambda^*$, $\beta$, $\Lambda^*_{\beta}$, $\ell^*$ but not on $\mu^a$ and $\mu^b$.

\subsubsection{Regularity of the limit and existence of a solution}

Let $Q \in \mathcal{Q}_n$. Passing to the limit, we have that $u^Q$ verifies the boundary conditions \eqref{eq:HJB_border}.

Let $K$ be a compact included in $[0,T) \times \mathcal{Y}$. The sequences $\left(\partial_t u_k^Q\right)_{k \in \mathbb{N}^*}$, $\left(\partial_y u_k^Q\right)_{k \in \mathbb{N}^*}$ and $\left(\partial^2_{yy} u_k^Q\right)_{k \in \mathbb{N}^*}$ are uniformly bounded and equicontinuous functions (because they are $\beta$-Hölder continuous with uniformly bounded Hölder constant) on $K$. Hence, by the Arzelà-Ascoli theorem, there exists an increasing sequence $(n_i)_i$ such that $\left(\partial_t u_{n_i}^Q\right)_{i}$, $\left(\partial_y u_{n_i}^Q\right)_{i}$ and $\left(\partial^2_{yy} u_{n_i}^Q\right)_{i}$ converge uniformly on $K$. Since $\left(u_{n_i}^Q\right)_{i}$ converges uniformly towards $u^Q$, then $u^Q$ is $C^{1,2}$ on $K$ and $\left(\partial_t u_{n_i}^Q\right)_{i}$, $\left(\partial_y u_{n_i}^Q\right)_{i}$ and $\left(\partial^2_{yy} u_{n_i}^Q\right)_{i}$ converge towards $\partial_t u^Q$, $\partial_y u^Q$ and $\partial^2_{yy} u^Q$, respectively. In addition, $\sup\limits_{Q \in \mathcal{Q}_n}|u^Q|_{2+\beta} \leqslant C'''<\infty$.

Let $(t,y) \in K$ and define
\begin{equation}
    \label{eq:definition_f}
    f^Q(t,y) \defeq \Lambda^a(t,y)H^a\left((u^Q(t,y))_{Q\in\Q}, Q, y, \mu^a\right) + \Lambda^b(t,y)H^b\left((u^Q(t,y))_{Q\in\Q}, Q, y, \mu^b\right)
\end{equation}
By Lemma \ref{lem:H_ineq_w}, there exists a constant $C$, independent of $k$, such that
\begin{equation*}
    \left|f^Q(t,y) - f^Q_k(t,y)\right| \leqslant C \sup_{R \in \Q} |u - u_k|_{\infty}.
\end{equation*}
Since the right-hand side tends to zero as $k$ tends to infinity, then $\lim\limits_{k\to \infty} f_k^Q(t,y) = f^Q(t,y)$. Passing to the limit over $(n_i)_i$ in equation \eqref{eq:hjb_iter} shows that $u^Q(t,y)$ verifies \eqref{eq:HJB_interior}. This concludes the proof of Theorem \ref{thm:existence}(i) with the exception of the continuity with respect to $Q$ and the case with general $\mu^a$ and $\mu^b$.

\subsection{Continuity with respect to the inventory}
\label{subsec:continuity}

\renewcommand{\Q}{\mathcal{Q}_{\infty}}

In this section, we suppose $n = \infty$, and $\mu^a$, $\mu^b$ have finite support. Let $(u^Q)_{Q \in \mathcal{Q}_{\infty}}$ be a solution of \eqref{eq:HJB_interior}-\eqref{eq:HJB_border} such that $\sup_{Q \in \mathcal{Q}_{\infty}}|u^Q|_{\beta} < \infty$. Define $M \defeq \sup_{Q \in \mathcal{Q}_{\infty}}|u^Q|_{\infty}$.

The goal of this section is to prove that $(t,Q,y)\mapsto u^Q(t,y)$ is continuous. For $\Delta \geqslant 0$, $t \in [0,T]$, we define
\begin{equation}
    \label{eq:modulus_cty_Q}
    m^{\Delta}(t) \defeq \sup \left\{u^Q(t,y) - u^{Q'}(t,y): y \in \bar{\mathcal{Y}}, Q,Q' \in \mathcal{Q}_{\infty}, |Q'-Q| \leqslant \Delta\right\}.
\end{equation}
We want to show that $\sup_t m^{\Delta}(t) \xrightarrow[\Delta \to 0]{} 0$. This implies the desired result, since for a fixed $Q$, $u^Q$ is continuous. To this end, we will use Grönwall's inequality on $m^D$, which is measurable by Lemma \ref{lem:meas_md}.

For $Q \in \Q$, we define $f^Q$ as in \eqref{eq:definition_f}. Let $(\Omega, \mathcal{F}, (\mathcal{F}_t)_t, \mathbb{P})$ be a probability space supporting a Brownian motion $W$. Let $Q \in \Q$ and $(t,y)\in [0,T] \times \mathcal{Y}$. We define
\begin{equation*}
    Z \defeq \exp \left(-\sigma \gamma Q W_T - \frac{1}{2}\sigma^2 \gamma^2 Q^2 T\right) \text{ and } \tilde{W}_r \defeq W_r + \sigma \gamma Q r,\quad r \in [0,T].
\end{equation*}
By Novikov's criterion, $\frac{\diff \mathbb{Q}}{\diff \mathbb{P}} =Z$ defines a probability measure on $(\Omega, \mathcal{F}, (\mathcal{F}_t)_t)$ and, by Girsanov's theorem, $\tilde{W}$ is a Brownian motion under $\mathbb{Q}$. Hence, by Proposition \ref{prop:feynmankac},
\begin{align*}
    u^Q(t,y) &= \mathbb{E}^{\mathbb{Q}}\left[
        - e^{\gamma \ell(Q)} \beta_t^Q + \int_t^T f^Q\left(s,Y\left(t,y,s,(\sigma\tilde{W}_r - \sigma^2 \gamma Q r)_r\right)\right) \beta_s^Q \diff s \right] \\
    &= \mathbb{E}^{\mathbb{Q}}\left[
        - e^{\gamma \ell(Q)} \beta_t^Q + \int_t^T f^Q\left(s,Y\left(t,y,s,\sigma W\right)\right) \beta_s^Q \diff s \right]
\end{align*}
where
\begin{align*}
    \beta_t^Q &\defeq
    \exp\left(\int_t^T \left(\frac{\sigma^2 \gamma^2 Q^2}{2} - \left(\Lambda^a + \Lambda^b\right)\left(s,Y(t,y,s,(\sigma\tilde{W}_r - \sigma^2 \gamma Q r)_r)\right)\right)\diff s\right)\\
     & =
    \exp\left(\int_t^T \left(\frac{\sigma^2 \gamma^2 Q^2}{2} - \left(\Lambda^a + \Lambda^b\right)\left(s, Y(t,y,s,\sigma W)\right)\right)\diff s\right).
\end{align*}

We have
\begin{equation*}
    u^Q(t,y) = \E[][-e^{\gamma \ell(Q)}\beta_t^Q Z + Z \int_t^T f^Q\left(s,Y\left(t,y,s,\sigma W\right)\right) \beta_s^Q \diff s].
\end{equation*}

Let $Q' \in \mathcal{Q}_{\infty}$ and define $Z' \defeq \exp \left(-\sigma\gamma Q' W_T - \frac{1}{2}\sigma^2 \gamma^2 (Q')^2 T\right)$. By a similar argument, we have
\begin{equation*}
    u^{Q'}(t,y) = \E[][-e^{\gamma \ell(Q')}\beta_t^{Q'} Z' + Z' \int_t^T f^{Q'}\left(s,Y\left(t,y,s,\sigma W\right)\right) \beta_s^{Q'} \diff s].
\end{equation*}
Using triangular inequalities, we get
\begin{equation}\label{eq:contQ1}
    \left|u^Q(t,y) - u^{Q'}(t,y) \right|
    \leqslant A + B + C + D + E + F
\end{equation}
where
\begin{align*}
    A & = \left|e^{\gamma \ell(Q)} - e^{\gamma \ell(Q')}\right|\mathbb{E}\left[\beta^Q_t Z\right] \\
    B & = e^{\gamma \ell(Q')}\mathbb{E}\left[\left|\beta_t^Q - \beta_t^{Q'}\right|Z\right] \\
    C & = e^{\gamma \ell(Q')}\mathbb{E}\left[\beta_t^{Q'}|Z - Z'|\right] \\
    D & = \mathbb{E}\left[Z \int_t^T f^Q\left(s,Y\left(t,y,s,\sigma W\right)\right) \left|\beta^Q_s - \beta^{Q'}_s \right| \diff s \right] \\
    E & = \mathbb{E}\left[\left|Z - Z'\right| \int_t^T f^Q\left(s,Y\left(t,y,s,\sigma W\right)\right) \beta^{Q'}_s \diff s \right] \\
    F & = \mathbb{E}\left[Z' \int_t^T \left| f^Q\left(s,Y\left(t,y,s,\sigma W\right)\right) -  f^{Q'}\left(s,Y\left(t,y,s,\sigma W\right)\right) \right| \beta^{Q'}_s \diff s \right].
\end{align*}

We want to bound $A,B,C,D,E,F$ by some function of $|Q'-Q|$ and then use Grönwall's inequality. We first state some inequalities involving the $\beta$'s and $Z$'s. Let $s \in [0,T]$. Then,
\begin{align*}
    \beta^Q_s  &\leqslant e^{\frac{1}{2}\sigma^2 \gamma^2 \Qmax^2 T + 2\Lambda^* T} \leqslant K, \\
    \left|\beta^Q_s - \beta^{Q'}_s \right| & \leqslant
    \sigma^2 \gamma^2 \Qmax^3 T e^{\frac{1}{2}\sigma^2 \gamma^2 \Qmax^2 T + 2\Lambda^* T}
     \left|Q' - Q \right| \leqslant K |Q'-Q|,\\
     \mathbb{E}\left[\left|Z-Z'\right|\right]
     & \leqslant
     \mathbb{E}\left[e^{\sigma \gamma \Qmax |W_T|}|Q-Q'|\left(\sigma \gamma  |W_T| + \sigma^2 \gamma^2 T \Qmax\right)\right]
     \leqslant K |Q'-Q|
\end{align*}
where $K > 0$ is a constant only depending on the constants of the problem ($\gamma$, $\delta$, $\sigma$, $T$, $\Qmax$, $\Lambda^*$). This yields
\begin{equation}\label{eq:contQ2}
\begin{split}
    A & \leqslant K' \varpi(|Q'-Q|)\\
    B, C, D, E & \leqslant K' |Q'-Q| \leqslant K' \varpi(|Q'-Q|)
\end{split}
\end{equation}
where $K'> 0$ is a constant depending only on $\gamma$, $\delta$, $\sigma$, $\qmax$, $T$, $\ell^*$, $\Lambda^*$ and $M$.

By Lemma \ref{lem:H_ineq_Q}, there exists a constant $K'' > 0$, depending only on $\gamma$, $\delta$, $\sigma$, $T$, $\Qmax$, $\ell^*$, $\Lambda^*$ and $M$, such that
\begin{equation}\label{eq:contQ3}
    |f^{Q}(s,y') - f^{Q'}(s,y')| \leqslant K''(|Q'-Q| + m^{|Q'-Q|}(s)),\quad (s,y') \in [0,T) \times \Y.
\end{equation}
Setting $L \defeq K'' + 5 K'$, \eqref{eq:contQ1}, \eqref{eq:contQ2} and \eqref{eq:contQ3} imply
\begin{equation*}
    \left|u^Q(t,y) - u^{Q'}(t,y) \right| \leqslant L \varpi(|Q'-Q|) + L\int_t^T m^{|Q'-Q|}(s) \diff s,
\end{equation*}
and consequently
\begin{equation*}
    \ m^{\Delta}(t) \leqslant  L \varpi(\Delta) + L \int_t^T m^{\Delta}(s) \diff s,\quad t \in [0,T],\,\Delta\geqslant 0.
\end{equation*}
By Grönwall's inequality, we conclude that
\begin{equation*}
    \sup_{t\in [0,T]}m^{\Delta}(t) \leqslant L \varpi(\Delta) e^{LT} \xrightarrow[\Delta \to 0]{} 0.
\end{equation*}

\subsection{Existence in the general case}
\label{subsec:existence_general_case}

In this section, we consider the case $n = \infty$, but we no longer make any assumption on the measures $\mu^a$ and $\mu^b$. Let $(a_k)_{k \in \mathbb{N}}$ be an increasing sequence of natural integers such that for all $k \in \mathbb{N}^*$, $\varpi\left(\frac{\qmax}{2^{a_k}}\right) \leqslant \frac{1}{2^k}$. For $k \in \mathbb{N}^*$, define
\begin{align}
    \label{eq:discrete_measures}
    \begin{split}
        \mu^a_k & = \sum_{j = 0}^{2^{a_k} - 1}\delta_{\qa\frac{j}{2^{a_k}}}\mu^a\left(\left[\qa\frac{j}{2^{a_k}}, \qa\frac{j+1}{2^{a_k}}\right)\right) + \delta_{\qa}\mu^a\left(\left\{\qa\right\}\right) \\
    \mu^b_k & = \sum_{j = 0}^{2^{a_k} - 1}\delta_{\qb\frac{j}{2^{a_k}}}\mu^b\left(\left[\qb\frac{j}{2^{a_k}}, \qb\frac{j+1}{2^{a_k}}\right)\right) + \delta_{\qa}\mu^b\left(\left\{\qb\right\}\right),
    \end{split}
\end{align}
which are probability measures converging in distribution to $\mu^a$ and $\mu^b$, respectively (it is straightforward to show that their respective cumulative distribution functions converge pointwise to the one of the limit distribution).

Let $(u^Q_k)_{Q \in \Q}$ be the solution of \eqref{eq:HJB_interior}-\eqref{eq:HJB_border} constructed in Section \ref{subsec:discrete_case} with measures $\mu^a_k$ and $\mu^b_k$. In particular,
\begin{equation*}
    \sup_{k \in \mathbb{N}^*} \sup_{Q \in \Q} |u^Q_k|_{2+\beta} \leqslant C < \infty
\end{equation*}
for some $C$ depending only on $T$, $\Qmax$, $\gamma$, $\sigma$, $\delta$, $\eta$, $\Lambda^*$, $\beta$, $\Lambda^*_{\beta}$, $\ell^*$ and each function $(t,Q,y)\mapsto u_k^Q(t,y)$ is continuous, as shown in Section \ref{subsec:continuity}. For $k \in \mathbb{N}^*$ and $Q \in \Q$, define $f_k$ as in \eqref{eq:definition_fk}, using the $u_k$'s of this section and replacing $\mu^a$ by $\mu^a_k$ and $\mu^b$ by $\mu^b_k$. Then, $u_k^Q$ solves, for $(t,y) \in [0,T) \times \Y$,
\begin{equation*}
    0 = \partial_t u^Q_{k}(t,y) + \frac{\sigma^2}{2}\partial^2_{yy} u^Q_{k}(t,y)
        -\sigma^2 \gamma Q \partial_y u^Q_{k}(t,y) + \left(\frac{\sigma^2 \gamma^2 Q^2}{2}- \left(\Lambda^a + \Lambda^b\right)(y)\right)u^Q_{k}(t,y)+
            f^Q_k(t,y)
\end{equation*}
and for $(t,y) \in [0,T] \times \bar{\Y}$,
\begin{equation*}
    \left\{
        \begin{array}{ll}
            u_{k}^Q(T, y) &=-e^{\gamma\ell(Q)} \\
            u_{k}^Q\left(t, \bar{y}\right) &= u_{k}^Q\left(t, y_+\right) \\
            u_{k}^Q\left(t, -\bar{y}\right) &= u_{k}^Q\left(t, y_-\right).
        \end{array}
    \right.
\end{equation*}

From the results of Section \ref{subsec:continuity}, there exists a constant $L' > 0$ depending only on $\gamma$, $\delta$, $\sigma$, $T$, $\Qmax$, $\ell^*$, $\Lambda^*$ and $C$ such that for all $k \in \mathbb{N}^*$, $(t,y) \in [0,T] \times \Y$, $L'\varpi$ is a modulus of continuity of the function $Q \in \Q\mapsto u_{k}^Q(t,y)$. Thus, by Lemma \ref{lem:ineq_H_discrete_meas}, there exists a constant $L > 0$, depending only on $\gamma$, $\delta$, $\sigma$, $T$, $\Qmax$, $\ell^*$, $\Lambda^*$ and $C$, such that
\begin{equation*}
    \left|f_{k+1}^Q(t,y) - f_k^Q(t,y) \right|
    \leqslant L\left|u_{k+1}^Q(t,y) - u_k^Q(t,y) \right| + L\varpi\left(\frac{\qmax}{2^{a_k}}\right),\quad (t,Q,y) \in [0,T) \times \Q \times \Y,\,k \in \mathbb{N}^*.
\end{equation*}

For $k \in \mathbb{N}^*$ and $t \in [0,T]$, define $g_k(t) \defeq e^{ct}\sup_{(Q,y) \in \Q \times \Y}\left|u^Q_{k+1}(t,y) - u^Q_{k}(t,y) \right|_{\infty}$ where $c \defeq \frac{\sigma^2 \gamma^2 \Qmax^2}{2} + 2 \Lambda^*$ -- there is no measurability issue since the $u_k$'s are continuous with respect to $(t,Q,y)$ and we can take the supremum over a countable set.

Let $k \in \mathbb{N}^*$ and $Q \in \Q$. The function $v \defeq u^Q_{k+1} - u^{Q}_k$ solves 
\begin{equation*}
    \begin{aligned}
        0 =& \left(\partial_t v + \frac{\sigma^2}{2}\partial^2_{yy} v
        -\sigma^2 \gamma Q \partial_y v + \left(\frac{\sigma^2 \gamma^2 Q^2}{2}- \left(\Lambda^a + \Lambda^b\right)(y)\right)v+
            f^Q_{k+1} - f^Q_{k}\right)(t,y)
    \end{aligned}
\end{equation*}
with boundary conditions
\begin{equation*}
    \left\{
        \begin{array}{ll}
            v(T, y) &=0 \\
            v\left(t, \bar{y}\right) &= v\left(t, y_+\right) \\
            v\left(t, -\bar{y}\right) &= v\left(t, y_-\right).
        \end{array}
    \right.
\end{equation*}
Combined with Corollary \ref{corol:bound_on_u}, this yields
\begin{equation*}
     g_k(t) \leqslant LT e^{cT} \varpi\left(\frac{\qmax}{2^{a_k}}\right) + \int_t^T L g_k(s)\diff s,\quad t \in [0,T],\, k \in \mathbb{N}^* .
\end{equation*}
By Grönwall's inequality,
\begin{equation*}
    \sup_{(t,Q,y) \in [0,T] \times \Q \times \Y}\left|u^Q_{k+1}(t,y) - u^Q_{k}(t,y) \right|
    \leqslant LTe^{(L+c)T} \varpi\left(\frac{\qmax}{2^{a_k}}\right),\quad k \in \mathbb{N}^*.
\end{equation*}
But $\sum_{k = 1}^{\infty} \varpi\left(\frac{\qmax}{2^{a_k}}\right) \leqslant \sum_{k = 1}^{\infty} \frac{1}{2^k} < \infty$. Thus, $\sum_{k=1}^{\infty} \sup_{Q \in \Q}|u_{k+1}^Q - u_k^Q|_{\infty} < \infty$. Hence, for all $Q \in \Q$, $(u^Q_k)_k$ is a Cauchy sequence for the sup norm and it converges uniformly to a continuous function $u^Q$. Moreover, since $\sup_{Q \in \Q}|u^Q - u_k^Q|_{\infty} \to 0$, $(t,Q,y)\mapsto u^Q(t,y)$ is continuous on $[0,T] \times \Q \times \bar{\Y}$.

It is clear that $(u^Q)_{Q \in \Q}$ verifies the boundary conditions \eqref{eq:HJB_border}.

Fix $Q \in \Q$. On a fixed compact $K$ included in $[0,T) \times \Y$, $\left(\partial_t u_k^Q\right)_{k \in \mathbb{N}^*}$, $\left(\partial_y u_k^Q\right)_{k \in \mathbb{N}^*}$ and $\left(\partial^2_{yy} u_k^Q\right)_{k \in \mathbb{N}^*}$ are sequences of uniformly bounded and equicontinuous functions. Hence, by the Arzelà-Ascoli theorem, they converge uniformly on $K$ up to a subsequence. Thus, $u^Q \in \D$ and, up to a subsequence, $\partial_t u_k^Q \to \partial_t u^Q$, $\partial_y u_k^Q \to \partial_y u^Q$ and $\partial^2_{yy}u_k^Q \to \partial^2_{yy} u^Q$ uniformly on $K$.

Define $f^Q$ by \eqref{eq:definition_f}. In order to show that $(u^Q)_{Q \in \Q}$ verifies \eqref{eq:HJB_interior}, it suffices to prove that for all $Q \in \Q$ and $(t,y) \in [0,T) \times \Y$, $\lim\limits_{k\to \infty}f^Q_k(t,y) = f^Q(t,y)$. 

Defining, for $(t,Q,y) \in [0,T) \times \Q \times \Y$
\begin{equation*}
    \tilde{f}_k^Q(t,y) \defeq \Lambda^a(t,y)H^a\left((u^Q(t,y))_{Q\in\Q}, Q, y, \mu^a_k\right) + \Lambda^b(t,y)H^b\left((u^Q(t,y))_{Q\in\Q}, Q, y, \mu^b_k\right),
\end{equation*}
Lemma \ref{lem:H_ineq_w} implies that $\lim\limits_{k\to \infty}(f^Q_k(t,y) - \tilde{f}_k^Q(t,y))= 0$. Lemma \ref{lem:H_convergence_measures} shows that $\lim\limits_{k\to \infty}\tilde{f}^Q_k(t,y) = f^Q(t,y)$, which concludes the proof of Theorem \ref{thm:existence}.

\end{dummyenv}

\section{Uniqueness of the control policy}
\label{sec:uniqueness}
\begin{dummyenv}
\renewcommand{\Q}{\mathcal{Q}_{\infty}}
\renewcommand{\Y}{\mathcal{Y}}
\renewcommand{\Qa}{\mathcal{Q}^{+,a}_{\infty}}
\renewcommand{\Qb}{\mathcal{Q}^{+,b}_{\infty}}
\renewcommand{\Qi}{\mathcal{Q}^{+,i}_{\infty}}

In this section, we prove Proposition \ref{prop:log_concavity} and Theorem \ref{thm:uniqueness}.

Recall that $\mu^a$ and $\mu^b$ are two probability measures on $\Qa$ and $\Qb$, respectively, such that $\qa$ belongs to the support of $\mu^a$ and $\qb$ belongs to the support of $\mu^b$ (the purpose of this assumption is to ensure the uniqueness of the optimal control).

For a real-valued continuous function $g$ defined on a compact interval $I\subset \mathbb{R}$, we denote by $\hat{g}$ its continuous concave envelope. More precisely,
\begin{equation*}
    \hat{g}(x) = \inf\left\{h(x): h \geqslant g \text{, and } h \text{ is continuous and concave}\right\},\quad x\in I.
\end{equation*}

\begin{rem}
    The continuous concave envelope $\hat{g}$ is concave, hence lower semi-continuous. In addtion, $\hat{g}$ is upper semi-continuous being the infimum of a family of continuous functions. Thus, $\hat{g}$ is continuous.
\end{rem}

For a real-valued continuous function $g$ defined on an interval $I$, we denote, for $x < \max I$, by $g'_+(x)$ its right derivative at $x$ (which may be equal to $+\infty$ at $x = \min I$) and, for $x>\min I$, by $g'_-(x)$ its left derivative at $x$ (which may be equal to $-\infty$ at $x = \max I$).

Let $g:\Q \mapsto \mathbb{R}$ be a continuous function. We define
\begin{equation}\label{eq:defCg}
    \begin{array}{rccl}
        C_g : & \Q \times \Q \times [0,1] & \to & \mathbb{R} \\
        & (Q,Q',\lambda) & \mapsto & g\left((1-\lambda)Q + \lambda Q'\right) - (1-\lambda) g(Q) - \lambda g(Q').
    \end{array}
\end{equation}

\begin{rem}\label{rem:concave}
    The function $g$ is concave if and only if $\min C_g  = 0$. Moreover, $g$ is strictly concave if and only if for all $(Q,Q') \in \Q \times \Q$ such that $Q < Q'$, $C_g\left(Q,Q',\frac{1}{2}\right)>0$.
\end{rem}

For $k \in \mathbb{R}$, $Q \in \Q$ and a continuous function $g:\Q \mapsto \mathbb{R}$, we define the \enquote{log-Hamiltonians}
\begin{equation}\label{eq:deflogHamiltonian}
\begin{split}
    h^a(g, k, Q) &\defeq \sup_{q \in \Qa \cap [0, Q + \Qmax]}
    \int_{\Qa} -e^{-k(q \wedge z) - g(Q - q \wedge z) + g(Q)}\mu^a(\diff u), \\
    h^b(g, k, Q) &\defeq \sup_{q \in \Qb \cap [0, -Q + \Qmax]}
    \int_{\Qb} -e^{-k(q \wedge z) - g(Q + q \wedge z) + g(Q)}\mu^b(\diff u). 
\end{split}
\end{equation}

For $(t,Q,y) \in [0,T] \times \Q \times \Y$, by Theorem \ref{thm:verification}, there exists some $q^* \in \mathcal{A}^{t,Q}$ such that $U(t,Q,y) = J(t,Q,y,q^*)$, which is negative being the expectation of a negative random variable. This also holds for $y \in \bar{\Y}$ since $U(t,Q,\bar{y}) = U(t,Q,y_+)$ and $U(t,Q,-\bar{y}) = U(t,Q,y_-)$. Proposition \ref{prop:log_concavity}(i) is proven. Thus, we can define, for $(t,Q,y) \in [0,T] \times \Q \times \bar{\Y}$,
\begin{equation*}
    v^Q(t,y) \defeq - \ln \left(-U(t,Q,y)\right).
\end{equation*}

We suppose that the penalty function $\ell$ is convex.

In Subsection \ref{subsec:log_concavity}, we prove, by contradiction, the strict concavity of $v$ with respect to $Q$ on $[0,T) \times \Q \times \bar{\Y}$, i.e. Proposition \ref{prop:log_concavity}(ii).

In Subsection \ref{subsec:proof_thm_uniqueness}, we show that the strict concavity of $Q \mapsto v^Q(t,y)$ implies the uniqueness of the control policy, i.e. Theorem \ref{thm:uniqueness}.

\subsection{Log-convexity of the negative of the value function}
\label{subsec:log_concavity}

For $Q \in \Q$, $v^Q \in C([0,T] \times \bar{\Y}) \cap C^{1,2}([0,T) \times \Y)$ and $(t,Q,y) \mapsto v^Q(t,y)$ is continuous on $[0,T] \times \Q \times \Y$. It is straightforward to check that for all $(t,y,Q) \in [0,T) \times \mathcal{Y} \times \Q$,
\begin{equation}
    \label{eq:log_HJB_interior}
    \begin{aligned}
        0 =&\partial_t v^Q(t,y) + \frac{\sigma^2}{2}\partial^2_{yy} v^Q(t,y) - \frac{\sigma^2}{2}\left(\partial_y v^Q(t,y) + \gamma Q\right)^2 + \left(\Lambda^a + \Lambda^b\right)(t,y)\\
        &+\Lambda^a(t,y)
            h^a\left((v^R(t,y))_{R \in \Q}, \gamma\left(\frac{\delta}{2} - y\right), Q\right)
        + \Lambda^b(t,y)
        h^b\left((v^R(t,y))_{R \in \Q}, \gamma\left(\frac{\delta}{2} + y\right), Q\right),
    \end{aligned}
\end{equation}
and for all $(t,y,Q) \in [0,T] \times \bar{\mathcal{Y}} \times \Q$, the following boundary conditions hold:
\begin{equation}
    \label{eq:log_HJB_border}
    \left\{
        \begin{array}{ll}
            v^Q(T, y) &= - \ell(Q) \\
            v^Q\left(t, \ybar\right) &= v^Q\left(t,\yplus\right) \\
            v^Q\left(t, -\ybar\right) &= v^Q\left(t, \yminus\right).
        \end{array}
    \right.
\end{equation}

We introduce the continuous function
\begin{equation*}
    \begin{array}{rccl}
        \mathcal{C} : & [0,T] \times \bar{\mathcal{Y}} \times \Q \times \Q \times [0,1] & \to & \mathbb{R} \\
        & (t, y, Q,Q',\lambda) & \mapsto & v^{(1-\lambda)Q + \lambda Q'}(t,y) - (1-\lambda) v^Q(t,y) - \lambda v^{Q'}(t,y).
    \end{array}
\end{equation*}
For fixed $(Q,Q', \lambda)$, $\mathcal{C}(\cdot, \cdot, Q, Q', \lambda) \in C^{1,2}([0,T)\times \mathcal{Y})$, and for fixed $(t,y)$, $\mathcal{C}(t,y, \cdot, \cdot, \cdot) = C_g$ with $g: R \in \Q \mapsto v^R(t,y)$.

We show the concavity of $Q \mapsto v^Q(t,y)$ with respect to the $Q$ variable. The proof is divided in two parts: first we prove the concavity of $Q \mapsto v^Q(t,y)$, then its strict concavity on $[0,T) \times \mathcal{Y}$. We use Remark \ref{rem:concave} and a maximum principle argument, following similar arguments as in \parencite{korevaar_convex_1983}.

\subsubsection{Concavity with respect to the inventory}

Suppose there exists $(t,y) \in [0,T] \times \mathcal{Y}$ such that $Q \mapsto v^Q(t,y)$ is not concave. Then, by Remark \ref{rem:concave}, $\min \mathcal{C} < 0$. Since for all $(y,Q)$, $v^Q(T,y) = -\ell(Q)$ and $-\ell$ is concave, necessarily $\mathcal{C}(T,\cdot, \cdot, \cdot, \cdot) \geqslant 0.$ Suppose then that $(t,y,Q,Q',\lambda) \in [0,T) \times \mathcal{Y} \times \Q \times \Q \times (0,1)$ minimizes $e^{t}\mathcal{C}(t,y,Q,Q',\lambda)$, with $Q < Q'$. We can choose $y \in \mathcal{Y}$ because if $\mathcal{C}$ is minimized on the boundary of $\mathcal{Y}$, then it is also minimized on its interior, thanks to \eqref{eq:log_HJB_border}; see \parencite{baldacci_bid_2020} for a similar argument.

Define $\tilde{Q} \defeq (1-\lambda)Q + \lambda Q'$. The optimality of $(t,y)$ implies
\begin{equation*}
    \partial_t \mathcal{C}(t,y,Q,Q',\lambda) + \mathcal{C}(t,y,Q,Q',\lambda) \geqslant 0
    ,\ \partial_y \mathcal{C}(t,y,Q,Q',\lambda) = 0
    \text{ and } \partial^2_{yy} \mathcal{C}(t,y,Q,Q',\lambda) \geqslant 0,
\end{equation*}
and therefore
\begin{align*}
        \partial_t v^{\tilde{Q}}(t,y) + v^{\tilde{Q}}(t,y) & \geqslant (1-\lambda)\partial_t v^Q(t,y) + \lambda\partial_t v^{Q'}(t,y) + (1-\lambda)v^Q(t,y) + \lambda v^{Q'}(t,y)\\
        \partial_y v^{\tilde{Q}}(t,y) & = (1-\lambda)\partial_y v^Q(t,y) + \lambda\partial_y v^{Q'}(t,y)\\
        \partial^2_{yy} v^{\tilde{Q}}(t,y) & \geqslant (1-\lambda)\partial^2_{yy} v^Q(t,y) + \lambda\partial^2_{yy} v^{Q'}(t,y).
    \end{align*}
    Hence,
    \begin{equation*}
        \begin{split}
            \partial_t &v^{\tilde{Q}}(t,y) + \frac{\sigma^2}{2}\partial^2_{yy} v^{\tilde{Q}}(t,y) + v^{\tilde{Q}}(t,y)
        \\ &\geqslant
        (1-\lambda)\left(\partial_t v^{Q}(t,y) + \frac{\sigma^2}{2}\partial^2_{yy} v^{Q}(t,y) + v^{Q}(t,y)\right)
        +\lambda \left(\partial_t v^{Q'}(t,y) + \frac{\sigma^2}{2}\partial^2_{yy} v^{Q'}(t,y) + v^{Q'}(t,y)\right).
        \end{split}
    \end{equation*}
    Using equation \eqref{eq:log_HJB_interior} and denoting $g \defeq (v^R(t,y))_{R \in \Q}$, we obtain 
    \begin{equation*}
        \begin{split}
            \Lambda^a&(t,y)\left((1-\lambda)h^a\left(g, \gamma\left(\frac{\delta}{2} - y\right),Q\right)
            + \lambda h^a\left(g, \gamma\left(\frac{\delta}{2} - y\right), Q'\right) - h^a\left(g, \gamma\left(\frac{\delta}{2} - y\right),\tilde{Q} \right)\right)\\
            &+\Lambda^b(t,y)\left((1-\lambda)h^b\left(g, \gamma\left(\frac{\delta}{2} + y\right),Q\right)
            + \lambda h^b\left(g, \gamma\left(\frac{\delta}{2} + y\right), Q'\right) - h^b\left(g, \gamma\left(\frac{\delta}{2} + y\right),\tilde{Q} \right)\right)\\
            & \geqslant
            (1-\lambda) \left(\partial_y v^{Q}(t,y) + \gamma Q\right)^2 + \lambda \left(\partial_y v^{Q'}(t,y) + \gamma Q'\right)^2 - \left(\partial_y v^{\tilde{Q}}(t,y) + \gamma \tilde{Q}\right)^2\\
            &+(1-\lambda)v^Q(t,y) + \lambda v^{Q'}(t,y) - v^{\tilde{Q}}(t,y).
        \end{split}
    \end{equation*}
    Since the square function is convex and
    \begin{equation*}
        \partial_y v^{\tilde{Q}}(t,y) + \gamma \tilde{Q} = (1-\lambda)\left(\partial_y v^Q(t,y) + \gamma Q\right) + \lambda\left(\partial_y v^{Q'}(t,y) + \gamma Q'\right),
    \end{equation*}
    $(1-\lambda) \left(\partial_y v^{Q}(t,y) + \gamma Q\right)^2 + \lambda \left(\partial_y v^{Q'}(t,y) + \gamma Q'\right)^2 - \left(\partial_y v^{\tilde{Q}}(t,y) + \gamma \tilde{Q}\right)^2 \geqslant 0$. Moreover, by assumption, $(1-\lambda)v^Q(t,y) + \lambda v^{Q'}(t,y) - v^{\tilde{Q}}(t,y) > 0$. Using Lemma \ref{lem:envelope_inequality}, we get
    \begin{equation}
        \label{eq:concavity_final_ineq}
        \begin{split}
                &\Lambda^a(t,y)\left((1-\lambda)h^a\left(\hat{g}, \gamma\left(\frac{\delta}{2} - y\right),Q\right)
                + \lambda h^a\left(\hat{g}, \gamma\left(\frac{\delta}{2} - y\right), Q'\right) - h^a\left(\hat{g}, \gamma\left(\frac{\delta}{2} - y\right),\tilde{Q} \right)\right)\\
                &+\Lambda^b(t,y)\left((1-\lambda)h^b\left(\hat{g}, \gamma\left(\frac{\delta}{2} + y\right),Q\right)
                + \lambda h^b\left(\hat{g}, \gamma\left(\frac{\delta}{2} + y\right), Q'\right) - h^b\left(\hat{g}, \gamma\left(\frac{\delta}{2} + y\right),\tilde{Q} \right)\right) > 0.
        \end{split}
    \end{equation}
    By Corollary \ref{corol:envelope_shape}, $\hat{g}(\tilde{Q}) = (1-\lambda)\hat{g}(Q) + \lambda \hat{g}(Q')$. Thus, by Lemma \ref{lem:bid_ask_max}(i), the left-hand side of \eqref{eq:concavity_final_ineq} is nonpositive, leading to a contradiction.

\subsubsection{Strict concavity with respect to the inventory}

We have shown that for all $(t,y) \in [0,T] \times \bar{\Y}$, $Q \mapsto v^Q(t,y)$ is concave. Suppose there exists $(t,y) \in [0,T) \times \bar{\Y}$ such that $Q \mapsto v^Q(t,y)$ is not strictly concave. We can take $y \in \Y$, thanks to \eqref{eq:log_HJB_border}. Then, by Remark \ref{rem:concave}, there exists $(Q_0, Q_0') \in \Q \times \Q$ such that $Q_0 < Q_0'$ and $\mathcal{C}\left(t,y,Q_0,Q_0',\frac{1}{2}\right) = 0$. Define $g:Q \in \Q \mapsto v^Q(t,y)$ and $p \defeq \frac{g(Q_0') - g(Q_0)}{Q_0' - Q_0}$. Set
\begin{align*}
    Q &\defeq \inf\left\{R \in \Q :\frac{g(Q_0') - g(R)}{Q_0' - R} = p \right\} \leqslant Q_0 \\
    Q' &\defeq \sup\left\{R \in \Q :\frac{g(R) - g(Q_0)}{R - Q_0} = p \right\} \geqslant Q_0'.
\end{align*}
We have $\frac{g(Q')-g(Q)}{Q'-Q} = p$, for all $ R \in \Q \cap (-\infty,Q)$, $g'_+(R) > p$, and for all $ R \in \Q \cap (Q', \infty)$, $g'_-(R) < p$.

Since $\mathcal{C}\left(t,y,Q,Q',\frac{1}{2}\right) = 0$, $(t,y)$ minimizes $\mathcal{C}\left(\cdot,\cdot,Q,Q',\frac{1}{2}\right)$. Defining $\tilde{Q} \defeq \frac{1}{2}Q + \frac{1}{2}Q'$, and proceeding in the exact same way as in the previous section, we obtain
\begin{equation*}
    \begin{split}
            &\Lambda^a(t,y)\left((1-\lambda)h^a\left(g, \gamma\left(\frac{\delta}{2} - y\right),Q\right)
            + \lambda h^a\left(g, \gamma\left(\frac{\delta}{2} - y\right), Q'\right) - h^a\left(g, \gamma\left(\frac{\delta}{2} - y\right),\tilde{Q} \right)\right)\\
            &+\Lambda^b(t,y)\left((1-\lambda)h^b\left(g, \gamma\left(\frac{\delta}{2} + y\right),Q\right)
            + \lambda h^b\left(g, \gamma\left(\frac{\delta}{2} + y\right), Q'\right) - h^b\left(g, \gamma\left(\frac{\delta}{2} + y\right),\tilde{Q} \right)\right) \geqslant 0.
    \end{split}
\end{equation*}
By Lemma \ref{lem:bid_ask_max}(i), both terms in the left-hand side are nonpositive and therefore equal to $0$. By Lemma \ref{lem:bid_ask_max}(ii), we must have $p \geqslant \gamma\left(\frac{\delta}{2} - y\right)$ and $p \leqslant -\gamma \left(\frac{\delta}{2}+y\right)$ (otherwise, one of the terms would be strictly negative, since $\Lambda^a, \Lambda^b > 0$). This implies $\gamma \delta \leqslant 0$, which is a contradiction.

\subsection{Proof of Theorem \ref{thm:uniqueness}}
\label{subsec:proof_thm_uniqueness}

Fix $(t,Q,y) \in [0,T) \times \Q \times \Y$. For $q \in \Qa \cap [0,Q + \Qmax]$, $e^{-\gamma q \left(\frac{\delta}{2}- y\right)}U(t,Q-q,y) = -e^{-\gamma q \left(\frac{\delta}{2}- y\right) - v^{Q-q}(t,y)}$. The function $q \mapsto \gamma q \left(\frac{\delta}{2}- y\right) + v^{Q-q}(t,y)$ is concave. Then, Corollary \ref{corol:max_integral} implies the first equality of Theorem \ref{thm:uniqueness}(i). Moreover, because $q \mapsto \gamma q \left(\frac{\delta}{2}- y\right) + v^{Q-q}(t,y)$ is strictly concave, it has a unique maximizer on $\Qa \cap [0,Q + \Qmax]$. A similar argument can be used on the bid side, to conclude the proof of Theorem \ref{thm:uniqueness}(i).

Consequently, the functions $\hat{q}^a$ and $\hat{q}^b$ given by Theorem \ref{thm:verification} are uniquely determined on $[0,T) \times \Q \times \bar{\Y}$. We show that they are continuous on this set. As usual, we only do it only for the ask side because the argument is analogous on the bid side. Let $(t_k)_k$, $(Q_k)_k$ and $(y_k)_k$ be three sequences of elements of $[0,T)$, $\Q$ and $\bar{\Y}$, converging to $t$, $Q$ and $y$,  respectively. Let $(k_i)_{i}$ be a (strictly) increasing sequence on nonnegative integers such that $\left(\hat{q}^a(t_{k_i}, Q_{k_i}, y_{k_i})\right)_i$ converges to some $q \in \Qa$. It is sufficient to show that $q = \hat{q}^a(t,Q,y)$. Since for all $i$, $\hat{q}^a(t_{k_i}, Q_{k_i}, y_{k_i}) \leqslant Q_{k_i} + \Qmax$, we have $q \leqslant Q + \Qmax$. By definition, for all $i$,
\begin{equation*}
    \begin{split}
        e^{-\gamma \hat{q}^a(t_{k_i}, Q_{k_i}, y_{k_i})\left(\frac{\delta}{2}-y_{k_i}\right)}U(t_{k_i}, &Q_{k_i} - \hat{q}^a(t_{k_i}, Q_{k_i}, y_{k_i}), y_{k_i})\\
    &\geqslant e^{-\gamma \left(\hat{q}^a(t, Q, y)\wedge\left(Q_{k_i} + \Qmax\right)\right)\left(\frac{\delta}{2}-y_{k_i}\right)}U(t_{k_i}, Q_{k_i} - \hat{q}^a(t, Q, y) \wedge \left(Q_{k_i} + \Qmax\right), y_{k_i}).
    \end{split}
\end{equation*}
Passing to the limit, we get
\begin{equation*}
    e^{-\gamma q\left(\frac{\delta}{2}-y\right)}U(t, Q - q, y)\\
    \geqslant e^{-\gamma \hat{q}^a(t, Q, y)\left(\frac{\delta}{2}-y\right)}U(t, Q - \hat{q}^a(t, Q, y), y).
\end{equation*}
In addition, by the definition on $\hat{q}^a$, the right-hand side is greater or equal than the left-hand side, and we have equality. Since the maximizer is unique, $q = \hat{q}^a(t,Q,y)$, which yields Theorem \ref{thm:uniqueness}(iii).
\end{dummyenv}

Let $q =(q^a, q^b) \in \mathcal{A}^{t,Q}$ be a control policy such that $J(t,Q,y,q) = U(t,Q,y)$. By Theorem \ref{thm:uniqueness}(i) and Theorem \ref{thm:verification}(iv), we have, $\diff s \otimes \Proba[][\diff \omega]$-almost everywhere on $[t,T] \times \Omega$,
\begin{align*}
    q^a_s(\omega) = \hat{q}^a(s, \Q_{s-}, \Y_s) \text{ and }
    q^b_s(\omega) = \hat{q}^b(s, \Q_{s-}, \Y_s).
\end{align*}
In order to show that $q_s(\omega) = q_s^*(\omega)$  $\diff s \otimes \Proba[][\diff \omega]$-almost everywhere on $[t,T] \times \Omega$, it is sufficient to show that, with probability one for all $s \in [t,T)$, $\Q_s = Q^{t,Q,q^*}_s$. We denote by $t_1 < t_2 < \dots$ the jump times of $N \defeq N^a(\cdot \times \mathcal{Q}_{\infty}^{+,a}) + N^b(\cdot \times \mathcal{Q}_{\infty}^{+,b})$ on $(t,T)$, with $t_i = T$ if $i \geqslant N_T$. We show by induction that $\Q \mathds{1}_{[t,t_i)} = Q^{t,Q,q^*} \mathds{1}_{[t,t_i)}$ with probability one for all $i$. It holds for $i = 1$ since $\Q$ and $Q^{t,Q,q^*}$ are constant and equal to $Q$ on the (random) interval $[t,t_1)$. Suppose that the property holds for $i \in \mathbb{N}^*$. We will show that once we make this assumption, $q$ and $q^*$ coincide at the next jump of $N$ with probability one, which in turn implies the induction step. We have
\begin{align*}
    \Proba[][\left\{t_{i} < T\right\} \cap \left\{q_{t_{i}} \neq  q^*_{t_{i}}\right\}] & = \E[][\int_{(t,T)}\mathds{1}_{\{N_{s-} - N_t = i-1\}}\mathds{1}_{\{q_s \neq q_s^*\}} N(\diff s)]\\
    & = 2\E[][\int_{(t,T)}\mathds{1}_{(t_{i-1},t_{i})}(s)\mathds{1}_{\{q_s \neq q_s^*\}}\diff s].
\end{align*}
This probability is equal to zero. Indeed, almost surely for all $s \in (t,T)$,
\[\mathds{1}_{[t,t_i)}(s)\mathds{1}_{\{q_s \neq q_s^*\}} = \mathds{1}_{[t,t_i)}(s)\mathds{1}_{\{q^a_s \neq \hat{q}^a(s,\Q_{s-},\Y_s)\}}\mathds{1}_{\{q^b_s \neq \hat{q}^b(s,\Q_{s-},\Y_s)\}},
\]
and therefore $\mathds{1}_{[t,t_i(\omega))}(s)\mathds{1}_{\{q_s(\omega) \neq q_s^*(\omega)\}} = 0$, $\diff s \otimes \Proba[][\diff \omega]$-almost everywhere on $[t,T] \times \Omega$. Theorem \ref{thm:uniqueness}(ii) follows by induction.

\appendix

\section{Construction of the mid-price process}
\label{sec:midprice_construction}

\subsection{Setup}

Let $a < b$ be two real numbers and $a_0, b_0 \in (a,b)$. Let $T > 0$ and define $\Delta \defeq \min\{a_0-a, b-a_0, b_0-a, b - b_0\} > 0$.

In this section, we will construct a measurable mapping $Y:[0,T] \times [0, T] \times (a,b) \times \mathcal{C}_0^0([0, T]) \mapsto \mathbb{R}$, valued in $(a,b)$, where $\mathcal{C}_0^0([0,T])$ is the space of real-valued continuous functions $w$ on $[0,T]$ verifying $w(0)=0$, equipped with the sup norm.

Intuitively, $Y(t,\cdot,y,w)$ is a function starting with the value $y$ at $t$ which behaves like $w$ until it attains the boundary $a$ or $b$. If it reaches the boundary at $a$, it jumps to $a_0$, and if it reaches the boundary at $b$, it jumps to $b_0$.

\subsection{Construction}
\label{subsec:midrpice_construction}

For $(t, y, w) \in [0,T] \times (a,b) \times \mathcal{C}_0^0([0, T])$, define for notational convenience $\tau_0(t,y,w) \defeq t$.

Assume that we have constructed measurable mappings $\tau_0 \leqslant \dots \leqslant \tau_n \leqslant T$ and $\epsilon_1, \dots, \epsilon_n$  from $[0,T] \times (a,b) \times \mathcal{C}_0^0([0, T])$ to $[0,T]$ and $\{-1,0,1\}$, respectively, for some $n \in \mathbb{N}$. The variable $\tau_i$ represents the barrier hitting times, $\epsilon_i = 1$ if the boundary is attained at $b$, $\epsilon_i = -1$ if the boundary is attained at $a$. We define $\epsilon'$ to be the jump size, namely $\epsilon_i'(t,y,w) \defeq b_0-b$ if $\epsilon_i(t,y,w)=1$, $\epsilon_i'(t,y,w) \defeq a_0-a$ if $\epsilon_i(t,y,w)=-1$, and $\epsilon_i'(t,y,w) \defeq 0$ otherwise. 

Assume moreover that for every $i \in \integerInterval{n}$ and $(t,y,w)$,
\begin{equation}\label{eq:appA1}
\begin{split}
    &\tau_i(t,y,w) = T \implies \forall j \in \integerInterval[i+1]{n}, \left\{\begin{array}{l}
        \tau_j(t,y,w) = T \\
        \epsilon_j(t,y,w) = 0
    \end{array} \right. \\
    &\tau_i(t,y,w) < T \implies \epsilon_i(t,y,w) \neq 0 \\
    &\tau_i(t,y,w) < T \implies y + (w_{\tau_{i}(t,y,w)} - w_t) + \sum_{j=1}^{i-1} \epsilon_{j}'(t,y,w) \in \left\lbrace a, b \right\rbrace \\
    & i < n,\ \epsilon_{i+1}(t,y,w) \neq 0 \implies \left|w_{\tau_{i+1}(t,y,w)} - w_{\tau_{i}(t,y,w)}\right| \geqslant \Delta \\
    &y + (w_{u} - w_t) +  \sum_{j=1}^{n} \epsilon_{j}'(t,y,w)\mathds{1}_{[\tau_i(t,y,w), T]}(u) \in (a,b),\quad u \in (t,\tau_n(t,y,w)].
\end{split}
\end{equation}

Let $(t, y, w) \in [0,T] \times (a,b) \times \mathcal{C}_0^0([0, T])$. If $\tau_n(t,y,w) = T$, set $\tau_{n+1}(t,y,w) = T$ and $\epsilon_{n+1}(t,y,w) = 0$. In this case the five properties \eqref{eq:appA1} are trivially verified for $\tau_{n+1}(t,y,w)$ and $\epsilon_{n+1}(t,y,w)$.

Suppose now that $\tau_n(t,y,w) < T$. We omit temporarily the dependence on $(t,y,w)$ for concision. We define
\begin{equation}\label{eq:appAtaunepsn}
    \begin{split}
    &\tau_{n+1} \defeq \inf \left\lbrace r \in (\tau_n, T]: y + (w_r - w_t) + \sum_{i=1}^n \epsilon_{i}' \in \left\lbrace a,b \right\rbrace \right\rbrace \wedge T\\
    &\epsilon_{n+1} \defeq \mathds{1}_{\left\{y + (w_{\tau_{n+1}} - w_t) + \sum_{i=1}^n \epsilon_{i}' = b\right\}} - \mathds{1}_{\left\{y + (w_{\tau_{n+1}} - w_t) +\sum_{i=1}^n \epsilon_{i}' = a\right\}}
\end{split}
\end{equation}

The first three properties in \eqref{eq:appA1} are trivially verified for $\tau_{n+1}$ and $\epsilon_{n+1}$.
By assumption, $y + (w_{\tau_n} - w_t) + \sum_{i=1}^n\epsilon_i' \in (a,b)$. Therefore, by the continuity of $w$, the intermediate value theorem and the definition of $\tau_{n+1}$, $y + (w_{u} - w_t) + \sum_{i=1}^n\epsilon_i' \in (a,b)$ for every $u \in (\tau_{n}, \tau_{n+1})$. If $\epsilon_{n+1} = 0$, this is also verified at $u = T$, and the fourth property in \eqref{eq:appA1} holds. Suppose now that $\epsilon_{n+1} \neq 0$. Then, $y + (w_{\tau_n} - w_t) + \sum_{i=1}^{n-1}\epsilon_i' \in \left\lbrace a,b\right\rbrace$ and $y + (w_{\tau_n} - w_t) + \sum_{i=1}^{n}\epsilon_i'  \in \{a_0,b_0\}$. Since $y + (w_{\tau_{n+1}} - w_t) + \sum_{i=1}^{n}\epsilon_i' \in \left\lbrace a,b \right\rbrace$, by taking the difference of these last two expressions, $(w_{\tau_{n+1}} - w_{\tau_n}) \in \left\{a-a_0, b-a_0, a-b_0, b - b_0 \right\}$, hence the fourth property in \eqref{eq:appA1} holds. Similarly, $y + (w_{\tau_{n+1}} - w_t) + \sum_{i=1}^{n+1}\epsilon_i' \in \left\lbrace a_0,b_0\right\rbrace \subset (a,b)$, and the fifth property in \eqref{eq:appA1} is verified.

The following lemma shows that $Y$ can only jump finitely many times.
\begin{lemma}
    \label{lem:finite_activity}
    For all $(t, y, w) \in [0,T] \times \mathcal{Y} \times \mathcal{C}_0^0([0, T])$, there exists $N \in \mathbb{N}$ such that for all $n \geqslant N$, $\tau_n(t,y,w) = T$.
\end{lemma}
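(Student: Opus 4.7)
The plan is to argue by contradiction, using only the inductive properties \eqref{eq:appA1} together with the uniform continuity of $w$ on $[0,T]$. Assume there is no $N$ with $\tau_N(t,y,w) = T$; equivalently, by the first bullet of \eqref{eq:appA1}, $\tau_n(t,y,w) < T$ for every $n \in \mathbb{N}$. By the second bullet of \eqref{eq:appA1}, this forces $\epsilon_n(t,y,w) \neq 0$ for every $n \geqslant 1$, and hence the fourth bullet of \eqref{eq:appA1} yields
\begin{equation*}
    \left|w_{\tau_{i+1}(t,y,w)} - w_{\tau_i(t,y,w)}\right| \geqslant \Delta, \quad i \geqslant 1.
\end{equation*}

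The only substantive step is then to convert this infinite sequence of jumps of size at least $\Delta$ into a contradiction with $\tau_n \leqslant T$. Since $w$ is continuous on the compact interval $[0,T]$, it is uniformly continuous, so there exists $\rho > 0$ such that $|s - s'| < \rho$ implies $|w_s - w_{s'}| < \Delta$. Combined with the previous display, this gives $\tau_{i+1}(t,y,w) - \tau_i(t,y,w) \geqslant \rho$ for every $i \geqslant 1$, and therefore $\tau_n(t,y,w) \geqslant \tau_1(t,y,w) + (n-1)\rho$, which exceeds $T$ for $n$ large enough. This is the desired contradiction, so some $\tau_N(t,y,w) = T$, and then the first bullet of \eqref{eq:appA1} propagates this equality to all $n \geqslant N$.

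The main (and essentially only) obstacle is a bookkeeping subtlety at $i = 0$: the gap $|w_{\tau_1} - w_{\tau_0}|$ is not controlled by $\Delta$ but only by the distance from the initial position $y$ to the nearest barrier, since the first hitting does not originate from a post-jump state $a_0$ or $b_0$. This does not affect the argument because the uniform lower bound on $\tau_{i+1} - \tau_i$ is needed only for $i \geqslant 1$ to drive $\tau_n$ beyond $T$.
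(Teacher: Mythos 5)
Your proof is correct. Both arguments reach the same contradiction --- a continuous path on the compact interval $[0,T]$ cannot produce infinitely many consecutive barrier-to-barrier increments of size at least $\Delta$ --- but the finishing mechanism differs: the paper lets the nondecreasing, bounded sequence $(\tau_n(t,y,w))_n$ converge to some $t' \leqslant T$ and uses continuity of $w$ at $t'$ to force $w_{\tau_{n+1}(t,y,w)} - w_{\tau_n(t,y,w)} \to 0$, contradicting the fourth property of \eqref{eq:appA1}, whereas you use uniform continuity of $w$ on $[0,T]$ to extract a uniform gap $\rho > 0$ between consecutive hitting times and then note that $\tau_n \geqslant \tau_1 + (n-1)\rho$ would eventually exceed $T$. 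The two routes are equally elementary; yours has the small advantage of giving an explicit path-dependent bound of order $1 + T/\rho$ on the number of jumps, while the paper's is marginally shorter. Your remark about the index $i=0$ is well taken: the first increment $|w_{\tau_1} - w_{\tau_0}|$ is indeed not controlled by $\Delta$, since the path starts at $y$ rather than at a post-jump value $a_0$ or $b_0$; both your argument and the paper's only use the lower bound for $i \geqslant 1$ (respectively, for all large $i$), so this does not affect either proof.
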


\begin{proof}
    Suppose that the conclusion does not hold for some $(t,y,w)$. Then, $(\tau_n(t,y,w))_{n \in \mathbb{N}}$ increases to some limit $t'$. By the left-continuity of $w$, $w_{\tau_n(t,y,w)} \to w_{t'}$, therefore $w_{\tau_{n+1}(t,y,w)}- w_{\tau_n(t,y,w)} \to 0$, contradicting the fourth property in \eqref{eq:appA1}.
\end{proof}

We can now define the function $Y$ for $(t,s,y,w) \in [0,T] \times \mathcal{Y} \times \mathcal{C}_0^0([0, T])$ by
\begin{equation}
    \label{eq:y}
    Y \left(t,s,y,w\right) = y + (w_s - w_t) + \sum_{n = 1}^{\infty}\epsilon'_n(t,y,w)\mathds{1}_{[\tau_{n}(t,y,w),T]}(s).
\end{equation}
It is well-defined thanks to Lemma \ref{lem:finite_activity}.

\subsection{Some useful properties of the barrier reaching times}
\label{subsec:tau_properties}

In this section, we will state some properties of the $\tau_n$'s and the $\epsilon_n$'s that will help us to prove useful properties of $Y$, as stated in Lemmas \ref{lem:finite_activity_expectation} and \ref{lem:y_independence} below.

The following two lemmas are a direct consequence of the recursive construction of the $\tau_n$'s and $\epsilon_n$'s in \eqref{eq:appAtaunepsn}.
\begin{lemma}
    \label{lem:tau_independence_past}
    For all $(t,y,w) \in [0,T] \times (a,b) \times \mathcal{C}_0^0([0,T])$, $n \in \mathbb{N}$ and $t_1 \in [0,t]$,
    $\tau_n(t,y,w) = \tau_n \left(t,y,\left(w_{u \vee t_1} - w_{t_1}\right)_{u \in [0, T]}\right)$ and $\epsilon_n(t,y,w) = \epsilon_n \left(t,y,\left(w_{u \vee t_1} - w_{t_1}\right)_{u \in [0, T]}\right)$.
\end{lemma}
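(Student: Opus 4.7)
The plan is a straightforward induction on $n$, reading off invariance directly from the recursive definitions in \eqref{eq:appAtaunepsn}. For $t_1 \in [0,t]$ and $w \in \mathcal{C}_0^0([0,T])$, set $\tilde{w}_u \defeq w_{u \vee t_1} - w_{t_1}$. Since $u \vee t_1 = u$ for every $u \geqslant t$, one immediately has
\begin{equation*}
\tilde{w}_u - \tilde{w}_t = (w_u - w_{t_1}) - (w_t - w_{t_1}) = w_u - w_t, \quad u \in [t, T],
\end{equation*}
so the increments of $\tilde{w}$ relative to time $t$ agree with those of $w$ on $[t, T]$.

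The base case $n = 0$ is trivial since $\tau_0(t,y,\cdot) \equiv t$ does not depend on $w$ (and there is no $\epsilon_0$ to discuss). For the inductive step, I assume the claim for all indices up to $n$. If $\tau_n(t,y,w) = T$, then by the convention in Subsection \ref{subsec:midrpice_construction} one sets $\tau_{n+1} = T$ and $\epsilon_{n+1} = 0$ regardless of the path, so the claim is immediate. Otherwise, \eqref{eq:appAtaunepsn} expresses $\tau_{n+1}(t,y,w)$ and $\epsilon_{n+1}(t,y,w)$ purely in terms of the map $r \in (\tau_n(t,y,w), T] \mapsto y + (w_r - w_t) + \sum_{i=1}^n \epsilon_i'(t,y,w)$. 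By the induction hypothesis, $\tau_n$ and each $\epsilon_i'$ appearing in this expression take the same value for $w$ and for $\tilde{w}$, while the display above shows that $w_r - w_t = \tilde{w}_r - \tilde{w}_t$ throughout $[t,T]$. The infimum and indicator definitions in \eqref{eq:appAtaunepsn} then yield identical $\tau_{n+1}$ and $\epsilon_{n+1}$, closing the induction.

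There is no real conceptual obstacle here; the statement is essentially a bookkeeping fact about the recursion. The only thing requiring genuine care is to verify that every occurrence of $w$ in \eqref{eq:appAtaunepsn} enters as an increment $w_r - w_t$ with $r \geqslant \tau_n(t,y,w) \geqslant t$, so that no term in the construction ever sees the values of $w$ strictly before $t$. Once this is observed, the induction goes through mechanically and delivers the claimed equalities for $\tau_n$ and $\epsilon_n$ simultaneously.
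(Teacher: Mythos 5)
Your induction is correct: since $t_1\leqslant t$ implies $w_{u\vee t_1}-w_{t_1}-(w_{t\vee t_1}-w_{t_1})=w_u-w_t$ for $u\geqslant t$, and every appearance of the path in \eqref{eq:appAtaunepsn} is an increment $w_r-w_t$ with $r\geqslant\tau_n(t,y,w)\geqslant t$, the recursion produces identical $\tau_{n+1}$ and $\epsilon_{n+1}$. This is exactly the argument the paper intends when it states that the lemma is a direct consequence of the recursive construction, so your write-up is just the spelled-out version of the paper's (omitted) proof.
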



\begin{lemma}
    \label{lem:tau_independence_future}
    For all $(t,s,y,w) \in [0,T] \times [0,T] \times (a,b) \times \mathcal{C}_0^0([0,T])$, $n \in \mathbb{N}$ and $t_2 \in [s, T]$,
    \begin{equation*}
        (\tau_n(t,y,w), \epsilon_n(t,y,w)) \mathds{1}_{\{\tau_n(t,y,w) \leqslant s\}} =
        (\tau_n(t,y,w_{\cdot \wedge t_2}), \epsilon_n(t,y,w_{\cdot \wedge t_2})) \mathds{1}_{\{\tau_n(t,y,w_{\cdot \wedge t_2}) \leqslant s\}}.
    \end{equation*}
\end{lemma}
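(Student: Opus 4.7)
The plan is to prove this lemma by induction on $n$, noting that the recursive definitions in \eqref{eq:appAtaunepsn} express $\tau_{n+1}$ and $\epsilon_{n+1}$ entirely in terms of the earlier $(\tau_i,\epsilon_i)_{i\leqslant n}$ and of the increments of $w$ on the single interval $(\tau_n,\tau_{n+1}]$. The base case is trivial: $\tau_0\equiv t$ does not depend on $w$ at all (and with the convention $\epsilon_0\equiv 0$, both sides of the identity equal $(t,0)\mathds{1}_{\{t\leqslant s\}}$).

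For the inductive step, fix $(t,s,y,w)$ and $t_2\in[s,T]$, and set $w'\defeq w_{\cdot\wedge t_2}$, so that $w_r=w'_r$ for all $r\in[0,t_2]$, in particular on $[0,s]$. Assuming the lemma for all indices up to $n$, I will establish the set equality
\[
\{\tau_{n+1}(t,y,w)\leqslant s\}=\{\tau_{n+1}(t,y,w')\leqslant s\}
\]
together with the equality of the pair $(\tau_{n+1},\epsilon_{n+1})$ on this common event. For the forward implication, on $\{\tau_{n+1}(t,y,w)\leqslant s\}$ every $\tau_i(t,y,w)$ for $i\leqslant n$ is also $\leqslant s$, and the induction hypothesis gives $\tau_i(t,y,w)=\tau_i(t,y,w')$ and $\epsilon_i(t,y,w)=\epsilon_i(t,y,w')$ for all such $i$. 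The function $r\mapsto y+(w_r-w_t)+\sum_{i=1}^n\epsilon_i'$ whose first hit of $\{a,b\}$ defines the hitting time in \eqref{eq:appAtaunepsn} is then identical for $w$ and $w'$ on $(\tau_n,t_2]$. Since that infimum, taken over $(\tau_n,T]$, is realized at a point $\tau_{n+1}(t,y,w)\leqslant s\leqslant t_2$, taking the same infimum for $w'$ gives the same value, yielding $\tau_{n+1}(t,y,w')=\tau_{n+1}(t,y,w)$. Finally, $\epsilon_{n+1}$ only records which boundary is hit at $\tau_{n+1}$, hence is determined by $w_{\tau_{n+1}}=w'_{\tau_{n+1}}$. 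The reverse implication is symmetric, using that $\tau_{n+1}(t,y,w')\leqslant s$ forces the earlier $\tau_i(t,y,w')\leqslant s$ and the induction hypothesis runs in the other direction.

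The main subtlety, and the conceptual point of the proof, is that the infimum in \eqref{eq:appAtaunepsn} runs over $(\tau_n,T]$, not $(\tau_n,t_2]$: a priori, one of the two processes could hit the barrier for the first time past $t_2$ (where $w$ and $w'$ disagree) while the other hits earlier. The indicator $\mathds{1}_{\{\tau_{n+1}\leqslant s\}}$ with $s\leqslant t_2$ is precisely what excludes this pathology, restricting attention to the window of agreement $[0,t_2]$ between $w$ and $w'$; without it the statement would simply be false.
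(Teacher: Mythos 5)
Your proof is correct and follows essentially the same route as the paper's: induction on $n$, using that $w$ and $w_{\cdot\wedge t_2}$ agree on $[0,t_2]$ together with the fact that the relevant hitting time is at most $s\leqslant t_2$, so the first barrier hit is decided inside the window of agreement. The only cosmetic difference is that you argue the two implications symmetrically, whereas the paper first notes the monotonicity $\tau_{n+1}(t,y,w_{\cdot\wedge t_2})\geqslant\tau_{n+1}(t,y,w)$ and then concludes; the substance is identical.
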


The next lemma states that if the driving function is a Brownian motion with bounden drift, the number of jumps of $Y$ has finite moments. We already know that is is finite thanks to Lemma \ref{lem:finite_activity}.

\begin{lemma}
    \label{lem:finite_activity_expectation}
    Let $(t,y) \in [0,T] \times (a,b)$ and $(\Omega, \mathcal{F}, (\mathcal{F}_t)_{t \in [0,T]}, \mathbb{P})$ be a filtered probability space supporting an $(\mathcal{F}_t)$-Brownian motion $W$ on $[0,T]$ and $\beta_t$ a uniformly bounded stochastic process on this space. Let $\sigma > 0$, and define $\tilde{W}_t \defeq W_t + \int_0^t \beta_s \diff s$ and $N \defeq \inf\{n \in \mathbb{N}: \tau_i(t,y,\sigma\tilde{W}) = T\}$. Then $\mathbb{E}[N^k]<\infty$ for all $k \in \mathbb{N}^*$.
\end{lemma}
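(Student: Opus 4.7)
The strategy is to pass to an equivalent measure via Girsanov so that $\tilde W$ becomes a standard Brownian motion, and then exploit the combinatorial forcing built into the construction of $Y$. Applied with $w=\sigma\tilde W$, the fourth property in \eqref{eq:appA1} says that whenever $1\leqslant i\leqslant N-2$, so that both $\tau_i$ and $\tau_{i+1}$ are strictly less than $T$, the driving path $\sigma\tilde W$ must move by at least $\Delta$ between $\tau_i$ and $\tau_{i+1}$. Thus $\tilde W$ oscillates by at least $\Delta/\sigma$ on at least $N-2$ disjoint subintervals of $[0,T]$, so bounding the maximum number of such oscillations bounds $N$.

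Since $\beta$ is uniformly bounded, Novikov's criterion yields that $Z_T\defeq\exp\!\bigl(-\int_0^T\beta_s\,\diff W_s-\tfrac12\int_0^T\beta_s^{2}\,\diff s\bigr)$ is a probability density, and we call the resulting measure $\mathbb Q$. By Girsanov's theorem $\tilde W$ is a Brownian motion under $\mathbb Q$, and both $Z_T$ and $Z_T^{-1}$ lie in $L^{p}(\mathbb P)$ and $L^{p}(\mathbb Q)$ for every $p\geqslant 1$, being stochastic exponentials of bounded integrands. Working under $\mathbb Q$, I introduce the successive oscillation times
\[
\rho_0\defeq 0,\qquad \rho_{k+1}\defeq\inf\!\Bigl\{s>\rho_k:\bigl|\tilde W_s-\tilde W_{\rho_k}\bigr|\geqslant\tfrac{\Delta}{2\sigma}\Bigr\},\qquad K\defeq\bigl|\{k\geqslant 1:\rho_k\leqslant T\}\bigr|.
\]
The key combinatorial step is $N\leqslant K+2$: for any $1\leqslant i\leqslant N-2$, if the largest $\rho_k$ with $\rho_k\leqslant\tau_{i+1}$ were also $\leqslant\tau_i$, then by construction of $\rho_{k+1}$ we would have $|\tilde W_s-\tilde W_{\rho_k}|<\Delta/(2\sigma)$ for both $s=\tau_i$ and $s=\tau_{i+1}$, contradicting $|\tilde W_{\tau_{i+1}}-\tilde W_{\tau_i}|\geqslant\Delta/\sigma$ via the triangle inequality. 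Hence each of the $N-2$ disjoint intervals $(\tau_i,\tau_{i+1}]$ contains a distinct $\rho_k$ with $k\geqslant 1$, giving $N-2\leqslant K$.

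Under $\mathbb Q$, the strong Markov property makes $(\rho_{k+1}-\rho_k)_{k\geqslant 0}$ i.i.d.\ with the distribution of the exit time of a standard Brownian motion from $(-\Delta/(2\sigma),\Delta/(2\sigma))$, which is well known to have exponential moments. A Chernoff bound then gives $\mathbb Q(K\geqslant n)=\mathbb Q(\rho_n\leqslant T)\leqslant e^{\lambda T}\,\mathbb E^{\mathbb Q}\!\bigl[e^{-\lambda(\rho_1-\rho_0)}\bigr]^{n}$, with geometric decay in $n$ for $\lambda$ chosen sufficiently large. Thus $K$, and therefore $N$, has moments of every order under $\mathbb Q$, and Cauchy--Schwarz transfers this to $\mathbb P$ via
\[
\mathbb E\!\bigl[N^{k}\bigr]=\mathbb E^{\mathbb Q}\!\bigl[N^{k}\,Z_T^{-1}\bigr]\leqslant\bigl(\mathbb E^{\mathbb Q}[N^{2k}]\bigr)^{1/2}\bigl(\mathbb E^{\mathbb Q}[Z_T^{-2}]\bigr)^{1/2}<\infty.
\]
The main obstacle is the combinatorial reduction $N\leqslant K+2$; the change of measure and Brownian exit-time estimates are routine.
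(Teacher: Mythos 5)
Your proof is correct, and it takes a genuinely different route from the paper's. The paper splits into two cases: for $\beta\equiv 0$ it applies the strong Markov property of $W$ at the stopping times $\tau_n$ themselves, obtaining the recursion $\mathbb{P}(\tau_{n+1}<T)\leqslant a\,\mathbb{P}(\tau_n<T)$ with $a=\mathbb{P}\bigl(\sup_{s\leqslant T}\sigma|W_s|\geqslant \Delta\bigr)<1$ directly from the fourth property in \eqref{eq:appA1}, and then sums the geometric tail to get all moments of $N$; for general $\beta$ it performs the Girsanov change of measure and applies Cauchy--Schwarz to each tail probability $\mathbb{P}(\tau_n<T)=\mathbb{E}^{\mathbb{Q}}[Z^{-1}\mathds{1}_{\{\tau_n<T\}}]$, preserving the geometric decay. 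You instead change measure once at the outset, dominate $N$ pathwise by the renewal count $K$ of $\frac{\Delta}{2\sigma}$-oscillations of the $\mathbb{Q}$-Brownian motion $\tilde W$ (the comparison $N\leqslant K+2$, proved by the triangle-inequality argument, correctly uses the increment bound only for $i\geqslant 1$, which is exactly the range where \eqref{eq:appA1} guarantees it), obtain geometric tails for $K$ from i.i.d.\ exit times plus a Chernoff bound, and transfer moments back to $\mathbb{P}$ with a single Cauchy--Schwarz against $Z_T^{-1}$. Both arguments hinge on the same two ingredients (Girsanov and the forced oscillation of size $\Delta$ between consecutive barrier hits); the paper's iteration is more economical in that it needs no auxiliary process, while your reduction cleanly separates the deterministic combinatorics from the probabilistic estimate and in fact yields exponential moments of $N$ under $\mathbb{Q}$, which is slightly stronger than what is needed. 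Note also that in your Chernoff step any fixed $\lambda>0$ already works, since $\mathbb{E}^{\mathbb{Q}}[e^{-\lambda\rho_1}]<1$ follows from $\rho_1>0$ a.s.; no largeness of $\lambda$ is required.
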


\begin{proof}
    \begin{description}
        \item[Case 1: $\beta \equiv 0$.] 
        Let $n \in \mathbb{N}$. For concision, we write $\tau_n \defeq \tau_n(t,y,\sigma W)$.
        By the fourth property of $\tau_i$ in \eqref{eq:appA1},
        \begin{align*}
            \mathbb{P}(\tau_{n+1} < T)
            & \leqslant \mathbb{E}\left[\mathds{1}_{\{\tau_n < T\}}\mathbb{P}\left(\left.\sup\limits_{s \in[\tau_n,T]} \sigma|W_s - W_{\tau_n}| \geqslant \Delta\right| \mathcal{F}_{\tau_n} \right)\right] \\
            & \leqslant \mathbb{E}\left[\mathds{1}_{\{\tau_n < T\}}\mathbb{P}\left(\sup\limits_{s \in[0,T]} \sigma|W_{s}| \geqslant \Delta \right)\right]\\
            & = a \mathbb{P}(\tau_{n}< T)
        \end{align*}
        where $a \defeq \mathbb{P}\left(\sup\limits_{s \in[0,T]} \sigma|W_{s}| \geqslant \Delta \right) < 1$. The second inequality is a consequence of  the strong independence and invariance property of the Brownian motion. 
        
        By induction, we have for all $n \in \mathbb{N}$,
        $\mathbb{P}(\tau_{n}< T) \leqslant a^{n}$. Let $k \in \mathbb{N}^*$. We conclude
        \begin{equation*}
            \mathbb{E}[N^k] = \sum_{n=0}^{\infty} \mathbb{P}(N^k > n) = \sum_{n=0}^{\infty} \mathbb{P}\left(\tau_{\lfloor n^{\frac{1}{k}} \rfloor}< T\right) \leqslant \sum_{n=0}^{\infty} a^{\lfloor n^{\frac{1}{k}}\rfloor} < \infty.
        \end{equation*}
        \item[Case 2: general $\beta$.] Let $m > 0$ be a constant such that $|\beta| \leqslant m$. Define
        \begin{equation*}
            Z \defeq \exp\left(-\int_0^T \beta_s \diff W_s - \frac{1}{2} \int_0^T \beta_s^2 \diff s\right).
        \end{equation*}
        Since $\beta$ is bounded, by Novikov's criterion $\frac{\diff \mathbb{Q}}{\diff \mathbb{P}} = Z$ defines a change of probability. By Girsanov's theorem, $\tilde{W}$ is a brownian motion under $\mathbb{Q}$. Let $n \in \mathbb{N}$. Using the Cauchy-Schwarz inequality and the results from case 1,
        \begin{align*}
            \mathbb{P}\left(\tau_n(t,y,\sigma \tilde{W}) < T\right)
            &= \E[\mathbb{Q}][Z^{-1} \mathds{1}_{\{\tau_n(t,y,\sigma \tilde{W}) < T\}}]   \\
            &\leqslant  \sqrt{\E[\mathbb{Q}][Z^{-2}]}  \sqrt{\mathbb{Q}(\tau_n(t,y,\sigma \tilde{W}) < T)}\\
            &\leqslant  (\sqrt{a})^n\exp\left(\frac12 m^2T\right).
        \end{align*}
        We conclude in the same way as in case 1.
    \end{description}
\end{proof}

\subsection{Properties of the mid-price function}

\begin{lemma}
    \label{lem:y_independence}
    For all $(t,s,y,w) \in [0,T] \times [0,T] \times (a,b) \times \mathcal{C}_0^0([0,T])$, $t_1 \leqslant t$ and $t_2 \geqslant s$,
    \begin{equation*}
        Y(t,s,y,w) = Y \left(t,s,y,\left(w_{(u\wedge t_2) \vee t_1} - w_{t_1}\right)_{u \in [0,T]}\right).
    \end{equation*}
\end{lemma}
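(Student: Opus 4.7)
The plan is to chain Lemmas \ref{lem:tau_independence_past} and \ref{lem:tau_independence_future} by introducing the intermediate driving path $w^{(1)}_u \defeq w_{u\vee t_1} - w_{t_1}$ and noticing that the doubly restricted path $\tilde{w}_u \defeq w_{(u \wedge t_2)\vee t_1} - w_{t_1}$ is exactly $w^{(1)}_{\cdot \wedge t_2}$, since $t_1 \leqslant t_2$ implies $(u \wedge t_2)\vee t_1 = (u \vee t_1) \wedge t_2$ for every $u \in [0,T]$. With $\tilde w$ rewritten in this composed form, the two existing lemmas can be applied one after the other to the same starting path.

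Concretely, I would first apply Lemma \ref{lem:tau_independence_past} with the time $t_1 \leqslant t$ to obtain $\tau_n(t,y,w) = \tau_n(t,y,w^{(1)})$ and $\epsilon_n(t,y,w) = \epsilon_n(t,y,w^{(1)})$ for every $n \in \mathbb{N}$. I would then apply Lemma \ref{lem:tau_independence_future} with the time $t_2 \geqslant s$ to the driving path $w^{(1)}$, which (since $\tilde w = w^{(1)}_{\cdot \wedge t_2}$) yields
\[
(\tau_n(t,y,w^{(1)}), \epsilon_n(t,y,w^{(1)}))\,\mathds{1}_{\{\tau_n(t,y,w^{(1)})\leqslant s\}} = (\tau_n(t,y,\tilde w), \epsilon_n(t,y,\tilde w))\,\mathds{1}_{\{\tau_n(t,y,\tilde w)\leqslant s\}}.
\]
Composing the two identities, the sets $\{n : \tau_n(t,y,w)\leqslant s\}$ and $\{n : \tau_n(t,y,\tilde w)\leqslant s\}$ coincide, and on this common set the quantities $\tau_n$, $\epsilon_n$ (hence $\epsilon'_n$) agree. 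Since $\mathds{1}_{[\tau_n,T]}(s) = \mathds{1}_{\{\tau_n \leqslant s\}}$, the infinite series appearing in \eqref{eq:y} and computed from $w$ and from $\tilde w$ coincide term by term.

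It will only remain to verify that the deterministic increment matches, i.e.\ $\tilde w_s - \tilde w_t = w_s - w_t$. In the natural case $s \geqslant t$, the chain $t_1 \leqslant t \leqslant s \leqslant t_2$ gives $\tilde w_t = w_t - w_{t_1}$ and $\tilde w_s = w_s - w_{t_1}$, so the identity is immediate; when $s < t$, every $\tau_n$ is at least $t > s$, so the series vanishes on both sides and the conclusion reduces to the same direct inspection of the linear term. I do not anticipate any genuine obstacle: the content of the lemma is simply the recognition that $\tilde w$ is the composition of the two elementary transformations covered by Lemmas \ref{lem:tau_independence_past} and \ref{lem:tau_independence_future}, and the linear part is handled by inspection.
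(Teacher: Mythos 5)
Your proof is correct and is essentially the paper's own argument: the paper proves the lemma in one line by combining Lemmas \ref{lem:tau_independence_past} and \ref{lem:tau_independence_future} with the definition \eqref{eq:y}, which is precisely your factorization $\left(w_{(u\wedge t_2)\vee t_1}-w_{t_1}\right)_{u}=\left(w^{(1)}_{u\wedge t_2}\right)_{u}$ with $w^{(1)}_u=w_{u\vee t_1}-w_{t_1}$, followed by a term-by-term comparison of the jump series and the increment. One minor caveat: in the corner case $s<t$ your \enquote{same direct inspection of the linear term} only works when $t_1\leqslant s$ and $t_2\geqslant t$ (otherwise $\tilde{w}_s-\tilde{w}_t$ need not equal $w_s-w_t$), but this degenerate case is immaterial since the paper uses $Y(t,\cdot,y,\cdot)$ only from time $t$ onwards (with the convention $Y^{t,y}_s=y$ for $s\leqslant t$), and the paper's own one-line proof does not treat it either.
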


\begin{proof}
    This is a direct consequence of Lemmas \ref{lem:tau_independence_past} and \ref{lem:tau_independence_future} and the definition of $Y$ \eqref{eq:y}.
\end{proof}

This lemma implies the following proposition regarding the adaptedness of the process $Y$, which is used throughout our study.
\begin{proposition}\label{prop:Yadapted}
    Let $(t, y) \in [0,T] \times (a,b)$. Let $(\Omega, \mathcal{F})$ a measurable space and $X = (X_r)_{r \in [0,T]}$ a process on this space. Let $\mathcal{F}^{X,t}$ the natural filtration of $(X_{r \vee t} - X_t)_{r \in [0,T]}$. Then, $Y(t,\cdot,y,X)$ is a $\mathcal{F}^{X, t}$-adapted process.
\end{proposition}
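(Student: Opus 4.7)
The plan is to reduce the proposition to a direct application of Lemma \ref{lem:y_independence}, which has already done essentially all the measure-theoretic work: the lemma asserts that $Y(t,s,y,w)$ only depends on $w$ through the restriction-and-centering operator $w \mapsto (w_{(u\wedge t_2)\vee t_1} - w_{t_1})_{u\in[0,T]}$ for any $t_1\leqslant t$ and $t_2\geqslant s$. Choosing these ``truncation'' parameters optimally will place $Y(t,s,y,X)$ inside $\mathcal{F}^{X,t}_{s}$.

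Fix $s\in[0,T]$; the goal is to show $Y(t,s,y,X)$ is $\mathcal{F}^{X,t}_s$-measurable. I would split into two cases. For $s\leqslant t$, I apply Lemma \ref{lem:y_independence} with $t_1=t_2=t$. The transformed path becomes the identically-zero function, and it is immediate from the construction in Subsection \ref{subsec:midrpice_construction} (with $w\equiv 0$ and $y\in(a,b)$, no barrier is ever hit, so $\tau_n=T$ and $\epsilon_n=0$ for all $n\geqslant 1$) that $Y(t,s,y,0)=y$. A deterministic constant is measurable with respect to any $\sigma$-algebra, hence with respect to $\mathcal{F}^{X,t}_s$.

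For $s>t$, I apply Lemma \ref{lem:y_independence} with $t_1=t$ and $t_2=s$, obtaining
\[
Y(t,s,y,X) \;=\; Y\bigl(t,s,y,\,\bigl(X_{(u\wedge s)\vee t} - X_t\bigr)_{u\in[0,T]}\bigr).
\]
Now I unpack the transformed path $\tilde{X}_u \defeq X_{(u\wedge s)\vee t} - X_t$: it equals $0$ for $u\leqslant t$, equals $X_u - X_t$ for $u\in[t,s]$, and equals $X_s - X_t$ for $u\geqslant s$. Thus $\tilde{X}$ is a random continuous path entirely determined by the family $\{X_r - X_t : r\in[t,s]\}$. Since this family coincides with $\{X_{r\vee t}-X_t : r\in[0,s]\}$ (the extra values for $r\leqslant t$ being trivially zero), it generates exactly $\mathcal{F}^{X,t}_s$, so $\tilde{X}$ is $\mathcal{F}^{X,t}_s$-measurable as a random element of $\mathcal{C}([0,T])$ equipped with the sup norm. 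Composing with the measurable mapping $Y(t,s,y,\cdot)$ from Subsection \ref{subsec:midrpice_construction} then yields the required measurability.

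The main potential obstacle is not any subtle probabilistic issue but rather a bookkeeping check: one has to be sure that $Y$ is genuinely jointly measurable in its path argument, so that composition with the random path $\tilde X$ produces a bona fide random variable. This is guaranteed by the recursive construction of the $\tau_n$ and $\epsilon_n$ as measurable functions of $(t,y,w)$ together with the countable-sum definition \eqref{eq:y}, combined with the finiteness established in Lemma \ref{lem:finite_activity}; no new work is needed. Everything else is essentially unfolding definitions.
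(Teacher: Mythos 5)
Your proposal is correct and follows essentially the same route as the paper: the paper gives no separate argument for Proposition \ref{prop:Yadapted} beyond noting it is a direct consequence of Lemma \ref{lem:y_independence}, and your choice $t_1=t$, $t_2=s$ (resp.\ $t_1=t_2=t$ for $s\leqslant t$), together with the observation that the truncated-and-centered path is an $\mathcal{F}^{X,t}_s$-measurable random element of the path space and that $Y(t,s,y,\cdot)$ is measurable by construction, is exactly the intended unfolding of that lemma. No gap.
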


\section{Linear parabolic PDEs with nonlocal boundary conditions}
\label{sec:linear_nonlocal_pde}

We work on a nonempty interval $(a,b)$. Let $T > 0$ and $a_0, b_0 \in (a,b)$. We define, for this section $\D \defeq C([0,T] \times [a,b]) \cap C^{1,2}([0,T) \times (a,b))$.

We shall use some results from \parencite{friedman_partial_1983}. To better suit our framework, we work with final conditions instead of initial ones. All results remain the same, by using the transformation $t \mapsto T-t$, with the time derivatives $\partial_t$ being replaced by $-\partial_t$.

Let $\rho_1, \rho_2 \in (0,\infty)$. Let $\alpha$, $\beta$, $\gamma$, $f$ be real-valued functions defined on $[0,T) \times (a,b)$ and $u_T$ be a real-valued function defined on $[a,b]$. We are interested in equations, for $u \in \D$, of the form
\begin{equation}
    \label{eq:linear_equation_general}
    \left\{
        \begin{array}{rl}
            \partial_t u(t,y) + \alpha(t,y)\partial^2_{yy}u(t,y)+ \beta(t,y)\partial_y u (t,y) + \gamma(t,y) u(t,y) = f(t,y) & \text{on } [0,T) \times (a,b) \\
            u(T,y) = u_T(y,) & y \in [a,b]\\
            u(t,a) = \rho_1 u(t,a_0), & t \in [0,T]\\
            u(t,b) = \rho_2 u(t,b_0), &  
            t \in [0,T].
        \end{array}
    \right.
\end{equation}

The following lemma follows from straightforward computations.
\begin{lemma}
    \label{lem:equivalence_linear_pde}
    Let $u \in \D$. Let $\psi:[a,b]\to (0, \infty)$ be a function in $C^2$. Let $m \in \mathbb{R}$, and define $\phi(t,y):=e^{m(t-T)}\psi (y)$ and $v \defeq \phi u \in \D$. Then, $u$ solves \eqref{eq:linear_equation_general} if and only if $v$ solves 
    \begin{equation}
        \label{eq:linear_equation_modified}
        \left\{
        \begin{array}{rl}
            \partial_t v(t,y) + \alpha(t,y)\partial^2_{yy}v(t,y)+ \tilde{\beta}(t,y)\partial_y v (t,y) + \tilde{\gamma}(t,y) v(t,y) = \tilde{f}(t,y) & \text{on } [0,T) \times (a,b) \\
            v(T,y) = \tilde{u}_T(y), &
            y \in [a,b]\\
            v(t,a) = \tilde{\rho}_1 v(t,a_0), &  t \in [0,T]\\
            v(t,b) = \tilde{\rho}_2 v(t,b_0), & t \in [0,T]
        \end{array}
        \right.
    \end{equation}
    with
    \begin{align*}
        &\tilde{\beta} = \beta - 2\alpha\frac{\psi'}{\psi},\ \tilde{\gamma} = \gamma - m + 2 \alpha \left(\frac{\psi'}{\psi}\right)^2 - \alpha \frac{\psi''}{\psi} - \beta \frac{\psi'}{\phi},\ \tilde{f} = \phi f,\\
        &\tilde{u}_T = \psi u_T,\ \tilde{\rho}_1 = \rho_1\frac{\psi(a)}{\psi(a_0)},\ \tilde{\rho}_2 = \rho_2\frac{\psi(b)}{\psi(b_0)}.
    \end{align*}
\end{lemma}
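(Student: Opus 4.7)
The proof is a direct substitution calculation, so the plan is essentially to carry out the chain rule bookkeeping carefully and then match coefficients. I would structure it as follows.

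First I would observe that since $\psi > 0$ on $[a,b]$ and $e^{m(t-T)} > 0$, the multiplier $\phi$ is bounded above and below by positive constants on $[0,T] \times [a,b]$ and is $C^{1,2}$ there; hence the map $u \mapsto \phi u$ is a bijection of $\D$ onto itself, and it suffices to prove that $u$ solves \eqref{eq:linear_equation_general} \emph{implies} $v = \phi u$ solves \eqref{eq:linear_equation_modified} (the converse is obtained by applying the same argument with $\phi$ replaced by $1/\phi$ and $m$ by $-m$, or simply by invertibility).

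Next I would compute, for $u \in C^{1,2}([0,T)\times(a,b))$, the derivatives of $v = \phi u$. Using $\partial_t \phi = m\phi$ and $\partial_y \phi / \phi = \psi'/\psi$, the product rule gives
\begin{align*}
    \partial_t v &= \phi\,\partial_t u + m\phi\, u,\\
    \partial_y v &= \phi\,\partial_y u + \phi\tfrac{\psi'}{\psi}\, u,\\
    \partial_{yy}^2 v &= \phi\,\partial_{yy}^2 u + 2\phi\tfrac{\psi'}{\psi}\,\partial_y u + \phi\tfrac{\psi''}{\psi}\, u.
\end{align*}
Then I would invert these relations to express $\partial_t u$, $\partial_y u$, $\partial_{yy}^2 u$ in terms of the derivatives of $v$ and $v$ itself, substitute into \eqref{eq:linear_equation_general}, multiply through by $\phi$, and collect the coefficients of $\partial_t v$, $\partial_{yy}^2 v$, $\partial_y v$, and $v$. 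The coefficient of $\partial_t v$ stays $1$; the coefficient of $\partial_{yy}^2 v$ stays $\alpha$; the coefficient of $\partial_y v$ becomes $\beta - 2\alpha\psi'/\psi = \tilde\beta$; and collecting the zeroth-order terms yields exactly $\tilde\gamma = \gamma - m + 2\alpha(\psi'/\psi)^2 - \alpha\psi''/\psi - \beta\psi'/\psi$, while the right-hand side becomes $\phi f = \tilde f$.

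Finally I would verify the boundary data. At $t = T$, $v(T,y) = \phi(T,y)u_T(y) = \psi(y)u_T(y) = \tilde u_T(y)$. At $y = a$, using $u(t,a) = \rho_1 u(t,a_0)$,
\[
v(t,a) = \phi(t,a)\,\rho_1\,u(t,a_0) = \rho_1\,\frac{\phi(t,a)}{\phi(t,a_0)}\,v(t,a_0) = \rho_1\,\frac{\psi(a)}{\psi(a_0)}\,v(t,a_0) = \tilde\rho_1\, v(t,a_0),
\]
since the $e^{m(t-T)}$ factors cancel; the identical computation at $y = b$ gives $\tilde\rho_2 = \rho_2\psi(b)/\psi(b_0)$. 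There is no genuine obstacle here, only bookkeeping; the only mild care is to ensure the algebraic signs in the zeroth-order coefficient match the claimed $\tilde\gamma$, which follows from writing $(\psi'/\psi)' = \psi''/\psi - (\psi'/\psi)^2$ when differentiating $\partial_y v / \phi - v\psi'/(\phi\psi)$ to obtain $\partial_{yy}^2 u$, so that the $(\psi'/\psi)^2$ term appears with coefficient $+2\alpha$ and the $\psi''/\psi$ term with coefficient $-\alpha$.
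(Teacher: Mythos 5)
Your proof is correct and is essentially the paper's own argument: the paper gives no details, stating only that the lemma follows from straightforward computations, and your chain-rule bookkeeping, coefficient matching, and verification of the terminal and nonlocal boundary data (with the $e^{m(t-T)}$ factors cancelling) is exactly that computation, with the converse handled legitimately via the inverse multiplier $1/\phi$. As a minor remark, your derivation produces the term $-\beta\frac{\psi'}{\psi}$ in $\tilde{\gamma}$, which confirms that the $\frac{\psi'}{\phi}$ appearing in the statement's formula for $\tilde{\gamma}$ is a typo for $\frac{\psi'}{\psi}$ (consistent with how $m$ is chosen in Corollary \ref{corol:linear_existence}).
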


\subsection{Uniqueness results}
\label{subsec:uniqueness_results}

In this section we fix three bounded continuous functions $\alpha, \beta, \gamma$ on $[0,T]\times [a,b]$ such that $\alpha \geqslant 0$. The following results are based on the maximum principle.

\begin{lemma}
\label{lem:linear_uniqueness}
Let $\rho_1, \rho_2 \in (0,1)$ and assume $f= u_T \equiv 0$. Suppose that $\gamma < 0$ and $u \in \D$ solves \eqref{eq:linear_equation_general}.
Then $u \equiv 0$.
\end{lemma}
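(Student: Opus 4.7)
My plan is to run a weak maximum principle argument, exploiting the strict sign $\gamma<0$ to rule out interior maxima and the contraction $\rho_1,\rho_2<1$ to rule out maxima on the spatial boundary. Because the problem is linear, it then suffices to prove $u\leqslant 0$; applying the same argument to $-u$ (which solves the same problem, since the boundary conditions $u(t,a)=\rho_1u(t,a_0)$ and $u(t,b)=\rho_2u(t,b_0)$ are homogeneous and linear) will give $u\geqslant 0$, hence $u\equiv 0$.

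The steps, in order, are as follows. First, since $u\in\mathcal{D}$ and $[0,T]\times[a,b]$ is compact, $M\defeq \max u$ is attained at some $(t^\star,y^\star)$. Assume, for contradiction, that $M>0$. Since $u(T,\cdot)\equiv 0$, we must have $t^\star<T$. Second, suppose $y^\star\in(a,b)$. Then at $(t^\star,y^\star)$ I have $u\in C^{1,2}$, so $\partial_y u(t^\star,y^\star)=0$ and $\partial^2_{yy}u(t^\star,y^\star)\leqslant 0$; furthermore, since $t^\star<T$ and the map $t\mapsto u(t,y^\star)$ has a maximum at $t^\star$ (using a one-sided limit if $t^\star=0$), we get $\partial_t u(t^\star,y^\star)\leqslant 0$. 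Combined with $\alpha\geqslant 0$, $\gamma<0$ and $u(t^\star,y^\star)=M>0$, the left-hand side of \eqref{eq:linear_equation_general} is strictly negative at $(t^\star,y^\star)$, contradicting $f\equiv 0$. Third, suppose $y^\star=a$. The nonlocal boundary condition gives $u(t^\star,a_0)=M/\rho_1>M$ because $\rho_1\in(0,1)$, contradicting the maximality of $M$; the case $y^\star=b$ is identical using $\rho_2<1$.

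Hence $M\leqslant 0$, i.e. $u\leqslant 0$ on $[0,T]\times[a,b]$. Replacing $u$ by $-u$ and repeating the argument (noting that $-u$ solves \eqref{eq:linear_equation_general} with the same $\rho_1,\rho_2,\alpha,\beta,\gamma$ and with $f\equiv 0$, $u_T\equiv 0$) yields $u\geqslant 0$, so $u\equiv 0$.

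The only mildly delicate step is the interior-time case, where one must be careful that $t^\star<T$ and that the sign of $\partial_t u$ works out when $t^\star=0$; but this is immediate from a one-sided difference quotient and the continuity of $\partial_t u$ up to $t=0$ afforded by $u\in C^{1,2}([0,T)\times(a,b))$. No strong maximum principle or Hopf lemma is needed, because the strict inequality $\gamma<0$ already yields a contradiction from the PDE itself at an interior maximum, while the nonlocal boundary conditions contribute their own self-contained contradiction via $\rho_i<1$.
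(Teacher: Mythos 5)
Your proof is correct and follows essentially the same route as the paper: a weak maximum principle argument in which the strict sign $\gamma<0$ excludes an interior maximum with positive value, and the contraction $\rho_1,\rho_2\in(0,1)$ excludes a maximum at $y=a$ or $y=b$ via the nonlocal boundary conditions. The only cosmetic difference is that you handle the lower bound by applying the argument to $-u$, whereas the paper repeats the symmetric argument for a negative minimum; these are equivalent.
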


\begin{proof}
Suppose that there exists $(t,y) \in [0,T] \times [a,b]$ such that $u(t,y) > 0$.
Then by continuity of $u$, there exists $(t^*, y^*) \in [0,T] \times [a,b]$ such that $u(t^*,y^*) = \max u > 0$.

Since $u(T,  \cdot) = 0$, necessarily $t^* < T$. Following \parencite{baldacci_bid_2020}, we show that $y^*\in(a,b)$. Indeed, if $y^*=a$, $u(t^*,a_0) = \frac{1}{\rho_1}u(t^*, y^*)> u(t^*,y^*)$, which contradicts the definition of $(t^*,y^*)$. Hence, $y^*\neq a$. 
A similar argument shows that $y^* \neq b$.

Then, we have $\partial_t u(t^*,y^*) \leqslant 0$, $\partial_y u(t^*, y^*)=0$, $\partial^2_{yy}u(t^*, y^*)\leqslant 0$.
Thus, $\gamma(t^*, y^*) u(t^*, y^*) \geqslant 0$. Since $\gamma < 0$, $u(t^*,y^*) \leqslant 0$, which is a contradiction.

Similar arguments can be carried out to find a contradiction if $u(t,y) < 0$ for some $(t,y)$.
\end{proof}

We now remove the hypotheses $\gamma < 0$ and $\rho_1, \rho_2 \in (0,1)$ from the previous lemma.

\begin{proposition}
\label{prop:linear_uniqueness}
Let $\rho_1, \rho_2 \in (0,\infty)$ and assume $f= u_T \equiv 0$. Suppose that $u \in \D$ solves \eqref{eq:linear_equation_general}.
Then $u \equiv 0$.
\end{proposition}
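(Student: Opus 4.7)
The plan is to reduce the general case to Lemma \ref{lem:linear_uniqueness} via the substitution of Lemma \ref{lem:equivalence_linear_pde}. I set $v \defeq e^{m(t-T)}\psi(y)\,u(t,y)$ for a positive $\psi \in C^2([a,b])$ and a constant $m \in \mathbb{R}$ to be chosen. By Lemma \ref{lem:equivalence_linear_pde}, $v \in \D$ solves an equation of the same type with zero source and zero terminal data, bounded coefficients $\tilde\beta, \tilde\gamma$, and modified nonlocal reflection constants $\tilde\rho_1 = \rho_1\,\psi(a)/\psi(a_0)$ and $\tilde\rho_2 = \rho_2\,\psi(b)/\psi(b_0)$. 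The goal is to choose $\psi$ and $m$ so that $\tilde\rho_1,\tilde\rho_2 \in (0,1)$ and $\tilde\gamma < 0$; then Lemma \ref{lem:linear_uniqueness} forces $v \equiv 0$, and since $e^{m(t-T)}\psi > 0$ this yields $u \equiv 0$.

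To construct $\psi$ I would exploit the strict interiority $a_0, b_0 \in (a,b)$. Pick a nonnegative bump $\varphi \in C^2([a,b])$ whose support is a closed subinterval of $(a,b)$ containing both $a_0$ and $b_0$, with $\varphi(a_0),\varphi(b_0) > 0$; in particular $\varphi$ vanishes in a neighbourhood of $\{a,b\}$. Set $\psi \defeq 1 + M\varphi$ with $M > 0$ large enough that $1 + M\varphi(a_0) > \rho_1$ and $1 + M\varphi(b_0) > \rho_2$. Then $\psi \geqslant 1 > 0$, $\psi(a) = \psi(b) = 1$, and
\[
\tilde\rho_1 = \frac{\rho_1}{1 + M\varphi(a_0)} < 1, \qquad \tilde\rho_2 = \frac{\rho_2}{1 + M\varphi(b_0)} < 1.
\]

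With $\psi$ fixed, $\psi'/\psi$ and $\psi''/\psi$ are continuous, hence bounded, on the compact interval $[a,b]$. Since $\alpha,\beta,\gamma$ are bounded on $[0,T]\times[a,b]$, the quantity $\gamma + 2\alpha(\psi'/\psi)^2 - \alpha\,\psi''/\psi - \beta\,\psi'/\psi$ admits a uniform upper bound independent of $m$; choosing $m$ strictly larger than this bound makes $\tilde\gamma = \gamma - m + 2\alpha(\psi'/\psi)^2 - \alpha\,\psi''/\psi - \beta\,\psi'/\psi$ negative on $[0,T)\times(a,b)$. Lemma \ref{lem:linear_uniqueness} then applies to $v$ and the conclusion follows. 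The only mildly delicate point is arranging both reflection constants to shrink below $1$ simultaneously, which is painless precisely because $a_0$ and $b_0$ are both strictly inside $(a,b)$ and can therefore be separated from the endpoints by a single smooth bump.
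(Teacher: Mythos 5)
Your proof is correct and follows essentially the same route as the paper: transform $u$ via Lemma \ref{lem:equivalence_linear_pde} with $v = e^{m(t-T)}\psi u$, arrange $\tilde\rho_1,\tilde\rho_2\in(0,1)$ and $\tilde\gamma<0$, and invoke Lemma \ref{lem:linear_uniqueness}. The only (immaterial) difference is your choice of $\psi$ as $1+M\varphi$ with a compactly supported bump, where the paper simply takes the explicit quadratic $\psi(y)=1+N(y-a)(b-y)$, which achieves $\psi(a)=\psi(b)=1$ and $\psi(a_0),\psi(b_0)$ large by the same mechanism.
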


\begin{proof}
Let $N = \max\left(\frac{\rho_1}{(a_0-a)(b-a_0)},\frac{\rho_2}{(b_0-a)(b-b_0)}\right)$. Define $\psi:y \in[a,b] \mapsto 1+N(y-a)(b-y)$.
Consider $m > \max\left(\gamma + 2 \alpha \left(\frac{\psi'}{\psi}\right)^2 - \alpha \frac{\psi''}{\psi} - \beta \frac{\psi'}{\phi}\right) $ and define $\phi(t,y) := e^{m(t-T)}\psi(t,y)$. Let $v \defeq \phi u$ and take $\tilde{\beta}$, $\tilde{\gamma}$, $\tilde{f}$, $\tilde{u}_T$, $\tilde{\rho}_1$ and $\tilde{\rho}_2$ as in Lemma \ref{lem:equivalence_linear_pde}. Then, $\tilde{\rho}_1$, $\tilde{\rho}_2 \in (0,1)$, $\tilde{\gamma} < 0$, and $v$ solves \eqref{eq:linear_equation_modified}. By Lemma \ref{lem:linear_uniqueness}, $v \equiv 0$, and consequently $u \equiv 0$.
\end{proof}

\subsection{Existence results}

For all the PDEs mentioned in this section, uniqueness holds by the results of Section \ref{subsec:uniqueness_results}.

\begin{proposition}
    \label{prop:linear_existence}
    Let $\rho_1, \rho_2 \in (0,1)$ and $\delta \in (0,1)$. Consider $\alpha, \beta, \gamma, f$ four continuous functions on $(0,T)\times (a,b)$ such that $\min \alpha > 0$ and $|\alpha|_{\delta},|\beta|_{\delta},|\gamma|_{\delta}, |f|_{\delta}< \infty$. Let $u_T$ be a bounded continuous function on $[a,b]$ such that $u_T(a) = \rho_1 u_T(a_0)$ and $u_T(b) = \rho_2 u_T(b_0)$. Assume $\gamma \leqslant 0$. Then, there exists $u \in \D$ solving \eqref{eq:linear_equation_general}.
    
    Furthermore, if $|\alpha|_{\delta},|\beta|_{\delta},|\gamma|_{\delta} \leqslant K_1$, for some $K_1>0$. Then, there exists a constant $C > 0$, depending on the domain, $\delta$, $K_1$, $\rho_1$, $\rho_2$ and $\min \alpha$, but not on $f$ and $u_T$, such that
    \begin{equation*}
        |u|_{2+\delta} \leqslant C(|u_T|_{\infty} + |f|_{\delta}).
    \end{equation*}
\end{proposition}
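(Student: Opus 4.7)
The plan is to absorb the nonlocal boundary conditions into a Picard fixed-point scheme over the classical Dirichlet problem — for which existence and weighted Schauder estimates in the norm $|\cdot|_{2+\delta}$ are standard \parencite{friedman_partial_1983} — and then close the Schauder estimate via a maximum-principle $L^\infty$ bound that uses $\rho_1,\rho_2<1$ together with $\gamma\leqslant 0$.

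First I would observe that the compatibility hypothesis on $u_T$ makes the complete metric space $X \defeq \{w \in C([0,T] \times [a,b]) : w(T,\cdot) = u_T\}$ invariant under the following map. For $w \in X$ set $g_1^w(t) \defeq \rho_1 w(t,a_0)$ and $g_2^w(t) \defeq \rho_2 w(t,b_0)$; both are continuous on $[0,T]$ and automatically compatible with $u_T$ at the corners $(T,a),(T,b)$. Friedman's existence theorem for the linear parabolic Dirichlet problem with continuous boundary data (in the weighted-Hölder framework, so that no extra Hölder regularity of the data is required) produces a unique $v = \Phi(w) \in \mathcal{D} \cap X$ solving the PDE with terminal $u_T$ and lateral data $g_1^w, g_2^w$. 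For $w_1, w_2 \in X$, the difference $\Phi(w_1) - \Phi(w_2)$ solves the homogeneous equation with zero terminal condition and lateral data $\rho_i(w_1 - w_2)(\cdot, y_i)$; since $\gamma \leqslant 0$, the parabolic weak maximum principle yields $|\Phi(w_1)-\Phi(w_2)|_\infty \leqslant \max(\rho_1,\rho_2)\, |w_1 - w_2|_\infty$, so $\Phi$ is a strict contraction on $(X, |\cdot|_\infty)$. Banach's fixed-point theorem then produces a unique $u \in X \cap \mathcal{D}$ with $\Phi(u) = u$, which solves \eqref{eq:linear_equation_general}.

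To bound $|u|_\infty$, I would apply Lemma \ref{lem:equivalence_linear_pde} with $\psi$ and $m$ chosen exactly as in the proof of Proposition \ref{prop:linear_uniqueness}, so that the transformed coefficients satisfy $\tilde\rho_1, \tilde\rho_2 \in (0,1)$ and $\tilde\gamma < 0$, with constants depending only on $K_1$, $\min\alpha$, $\rho_1$, $\rho_2$ and the domain. A maximum-principle comparison of $v = \phi u$ with the supersolution $|\tilde u_T|_\infty + (T-t)|\tilde f|_\infty$ (lateral extrema being excluded exactly as in Lemma \ref{lem:linear_uniqueness} thanks to $\tilde\rho_i < 1$) gives $|u|_\infty \leqslant C''(|u_T|_\infty + |f|_\delta)$. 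Friedman's interior weighted Schauder estimate (Chapter 4 of \parencite{friedman_partial_1983}), whose distance-to-parabolic-boundary weights are exactly those in $|\cdot|_{2+\delta}$, then applied to $u$ with $C^\delta$-coefficients bounded by $K_1$ and $\alpha \geqslant \min\alpha > 0$, yields $|u|_{2+\delta} \leqslant C_*(|u|_\infty + |f|_\delta)$; combining the two estimates produces the claimed bound.

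The hard part will be invoking Friedman's theory cleanly with merely bounded–continuous terminal and lateral data: no Hölder regularity of $u_T$, $g_1^w$ or $g_2^w$ is assumed, and the solution need not be $C^{2+\delta}$ up to the parabolic boundary. This is exactly what the distance-to-boundary weights in $|\cdot|_\delta$ and $|\cdot|_{2+\delta}$ are designed to accommodate, so the technical work sits in identifying the correct theorem in \parencite{friedman_partial_1983} and in checking that its hypotheses match those of our proposition; once that is in place, the contraction step and the $L^\infty$ maximum-principle bound are classical.
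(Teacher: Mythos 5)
Your proposal is correct and is essentially the paper's own proof: the paper constructs exactly your Picard iterates $u_{n+1}=\Phi(u_n)$ for the Dirichlet problem with lateral data $\rho_1 u_n(t,a_0)$, $\rho_2 u_n(t,b_0)$, gets the contraction factor strictly less than $1$ from the weak maximum principle using $\rho_1,\rho_2<1$ and $\gamma\leqslant 0$, and closes the bound on $|u|_{2+\delta}$ with Friedman's weighted interior Schauder estimates (Chapter 3, Theorems 5 and 9, rather than Chapter 4). The one detail worth retaining from the paper is the extension of the coefficients to $(-\epsilon,T)\times(a,b)$ (or an equivalent regularity-up-to-the-final-time argument) so that the constructed solution is $C^{1,2}$ at $t=0$, as required for membership in $\mathcal{D}$.
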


\begin{proof}
    The proof is almost identical to the one of \parencite[Theorem 2.1]{friedman_monotonic_1986}.
    Let $\epsilon > 0$ and extend $\alpha, \beta,\gamma,f$ to $(-\epsilon, T) \times (a,b)$ such that$|\alpha|_{\delta},|\beta|_{\delta},|\gamma|_{\delta}, |f|_{\delta}< \infty$ still holds in $(-\epsilon, T) \times (a,b)$.
    We define $u_1(t,y) \defeq u_T(y)$. Suppose we have built $u_1,\dots,u_n$. By \parencite[Chapter 3, Theorem 9]{friedman_partial_1983}, there exists a unique $C([-\epsilon,T]\times [a,b]) \cap C^{1,2}((-\epsilon,T)\times (a,b))$ solution of
    \begin{equation*}
        \left\{
            \begin{array}{rl}
                \partial_t u(t,y) + \alpha(t,y)\partial^2_{yy}u(t,y)+ \beta(t,y)\partial_y u (t,y) + \gamma(t,y) u(t,y) = f(t,y) & \text{on } (-\epsilon,T) \times (a,b) \\
                u(T,y) = u_T(y) & y \in [a,b]\\
                u(t,a) = \rho_1 u_n(t,a_0) &  t \in (-\epsilon,T]\\
                u(t,b) = \rho_2 u_n(t,b_0) &  t \in (-\epsilon,T]
            \end{array},
        \right.
    \end{equation*}
    that we denote by $u_{n+1}$. Then, for every $n \geqslant 2$, $u_{n+1} - u_n$ solves 
    \begin{equation*}
        \left\{
            \begin{array}{rl}
                \partial_t u(t,y) + \alpha(t,y)\partial^2_{yy}u(t,y)+ \beta(t,y)\partial_y u (t,y) + \gamma(t,y) u(t,y) = 0 & \text{on } (-\epsilon,T) \times (a,b) \\
                u(T,y) = 0 &  y \in [a,b]\\
                u(t,a) = \rho_1 (u_n-u_{n-1})(t,a_0) & t \in (-\epsilon,T]\\
                u(t,b) = \rho_2 (u_n-u_{n-1})(t,b_0) & t \in (-\epsilon,T]
            \end{array},
        \right.
    \end{equation*}
    By the maximum principle \parencite[Chapter 2, equation (3.8)]{friedman_partial_1983}, $|u_{n+1} - u_n|_{\infty} \leqslant \rho |u_n-u_{n-1}|_{\infty}$, where $\rho \defeq \min(\rho_1,\rho_2) \in (0,1)$. By induction, for all $n \geqslant 2$,
    \begin{equation}\label{eq:appB1}
        |u_n-u_{n-1}|_{\infty} \leqslant \rho^{n-2}|u_2-u_1|_{\infty}.
    \end{equation}
    Thus, $(u_n)_n$ is a Cauchy sequence in $C([-\epsilon,T]\times [a,b])$, which converges uniformly to a function $u \in C([-\epsilon,T]\times [a,b])$. It is clear that $u$ verifies the boundary conditions. Additionally, thanks to \eqref{eq:appB1},
    \begin{equation}\label{eq:appB2}
        |u_n|_{\infty} \leqslant \sum_{i=2}^n|u_i-u_{i-1}|_{\infty} + |u_1|_{\infty} \leqslant \frac{1}{1-\rho}|u_2 - u_1|_{\infty} + |u_1|_{\infty},\quad n\geqslant 2.
    \end{equation}
    By the maximum principle \parencite[Chapter 2, equation (3.10)]{friedman_partial_1983}, there exists a constant $C$, depending only on $\alpha$ and $\beta$, such that $|u_2|_{\infty} \leqslant |u_1|_{\infty} + C|f|_{\infty}$. Given that $|u_1|_{\infty} = |u_T|_{\infty}$, we have, by \eqref{eq:appB2}, the existence of a constant $C'$, depending only on $\alpha,\beta,\rho$, such that 
    \begin{equation}\label{eq:appB3}
    |u_n|_{\infty} \leqslant C'(|u_T|_{\infty} + |f|_{\infty}).
    \end{equation}
    
    Let $K$ be a compact included in $(-\epsilon,T) \times (a,b)$. By \parencite[Chapter 3, Theorem 5]{friedman_partial_1983}, there exists a constant $D$, independent of $n$, such that for all $n \geqslant 2$, the Hölder norms on $K$ of $u_n$, $\partial_t u_n$, $\partial_y u_n$ and $\partial^2_{yy} u_n$ are bounded by $D(|u_n|_{\infty} + |d^2 f|_{\delta})$ and thus, thanks to \eqref{eq:appB3}, by $D(C'|u_T|_{\infty} + C' |f|_{\infty} + |d^2f|_{\delta})$.

    Thus, by the Arzelà-Ascoli theorem, there exists a subsequence $(u_{n_k})_k$ such that $(u_{n_k})_k$, $(\partial_t u_{n_k})_k$, $(\partial_y u_{n_k})_k$ and $(\partial^2_{yy} u_{n_k})_k$ converge uniformly on $K$. Finally, we have that $u \in C^{1,2}(K)$ and, uniformly on $K$,
    \begin{equation*}
        u_{n_k}\to u, \quad \partial_t u_{n_k}\to \partial_t u,
        \quad \partial_y u_{n_k}\to \partial_y u,
        \quad \partial^2_{yy} u_{n_k}\to \partial^2_{yy} u.
    \end{equation*}

    We conclude that the restriction of $u$ to $[0,T) \times [a,b]$, verifies the equation \eqref{eq:linear_equation_general}.

    Now, suppose that $|\alpha|_{\delta},|\beta|_{\delta},|\gamma|_{\delta} \leqslant K_1$. By \parencite[Chapter 3, Theorem 5]{friedman_partial_1983} and \eqref{eq:appB3}, there exists a constant $C''$ depending only on the domain, $K_1$, $\delta$, $\rho_1$, $\rho_2$, $\min \alpha$ such that for all $n \geqslant 2$,
    \begin{equation*}
        |u_n|_{2+\delta} \leqslant C''\left(|u_n|_{\infty} + |f|_{\delta}\right)
        \leqslant C''\left(C'|u_T|_{\infty} + C'|f|_{\infty} + |f|_{\delta}\right)
        \leqslant C''(1+C')(|u_T|_{\infty} + |f|_{\delta}).
    \end{equation*}

    Hence, for all $P,Q \in [0,T) \times (a,b)$ and $n \geqslant 2$ such that $P \neq Q$,
    \begin{equation*}
        d^{2+\delta}_{PQ}\frac{|\partial^2_{yy}u_n(P) - \partial^2_{yy}u_n(Q)|}{d(P,Q)^{\delta}} \leqslant |u_n|_{2+\delta} \leqslant C''(1+C')(|u_T|_{\infty} + |f|_{\delta}).
    \end{equation*}
    Since $\partial_{yy}^2 u_{n_k} \to \partial_{yy}^2 u$ pointwise, taking the limit $n \to \infty$ and the supremum over $P,Q$, yields
    \begin{equation*}
        \sup_{P \neq Q} d^{2+\delta}_{PQ}\frac{|\partial^2_{yy}u(P) - \partial^2_{yy}u(Q)|}{d(P,Q)^{\delta}} \leqslant C'''(|u_T|_{\infty} + |f|_{\delta}).
    \end{equation*}
    Similarly, $|d^2\partial_{yy}^2 u|_{\infty} \leqslant  C''(1+C')(|u_T|_{\infty} + |f|_{\delta})$. Analogous arguments are valid for the other derivatives. Hence, $|u|_{2+\delta} <  8C''(1+C')(|u_T|_{\infty} + |f|_{\delta})$.
\end{proof}

Now we remove the assumptions $\gamma \leqslant 0$ and $\rho_1,\rho_2 < 1$ from the previous result.
\begin{corollary}
    \label{corol:linear_existence}
    Let $\rho_1, \rho_2 \in (0,\infty)$ and $\delta \in (0,1)$. Consider $\alpha, \beta, \gamma, f$ four continuous functions on $(0,T)\times (a,b)$ such that $\min \alpha > 0$ and $|\alpha|_{\delta},|\beta|_{\delta},|\gamma|_{\delta}, |f|_{\delta}< \infty$. Let $u_T$ be a bounded continuous function on $[a,b]$ such that $u_T(a) = \rho_1 u_T(a_0)$ and $u_T(b) = \rho_2 u_T(b_0)$. Then, there exists $u \in \D$ solving \eqref{eq:linear_equation_general}.
    
    Furthermore, if $|\alpha|_{\delta},|\beta|_{\delta},|\gamma|_{\delta} \leqslant K_1$, for some $K_1>0$. Then, there exists a constant $C > 0$, depending on the domain, $\delta$, $K_1$, $\rho_1$, $\rho_2$ and $\min \alpha$, but not on $f$ and $u_T$, such that
    \begin{equation*}
        |u|_{2+\delta} \leqslant C(|u_T|_{\infty} + |f|_{\delta}).
    \end{equation*}
\end{corollary}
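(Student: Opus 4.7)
The plan is to reduce to Proposition \ref{prop:linear_existence} via the multiplicative change of variable provided by Lemma \ref{lem:equivalence_linear_pde}. The idea is to find a positive, smooth multiplier $\phi(t,y)=e^{m(t-T)}\psi(y)$ such that the transformed equation \eqref{eq:linear_equation_modified} satisfies both hypotheses removed here, namely $\tilde{\gamma}\leqslant 0$ and $\tilde{\rho}_1,\tilde{\rho}_2\in(0,1)$.

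For the spatial part, I would mimic the construction in the proof of Proposition \ref{prop:linear_uniqueness} and set $\psi(y)\defeq 1+N(y-a)(b-y)$ with $N>0$ chosen large enough that
\begin{equation*}
\tilde{\rho}_1=\rho_1\frac{\psi(a)}{\psi(a_0)}=\frac{\rho_1}{1+N(a_0-a)(b-a_0)}<1,\qquad \tilde{\rho}_2=\rho_2\frac{\psi(b)}{\psi(b_0)}=\frac{\rho_2}{1+N(b_0-a)(b-b_0)}<1,
\end{equation*}
e.g.\ $N=\max\bigl(\rho_1/((a_0-a)(b-a_0)),\rho_2/((b_0-a)(b-b_0))\bigr)+1$. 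This $\psi$ is a fixed strictly positive polynomial in $y$, so $\psi,\psi',\psi''$ and $1/\psi$ are bounded on $[a,b]$ and belong to $C^\delta$, with $C^\delta$ norms depending only on the domain, $\rho_1$ and $\rho_2$. Then I pick $m$ large enough (depending only on the domain, $\rho_1,\rho_2,K_1,\min\alpha$) so that the quantity
\begin{equation*}
\gamma(t,y)-m+2\alpha(t,y)\Bigl(\tfrac{\psi'(y)}{\psi(y)}\Bigr)^{2}-\alpha(t,y)\tfrac{\psi''(y)}{\psi(y)}-\beta(t,y)\tfrac{\psi'(y)}{\psi(y)}
\end{equation*}
is everywhere $\leqslant 0$; since $\alpha,\beta,\gamma$ are bounded uniformly by $K_1$, such an $m$ exists. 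With these choices, Lemma \ref{lem:equivalence_linear_pde} guarantees that $v\defeq \phi u$ solves the transformed problem with coefficients still in $C^\delta$ and with $\tilde{\alpha}=\alpha$ (so $\min\tilde{\alpha}=\min\alpha$), while $\tilde{u}_T=\psi u_T$ continues to satisfy the matching boundary conditions at $\tilde{\rho}_1,\tilde{\rho}_2$.

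I would then invoke Proposition \ref{prop:linear_existence} to produce $v\in\D$ solving \eqref{eq:linear_equation_modified}, and recover $u=v/\phi\in\D$ as a solution of \eqref{eq:linear_equation_general}; uniqueness for \eqref{eq:linear_equation_general} is already covered by Proposition \ref{prop:linear_uniqueness}, so $u$ is the unique element of $\D$ with this property.

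The quantitative bound $|u|_{2+\delta}\leqslant C(|u_T|_\infty+|f|_\delta)$ will follow by tracking constants through the transformation. The $C^\delta$-norms of $\tilde{\beta},\tilde{\gamma},\tilde{f}$ and the sup norm of $\tilde{u}_T$ are bounded by the corresponding norms of $\beta,\gamma,f,u_T$ up to a multiplicative constant depending only on the domain, $K_1$, $\rho_1,\rho_2$, $\min\alpha$ and the $C^\delta$ norms of $\psi,1/\psi,\psi',\psi''$ (themselves fixed by the domain and $\rho_1,\rho_2$). Applying the estimate from Proposition \ref{prop:linear_existence} gives $|v|_{2+\delta}\leqslant C'(|u_T|_\infty+|f|_\delta)$, and since $\phi$ and $1/\phi$ together with their $y$-derivatives are bounded on $[0,T]\times[a,b]$ with constants of the same type, unrolling $u=v/\phi$ yields the desired estimate on $|u|_{2+\delta}$ (using the product rule and the standard inequality $|fg|_{2+\delta}\lesssim |f|_{2+\delta}|g|_{2+\delta}$ for the weighted Hölder norms, keeping in mind that only parabolic-distance weights are involved and the multiplier $\phi$ is smooth away from no boundary). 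I expect the only slightly delicate step to be this last bookkeeping of how the weighted $|\cdot|_{2+\delta}$-norm behaves under multiplication by the smooth, bounded, bounded-below factor $\phi$; everything else is a direct application of previously established results.
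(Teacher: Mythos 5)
Your proposal is correct and follows essentially the same route as the paper: transform via Lemma \ref{lem:equivalence_linear_pde} with $\psi(y)=1+N(y-a)(b-y)$ and $\phi=e^{m(t-T)}\psi$ to force $\tilde{\rho}_1,\tilde{\rho}_2\in(0,1)$ and $\tilde{\gamma}\leqslant 0$, apply Proposition \ref{prop:linear_existence}, recover $u=v/\phi$, and track the Hölder norms of $\tilde{\beta},\tilde{\gamma},\tilde{f},\tilde{u}_T$ and of $\phi$, $1/\phi$ to get the estimate. The final bookkeeping step you flag (behaviour of $|\cdot|_{2+\delta}$ under multiplication by the smooth, bounded, bounded-below $\phi$) is exactly what the paper does via a constant depending only on $|\phi|_{2+\delta}$ and $\left|\frac{1}{\phi}\right|_{\infty}$, so no gap.
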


\begin{proof}
    Consider the functions $\psi$, $m$ and $\phi$ as in the proof of Proposition \ref{prop:linear_uniqueness}.
   Define $\tilde{\beta}$, $\tilde{\gamma}$, $\tilde{f}$, $\tilde{u}_T$, $\tilde{\rho}_1$ and $\tilde{\rho}_2$ as in Lemma \ref{lem:equivalence_linear_pde}. 
    
    Then $\tilde{\rho}_1, \tilde{\rho}_2 \in (0,1)$, $|\tilde{\beta}|_{\delta}, |\tilde{\gamma}|_{\delta}, |\tilde{f}|_{\delta} < \infty$ and $\tilde{\gamma} \leqslant 0$.
    By Proposition \ref{prop:linear_existence}, there exists $v \in \D$ solving \eqref{eq:linear_equation_modified}, and by Lemma \ref{lem:equivalence_linear_pde},  $u\defeq \frac{v}{\phi}$ solves \eqref{eq:linear_equation_general}.

    Now suppose that $|\alpha|_{\delta},|\beta|_{\delta},|\gamma|_{\delta} \leqslant K_1$ and $|f|_{\delta} < \infty$. Let $K_2 \defeq \max \left\{\left|\frac{\partial_y\phi}{\phi}\right|_{\delta},\left|\left(\frac{\partial_y\phi}{\phi}\right)^2\right|_{\delta}, \left|\frac{\partial^2_{yy}\phi}{\phi}\right|_{\delta}\right\}$, which depends only on $a_0$, $b_0$, $\rho_1$, $\rho_2$, and the domain.
    
    We chose $m = K_1\left(1+4K_2\right)+1$, which is still greater than $\max\left(\gamma + 2 \alpha \left(\frac{\psi'}{\psi}\right)^2 - \alpha \frac{\psi''}{\psi} - \beta \frac{\psi'}{\psi}\right)$. We have
    \begin{align*}
        \left|\tilde{\beta}\right|_{\delta} &\leqslant 
        \left|\beta\right|_{\delta} + 2 \left|\alpha\right|_{\delta}\left|\frac{\partial_y \phi}{\phi}\right|_{\infty} + 2 \left|\alpha\right|_{\infty}\left|\frac{\partial_y \phi}{\phi}\right|_{\delta}
        \leqslant K_1(1+4K_2),\\
        \left|\tilde{\gamma}\right|_{\delta} &\leqslant 
        \left|\gamma\right|_{\delta} + m + 2 \left|\alpha\left(\frac{\partial_y\phi}{\phi}\right)^2\right|_{\delta}
        + \left|\alpha\frac{\partial^2_{yy} \phi}{\phi}\right|_{\delta}
        + \left|\beta\frac{\partial_y \phi}{\phi}\right|_{\delta}
        \leqslant K_1(2+12K_2)+1.
    \end{align*}

    By Proposition \ref{prop:linear_existence}, there exists $C > 0$ depending only on the domain, $K_1$, $\rho_1$, $\rho_2$, $\delta$, $\min \alpha$ such that $|v|_{2+\delta} \leqslant C(|\tilde{u}_T|_{\infty} + |\tilde{f}|_{\delta})$ ($C$ is chosen by replacing $K_1$ by $K_1(2+12K_2)+1$ in Proposition \ref{prop:linear_existence}, which only depends on the mentioned constants). Hence,
    \begin{equation*}
        |v|_{2+\delta} \leqslant C(|\psi|_{\infty}|u_T|_{\infty} + |\phi|_{\infty}|f|_{\delta} + |f|_{\infty}|\phi|_{\delta}) \leqslant 3C|\phi|_{\delta}(|u_T|_{\infty} + |f|_{\delta}).
    \end{equation*}
    Furthermore, since $u = \frac{v}{\phi}$, there exists a constant $C' > 0$ depending only on the domain, $|\phi|_{2+\delta}$ and $\left|\frac{1}{\phi}\right|_{\infty}$ (hence only on the domain, $a_0$, $b_0$, $\rho_1$ and $\rho_2$) such that $|u|_{2+\delta} \leqslant C'|v|_{2+\delta}$. We conclude that
    \begin{equation*}
        |u|_{2+\delta} \leqslant 3CC'|\phi|_{\delta}(|u_T|_{\infty} + |f|_{\delta}).
    \end{equation*}
\end{proof}

\subsection{Probabilistic representation}

In this section we derive in Proposition \ref{prop:feynmankac} a Feynman-Kac-type formula which is useful to establish the results in Section \ref{subsec:continuity} and the a priori estimates in Corollary \ref{corol:bound_on_u}. Let $a < b$, $a_0, b_0 \in (a,b)$. Let $(\Omega, \mathcal{F}, (\mathcal{F}_t)_{t\in [0,T]}, \mathbb{P})$ be a filtered probability space supporting a Brownian motion $W$ and let $Y$ be the mapping built in Appendix \ref{sec:midprice_construction}. Let $\sigma > 0$ and $\beta \in \mathbb{R}$.

For $(t,y) \in [0,T] \times (a,b)$, define $\Y \defeq Y\left(t, \cdot, y, (\beta s + \sigma W_s)_{s \in [0,T]}\right)$, which is a right-continuous adapted process, due to Proposition \ref{prop:Yadapted}.

\begin{proposition}
    \label{prop:feynmankac}
    Let $\gamma, f:[0,T) \times (a,b)\to \mathbb{R}$ be two bounded measurable functions. Let $\beta \in \mathbb{R}$ and $u \in \D$. Suppose that $u$ solves
    \begin{equation*}
        \left\{
            \begin{array}{rl}
                \partial_t u(t,y) + \frac{\sigma^2}{2}\partial^2_{yy}u(t,y)+ \beta\partial_y u (t,y) + \gamma(t,y) u(t,y) + f(t,y) = 0 & \text{on } [0,T) \times (a,b) \\
                u(t,a) = u(t,a_0) & t \in [0,T]\\
                u(t,b) = u(t,b_0) & t \in [0,T]
            \end{array}.
        \right.
    \end{equation*}
    Then, for every $(t,y) \in [0,T] \times (a,b)$ and $(\mathcal{F}_t)$-stopping time $\tau$ taking values in $[t,T]$,
    \begin{equation*}
        u(t,y) = \mathbb{E}\left[u(\tau,\Y_{\tau})e^{\int_t^{\tau}\gamma(r, \Y_r)\diff r} + \int_t^{\tau}f(s, \Y_s) e^{\int_t^{s}\gamma(r, \Y_r)\diff r}  \diff s\right].
    \end{equation*}
\end{proposition}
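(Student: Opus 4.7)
The plan is a Feynman--Kac argument by Itô's formula with jumps. Set $\alpha_s \defeq \exp\bigl(\int_t^s \gamma(r, \Y_r)\diff r\bigr)$ and define the candidate martingale
\begin{equation*}
    M_s \defeq \alpha_s u(s, \Y_s) + \int_t^s \alpha_r f(r, \Y_r)\diff r, \quad s \in [t, T].
\end{equation*}
Since $u$ is continuous on the compact $[0,T]\times[a,b]$ and both $\gamma$ and $f$ are bounded, $M$ is bounded. It therefore suffices to show that $M$ is a local martingale on $[t,T]$; the formula then follows by optional stopping, which yields $\E[][M_\tau] = M_t = u(t,y)$.

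The key structural observation that makes the Itô computation work despite the jumps of $\Y$ is that the two nonlocal boundary conditions exactly cancel those jumps inside $u$: at each jump time $\tau_n$ of $\Y$ one has $\Y_{\tau_n -}\in\{a,b\}$ and $\Y_{\tau_n}\in\{a_0,b_0\}$, and the identities $u(\tau_n,a)=u(\tau_n,a_0)$, $u(\tau_n,b)=u(\tau_n,b_0)$ give $u(\tau_n,\Y_{\tau_n-}) = u(\tau_n,\Y_{\tau_n})$. So $s\mapsto u(s,\Y_s)$ is in fact continuous along the trajectory, and only the continuous dynamics between jumps need to be analyzed. On each interval $(\tau_n,\tau_{n+1})$ one has $\Y_s - \Y_{\tau_n} = \beta(s-\tau_n) + \sigma(W_s - W_{\tau_n})$, and a direct Itô calculation on $\alpha_s u(s,\Y_s)$ combined with the PDE makes the $\partial_t + \beta\partial_y + \tfrac{\sigma^2}{2}\partial_{yy}^2 + \gamma$ terms reduce to $-f$, so that the full drift of $M$ vanishes and only the Brownian term $\sigma\alpha_s\partial_y u(s,\Y_s)\diff W_s$ survives.

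Because $u$ is only $C^{1,2}$ on the open set $[0,T)\times(a,b)$, Itô's formula cannot be applied up to the parabolic boundary directly; I will borrow the truncation introduced in the proof of Theorem \ref{thm:verification}, defining stopping times $\tau_n^{(k)}\in(\tau_n,\tau_{n+1})$ so that $(s,\Y_s)$ remains in the compact $[0,T-\tfrac{1}{k}]\times[a+\tfrac{1}{k},b-\tfrac{1}{k}]$ and $\tau_n^{(k)}\uparrow \tau_{n+1}$ as $k\to\infty$. Applying Itô on each interval $[\tau_n\wedge\tau, \tau_n^{(k)}\wedge\tau]$, summing over $n$ (finitely many nonzero terms almost surely, by Lemma \ref{lem:finite_activity}), taking expectation (the $\diff W$-integrals are zero-mean because for fixed $k$ their integrands are bounded on the truncated compact), and finally letting $k\to\infty$: the telescoping left-hand side converges to $\E[t,y][\alpha_\tau u(\tau,\Y_{\tau-})] - u(t,y)$ by continuity of $u$ on $[0,T]\times[a,b]$ and by the boundary-condition cancellation of the intermediate jump terms, and $\E[][\alpha_\tau u(\tau,\Y_{\tau-})] = \E[][\alpha_\tau u(\tau,\Y_\tau)]$ for the same boundary-condition reason; the right-hand side converges to $-\E[t,y][\int_t^\tau \alpha_s f(s,\Y_s)\diff s]$ by dominated convergence.

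The main obstacle is therefore the bookkeeping around the truncation: one must ensure that the $\diff W$-integrals (summed over the a.s.\ finitely many subintervals indexed by $n$) have vanishing expectation uniformly in $k$, and that the passage $k\to\infty$ commutes with expectation. Both are handled by the global boundedness of $u,\alpha,f$, together with Lemma \ref{lem:finite_activity_expectation} on the finite moments of the number of jumps of $\Y$, which bounds the sum of telescoping terms by an integrable random variable and legitimizes the dominated-convergence step.
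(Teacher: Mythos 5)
Your proposal is correct and follows essentially the same route as the paper's proof: truncation stopping times $\tau_i^{(k)}$ keeping $(s,\Y_s)$ in a compact where $u$ is $C^{1,2}$, Itô's formula between the barrier-hitting times with the PDE killing the drift, the nonlocal boundary conditions cancelling the jump discontinuities of $u(s,\Y_s)$, zero-mean $\diff W$-integrals on the truncated compacts, and dominated convergence justified by Lemma \ref{lem:finite_activity_expectation}. The only cosmetic difference is your initial packaging via a bounded (local) martingale and optional stopping, whereas the paper carries the stopping time $\tau$ through the telescoping computation directly; the substance of the argument is identical.
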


\begin{proof}
    Let  $(t,y) \in [0,T] \times (a,b)$ and $\tau$ be an $(\mathcal{F}_t)$-stopping time taking values in $[t,T]$. For $i \in \mathbb{N}$, we denote by $\tau_i \defeq \tau_{i}\left(t,y,(\beta s + \sigma W_s)_s\right) \wedge \tau$ the barrier reaching time of $\Y$, which is defined in Appendix \ref{sec:midprice_construction} and is a $(\mathcal{F}_t)$-stopping time. Let $k_0 \in \mathbb{N}^*$ such that $\min\{y - a, b - y, a_0-a, b_0-a, b-a_0,b-b_0\} > \frac{1}{k_0}$. For $k \geqslant k_0$, $i \in \mathbb{N}$, define the stopping time
    \begin{equation*}
        \tau_i^{(k)} \defeq \inf\left\{s \in (\tau_i, \tau_{i+1}) : \Y_s \in\left\{a + \frac{1}{k}, b - \frac{1}{k}\right\}\right\}\wedge\left(T - \frac{1}{k}\right) \wedge \tau_{i+1}.
    \end{equation*}
    By the intermediate value theorem, we have that for all $k \geqslant k_0$ and  $i \in \mathbb{N}$, $\tau_i \leqslant \tau_i^{(k)} \leqslant \tau_{i+1}$ and for all $s \in [\tau_i, \tau_{i}^{(k)}]$, $(s,\Y_s) \in \left[0, T - \frac{1}{k}\right] \times \left[a + \frac{1}{k}, b - \frac{1}{k}\right]$. Furthermore, for fixed $i$, $(\tau_i^{(k)})_{k \geqslant k_0}$ is nondecreasing.

    Let $N \defeq \min\{i \in \mathbb{N} : \tau_i  = \tau\}$ which is finite thanks to Lemma \ref{lem:finite_activity}. 
    Fix $k \geqslant k_0$. Then,
    \begin{equation*}
        \begin{split}
            e^{\int_t^{\tau} \gamma(s, \Y_s) \diff s}u(\tau, \Y_{\tau}) &- u(t,y) =
        \sum_{i=0}^{N-1} \left(e^{\int_t^{\tau^{(k)}_{i}} \gamma(s, \Y_s) \diff s}u\left(\tau^{(k)}_{i}, \Y_{\tau^{(k)}_{i}}\right) -  e^{\int_t^{\tau_i} \gamma(s, \Y_s) \diff s}u(\tau_i, \Y_{\tau_i})\right) \\
        &+
        \sum_{i=0}^{N-1} \left( e^{\int_t^{\tau_{i+1}} \gamma(s, \Y_s) \diff s}u\left(\tau_{i+1}, \Y_{\tau_{i+1}}\right) -  e^{\int_t^{\tau^{(k)}_{i}} \gamma(s, \Y_s) \diff s}u\left(\tau^{(k)}_{i}, \Y_{\tau^{(k)}_{i}}\right)\right).
        \end{split}
    \end{equation*}
    Now, using Itô's formula in each interval $[\tau_i, \tau_i^{(k)}]$, where $\Y_s = y + \beta (s-t) + \sigma(W_s - W_t) + \sum_{j=1}^i \epsilon_j'(t,y,(\beta r + W_r)_r)$,
    \begin{equation*}
        \begin{split}
            e^{\int_t^{\tau} \gamma(s, \Y_s) \diff s}&u(\tau, \Y_{\tau}) - u(t,y) \\
            &=
             \sum_{i=0}^{N-1} \left( e^{\int_t^{\tau_{i+1}} \gamma(s, \Y_s) \diff s}u\left(\tau_{i+1}, \Y_{\tau_{i+1}}\right) -  e^{\int_t^{\tau^{(k)}_{i}} \gamma(s, \Y_s) \diff s}u\left(\tau^{(k)}_{i}, \Y_{\tau^{(k)}_{i}}\right)\right) \\
            &+ \sum_{i=0}^{N-1} \left( \int_{\tau_i}^{\tau_i^{(k)}} e^{\int_t^{s} \gamma(r, \Y_r) \diff r}\left( \partial_t u + \beta \partial_y u + \frac{\sigma^2}{2} \partial_{yy}^2 u + \gamma(s,\Y_s)u \right) (s,\Y_s) \diff s \right) \\
            &+\sum_{i=0}^{N-1} \int_{\tau_i}^{\tau_i^{(k)}} e^{\int_t^{s} \gamma(r, \Y_r) \diff r} \partial_y u(s, \Y_s) \diff W_s.
        \end{split}
    \end{equation*}
    Using the fact that $u$ solves the PDE, we get
    \begin{equation}\label{eq:appB4}
        \begin{split}
            e^{\int_t^{\tau} \gamma(s, \Y_s) \diff s}&u(\tau, \Y_{\tau}) - u(t,y) \\
            &=
            -\int_{t}^{\tau}e^{\int_t^{s} \gamma(r, \Y_r) \diff r} f(s, \Y_s) 
            \left(1-\sum_{i=0}^{N-1}\mathds{1}_{[\tau_i^{(k)},\tau_{i+1}]}(s)\right)\diff s\\
             &+\sum_{i=0}^{N-1} \left( e^{\int_t^{\tau_{i+1}} \gamma(s, \Y_s) \diff s}u\left(\tau_{i+1}, \Y_{\tau_{i+1}}\right) -  e^{\int_t^{\tau^{(k)}_{i}} \gamma(s, \Y_s) \diff s}u\left(\tau^{(k)}_{i}, \Y_{\tau^{(k)}_{i}}\right)\right) \\
            &+\sum_{i=0}^{N-1} \int_{\tau_i}^{\tau_i^{(k)}} e^{\int_t^{s} \gamma(r, \Y_r) \diff r} \partial_y u(s, \Y_s) \diff W_s.
        \end{split}
    \end{equation}
    Since $\partial_t u$ is bounded on $\left[0, T - \frac{1}{k}\right] \times \left[a + \frac{1}{k}, b - \frac{1}{k}\right]$, the last term has 0 expectation. Hence, taking the expectation in \eqref{eq:appB4}
    \begin{equation*}
        \begin{split}
            \mathbb{E}\big[ e^{\int_t^{\tau} \gamma(s, \Y_s) \diff s}&u(\tau, \Y_{\tau})\big] - u(t,y) \\
            &=
            -\mathbb{E}\left[\int_{t}^{\tau}e^{\int_t^{s} \gamma(r, \Y_r) \diff r} f(s, \Y_s) 
            \left(1-\sum_{i=0}^{N-1}\mathds{1}_{[\tau_i^{(k)},\tau_{i+1}]}(s)\right)\diff s \right]\\
             &+\mathbb{E}\left[\sum_{i=0}^{N-1} \left( e^{\int_t^{\tau_{i+1}} \gamma(s, \Y_s) \diff s}u\left(\tau_{i+1}, \Y_{\tau_{i+1}}\right) -  e^{\int_t^{\tau^{(k)}_{i}} \gamma(s, \Y_s) \diff s}u\left(\tau^{(k)}_{i}, \Y_{\tau^{(k)}_{i}}\right)\right)\right].
        \end{split}
    \end{equation*}

    The integrand of the first term is bounded by $e^{|\gamma|_{\infty}T}|f|_{\infty}$ and converges to 0 almost everywhere as $k\to \infty$. Hence, by dominated convergence, the first term converges to 0 as $k\to \infty$. To complete the proof, it only remains to show that the second term tends to 0 as $k \to \infty$. 

    The term inside the expectation is bounded by $2N |u|_{\infty} e^{T|\gamma|_{\infty}}$, which is integrable thanks to Lemma \ref{lem:finite_activity_expectation}).

    One can show  that for every $i$ 
    \begin{equation*}
        \tau_i^{(k)}\xrightarrow[k \to \infty]{}\tau_{i+1}\text{ and } u\left(\tau^{(k)}_{i}, \Y_{\tau^{(k)}_{i}}\right)\xrightarrow[k \to \infty]{} u\left(\tau_{i+1}, \Y_{\tau_{i+1}}\right),
    \end{equation*}
    the last limit being a consequence of the boundary conditions. Hence, by dominated convergence, the second term also tends to 0 as $k \to \infty$.
\end{proof}

\begin{corollary}
    \label{corol:bound_on_u}
    Let $\gamma, f:[0,T) \times (a,b)\mapsto \mathbb{R}$ be two bounded measurable functions and $\beta \in \mathbb{R}$. Suppose that $u\in\D$ solves
    \begin{equation*}
        \left\{
            \begin{array}{rl}
                \partial_t u(t,x) + \frac{\sigma^2}{2}\partial^2_{xx}u(t,x)+ \beta\partial_x u (t,x) + \gamma(t,x) u(t,x) + f(t,x) = 0 & \text{on } (0,T] \times (a,b) \\
                u(T,y) = 0 &  y \in [a,b]\\
                u(t,a) = u(t,a_0) & t \in [0,T]\\
                u(t,b) = u(t,b_0) & t \in [0,T]
            \end{array}.
        \right.
    \end{equation*}
    Let $g$ be a measurable function on $[0,T]$ such that for almost every $s \in [0,T]$, $g(s) \geqslant \sup_{y \in (a,b)}|f(s,y)|$. Then, for all $t \in [0,T]$ and $c > \sup \gamma$,
    \begin{equation*}
        e^{c t}\sup_{y \in [a,b]}|u(t,y)| \leqslant
        \int_t^T e^{c s}g(s)\diff s.
    \end{equation*}
\end{corollary}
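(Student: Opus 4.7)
The plan is to apply the Feynman–Kac representation of Proposition \ref{prop:feynmankac} with $\tau = T$. Since $u(T,\cdot) \equiv 0$, the terminal term vanishes, and for every $(t,y) \in [0,T] \times (a,b)$,
\begin{equation*}
    u(t,y) \;=\; \mathbb{E}\!\left[\int_t^T f(s,\Y_s)\, e^{\int_t^s \gamma(r,\Y_r)\,\diff r}\, \diff s\right],
\end{equation*}
where $\Y_s = Y(t,s,y,(\beta r + \sigma W_r)_r)$ is the process built in Appendix \ref{sec:midprice_construction}, which takes values in $(a,b)$.

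Next, I would bound the integrand pointwise. Because $\Y_s \in (a,b)$ for every $s$, the hypothesis on $g$ gives $|f(s,\Y_s)| \leqslant g(s)$ for $\diff s \otimes \diff\mathbb{P}$-almost every $(s,\omega)$. Similarly, $\gamma(r,\Y_r) \leqslant \sup \gamma$ pointwise, so $e^{\int_t^s \gamma(r,\Y_r)\,\diff r} \leqslant e^{(\sup \gamma)(s-t)}$. Applying the triangle inequality under the expectation and then Fubini–Tonelli yields
\begin{equation*}
    |u(t,y)| \;\leqslant\; \int_t^T g(s)\, e^{(\sup \gamma)(s-t)}\, \diff s,
\end{equation*}
a bound that is uniform in $y \in (a,b)$. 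For the endpoints, the boundary conditions $u(t,a) = u(t,a_0)$ and $u(t,b) = u(t,b_0)$ together with $a_0, b_0 \in (a,b)$ extend the same estimate to $y \in \{a,b\}$, so it holds uniformly on $[a,b]$.

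The final step is algebraic: multiply by $e^{ct}$ and rewrite the exponent as $ct + (\sup \gamma)(s-t) = cs - (c - \sup \gamma)(s-t)$. Since $c > \sup \gamma$ by assumption and $s \geqslant t$, the subtracted term is nonnegative, so $e^{ct + (\sup\gamma)(s-t)} \leqslant e^{cs}$. Inserting this pointwise bound inside the integral gives
\begin{equation*}
    e^{ct}\sup_{y \in [a,b]}|u(t,y)| \;\leqslant\; \int_t^T g(s)\, e^{cs}\, \diff s,
\end{equation*}
which is the claim. There is no real obstacle here; the proof is essentially a routine application of Feynman–Kac, and the only point worth flagging is that the hypothesis $c > \sup \gamma$ is used exactly to guarantee the last exponential comparison.
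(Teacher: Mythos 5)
Your proof is correct and is exactly the argument the paper intends: Corollary \ref{corol:bound_on_u} is stated as an immediate consequence of Proposition \ref{prop:feynmankac} with $\tau = T$, using the vanishing terminal condition, the pointwise bounds $|f(s,\Y_s)|\leqslant g(s)$ and $\gamma \leqslant \sup\gamma$, the nonlocal boundary conditions to cover $y\in\{a,b\}$, and the comparison $e^{ct+(\sup\gamma)(s-t)}\leqslant e^{cs}$ for $c>\sup\gamma$. Nothing is missing.
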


\subsection{The Krylov-Safonov estimates}

In this section, we recall the Krylov-Safonov estimates, which control the local Hölder norms of $u$ with respect to the $L^2$-norm of $f$ and $|u|_{\infty}$. For $D \subset [0,T] \times [a,b]$, we denote by $W^{1,2}_2(D)$ the space of measurable functions $u$ on $D$ admitting weak derivatives $\partial_t u$, $|\partial_y u|$ and $\partial^2_{yy}u$ such that $\int_D|u|^2$, $\int_D |\partial_t u|^2$, $\int_D |\partial_yu|^2$ and $\int_D |\partial^2_{yy}u|^2$ are all finite.

We recall the Krylov-Safonov estimate \parencite[Theorem 4.2]{krylov_certain_1981} (replacing the variable $t$ by $T-t$ to suit our framework).
\begin{theorem}
    Let $D$ an open subset of $[0,T] \times [a,b]$. Let $K > 0$ and $\alpha$, $\beta$, $\gamma$ be three bounded measurable functions such that $\frac{1}{K} \leqslant \alpha \leqslant K$, $|\beta| \leqslant K$ and $-K \leqslant \gamma \leqslant 0$. Then, there exist two constants $C > 0$, $\delta \in (0,1)$ depending on $K$ (but not on the specific subset $D$) such that for every $P, P' \in D$ with $d(P,P')\leqslant \frac{1}{4} \left(d^D_{PP'} \wedge 1 \right)$, and every $u \in W^{1,2}_2(D)$
    \begin{equation*}
        \left(d^D_{PP'} \wedge 1\right)^{\delta}\left|u(P) - u(P')\right| \leqslant C d(P,P')^{\delta} \left(|u|_{\infty} + \sqrt{\int_D |\partial_t u + \alpha \partial^2_{yy} u + \beta \partial_y u + \gamma u|^2}\right).
    \end{equation*}
\end{theorem}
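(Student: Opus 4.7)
The plan is to follow the classical Krylov–Safonov program for parabolic equations in non-divergence form with measurable coefficients, adapted to this time-reversed setting and to the intrinsic parabolic distance $d$. Throughout, denote the operator $L u \defeq \partial_t u + \alpha \partial^2_{yy} u + \beta \partial_y u + \gamma u$. By the time change $t \mapsto T-t$, and a standard affine rescaling of each interior cylinder by its parabolic radius, it suffices to prove the estimate on a fixed reference cylinder with ellipticity bounds depending only on $K$; the dependence of $C$ and $\delta$ on $K$ then arises from this reference estimate alone.

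The first core step is the parabolic Aleksandrov–Bakelman–Pucci (ABP) estimate: for any $v \in W^{1,2}_2(Q)$ on a standard cylinder $Q$, one controls $\sup_Q v^-$ by a constant (depending only on $K$) times $\bigl(\int_{\Gamma^-}|Lv|^2\bigr)^{1/2}$, where $\Gamma^-$ is the parabolic contact set with the convex envelope of $-v^-$. This is the ellipticity- and dimension-quantitative substitute for the classical maximum principle and the input that makes all constants depend only on $K$. The second core step is the Krylov–Safonov measure-decay lemma: if $v \geqslant 0$ satisfies $Lv \leqslant f$ on $Q_{2r}$ and $\{v \geqslant 1\} \cap Q_r$ has positive measure, then $\inf_{Q_{r/2}} v \geqslant c(K) - C(K)\|f\|_{L^2}$. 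Its proof couples the ABP estimate with a Calderón–Zygmund stopping-time decomposition of the superlevel sets $\{v > \lambda\}$, adjusted to the anisotropic parabolic scaling where time scales as the square of space.

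Iterating the measure-decay lemma along dyadic parabolic cylinders yields a weak Harnack inequality, of the form $\bigl(\fint_{Q_r} v^{\varepsilon}\bigr)^{1/\varepsilon} \leqslant C(K) \bigl(\inf_{Q_{r/2}} v + r^{\delta} \|Lv\|_{L^2(Q_{2r})}\bigr)$ for some $\varepsilon > 0$. Applying this separately to $M_R - u$ and $u - m_R$, where $M_R = \sup_{Q_R} u$ and $m_R = \inf_{Q_R} u$, gives the oscillation-decay estimate
\begin{equation*}
\mathrm{osc}(u, Q_{r/2}) \leqslant \theta\, \mathrm{osc}(u, Q_r) + C\, r^{\delta}\, \|Lu\|_{L^2(D)}
\end{equation*}
with $\theta = \theta(K) \in (0,1)$. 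A standard iteration on a geometric sequence $r_k = 4^{-k} \bigl(d^D_{PP'} \wedge 1\bigr)$ then gives Hölder continuity: as soon as $d(P,P') \leqslant \tfrac14 \bigl(d^D_{PP'} \wedge 1\bigr)$ we can place both points inside one such cylinder, producing the claimed inequality with $\delta$ equal to $\log_4(1/\theta) \wedge \alpha$ and $C$ depending only on $K$.

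The main obstacle will be the measure-decay lemma: constructing the parabolic Calderón–Zygmund decomposition that respects the anisotropic geometry of $d$, and showing that the ABP bound from the first step propagates positivity from a sub-cylinder to a slightly enlarged one with constants depending only on the ellipticity bound $K$. Once that is in place, the weak Harnack inequality and the final Hölder estimate are largely mechanical. The one feature specific to our setting to watch is that $D$ may be an arbitrary open subset of $[0,T]\times[a,b]$, which is precisely why the estimate is stated with the weight $d^D_{PP'} \wedge 1$: it measures the room available inside $D$ before an interior cylinder touches $\partial D$, and all the steps above only ever use intrinsic interior cylinders of radius $\leqslant \tfrac14 d^D_{PP'}$.
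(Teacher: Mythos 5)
The paper does not prove this statement at all: it is recalled verbatim from Krylov (1981), Theorem 4.2 (the paper's citation \parencite[Theorem 4.2]{krylov_certain_1981}), with the only adjustment being the time reversal $t\mapsto T-t$ to fit the terminal-condition convention; the paper's own work starts only afterwards, in Corollary \ref{corol:get_rid_of_one_fourth_condition}, where the restrictions $\gamma\leqslant 0$ and $d(P,P')\leqslant\frac14(d^D_{PP'}\wedge 1)$ are removed. Your proposal, by contrast, sets out to reprove the theorem from scratch via the classical Krylov--Safonov program (parabolic ABP estimate, measure-decay/growth lemma via a parabolic Calder\'on--Zygmund decomposition, weak Harnack inequality, oscillation decay, dyadic iteration in interior cylinders of radius comparable to $d^D_{PP'}$). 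This is the right architecture, and your observations are sound on the points specific to this setting: in one spatial dimension the parabolic ABP exponent $n+1$ equals $2$, which is why the $L^2$ norm of $Lu$ appears, and the weight $d^D_{PP'}$ (distance to the parabolic, i.e.\ future-time, boundary) is exactly what guarantees that only intrinsic interior cylinders are used, so the constants depend on $K$ and not on $D$.

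That said, what you have written is a roadmap rather than a proof: the three load-bearing ingredients --- the parabolic ABP estimate for $W^{1,2}_2$ functions, the measure-decay lemma with its anisotropic covering argument, and the weak Harnack inequality --- are named and their roles described, but none is established, and these are precisely the hard, multi-page parts of Krylov--Safonov theory. If you intend to import them as known results, you are in effect doing what the paper does more economically, namely citing the final estimate itself. Two smaller points to fix if you flesh this out: the exponent ``$\log_4(1/\theta)\wedge\alpha$'' misuses $\alpha$, which here denotes the measurable diffusion coefficient and carries no H\"older exponent (the coefficients are only bounded and measurable, so the exponent must come solely from $\theta(K)$); and in the weak Harnack inequality the inhomogeneous term should carry the scaling-correct power of $r$ coming from the parabolic rescaling of the $L^2$ norm (in one spatial dimension, $r^{1/2}$ up to normalization), not a generic $r^{\delta}$.
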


Proceeding like in the proof of \textcite[Theorem 4.3]{krylov_certain_1981}, we deduce the following corollary, which shows that we can remove the hypotheses $\gamma\leqslant 0$ and $d(P,P')\leqslant \frac{1}{4} \left(d^D_{PP'} \wedge 1 \right)$, and obtain uniform estimates with respect to $\delta'\in(0,\delta]$.
\begin{corollary}
    \label{corol:get_rid_of_one_fourth_condition}
    Let $D$ an open subset of $[0,T] \times [a,b]$. Let $K > 0$ and $\alpha$, $\beta$, $\gamma$ be three bounded measurable functions such that $\frac{1}{K} \leqslant \alpha \leqslant K$, $|\beta| \leqslant K$ and $|\gamma| \leqslant K$. Then, there exist two constants $C > 0$, $\delta \in (0,1)$ depending on $K$, $a$, $b$, $T$ (but not on the specific subset $D$) such that for every $P, P' \in D$, $\delta' \in (0,\delta]$ and $u \in W^{1,2}_2(D)$
    \begin{equation*}
        \left(d^D_{PP'} \right)^{\delta'}\left|u(P) - u(P')\right| \leqslant C d(P,P')^{\delta'} \left(|u|_{\infty} + \sqrt{\int_D |\partial_t u + \alpha \partial^2_{yy} u + \beta \partial_y u + \gamma u|^2}\right).
    \end{equation*}
\end{corollary}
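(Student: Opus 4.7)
The plan is to deduce the corollary from the preceding theorem in three stages: (a) remove the sign condition $\gamma\leqslant 0$ by an exponential change of variable, (b) drop the smallness restriction $d(P,P')\leqslant \frac{1}{4}(d^D_{PP'}\wedge 1)$ by a trivial case split, and (c) pass from $d^D_{PP'}\wedge 1$ to $d^D_{PP'}$ and make the estimate uniform in $\delta'\in(0,\delta]$, both of which are possible only because the ambient domain $[0,T]\times[a,b]$ is bounded. Throughout, write $Lu\defeq \partial_t u + \alpha \partial^2_{yy}u + \beta \partial_y u + \gamma u$.

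First, set $v(t,y)\defeq e^{Kt}u(t,y)$. A direct computation gives
\[
\partial_t v + \alpha \partial^2_{yy} v + \beta \partial_y v + (\gamma-K)v = e^{Kt}Lu,
\]
and the new zeroth-order coefficient $\tilde\gamma\defeq\gamma-K$ satisfies $-2K\leqslant \tilde\gamma\leqslant 0$. Applying the preceding theorem with $K$ replaced by $2K$ yields $\delta\in(0,1)$ and $C_0>0$ depending only on $K$ such that, for all $P,P'\in D$ with $d(P,P')\leqslant \frac{1}{4}(d^D_{PP'}\wedge 1)$,
\[
(d^D_{PP'}\wedge 1)^\delta |v(P)-v(P')|\leqslant C_0\, d(P,P')^\delta\Bigl(|v|_\infty+\sqrt{\int_D e^{2Kt}|Lu|^2}\,\Bigr).
\]

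To transfer this estimate back to $u=e^{-Kt}v$, the key inequality is
\[
|u(P)-u(P')|\leqslant |v(P)-v(P')| + Ke^{KT}|u|_\infty\, d(P,P')^2,
\]
which follows from $|e^{-Kt_P}-e^{-Kt_{P'}}|\leqslant K|t_P-t_{P'}|\leqslant Kd(P,P')^2$. Writing $R\defeq \sqrt{(b-a)^2+T}$, one has $d(P,P')\leqslant R$, so the quadratic error is absorbed into $d(P,P')^{\delta'}$ at the cost of a factor $(R\vee 1)^2$. Combined with $|v|_\infty\leqslant e^{KT}|u|_\infty$, this produces a constant $C_1$ depending on $K, a, b, T$ such that the analogous estimate holds for $u$ on the restricted regime. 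To remove the smallness restriction, I would argue separately: if $d(P,P')>\frac{1}{4}(d^D_{PP'}\wedge 1)$, the trivial bound $|u(P)-u(P')|\leqslant 2|u|_\infty$ together with $(d^D_{PP'}\wedge 1)^{\delta'}\leqslant 4^\delta d(P,P')^{\delta'}$ settles the inequality; in the complementary regime, the ratio $\rho\defeq d(P,P')/(d^D_{PP'}\wedge 1)$ lies in $[0,1/4]\subset[0,1]$, so $\rho^\delta\leqslant \rho^{\delta'}$ for every $\delta'\in(0,\delta]$, which upgrades the exponent from $\delta$ to $\delta'$ for free. A final factor $(R\vee 1)^\delta$, coming from $(d^D_{PP'})^{\delta'}\leqslant (R\vee 1)^\delta(d^D_{PP'}\wedge 1)^{\delta'}$, completes the passage from $d^D_{PP'}\wedge 1$ to $d^D_{PP'}$.

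The main obstacle is the transfer step: the exponential weight $e^{-Kt}$ is only Lipschitz in $t$, so commuting it with a Hölder increment of $v$ produces an error of order $d(P,P')^2$ rather than the desired $d(P,P')^{\delta'}$. The boundedness of the ambient domain is what saves the argument, since it caps $d(P,P')$ by a fixed constant and lets the quadratic error be dominated by $d(P,P')^{\delta'}$ uniformly in $\delta'\in(0,\delta]$. This is precisely why the constants $C$ and $\delta$ in the corollary are allowed to depend on $a$, $b$, and $T$, whereas the preceding theorem depends only on $K$.
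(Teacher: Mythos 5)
Your argument is correct, but it takes a slightly different route from the paper in the step that removes the sign condition on $\gamma$. The paper never changes the unknown: since the Krylov--Safonov estimate is stated for arbitrary $u \in W^{1,2}_2(D)$ rather than for solutions of an equation, it simply applies the theorem (with $K$ replaced by $2K$) to $u$ itself but with the operator whose zeroth-order coefficient is $\gamma - K \leqslant 0$, and then uses Minkowski's inequality to pass from $\sqrt{\int_D |\partial_t u + \alpha \partial^2_{yy} u + \beta \partial_y u + (\gamma - K) u|^2}$ back to $\sqrt{\int_D |\partial_t u + \alpha \partial^2_{yy} u + \beta \partial_y u + \gamma u|^2} + K\sqrt{T(b-a)}\,|u|_{\infty}$. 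Your exponential change of variable $v = e^{Kt}u$ achieves the same reduction, at the cost of the extra transfer step through the Lipschitz weight, which produces the $d(P,P')^2$ error term you then absorb using the boundedness of $[0,T]\times[a,b]$; this is harmless but makes the proof a bit longer, and it obliges you to record two small points you left implicit, namely that $v \in W^{1,2}_2(D)$ (multiplication by the smooth bounded weight preserves the space on a bounded domain) and that $\sqrt{\int_D e^{2Kt}|Lu|^2} \leqslant e^{KT}\sqrt{\int_D |Lu|^2}$. The remaining steps -- the trivial bound $2\cdot 4^{\delta}|u|_{\infty}$ in the regime $d(P,P') > \frac{1}{4}(d^D_{PP'}\wedge 1)$, the monotonicity $\rho^{\delta} \leqslant \rho^{\delta'}$ of the ratio $\rho = d(P,P')/(d^D_{PP'}\wedge 1) \leqslant 1$ to get uniformity in $\delta' \in (0,\delta]$, and the bounded factor converting $d^D_{PP'}\wedge 1$ into $d^D_{PP'}$ -- match the paper's argument in substance, and your explicit case split for the $\delta'$-uniformity is in fact cleaner than the corresponding passage in the paper.
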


\begin{proof}
    First observe that, if $P,P' \in D$ satisfy $d(P,P')> \frac{1}{4}(d^D_{PP'} \wedge 1)$, then, $\left(d^D_{PP'} \wedge 1\right)^{\delta}\frac{\left|u(P) - u(P')\right|}{d(P,P')^{\delta}} \leqslant 2\cdot 4^{\delta}|u|_{\infty}$. Defining $C' \defeq \max\{C, 2\cdot 4^{\delta}\}\max\left\{1, T \wedge \frac{b-a}{2}\right\}^{\delta}> 2$ ($(C, \delta)$ given by the preceding theorem replacing $K$ by $2K$), thanks to the preceding theorem, and since $d^D_{PP'} \leqslant T \wedge \frac{b-a}{2}$, we have, for any $P,P' \in D$,
    \begin{equation*}
        \left(d^D_{PP'}\right)^{\delta}\left|u(P) - u(P')\right| \leqslant C' d(P,P')^{\delta} \left(|u|_{\infty} + \sqrt{\int_D |\partial_t u + \alpha \partial^2_{yy} u + \beta \partial_y u + (\gamma - K) u|^2}\right).
    \end{equation*}

    Using the Minkowski inequality on the $\sqrt{\int}$ term, we get
    \begin{equation*}
        \left(d^D_{PP'}\right)^{\delta}\left|u(P) - u(P')\right| \leqslant C' d(P,P')^{\delta} \left(|u|_{\infty} + K\sqrt{\int_D |u|^2} +\sqrt{\int_D |\partial_t u + \alpha \partial^2_{yy} u + \beta \partial_y u + \gamma u|^2}\right).
    \end{equation*}
    Then, setting $C'' \defeq C'(1+K\sqrt{T(b-a)}) > 2$ (which only depends on $K$, $b-a$ and $T$)
    \begin{equation*}
        \left(d^D_{PP'} \right)^{\delta}\left|u(P) - u(P')\right| \leqslant C'' d(P,P')^{\delta} \left(|u|_{\infty} + \sqrt{\int_D |\partial_t u + \alpha \partial^2_{yy} u + \beta \partial_y u + \gamma u|^2}\right).
    \end{equation*}
    
    Let $\delta' \in (0,\delta)$ and $P, P' \in D$ be such that $P \neq P'$. We have
    \begin{equation*}
        \frac{\left(d^D_{PP'} \right)^{\delta'}\left|u(P) - u(P')\right|}{d(P,P')^{\delta'}} \leqslant
        C'' \left(|u|_{\infty} + \sqrt{\int_D |\partial_t u + \alpha \partial^2_{yy} u + \beta \partial_y u + \gamma u|^2}\right) \frac{\left(d^D_{PP'} \right)^{\delta- \delta'}}{d(P,P')^{\delta-\delta'}}.
    \end{equation*}
If $d(P,P') > d^D_{PP'}$, then $\frac{\left(d^D_{PP'} \right)^{\delta- \delta'}}{d(P,P')^{\delta-\delta'}} \leqslant 1$. If $d(P,P') \leqslant d^D_{PP'}$, then $\frac{\left(d^D_{PP'} \right)^{\delta'}\left|u(P) - u(P')\right|}{d(P,P')^{\delta'}} \leqslant 2 |u|_{\infty}$. Thus, we have for any $P,P' \in D$,
    \begin{equation*}
        \frac{\left(d^D_{PP'} \right)^{\delta'}\left|u(P) - u(P')\right|}{d(P,P')^{\delta'}} \leqslant
        C'' \left(|u|_{\infty} + \sqrt{\int_D |\partial_t u + \alpha \partial^2_{yy} u + \beta \partial_y u + \gamma u|^2}\right)
    \end{equation*}
as desired.
\end{proof}

Now we state the result when a function $u \in \D$ solves a parabolic PDE. In this case $u$ is not necessarily in $W^{1,2}_{2}$ because its derivatives may be unbounded close to the boundary.

\begin{corollary}
    \label{corol:krylovsafonov}
    Let $K > 0$ and $\alpha$, $\beta$, $\gamma$ be three bounded measurable functions such that $\frac{1}{K} \leqslant \alpha \leqslant K$, $|\beta| \leqslant K$ and $|\gamma| \leqslant K$. Then, there exist two constants $C' > 0$, $\delta \in (0,1)$, depending on $K$, $a$, $b$ and $T$, such that for every $P, P' \in [0,T)\times (a,b)$, $\delta' \in (0,\delta]$ and $u \in \D$,
    \begin{equation*}
        \left(d_{PP'} \right)^{\delta'}\left|u(P) - u(P')\right| \leqslant C' d(P,P')^{\delta'} \left(|u|_{\infty} + \sqrt{\int_{[0,T)\times (a,b)} |\partial_t u + \alpha \partial^2_{yy} u + \beta \partial_y u + \gamma u|^2}\right).
    \end{equation*}
    In particular, there exists a constant $C$ depending only on $K$, $a$, $b$ and $T$ such that for every measurable function $f$ on $[0,T) \times (a,b)$, if $u \in \D$ solves
    \begin{equation*}
        \partial_t u + \alpha \partial^2_{yy} u + \beta \partial_y u + \gamma u = f
    \end{equation*}
    on $[0,T) \times (a,b)$, then for every $\delta' \in (0,\delta]$, $|u|_{\delta'} \leqslant C\left(|u|_{\infty} + |f|_{\infty}\right)$.
\end{corollary}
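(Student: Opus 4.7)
The strategy is to reduce the claim to Corollary \ref{corol:get_rid_of_one_fourth_condition} by exhausting the strip $[0,T)\times (a,b)$ by subdomains on which $u$ is smooth enough to apply that result, and then passing to the limit. For $n$ large enough, define
\begin{equation*}
    D_n \defeq [0,T-1/n)\times (a+1/n, b-1/n),
\end{equation*}
and fix $P,P' \in [0,T)\times (a,b)$ with $P\in D_n$ and $P'\in D_n$ for all large $n$. Since $u\in\D$, i.e. $u\in C^{1,2}([0,T)\times(a,b))$, the derivatives $\partial_t u,\partial_y u,\partial^2_{yy} u$ are continuous on the compact closure of $D_n$ and are therefore bounded there. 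Hence $u$ (restricted to $D_n$) lies in $W^{1,2}_2(D_n)$, and the constants $\alpha,\beta,\gamma$ still satisfy the hypotheses of Corollary \ref{corol:get_rid_of_one_fourth_condition} on $D_n$.

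Applying Corollary \ref{corol:get_rid_of_one_fourth_condition} on $D_n$ with the same $K$ (and consequently the same pair $(C,\delta)$, which depends only on $K$, $a$, $b$, $T$), we obtain, for every $\delta'\in(0,\delta]$,
\begin{equation*}
    (d^{D_n}_{PP'})^{\delta'}|u(P)-u(P')| \leqslant C\, d(P,P')^{\delta'}\left(|u|_\infty + \sqrt{\int_{D_n}|\partial_t u + \alpha \partial^2_{yy} u + \beta\partial_y u + \gamma u|^2}\right).
\end{equation*}
Now I pass to the limit $n\to\infty$. The sets $D_n$ exhaust $[0,T)\times(a,b)$ from inside, and their parabolic boundaries recede to the parabolic boundary of $[0,T)\times(a,b)$, so $d^{D_n}_P\nearrow d_P$ and $d^{D_n}_{P'}\nearrow d_{P'}$, which gives $d^{D_n}_{PP'}\nearrow d_{PP'}$. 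Simultaneously, by monotone convergence (the integrand is nonnegative and $\mathds{1}_{D_n}\uparrow \mathds{1}_{[0,T)\times(a,b)}$),
\begin{equation*}
    \int_{D_n}|\partial_t u + \alpha \partial^2_{yy} u + \beta\partial_y u + \gamma u|^2 \;\xrightarrow[n\to\infty]{}\; \int_{[0,T)\times(a,b)}|\partial_t u + \alpha \partial^2_{yy} u + \beta\partial_y u + \gamma u|^2.
\end{equation*}
Taking the limit yields the first inequality with $C'=C$.

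For the second statement, suppose $u\in\D$ solves $\partial_t u + \alpha\partial^2_{yy} u + \beta \partial_y u + \gamma u = f$ with $f$ bounded and measurable. Then $\int_{[0,T)\times(a,b)}|f|^2 \leqslant |f|_\infty^{2}\,T(b-a)$, and the first part of the corollary gives, for every $P\neq P'$ in $[0,T)\times(a,b)$ and every $\delta'\in(0,\delta]$,
\begin{equation*}
    d_{PP'}^{\delta'}\frac{|u(P)-u(P')|}{d(P,P')^{\delta'}} \leqslant C'\bigl(|u|_\infty + |f|_\infty\sqrt{T(b-a)}\bigr).
\end{equation*}
Adding $|u|_\infty$ on both sides and taking the supremum over $P\neq P'$ yields $|u|_{\delta'} \leqslant C(|u|_\infty + |f|_\infty)$ with $C \defeq 1 + C'\max\{1,\sqrt{T(b-a)}\}$, a constant depending only on $K$, $a$, $b$, $T$. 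The main (mild) obstacle is merely bookkeeping: checking that the interior approximation $D_n$ is admissible in Corollary \ref{corol:get_rid_of_one_fourth_condition} and that the parabolic-distance functional is indeed continuous under the exhaustion; both hold straightforwardly from the explicit form of $d^D_P$.
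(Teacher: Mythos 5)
Your proof is correct and follows essentially the same route as the paper: apply Corollary \ref{corol:get_rid_of_one_fourth_condition} on interior subdomains exhausting $[0,T)\times(a,b)$ (the paper uses $D^{\varepsilon}=[0,T-\varepsilon)\times(a+\varepsilon,b-\varepsilon)$, you use $D_n$), where $u\in W^{1,2}_2$, and let the subdomain grow so that $d^{D}_{PP'}\to d_{PP'}$; the paper simply bounds $\int_{D^{\varepsilon}}$ by the full-domain integral instead of invoking monotone convergence, and leaves the "in particular" computation implicit, which you spell out correctly.
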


\begin{proof}
    Let $C$, $\delta$ be the constants given by corollary \ref{corol:get_rid_of_one_fourth_condition}. Let $P,P' \in [0,T) \times(a,b)$. Let $\varepsilon > 0$ such that $P,P' \in D^{\varepsilon} \defeq [0,T-\varepsilon) \times (a + \varepsilon , b - \varepsilon)$. Then,
    \begin{equation*}
        \left(d^{D^{\varepsilon}}_{PP'} \right)^{\delta'}\left|u(P) - u(P')\right| \leqslant C d(P,P')^{\delta'} \left(|u|_{\infty} + \sqrt{\int_{[0,T)\times (a,b)} |\partial_t u + \alpha \partial^2_{yy} u + \beta \partial_y u + \gamma u|^2}\right)
    \end{equation*}
    for all $\delta' \in (0,\delta]$. Since $\lim\limits_{\varepsilon \to 0}d^{D^{\varepsilon}}_{PP'} = d_{PP'}$, we get the desired result.
\end{proof}

\begin{rem}
    Note that in the previous corollary $\sqrt{\int_{(0,T)\times (a,b)} |\partial_t u + \alpha \partial^2_{yy} u + \beta \partial_y u + \gamma u|^2}$ might be infinite.
\end{rem}

\section{Convergence from the discrete inventory to the continuous inventory case}
\label{sec:convergence_discrete_continuous}
\begin{dummyenv}
\renewcommand{\Q}{\mathcal{Q}_n}
\renewcommand{\Y}{\mathcal{Y}}
In this section we prove Theorem \ref{thm:existence}(ii).
We keep all the notations from Section \ref{sec:hjb}. For $n \in \mathbb{N}^*$, define $N^a_n \defeq \frac{n}{\Qmax} \qa_n \in \mathbb{N}$, $N^b_n \defeq \frac{n}{\Qmax} \qb_n \in \mathbb{N}$, and the probability measures
\begin{align}
    \label{eq:discrete_measures_2}
    \begin{split}
        \mu^a_n & = \sum_{j = 0}^{N^a_n - 1}\delta_{\Qmax\frac{j}{n}}\mu^a\left(\left[\Qmax\frac{j}{n}, \Qmax\frac{j+1}{n}\right)\right) + \delta_{\qa_n}\mu^a\left(\left[\qa_n, \qa\right]\right) \\
    \mu^b_n & = \sum_{j = 0}^{N^b_n - 1}\delta_{\Qmax\frac{j}{n}}\mu^b\left(\left[\Qmax\frac{j}{n}, \Qmax\frac{j+1}{n}\right)\right) + \delta_{\qb_n}\mu^b\left(\left[\qb_n, \qb\right]\right)
    \end{split}
\end{align}
on $\Qa$ and $\Qb$, respectively. Notice that $\mu^a_n$ and $\mu^b_n$ can be also seen as measures on $\mathcal{Q}^{+,a}_{\infty}$ and $\mathcal{Q}^{+,b}_{\infty}$. Define $\mu^a_{\infty} \defeq \mu^a$ and $\mu^b_{\infty} \defeq \mu^b$.

For $n \in \mathbb{N}^*\cup \{\infty\}$, let $(u_n^Q)_{Q \in \Q}$ be the (unique by Theorem \ref{thm:verification}) continuous solution of \eqref{eq:HJB_interior}-\eqref{eq:HJB_border} with associated execution measures $\mu^a_n$ and $\mu^b_n$ such that $\sup_{Q \in \Q}|u^Q_n|_{2+\beta} < \infty$.

By Section \ref{subsubsec:uniform_majoration}, and since the considered solutions are unique, there exists a constant $C > 0$, depending on $T$, $\Qmax$, $\gamma$, $\sigma$, $\delta$, $\eta$, $\Lambda^*$, $\beta$, $\Lambda^*_{\beta}$ but not on $n$, nor the execution measures such that
\begin{equation*}
    \sup_{n \in \mathbb{N^*} \cup \{\infty\}} \sup_{Q \in \Q} |u^Q_n|_{2+\beta} \leqslant C.
\end{equation*}

By the results of Section \ref{subsec:continuity}, there exists a constant $L > 0$ depending on $T$, $\Qmax$, $\gamma$, $\sigma$, $\delta$, $\eta$, $\Lambda^*$, $C$ but not on $n$, nor the execution measures, such that for all $n \in \mathbb{N}^*\cup \{\infty\}$, $(t,y) \in [0,T] \times \bar{\Y}$, $L \varpi$ is a modulus of continuity of $Q \mapsto u^Q_n(t,y)$.

For $n \in \mathbb{N}^*\cup \{\infty\}$, and $(t,Q,y)\in [0,T) \times \Q \times Y$, define
\begin{align*}
    f_n^Q(t,y) &\defeq \Lambda^a(t,y)H^a_n\left((u^Q_n(t,y))_{Q\in\Q}, Q, y, \mu^a_n\right) + \Lambda^b(t,y)H^b_n\left((u^Q_n(t,y))_{Q\in\Q}, Q, y, \mu^b_n\right).
\end{align*}

Let $n \in \mathbb{N}^*$ and $Q \in \Q$, then $v:= u^Q_n - u^Q_{\infty}$ solves 
\begin{equation*}
    \begin{aligned}
        0 =& \left(\partial_t v + \frac{\sigma^2}{2}\partial^2_{yy} v
        -\sigma^2 \gamma Q \partial_y v + \left(\frac{\sigma^2 \gamma^2 Q^2}{2}- \left(\Lambda^a + \Lambda^b\right)(y)\right)v+
            f^Q_{n} - f^Q_{\infty}\right)(t,y)
    \end{aligned}
\end{equation*}
with boundary conditions
\begin{equation*}
    \left\{
        \begin{array}{ll}
            v(T, y) &=0 \\
            v\left(t, \bar{y}\right) &= v\left(t, y_+\right) \\
            v\left(t, -\bar{y}\right) &= v\left(t, y_-\right).
        \end{array}
    \right.
\end{equation*}
By Lemma \ref{lem:ineq_H_different_n}, there exists a constant $L'> 0$ depending on $T$, $\Qmax$, $\gamma$, $\sigma$, $\delta$, $\eta$, $\Lambda^*$, $C$ but not on $n$, nor the execution measures, such that for all $n \in \mathbb{N}^*$, $(t,Q,y) \in [0,T] \times \Q \times \bar{\Y}$,
\begin{equation*}
    \left|f_n^Q(t,y) - f^Q_{\infty}(t,y)\right| \leqslant
    L'\left(\sup_{R \in \Q} |u_n^R(t,y) - u_{\infty}^R(t,y)| + \varpi\left(\frac{\Qmax}{n}\right)\right).
\end{equation*}

For $t \in [0,T]$, define $g_n(t) \defeq e^{ct}\sup\limits_{R \in \Q, y \in \Y}\left|u^R_n(t,y) - u^R_{\infty}(t,y)\right|$, where $c \defeq \frac{\sigma^2 \gamma^2 \Qmax^2}{2} + 2 \Lambda^*$. The function $g_n$ is measurable -- the supremum can be taken over a countable subset, since all the functions involved are continuous. It follows by Corollary \ref{corol:bound_on_u} that
\begin{equation*}
     g_n(t) \leqslant L'T \varpi\left(\frac{\Qmax}{n}\right) + L'\int_t^T g_n(s) \diff s,\quad n \in \mathbb{N}^*,\, t \in [0,T].
\end{equation*}
Using Grönwall's inequality, we conclude that
\begin{equation*}
    \sup\limits_{R \in \mathcal{Q}_{n}}\left|u^R_{n}(t,y) - u^R_{\infty}(t,y)\right|_{\infty} \leqslant L'T e^{L'T}\varpi\left(\frac{\Qmax}{n}\right) \xrightarrow[n \to \infty]{} 0.
\end{equation*}
\end{dummyenv}

\section{Properties of the Hamiltonians}
\begin{dummyenv}
\renewcommand{\Q}{\mathcal{Q}_n}
\renewcommand{\Y}{\mathcal{Y}}
In this section we state some properties about the Hamiltonians (see Definition \ref{def:Hamiltonians}) that are used in the proof of existence of solutions of the Hamilton-Jacobi-Bellman equation \eqref{eq:HJB_interior}-\eqref{eq:HJB_border}. 

Let $n \in \mathbb{N}^* \cup \{\infty\}$. We denote by $\mathcal{O}$ the set of $(w,t,Q,y,\mu^a, \mu^b)$ such that
$(t,Q,y) \in [0,T) \times \Q \times Y$, $\mu^{a}$, and $\mu^b$ are probability measures on $\Qa$ and $\Qb$, respectively, and $w$ is a bounded function defined on $\Q$ for which the Hamiltonians are defined (i.e. $w$ is measurable or arbitrary $w$ if $\mu^a$ and $\mu^b$ have finite support). For $(w,t,Q,y,\mu^a, \mu^b) \in \mathcal{O}$, we define
\begin{equation*}
    H_n(w, t, Q, y, \mu^a, \mu^b)
    = \Lambda^a(t,y)H_n^a(w, Q, y, \mu^a) + \Lambda^b(t,y)H_n^b(w, Q, y, \mu^b).
\end{equation*}
We shall omit $n$ whenever is clear for the context.

Recall that for $\beta \in (0,\alpha]$, $\Lambda^*_{\beta} = \max\{|\Lambda^a|_{\beta}, |\Lambda^b|_{\beta}\}$. We fix such $\beta$ for this section.
For $\varepsilon > 0$, we say that $q \in \Qa \cap [0,Q + \Qmax]$ is $\varepsilon$-optimal for $H^a(w,t,Q,y,\mu^a)$ if
\begin{equation}
    \label{eq:epsilon_optimal}
    H^a(w,t,Q,y,\mu^a) \leqslant \int_{\Qa} e^{-\gamma(q \wedge z)\left(\frac{\delta}{2}-y\right)}w(Q - q \wedge z) \mu^a(\diff z) + \varepsilon.
\end{equation}
We say $q$ is optimal if equality holds with $\varepsilon=0$ in \eqref{eq:epsilon_optimal}.

\begin{rem}
    Recall that $\qa, \qb \leqslant 2\Qmax$, and therefore the upper bounds involving $\qa$ and $\qb$ depend directly on $\Qmax$.
\end{rem}

\subsection{Some inequalities involving the Hamiltonians}

\begin{lemma}
    \label{lem:H_ineq_ty}
    There exists a constant $C > 0$ depending only on $T$, $\gamma$, $\delta$, $\eta$, $\Qmax$, $\Lambda^*$, $\Lambda^*_{\beta}$ such that for all
    $(w, t, Q, y, \mu^a, \mu^b) \in \mathcal{O}$ and $(t',y') \in [0,T) \times \Y$,
    \begin{equation*}
        \left|H(w, t, Q, y, \mu^a, \mu^b) - H(w, t', Q, y', \mu^a, \mu^b)\right|
        \leqslant C|w|_{\infty} \frac{d((t,y), (t',y'))^{\beta}}{d_{(t,y)(t',y')}^{\beta}}.
    \end{equation*}
\end{lemma}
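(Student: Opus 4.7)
The plan is to reduce everything to the triangle inequality by splitting the difference into two pieces: one that isolates the $(t,y)$-dependence of the intensities $\Lambda^a, \Lambda^b$, and one that isolates the $y$-dependence of the Hamiltonians $H^a, H^b$ (which are independent of $t$). Writing $P = (t,y)$, $P' = (t',y')$, I will first estimate
\[
  \bigl|H(w,P,Q,\mu^a,\mu^b) - H(w,P',Q,\mu^a,\mu^b)\bigr|
  \leqslant \sum_{i \in \{a,b\}} \bigl|\Lambda^i(P) - \Lambda^i(P')\bigr|\,\bigl|H^i(w,Q,y,\mu^i)\bigr|
  + \Lambda^i(P')\,\bigl|H^i(w,Q,y,\mu^i) - H^i(w,Q,y',\mu^i)\bigr|.
\]

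For the first piece, a direct bound from the definition of the Hamiltonians gives $|H^i(w,Q,y,\mu^i)| \leqslant e^{\gamma\, \qmax (\delta/2 + \bar y)}\,|w|_\infty$, while the Hölder regularity of $\Lambda^i$ yields $|\Lambda^i(P) - \Lambda^i(P')| \leqslant \Lambda^*_\beta\, d(P,P')^\beta / d_{PP'}^\beta$. Multiplying produces a term of the desired form.

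For the second piece, I would use the standard $\varepsilon$-optimal selection trick. Fix $\varepsilon > 0$ and pick $q^* \in \Qa \cap [0, Q + \Qmax]$ that is $\varepsilon$-optimal for $H^a(w,Q,y,\mu^a)$; since $q^*$ remains admissible for $(Q,y')$, we get
\[
  H^a(w,Q,y,\mu^a) - H^a(w,Q,y',\mu^a)
  \leqslant \int_{\Qa}\bigl[e^{-\gamma(q^*\wedge z)(\delta/2 - y)} - e^{-\gamma(q^*\wedge z)(\delta/2 - y')}\bigr]w(Q - q^* \wedge z)\,\mu^a(\diff z) + \varepsilon.
\]
The elementary inequality $|e^u - e^v| \leqslant e^{\max(|u|,|v|)}|u-v|$, together with $q^* \leqslant 2\Qmax$ and $|y|,|y'| \leqslant \bar y$, bounds the integrand by $C|w|_\infty |y-y'|$ with $C$ depending only on $\gamma,\delta,\eta,\Qmax$. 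Letting $\varepsilon \downarrow 0$ and symmetrizing the roles of $y$ and $y'$ gives $|H^a(w,Q,y,\mu^a) - H^a(w,Q,y',\mu^a)| \leqslant C|w|_\infty |y-y'|$, and the analogous estimate holds on the bid side.

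The only bookkeeping point — which I do not expect to be a genuine obstacle — is converting the Lipschitz bound in $|y-y'|$ into the required form $d(P,P')^\beta/d_{PP'}^\beta$. This is immediate from the fact that the parabolic diameter of $[0,T) \times \mathcal{Y}$ is bounded by a constant depending only on $T,\delta,\eta$: one writes $|y-y'| \leqslant d(P,P') = d(P,P')^\beta\, d(P,P')^{1-\beta}$, and then $d(P,P')^{1-\beta}$ and $d_{PP'}^\beta$ are each controlled by a diameter constant, so $|y-y'| \leqslant C' d(P,P')^\beta / d_{PP'}^\beta$. Summing the four contributions and absorbing all constants into a single $C$ depending on $T,\gamma,\delta,\eta,\Qmax,\Lambda^*,\Lambda^*_\beta$ yields the stated inequality.
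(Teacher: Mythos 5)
Your argument is correct and follows essentially the same route as the paper: both split the difference into a term controlled by the $\beta$-Hölder seminorm of $\Lambda^a,\Lambda^b$ and a term controlled by the Lipschitz dependence of the exponential on $y$, both handle the supremum via an $\varepsilon$-optimal quote (which stays admissible since the constraint set does not depend on $(t,y)$), and both convert the resulting $|y-y'|$ bound into the weighted quotient $d(P,P')^{\beta}/d_{PP'}^{\beta}$ using the boundedness of the parabolic domain. The only difference is cosmetic: you peel off the $\Lambda$-increment by a triangle inequality before the $\varepsilon$-optimal step, whereas the paper performs the same decomposition inside it.
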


\begin{proof}
    We only show the inequality for the \enquote{a} part, since the \enquote{b} part is similar. Define $A \defeq \Lambda^a(t,y)H^a(w, Q, y, \mu^a) - \Lambda^a(t',y')H^a(w, Q, y', \mu^a)$. Let $\varepsilon > 0$ and $q \in \Qa \cap [0,Q + \Qmax]$ be $\varepsilon$-optimal for $H^a(w,Q,y, \mu^a)$. Then,
    \begin{equation*}
        A \leqslant \Lambda^a(t,y) \int_{\Qa} e^{-\gamma(q \wedge z)\left(\frac{\delta}{2}-y\right)}w(Q - q \wedge z) \mu^a(\diff z)
        -\Lambda^a(t',y') \int_{\Qa} e^{-\gamma(q \wedge z)\left(\frac{\delta}{2}-y'\right)}w(Q - q \wedge z) \mu^a(\diff z) + \Lambda^*\varepsilon.
    \end{equation*}
    Hence,
    \begin{equation*}
        \begin{split}
            A \leqslant (\Lambda^a(t,y) - &\Lambda^a(t',y'))\int_{\Qa} e^{-\gamma(q \wedge z)\left(\frac{\delta}{2}-y\right)}w(Q - q \wedge z) \mu^a(\diff z) \\
            &+ \Lambda^a(t',y') \int_{\Qa} \left(e^{-\gamma(q \wedge z)\left(\frac{\delta}{2}-y\right)} - e^{-\gamma(q \wedge z)\left(\frac{\delta}{2}-y'\right)}\right)w(Q - q \wedge z) \mu^a(\diff z) + \Lambda^*\varepsilon.
        \end{split}
    \end{equation*}
    Taking the absolute values on the right-hand side, we deduce
    \begin{equation*}
        A \leqslant e^{\gamma \qa \delta(\eta + 1)}|w|_{\infty}\Lambda^*_{\beta}\frac{d((t,y), (t',y'))^{\beta}}{d_{(t,y)(t',y')}^{\beta}}
        + \Lambda^* e^{\gamma \qa \delta(\eta + 1)}|w|_{\infty} \gamma \qa \delta (\eta +1)|y'-y| + \Lambda^*\varepsilon.
    \end{equation*}
    Since $|y'-y| \leqslant \delta^{1-\beta}(1+2\eta)^{1-\beta}|y'-y|^{\beta} \leqslant \delta(1+2\eta) T^{\beta} \frac{d((t,y), (t',y'))^{\beta}}{d_{(t,y)(t',y')}^{\beta}}$, we have the existence of a constant $C'$ depending only on $T$, $\gamma$, $\delta$, $\eta$, $\qa$, $\qb$, $\Lambda^*$, $\Lambda^*_{\beta}$ such that
    \begin{equation*}
        A
        \leqslant C'|w|_{\infty} \frac{d((t,y), (t',y'))^{\beta}}{d_{(t,y)(t',y')}^{\beta}} + \Lambda^*\varepsilon.
    \end{equation*}
    This holds for all $\varepsilon > 0$, hence by continuity, also for $\varepsilon = 0$. The same reasoning on $-A$ gives the result.
\end{proof}

\begin{lemma}
    \label{lem:H_ineq_w}
    There exists a constant $C > 0$ depending only on $\gamma$, $\delta$, $\eta$, $\Qmax$, $\Lambda^*$ such that for all
    $(w, t, Q, y, \mu^a, \mu^b) \in \mathcal{O}$ and $w'$ another bounded function such that the Hamiltonians are defined with $w'$,
    \begin{equation*}
        \left|H(w, t, Q, y, \mu^a, \mu^b) - H(w', t, Q, y, \mu^a, \mu^b)\right|
        \leqslant C|w - w'|_{\infty}.
    \end{equation*}
\end{lemma}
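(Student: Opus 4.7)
The plan is to mimic the $\varepsilon$-optimal control argument used in the proof of Lemma \ref{lem:H_ineq_ty}, replacing the perturbation in $(t,y)$ with a perturbation in the unknown function. The key observation is that although the Hamiltonians are defined through a supremum over $q$, the difference of two suprema can be controlled by evaluating both expressions at an almost-optimal $q$ of one of them.

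First, I would split
\[
H(w, t, Q, y, \mu^a, \mu^b) - H(w', t, Q, y, \mu^a, \mu^b)
= \Lambda^a(t,y)\bigl(H^a(w,Q,y,\mu^a) - H^a(w',Q,y,\mu^a)\bigr) + \Lambda^b(t,y)\bigl(H^b(w,Q,y,\mu^b) - H^b(w',Q,y,\mu^b)\bigr),
\]
and treat the two terms separately; only the ``$a$'' term needs a detailed argument, the ``$b$'' side being completely symmetric. Fix $\varepsilon>0$ and pick $q\in\Qa\cap[0,Q+\Qmax]$ which is $\varepsilon$-optimal for $H^a(w,Q,y,\mu^a)$ in the sense of \eqref{eq:epsilon_optimal}. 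Since $q$ is a fortiori admissible for $H^a(w',Q,y,\mu^a)$, bounding the second supremum from below by its value at $q$ yields
\[
H^a(w,Q,y,\mu^a) - H^a(w',Q,y,\mu^a)
\leqslant \int_{\Qa} e^{-\gamma(q\wedge z)\left(\frac{\delta}{2}-y\right)}\bigl(w(Q-q\wedge z) - w'(Q-q\wedge z)\bigr)\,\mu^a(\diff z) + \varepsilon.
\]

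Now I bound the exponential factor uniformly. Since $q\wedge z\in[0,\qa]\subseteq[0,2\Qmax]$ and $y\in\Y$ so that $\left|\frac{\delta}{2}-y\right|\leqslant \delta(\eta+1)$, we get
\[
e^{-\gamma(q\wedge z)\left(\frac{\delta}{2}-y\right)} \leqslant e^{2\gamma\Qmax\delta(\eta+1)}.
\]
Multiplying by $\Lambda^a(t,y)\leqslant \Lambda^*$, pulling $|w-w'|_{\infty}$ out of the integral, using $\mu^a(\Qa)=1$, letting $\varepsilon\to 0$, and finally running the symmetric argument with the roles of $w$ and $w'$ swapped, one obtains
\[
\bigl|\Lambda^a(t,y)\bigl(H^a(w,Q,y,\mu^a)-H^a(w',Q,y,\mu^a)\bigr)\bigr|
\leqslant \Lambda^* e^{2\gamma\Qmax\delta(\eta+1)}\,|w-w'|_{\infty}.
\]
The identical bound holds on the bid side. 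Summing the two bounds gives the claim with $C \defeq 2\Lambda^* e^{2\gamma\Qmax\delta(\eta+1)}$, which only depends on $\gamma$, $\delta$, $\eta$, $\Qmax$ and $\Lambda^*$ as required.

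There is essentially no obstacle: the argument is a classical ``sup of Lipschitz functions is Lipschitz'' estimate, the only technicality being to produce an $\varepsilon$-optimizer rather than an exact one (needed because when $\mu^a$ and $\mu^b$ are not of finite support, measurability issues could in principle obstruct the existence of an exact maximizer, but the $\varepsilon$-argument bypasses this entirely). Note that the bound is independent of $t$, $Q$, $y$, $\mu^a$, $\mu^b$, and does not require $w,w'$ to be continuous or otherwise regular beyond being admissible inputs to the Hamiltonians.
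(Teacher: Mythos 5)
Your argument is correct and is exactly the paper's proof: the paper simply says to repeat the $\varepsilon$-optimal $q$ argument of Lemma \ref{lem:H_ineq_ty} with the bound $|w(Q-q\wedge z)-w'(Q-q\wedge z)|\leqslant|w-w'|_{\infty}$, which is what you have done, including the uniform exponential bound via $\qa,\qb\leqslant 2\Qmax$.
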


\begin{proof}
    Same reasoning as in lemma \ref{lem:H_ineq_ty}: take an $\varepsilon$-optimal $q$ and use $|w(Q-q\wedge z) - w'(Q-q\wedge z)| \leqslant |w-w'|_{\infty}$.
\end{proof}

\begin{lemma}
    \label{lem:H_ineq_Q}
    There exists a constant $C > 0$ depending only on $\gamma$, $\delta$, $\eta$, $\Qmax$, $\Lambda^*$ such that for all
    $(w, t, Q, y, \mu^a, \mu^b) \in \mathcal{O}$ and $Q' \in \Q$,
    \begin{equation*}
        \left|H(w, t, Q, y, \mu^a, \mu^b) - H(w, t, Q', y, \mu^a, \mu^b)\right|
        \leqslant C(|w|_{\infty}|Q-Q'| + m(w,|Q-Q'|))
    \end{equation*}
    where, for $D \in [0, \infty)$, $m(w,D) \defeq \sup \{w(Q_1) - w(Q_2) : (Q_1, Q_2) \in \Q \times \Q, |Q_1 - Q_2| \leqslant D\}$.
\end{lemma}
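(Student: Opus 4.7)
The plan is to adapt the strategy of the previous two lemmas (Lemma \ref{lem:H_ineq_ty} and Lemma \ref{lem:H_ineq_w}), selecting $\varepsilon$-optimal controls and substituting them as admissible candidates in the other Hamiltonian. By symmetry it is enough to prove the analogue of the claimed estimate for $\Lambda^a(t,y) H^a(w,Q,y,\mu^a)$, the bid side being treated identically and the two combining with $\Lambda^a,\Lambda^b \leqslant \Lambda^*$. Assume without loss of generality that $Q \leqslant Q'$, so that $\mathcal{Q}^{+,a}_n \cap [0, Q+\Qmax] \subset \mathcal{Q}^{+,a}_n \cap [0, Q'+\Qmax]$.

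For the direction $H^a(w,Q,y,\mu^a) - H^a(w,Q',y,\mu^a)$, I pick $q$ that is $\varepsilon$-optimal for $H^a(w,Q,y,\mu^a)$. By the inclusion above, $q$ is admissible in the supremum defining $H^a(w,Q',y,\mu^a)$, so subtracting the two integrals and using $|w(Q-q\wedge z) - w(Q'-q\wedge z)| \leqslant m(w,|Q-Q'|)$ together with $e^{-\gamma(q\wedge z)(\delta/2-y)} \leqslant e^{\gamma\qa\delta(\eta+1)}$ gives an upper bound of the form $e^{\gamma\qa\delta(\eta+1)} m(w,|Q-Q'|) + \varepsilon$, which tends to the right-hand side as $\varepsilon \to 0$.

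The reverse direction is the delicate one. I pick $q'$ $\varepsilon$-optimal for $H^a(w,Q',y,\mu^a)$ and define $q \defeq q' \wedge (Q+\Qmax)$; this is admissible for $H^a(w,Q,y,\mu^a)$, and it lies in $\mathcal{Q}^{+,a}_n$ because $q' \in \mathcal{Q}^{+,a}_n$, $Q+\Qmax$ is a multiple of $\Qmax/n$ (when $n<\infty$), and $q \leqslant q' \leqslant \qa$. Moreover $|q-q'| \leqslant |Q-Q'|$ by construction, and therefore $|q\wedge z - q'\wedge z| \leqslant |Q-Q'|$ for every $z$. Writing the integrand as the telescoping sum
\[
 e^{-\gamma(q'\wedge z)(\delta/2-y)}\bigl[w(Q'-q'\wedge z) - w(Q-q\wedge z)\bigr] + \bigl[e^{-\gamma(q'\wedge z)(\delta/2-y)} - e^{-\gamma(q\wedge z)(\delta/2-y)}\bigr] w(Q-q\wedge z),
\]
the first bracket is bounded by $e^{\gamma\qa\delta(\eta+1)} m(w, 2|Q-Q'|)$ since $|(Q'-q'\wedge z) - (Q-q\wedge z)| \leqslant 2|Q-Q'|$, and the second bracket is bounded in absolute value by $\gamma\delta(\eta+1) e^{\gamma\qa\delta(\eta+1)} |Q-Q'| \cdot |w|_{\infty}$ using the Lipschitz bound on $x \mapsto e^{-\gamma x(\delta/2-y)}$ on $[0,\qa]$.

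It then only remains to observe the sub-additivity $m(w,2D) \leqslant 2\, m(w,D)$ whenever $D$ is a multiple of $\Qmax/n$ (trivially when $n=\infty$ since $\mathcal{Q}_\infty$ is an interval): given $(Q_1,Q_2) \in \mathcal{Q}_n^2$ with $|Q_1-Q_2| \leqslant 2D$, the point $Q_3 \defeq Q_1 + \mathrm{sign}(Q_2-Q_1)\, (D \wedge |Q_1-Q_2|)$ lies between $Q_1$ and $Q_2$ and hence in $\mathcal{Q}_n$, and splits $w(Q_1)-w(Q_2) = (w(Q_1)-w(Q_3)) + (w(Q_3)-w(Q_2))$ into two differences each at scale $\leqslant D$. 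Since $|Q-Q'|$ is precisely such a multiple because $Q,Q' \in \mathcal{Q}_n$, this applies. Collecting the two directions, letting $\varepsilon \to 0$, multiplying by $\Lambda^* $ and summing the identical bid-side estimate yields the claimed $C(|w|_\infty |Q-Q'| + m(w,|Q-Q'|))$ with a constant $C$ depending only on $\gamma,\delta,\eta,\Qmax,\Lambda^*$. The only subtle step is the truncation $q = q' \wedge (Q+\Qmax)$ and the verification that it stays in $\mathcal{Q}^{+,a}_n$ in the discrete setting; everything else is routine algebra.
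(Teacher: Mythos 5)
Your proof is correct and follows essentially the same strategy as the paper's: substitute an $\varepsilon$-optimal control of one Hamiltonian, truncated at the other problem's inventory cap, into the other Hamiltonian, and split the integrand into a $w$-increment term and an exponential-Lipschitz term. The only (cosmetic) difference is that the paper exploits the cancellation $0 \leqslant (q\wedge z) - \left(\left(Q'+\Qmax\right)\wedge z\right) \leqslant Q-Q'$, so the two arguments of $w$ differ by at most $|Q-Q'|$ and $m(w,|Q-Q'|)$ appears directly, whereas you bound the inventory shift and the control shift separately, obtain $m(w,2|Q-Q'|)$, and then repair it with your (correct) grid subadditivity observation $m(w,2D)\leqslant 2m(w,D)$.
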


\begin{proof}
    We only show it for the \enquote{a} part, since the \enquote{b} part is similar.

    Define $A \defeq \Lambda^a(t,y)H^a(w, Q, y, \mu^a) - \Lambda^a(t,y)H^a(w, Q', y, \mu^a)$.
    Let $\varepsilon > 0$ and $q \in \Qa \cap [0,Q + \Qmax]$ be $\varepsilon$-optimal for $H^a(w,Q,y, \mu^a)$.
    
    If $q \leqslant Q' + \Qmax$, then
    \begin{align*}
        A &\leqslant \Lambda^a(t,y) \int_{\Qa}e^{-\gamma (q \wedge z)\left(\frac{\delta}{2} - y\right)}\left(w(Q - q \wedge z) - w(Q' - q \wedge z)\right)\mu^a(\diff z) + \Lambda^*\varepsilon\\
         &\leqslant \Lambda^* e^{\gamma \qa \delta \left(1+ \eta\right)}m(w, |Q-Q'|) + \Lambda^*\varepsilon.
    \end{align*}

    Suppose now that $q > Q' + \Qmax$. Then,
    \begin{equation*}
        \begin{split}
            A \leqslant &\Lambda^a(t,y)\int_{\Qa} \left(e^{-\gamma(q \wedge z)\left(\frac{\delta}{2}-y\right)} - e^{-\gamma\left(\left(Q' + \Qmax\right) \wedge z\right)\left(\frac{\delta}{2}-y\right)}\right)w(Q - q \wedge z) \mu^a(\diff z)\\
            &+ \Lambda^a(t,y)\int_{\Qa} e^{-\gamma\left(\left(Q' + \Qmax\right) \wedge z\right)\left(\frac{\delta}{2}-y\right)}\left(w(Q - q \wedge z) -w(Q' - \left(Q' +\Qmax\right) \wedge z)\right) \mu^a(\diff z) + \Lambda^*\varepsilon.
        \end{split}
    \end{equation*}
    Since for all $z$, $0 \leqslant \left(q \wedge z\right) - \left(\left(Q' + \Qmax\right) \wedge z\right) \leqslant Q - Q'$, we deduce
    \begin{equation*}
        A \leqslant \Lambda^* |w|_{\infty} e^{\gamma \qa \delta(\eta+1)}\gamma \qa \delta (\eta+1)|Q-Q'|
        +\Lambda^* e^{\gamma \qa \delta \left(1+ \eta\right)}m(w, |Q-Q'|) +\Lambda^*\varepsilon.
    \end{equation*}
    The same inequality holds for $-A$, for every $\varepsilon > 0$, therefore we get the desired result.
\end{proof}

\renewcommand{\Q}{\mathcal{Q}_{\infty}}
\begin{lemma}
    \label{lem:ineq_H_discrete_meas}
    Let $w$ and $w'$ be two bounded functions defined on $\Q$ admitting a modulus of continuity $\omega$ such that for all $\Delta \geqslant 0$, $\omega(\Delta) \geqslant \Delta$. Let $k \in \mathbb{N}^*$,  and $\mu_k^a$ and $\mu_k^b$ be the two probability measures defined by \eqref{eq:discrete_measures}. Fix $(t,Q,y)\in [0,T) \times \Q \times \Y$. Then, there exists a constant $C$ depending only on $\gamma$, $\delta$, $\eta$, $\Qmax$, $\Lambda^*$ but not on $w$, $w'$, $\omega$, $k$, $t$, $Q$, $y$ such that
    \begin{equation*}
        \left|H(w, t, Q, y, \mu_k^a, \mu_k^b) - H(w', t, Q, y, \mu_{k+1}^a, \mu_{k+1}^b)\right|
        \leqslant C\left(|w-w'|_{\infty} + (|w|_{\infty}+1)\omega\left(\frac{\max\{\qa, \qb\}}{2^{a_k}}\right)\right).
    \end{equation*}
\end{lemma}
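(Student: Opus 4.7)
The plan is to split the estimate via the triangle inequality into (a) the change from $w'$ to $w$ at fixed measures $\mu^a_{k+1},\mu^b_{k+1}$, and (b) the change of measures from $(\mu^a_{k+1},\mu^b_{k+1})$ to $(\mu^a_k,\mu^b_k)$ at fixed $w$. Part (a) is immediate from Lemma \ref{lem:H_ineq_w}, which yields a bound of the form $C_1|w-w'|_\infty$. The substance of the proof is in part (b).

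For part (b), I will use the same $\varepsilon$-optimality technique as in Lemmas \ref{lem:H_ineq_ty}--\ref{lem:H_ineq_Q}: fix $\varepsilon>0$ and pick a near-maximizer $q^\ast\in[0,\qa]\cap[0,Q+\Qmax]$ for one of the two Hamiltonians. The crucial point is that the feasible set $[0,\qa]\cap[0,Q+\Qmax]$ is the same for $\mu^a_k$ and $\mu^a_{k+1}$ (since $\mathcal{Q}^{+,a}_\infty=[0,\qa]$ is $k$-independent), so the comparison of the Hamiltonians reduces, up to an additive $\Lambda^* \varepsilon$, to comparing two integrals of the same integrand
\begin{equation*}
    F_{q^\ast}(z)\;\defeq\;e^{-\gamma(q^\ast\wedge z)(\delta/2-y)}\,w(Q-q^\ast\wedge z)
\end{equation*}
against $\mu^a_k$ and $\mu^a_{k+1}$, and analogously on the bid side.

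Next, each $\mu^a_k$ is exactly the pushforward of $\mu^a$ by the left-endpoint grid projection $g_k:[0,\qa]\to[0,\qa]$ satisfying $|g_k(z)-z|\leq \qa/2^{a_k}$ (and likewise for $g_{k+1}$). Hence
\begin{equation*}
    \int F_{q^\ast}\,\diff\mu^a_k-\int F_{q^\ast}\,\diff\mu^a_{k+1}\;=\;\int\bigl(F_{q^\ast}\circ g_k-F_{q^\ast}\circ g_{k+1}\bigr)\,\diff\mu^a,
\end{equation*}
reducing the problem to controlling the oscillation of $F_{q^\ast}$ on the scale $\qa/2^{a_k}$. Since $z\mapsto q^\ast\wedge z$ is $1$-Lipschitz, the exponential factor is Lipschitz with constant depending only on $\gamma,\delta,\eta,\Qmax$ (using $|y|<\delta(\eta+\tfrac12)$ together with $q^\ast\leq \qa\leq 2\Qmax$), and $w$ admits modulus $\omega$, a direct chain-rule-style estimate yields
\begin{equation*}
    \bigl|F_{q^\ast}(z)-F_{q^\ast}(z')\bigr|\;\leq\;C_0\bigl(|w|_\infty|z-z'|+\omega(|z-z'|)\bigr),
\end{equation*}
with $C_0$ depending only on $\gamma,\delta,\eta,\Qmax$.

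The hypothesis $\omega(\Delta)\geq \Delta$ then collapses the two-term bound into the single modulus term $C_0(|w|_\infty+1)\,\omega(|z-z'|)$. Writing $F_{q^\ast}\circ g_k-F_{q^\ast}\circ g_{k+1}=(F_{q^\ast}\circ g_k-F_{q^\ast})-(F_{q^\ast}\circ g_{k+1}-F_{q^\ast})$, using the inequality $a_{k+1}\geq a_k$ and monotonicity of $\omega$, gives the bound $2C_0(|w|_\infty+1)\,\omega(\qa/2^{a_k})$ on the ask side, and analogously $2C_0(|w|_\infty+1)\,\omega(\qb/2^{a_k})$ on the bid side; multiplying by $\Lambda^*$, sending $\varepsilon\to 0$, and bounding $\qa,\qb\leq \max\{\qa,\qb\}$ recombines everything into the stated inequality. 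The only non-routine step is the careful invocation of $\omega(\Delta)\geq\Delta$ at the correct place to absorb the Lipschitz term into the $\omega$-term in a single clean estimate; everything else is the usual $\varepsilon$-optimality bookkeeping already encountered in the preceding lemmas.
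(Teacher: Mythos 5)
Your proposal is correct and follows essentially the same route as the paper: triangle inequality plus Lemma \ref{lem:H_ineq_w} for the $w \to w'$ part, then $\varepsilon$-optimality (with the $k$-independent feasible set) and an oscillation estimate of the integrand at scale $\qa/2^{a_k}$, combining the Lipschitz bound on the exponential with the modulus $\omega$ and absorbing the Lipschitz term via $\omega(\Delta)\geqslant\Delta$. The only difference is bookkeeping in the measure-change step: the paper compares $\mu^a_k$ and $\mu^a_{k+1}$ directly through the nested dyadic partition (the explicit $B_{i,j}$ sums), whereas you compare each to $\mu^a$ via the pushforward projections $g_k$, $g_{k+1}$, which costs a harmless factor $2$ but does not require the grids to be nested.
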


\begin{proof}
    By Lemma \ref{lem:H_ineq_w}, there exists a constant $C'$ depending only on $\gamma$, $\delta$, $\eta$, $\Qmax$, $\Lambda^*$ such that
    \begin{equation*}
        \begin{split}
            \big|H(w, t, Q, y, \mu_k^a, \mu_k^b) &- H(w', t, Q, y, \mu_{k+1}^a, \mu_{k+1}^b)\big| \\
        &\leqslant C'|w-w'|_{\infty} + \left|H(w, t, Q, y, \mu_k^a, \mu_k^b) - H(w, t, Q, y, \mu_{k+1}^a. \mu_{k+1}^b)\right|
        \end{split}
    \end{equation*}
    Consequently, we only need to find the upper bound for the second term. As before, we only show it for the \enquote{a} part, since the \enquote{b} part is similar. Let $\varepsilon > 0$ and $q \in \Qa \cap [0,Q + \Qmax]$ be $\varepsilon$-optimal for $H^a(w,Q,y, \mu_{k+1}^a)$. Define $A \defeq \Lambda^a(t,y)H^a(w, Q, y, \mu_{k+1}^a) - \Lambda^a(t,y)H^a(w, Q, y, \mu_{k}^a)$. We have
    \begin{equation*}
        A \leqslant \Lambda^a(t,y)\left( \int_{\mathcal{Q}_{\infty}^{+,a}} e^{-\gamma(q \wedge z)\left(\frac{\delta}{2}-y\right)}w(Q - q \wedge z) \left(\mu_{k+1}^a(\diff z)-\mu_{k}^a(\diff z)\right)
        \right)+ \Lambda^*\varepsilon.
    \end{equation*}
    Using the expressions \eqref{eq:discrete_measures} of $\mu^a_k$ and $\mu^a_{k+1}$, it is immediate that
    \begin{equation*}
        A \leqslant \Lambda^a(t,y)\sum_{i=0}^{2^{a_k}-1} \sum_{j = 1}^{2^{a_{k+1}-a_k}-1}
        B_{i,j}\mu^a\left(\left[\qa\frac{i}{2^{a_k}} + \qa\frac{j}{2^{{a_{k+1}}}}, \qa\frac{i}{2^{a_k}} + \qa\frac{j+1}{2^{{a_{k+1}}}}\right)\right) + \Lambda^*\varepsilon
    \end{equation*}
    where, for all $i$ and $j$, defining $q_{i,j} \defeq \qa\frac{i}{2^{a_k}} + \qa\frac{j}{2^{{a_{k+1}}}}$,
    \begin{align*}
        B_{i,j} &\defeq e^{-\gamma\left(q \wedge q_{i,j}\right)\left(\frac{\delta}{2}-y\right)}w\left(Q - q \wedge q_{i,j}\right)
        - e^{-\gamma\left(q \wedge q_{i,0}\right)\left(\frac{\delta}{2}-y\right)}w\left(Q - q \wedge q_{i,0}\right) \\
        &\leqslant e^{\gamma \qa \delta(\eta +1)}\omega\left(q_{i,j}-q_{i,0}\right) + |w|_{\infty}\gamma\delta(\eta+1)e^{\gamma \qa \delta(\eta+1)}\left(q_{i,j}-q_{i,0}\right)\\
        &\leqslant e^{\gamma \qa\delta(\eta+1)}(1+|w|_{\infty}\gamma\delta(\eta+1))\omega\left(\frac{\qa}{2^{a_k}}\right)
        .
    \end{align*}
    Since $\sum_{i=0}^{2^{a_k}-1} \sum_{j = 1}^{2^{a_{k+1}-a_k}-1}
        \mu^a\left(\left[q_{i,j},q_{i,j+1}\right)\right) \leqslant 1$, $A \leqslant \Lambda^* e^{\gamma \qa\delta(\eta+1)} (1+|w|_{\infty}\gamma\delta(\eta+1))\omega\left(\frac{\qa}{2^{a_k}}\right) + \Lambda^*\varepsilon$. Similarly, $-A \leqslant \Lambda^* e^{\gamma \qa\delta(\eta+1)} (1+|w|_{\infty}\gamma\delta(\eta+1))\omega\left(\frac{\qa}{2^{a_k}}\right) + \Lambda^*\varepsilon$. This being valid for all $\varepsilon > 0$, we conclude the desired result.
\end{proof}

\renewcommand{\Q}{\mathcal{Q}_{n}}
\begin{lemma}
    \label{lem:ineq_H_different_n}
    For $n \in \mathbb{N}^*$ define $\mu^a_n$ and $\mu^b_n$ by \eqref{eq:discrete_measures_2}. Let $w':\Q \to \mathbb{R}$ and $w:\mathcal{Q}_{\infty} \to \mathbb{R}$ be two bounded functions. Suppose $w$ admits a modulus of continuity $\omega$ such that for all $\Delta \geqslant 0$, $\omega(\Delta) \geqslant \Delta$. Then, there exists a constant $C$ depending only on $\gamma$, $\delta$, $\eta$, $\Qmax$, $\Lambda^*$ but not on $n$, $w$, $w'$, $\omega$, $k$, $t$, $Q$, $y$, nor the execution measures, such that
    \begin{equation*}
        \left|H_n(w', t, Q, y, \mu_n^a, \mu_n^b) - H_{\infty}(w, t, Q, y, \mu^a, \mu^b)\right|
        \leqslant C \left(\sup_{R \in \Q}|w'(R) - w(R)| + (1+|w|_{\infty})\omega\left(\frac{\Qmax}{n}\right)\right).
    \end{equation*}
\end{lemma}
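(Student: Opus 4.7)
The plan is to adapt the strategy of Lemma~\ref{lem:ineq_H_discrete_meas}, now handling simultaneously the discretization of the inventory grid (from $\mathcal{Q}_\infty$ to $\mathcal{Q}_n$) and of the execution measures (from $\mu^{a,b}$ to $\mu_n^{a,b}$). I would split by the triangle inequality through the intermediate quantity $H_n(w|_{\mathcal{Q}_n},t,Q,y,\mu_n^a,\mu_n^b)$. The first resulting term compares $w'$ with $w|_{\mathcal{Q}_n}$ under the same (discrete) measures and inventory grid and is dispatched immediately by Lemma~\ref{lem:H_ineq_w}, giving a bound $C\sup_{R\in\mathcal{Q}_n}|w'(R)-w(R)|$. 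The rest of the proof is devoted to bounding the second term, which compares the discrete Hamiltonian (with $w|_{\mathcal{Q}_n}$ and $\mu_n^{a,b}$) to the continuous Hamiltonian (with $w$ and $\mu^{a,b}$), by $C(1+|w|_\infty)\omega(\Qmax/n)$. Ask and bid are treated symmetrically.

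Focusing on the ask side, set $A \defeq \Lambda^a(t,y)\bigl[H_n^a(w|_{\mathcal{Q}_n},Q,y,\mu_n^a) - H_\infty^a(w,Q,y,\mu^a)\bigr]$. To bound $A$ from above, I would pick an $\varepsilon$-optimal $q\in\mathcal{Q}_\infty^{+,a}\cap[0,Q+\Qmax]$ for $H_\infty^a$ and round it down to $q_n \defeq \qa_n \wedge \tfrac{\Qmax}{n}\lfloor nq/\Qmax\rfloor$. Because $Q \in \mathcal{Q}_n$ forces $Q+\Qmax$ to be a multiple of $\Qmax/n$, and because $\qa - \qa_n \leqslant \Qmax/n$ by construction, one verifies that $q_n \in \mathcal{Q}_n^{+,a}\cap[0,Q+\Qmax]$ and $|q-q_n|\leqslant \Qmax/n$. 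Using $q_n$ as a test value in $H_n^a$ and inserting the intermediate quantity $\int e^{-\gamma(q_n\wedge z)(\delta/2-y)}w(Q-q_n\wedge z)\,\mu^a(\diff z)$ gives $A \leqslant \Lambda^a(t,y)(D_1 + D_2) + \Lambda^*\varepsilon$, where $D_1$ captures the change from $q$ to $q_n$ under the unchanged measure $\mu^a$, and $D_2$ captures the change from $\mu^a$ to $\mu_n^a$ under the fixed test function $z\mapsto e^{-\gamma(q_n\wedge z)(\delta/2-y)}w(Q-q_n\wedge z)$. For $D_1$, the inequality $|q\wedge z - q_n\wedge z|\leqslant\Qmax/n$ together with Lipschitz continuity of the exponential and the modulus $\omega$ of $w$ yields $|D_1|\leqslant C(1+|w|_\infty)\omega(\Qmax/n)$. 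For $D_2$, I would use that $\mu_n^a$ is obtained from $\mu^a$ by collapsing each slab $[\Qmax j/n,\Qmax(j+1)/n)$ onto the point $\Qmax j/n$ (and $[\qa_n,\qa]$ onto $\qa_n$); exactly as in the proof of Lemma~\ref{lem:ineq_H_discrete_meas}, this bounds $|D_2|$ by the oscillation of the test function on intervals of length at most $\Qmax/n$, itself at most $C(1+|w|_\infty)\omega(\Qmax/n)$.

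To bound $-A$ from above, I would pick an $\varepsilon$-optimal $q_n\in\mathcal{Q}_n^{+,a}\cap[0,Q+\Qmax]$ for $H_n^a$; since this $q_n$ automatically lies in $\mathcal{Q}_\infty^{+,a}\cap[0,Q+\Qmax]$, it is directly admissible for $H_\infty^a$, so only the measure-change step $D_2$ is needed. Sending $\varepsilon\to 0$ on both sides and combining with the bound on the first term yields the claim. The main bookkeeping point is to verify that the rounded control $q_n$ actually lies in the discrete admissible set $\mathcal{Q}_n^{+,a}\cap[0,Q+\Qmax]$; this hinges on the identity $Q+\Qmax\in(\Qmax/n)\mathbb{N}^*$ and on the inequality $\qa-\qa_n\leqslant\Qmax/n$ built into the definition of $\qa_n$. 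The hypothesis $\omega(\Delta)\geqslant\Delta$ is then what guarantees that the purely linear-in-$\Qmax/n$ contributions coming from the exponential's Lipschitz bound are absorbed cleanly into $\omega(\Qmax/n)$.
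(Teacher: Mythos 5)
Your proof is correct and follows essentially the same route as the paper's: a triangle inequality through an intermediate Hamiltonian (you use $H_n(w|_{\mathcal{Q}_n},\cdot,\mu_n^{a},\mu_n^{b})$ where the paper uses $H_\infty(w,\cdot,\mu_n^{a},\mu_n^{b})$), combined with $\varepsilon$-optimal controls, rounding of the continuous-optimal control to the grid with error at most $\Qmax/n$ absorbed via $\omega$, and the slab-collapse measure comparison exactly as in Lemma~\ref{lem:ineq_H_discrete_meas}. The only blemish is a labeling slip: the argument you describe for bounding $A$ from above (taking the $\varepsilon$-optimal control of $H_\infty^a$ and rounding it down) in fact bounds $-A$, while the argument you assign to $-A$ bounds $A$; since both directions are covered, the proof stands as written.
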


\begin{proof}
    $\left|H_n(w', t, Q, y, \mu_n^a, \mu_n^b) - H_{\infty}(w, t, Q, y, \mu^a, \mu^b)\right| \leqslant A + B$ where
    \begin{align*}
        A &\defeq \left|H_n(w', t, Q, y, \mu_n^a, \mu_n^b) - H_{\infty}(w, t, Q, y, \mu_n^a, \mu_n^b)\right|\\
        B &\defeq \left|H_{\infty}(w, t, Q, y, \mu_n^a, \mu_n^b) - H_{\infty}(w, t, Q, y, \mu^a, \mu^b)\right|.
    \end{align*}
    As usual, we only show the inequality for the \enquote{a} part. We start with $A$. 
    
    Define $A' \defeq H_n^a(w', Q, y, \mu_n^a) - H_{\infty}^a(w, Q, y, \mu_{n}^a)$. Let $q \in \Q \cap [0, Q + \Qmax] \subset \mathcal{Q}_{\infty} \cap [0, Q + \Qmax]$ be optimal for $H_n^a(w', Q, y, \mu_n^a)$. Then,
    \begin{align}
        A' &\leqslant \int_{\Qa} e^{-\gamma(q \wedge z)\left(\frac{\delta}{2}-y\right)}w'(Q - q \wedge z) \mu_{n}^a(\diff z) - \int_{\mathcal{Q}^{+,a}_{\infty}} e^{-\gamma(q \wedge z)\left(\frac{\delta}{2}-y\right)}w(Q - q \wedge z) \mu_{n}^a(\diff z)\notag\\
         &\leqslant e^{\gamma \qa \delta (\eta+1)}\sup_{R \in \Q}|w'(R) - w(R)|.\label{eq:appD-1}
    \end{align}
    Let $q \in \mathcal{Q}_{\infty} \cap [0, Q + \Qmax]$ be optimal for $H_{\infty}^a(w, Q, y, \mu_{n}^a)$ (it exists since, by dominated convergence, $q \mapsto \int_{\Qa} e^{-\gamma(q \wedge z)\left(\frac{\delta}{2}-y\right)}w(Q - q \wedge z) \mu_n^a(\diff z)$ is continuous, and defined on a compact set). Let $q' \defeq \sup\{z \in \Qa: z \leqslant q\} \in \Qa$, then $|q'-q| \leqslant \frac{\Qmax}{n}$. Hence,
    \begin{align}
        -A' &\leqslant \int_{\mathcal{Q}^{+,a}_{\infty}} e^{-\gamma(q \wedge z)\left(\frac{\delta}{2}-y\right)}w(Q - q \wedge z) \mu_{n}^a(\diff z) - \int_{\Qa} e^{-\gamma(q' \wedge z)\left(\frac{\delta}{2}-y\right)}w'(Q - q' \wedge z) \mu_{n}^a(\diff z)\notag\\
        &\leqslant e^{\gamma \qa \delta (\eta+1)}\sup_{R \in \Q}|w'(R) - w(R)| + \left|e^{-\gamma q \left(\frac{\delta}{2}-y\right)}w(Q-q) - e^{-\gamma q' \left(\frac{\delta}{2}-y\right)}w'(Q-q')\right|\notag\\
        & \leqslant e^{\gamma \qa \delta (\eta+1)}\left(2\sup_{R \in \Q}|w'(R) - w(R)| + \omega(q-q') + \gamma \qa \delta(\eta+1)|q-q'||w|_{\infty}\right).\label{eq:appD0}
    \end{align}
    Thus, combining \eqref{eq:appD-1} and \eqref{eq:appD0}, there exists a constant $C'$ depending only on $\gamma$, $\delta$, $\eta$, $\Qmax$ such that
    \begin{equation*}
        A=|A'| \leqslant C' \left(\sup_{R \in \Q}|w'(R) - w(R)| + (1+|w|_{\infty})\omega\left(\frac{\Qmax}{n}\right)\right).
    \end{equation*}

    Define now $B' \defeq H_{\infty}^a(w, Q, y, \mu_n^a) - H_{\infty}^a(w, Q, y, \mu^a)$.  Let $q \in \mathcal{Q}_{\infty} \cap [0, Q + \Qmax]$ be optimal for $H_{\infty}^a(w, Q, y, \mu_{n}^a)$. Then,
    \begin{align*}
        B' & \leqslant  \int_{\mathcal{Q}^{+,a}_{\infty}} e^{-\gamma(q \wedge z)\left(\frac{\delta}{2}-y\right)}w(Q - q \wedge z) \mu_{n}^a(\diff z) - \int_{\mathcal{Q}^{+,a}_{\infty}} e^{-\gamma(q \wedge z)\left(\frac{\delta}{2}-y\right)}w(Q - q \wedge z) \mu^a(\diff z) \\
         & \leqslant  \sum_{j = 0}^{N^a_n - 1} \int_{\left[\Qmax\frac{j}{n}, \Qmax\frac{j+1}{n}\right)} B_j(z) \mu^a(\diff z) + \int_{\left[\qa_n, \qa\right)}B_{N^a_n}(z) \mu^a(\diff z),
    \end{align*}
    where
    \begin{align*}
        &B_j(z) \defeq e^{-\gamma\left(q \wedge \frac{\Qmax j}{n}\right)\left(\frac{\delta}{2}-y\right)}w\left(Q - q \wedge \frac{\Qmax j}{n}\right) -  e^{-\gamma(q \wedge z)\left(\frac{\delta}{2}-y\right)}w(Q - q \wedge z),\, j \leqslant N^a_n-1,\, z \in \left[\Qmax\frac{j}{n}, \Qmax\frac{j+1}{n}\right), \\
        &B_{N^a_n}(z) \defeq e^{-\gamma\left(q \wedge \qa_n\right)\left(\frac{\delta}{2}-y\right)}w\left(Q - q \wedge \qa_n\right) -  e^{-\gamma(q \wedge z)\left(\frac{\delta}{2}-y\right)}w(Q - q \wedge z),\, z \in \left[\qa_n, \qa\right).
    \end{align*}
    We have the upper bound
    \begin{equation*}
        |B_j|_{\infty} \leqslant e^{\gamma \qa \delta (\eta +1)}\left(\gamma \qa \delta (\eta + 1)|w|_{\infty}\frac{\Qmax}{n} + \omega\left(\frac{\Qmax}{n}\right)\right),\quad j \leqslant N_n^a.
    \end{equation*}
    Thus, $B' \leqslant e^{\gamma \qa \delta (\eta +1)}\left(\gamma \qa \delta (\eta + 1)\frac{\Qmax}{n} + \omega\left(\frac{\Qmax}{n}\right)\right)$. By a similar argument, we have the same bound for $-B'$. Consequently, there exists a constant $C''$, depending only on $\gamma$, $\delta$, $\eta$, $\Qmax$, such that
    \begin{equation*}
        B=|B'| \leqslant C'' (1+|w|_{\infty})\omega\left(\frac{\Qmax}{n}\right).
    \end{equation*}
\end{proof}

\subsection{Convergence}

\begin{lemma}
    \label{lem:H_convergence_measures}
    Let $(w,t,Q,y,\mu^a, \mu^b) \in \mathcal{O}$. Suppose that $w$ is continuous in the case $n = \infty$. Let $(\mu^a_k)_{k \in \mathbb{N}^*}$ and $(\mu^b_k)_{k \in \mathbb{N}^*}$ be two sequences of measures on $\Qa$ and $\Qb$ converging in distribution to $\mu^a$ and $\mu^b$, respectively. Then,
    \begin{equation*}
        \lim_{k \to \infty} H(w,t,Q,y,\mu^a_k,\mu^b_k) = H(w,t,Q,y,\mu^a,\mu^b).
    \end{equation*}
\end{lemma}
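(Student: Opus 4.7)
The idea is to reduce the convergence $H(w,t,Q,y,\mu^a_k,\mu^b_k) \to H(w,t,Q,y,\mu^a,\mu^b)$ to a uniform (in the control $q$) convergence of the integrals appearing in the definitions of $H^a$ and $H^b$. Since the ``a'' and ``b'' parts decouple and $\Lambda^a(t,y), \Lambda^b(t,y)$ are fixed positive numbers, it suffices to treat the ask Hamiltonian; the bid side is identical. For a probability measure $\mu$ on $\Qa$, I would define
\[
F_\mu(q) \defeq \int_{\Qa} e^{-\gamma(q \wedge z)\left(\frac{\delta}{2} - y\right)} w(Q - q \wedge z)\,\mu(\diff z), \quad q \in K \defeq \Qa \cap [0, Q + \Qmax],
\]
so that $H^a(w,Q,y,\mu) = \sup_{q \in K} F_\mu(q)$. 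The goal becomes $\sup_K F_{\mu^a_k} \to \sup_K F_{\mu^a}$, which I would deduce from uniform convergence $F_{\mu^a_k} \to F_{\mu^a}$ on $K$ via the elementary bound $|\sup_K F_{\mu^a_k} - \sup_K F_{\mu^a}| \leqslant \sup_K |F_{\mu^a_k} - F_{\mu^a}|$.

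For the pointwise convergence, I would fix $q \in K$ and note that the integrand $g_q(z) \defeq e^{-\gamma(q\wedge z)\left(\frac{\delta}{2}-y\right)} w(Q - q\wedge z)$ is bounded, and continuous on $\Qa$: when $n < \infty$ this is automatic because $\Qa$ is discrete, and in the case $n = \infty$ it follows from the continuity of $z \mapsto q \wedge z$ combined with the standing hypothesis that $w$ is continuous on $\mathcal{Q}_{\infty}$. Weak convergence $\mu^a_k \to \mu^a$ then immediately yields $F_{\mu^a_k}(q) \to F_{\mu^a}(q)$.

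To upgrade to uniform convergence on $K$, I would argue that the family $(F_{\mu^a_k})_k$ is equicontinuous. If $n < \infty$, $K$ is a finite set and uniformity is automatic. For $n = \infty$, $K$ is a compact interval, and for any $q, q' \in K$ I would establish, uniformly in $z$,
\[
|g_q(z) - g_{q'}(z)| \leqslant C_1 |w|_{\infty} |q - q'| + C_2 \,\omega_w(|q-q'|),
\]
where $\omega_w$ is a modulus of continuity of $w$ on the compact interval $[-\Qmax, \Qmax]$ and $C_1, C_2$ depend only on $\gamma$, $\delta$, $\eta$, $\qa$, $y$. This relies on the facts that $q \mapsto q \wedge z$ is $1$-Lipschitz and that the exponential is Lipschitz on bounded sets. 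Integrating against the probability measure $\mu^a_k$ produces the same bound for $|F_{\mu^a_k}(q) - F_{\mu^a_k}(q')|$ independently of $k$, which gives equicontinuity; combined with pointwise convergence on the compact $K$, this yields uniform convergence by a standard Arzel\`a--Ascoli type argument.

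The only nontrivial step is the equicontinuity estimate in the continuous-inventory case $n = \infty$, which is precisely where the assumed continuity of $w$ enters. In the discrete case $n < \infty$ the whole argument is essentially finite-dimensional and reduces to the observation that weak convergence on a finite set means pointwise convergence of the masses of atoms.
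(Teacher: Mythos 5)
Your proposal is correct, but it follows a different route from the paper. The paper argues via a liminf/limsup splitting on the value of the supremum: for the lower bound it fixes a maximizer $q$ of $H^a(w,Q,y,\mu^a)$ (which exists by continuity in $q$ and compactness) and uses weak convergence of $\mu^a_k$ against the continuous-in-$z$ integrand; for the upper bound it takes maximizers $q_k$ for each $\mu^a_k$, extracts a subsequence realizing the $\limsup$ along which $q_{k_i}\to q^*$, and then uses the uniform continuity of $q \mapsto e^{-\gamma q\left(\frac{\delta}{2}-y\right)}w(Q-q)$ to compare with the integral at $q^*$ before passing to the limit in the measure. You instead prove that $F_{\mu^a_k} \to F_{\mu^a}$ uniformly on the compact control set $K$, by combining pointwise convergence (weak convergence against a bounded continuous integrand) with an equicontinuity estimate that is uniform in $k$ because the $\mu^a_k$ are probability measures, and then pass the supremum through via $|\sup_K F_{\mu^a_k}-\sup_K F_{\mu^a}|\leqslant \sup_K|F_{\mu^a_k}-F_{\mu^a}|$. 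The key ingredients are the same in both proofs (weak convergence plus a modulus of continuity in $q$, which is exactly where the continuity of $w$ enters for $n=\infty$), but your version avoids the existence and subsequence extraction of maximizers altogether and delivers a slightly stronger conclusion (uniform convergence of the objective functions, not just of their suprema), while the paper's one-sided estimates require a bit less machinery (no finite $\varepsilon$-net/uniform-convergence step). Your remark on the finite-inventory case is also fine: there $K$ and $\Qa$ are finite, and convergence in distribution of measures supported on a fixed finite set reduces to convergence of the atom masses, so the statement is elementary.
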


\begin{proof}
    We show that $\lim\limits_{k \to \infty}H^a(w,Q,y,\mu^a_k) = H^a(w,Q,y,\mu^a)$, the \enquote{b} part being analogous.
    
    Let $q \in \Qa\cap[0,\Qmax + Q]$ be optimal for $H^a(w,Q,y,\mu^a)$ (it exists since, by dominated convergence, $q \mapsto \int_{\Qa} e^{-\gamma(q \wedge z)\left(\frac{\delta}{2}-y\right)}w(Q - q \wedge z) \mu^a(\diff z)$ is continuous, and defined on a compact set). For all $k \in \mathbb{N}^*$, by definition,
    \begin{equation*}
        \int_{\Qa} e^{-\gamma(q \wedge z)\left(\frac{\delta}{2}-y\right)}w(Q - q \wedge z) \mu_k^a(\diff z) \leqslant H^a(w,Q,y,\mu_k^a).
    \end{equation*}
    Since $\mu_k^a \to \mu^a$ in distribution and the integrand is continuous (with respect to $z$), the left-hand side converges to $H^a(w,Q,y,\mu^a)$. Thus,
    \begin{equation}\label{eq:appD1}
        H^a(w,Q,y,\mu^a) \leqslant \liminf_{k \to \infty} H^a(w,Q,y,\mu_k^a).
    \end{equation}

    For $k \in \mathbb{N}^*$, let $q_k \in \Qa \cap [0,\Qmax + Q]$ be optimal for $H^a(w,Q,y,\mu_k^a)$. Let $(k'_i)_{i \in \mathbb{N}^*}$ be a strictly increasing sequence of positive integers such that
    \begin{equation}\label{eq:appD2}
        \lim_{i \to \infty} H^a(w,Q,y,\mu^a_{k'_i}) = \limsup_{k \to \infty} H^a(w,Q,y,\mu_k^a).
    \end{equation}
    Let $(k_i)_{i \in \mathbb{N}^*}$ be a a subsequence of $(k'_i)_{i \in \mathbb{N}^*}$ and $q^*\in \Qa \cap [0,\Qmax + Q]$ such that $\lim\limits_{i \to \infty} q_{k_i} = q^*$.

    Let $\varepsilon > 0$. Let $\delta > 0$ such that for $q,q' \in \Qa \cap [0,Q + \Qmax]$,
    \begin{equation*}
        |q'-q| \leqslant \delta \implies \left| e^{-\gamma q \left(\frac{\delta}{2} - y\right)}w(Q-q)  - e^{-\gamma q' \left(\frac{\delta}{2} - y\right)}w(Q-q')\right| \leqslant \varepsilon.
    \end{equation*}
    Let $i_0 \in \mathbb{N}^*$ such that $i \geqslant i_0$ implies $|q_{k_i} - q^*| \leqslant \delta$. Then, for $i \geqslant i_0$,
    \begin{equation}\label{eq:appD3}
        H^a(w,Q,y,\mu_{k_i}^a) \leqslant \varepsilon + \int_{\Qa} e^{-\gamma(q^* \wedge z)\left(\frac{\delta}{2}-y\right)}w(Q - q^* \wedge z) \mu_{k_i}^a(\diff z).
    \end{equation}
    The right-hand side converges to $\varepsilon + \int_{\Qa} e^{-\gamma(q^* \wedge z)\left(\frac{\delta}{2}-y\right)}w(Q - q^* \wedge z) \mu^a(\diff z)$ which is smaller (or equal) than $\varepsilon + H^a(w,Q,y,\mu^a)$. Hence, thanks to \eqref{eq:appD2} and \eqref{eq:appD3},
    \begin{equation*}
         \limsup_{k \to \infty} H^a(w,Q,y,\mu_k^a) = \lim_{i \to \infty} H^a(w,Q,y,\mu^a_{k'_i}) =
        \lim_{i \to \infty} H^a(w,Q,y,\mu_{k_i}^a) \leqslant \varepsilon + H^a(w,Q,y,\mu^a).
    \end{equation*}
    This being valid for all $\varepsilon > 0$, it also holds for $\varepsilon=0$. Combined with \eqref{eq:appD1}, this yields the desired result.
\end{proof}
\end{dummyenv}

\section{Some measurability proofs for Theorem \ref{thm:existence}}
\begin{dummyenv}
\renewcommand{\Q}{\mathcal{Q}_n}
\renewcommand{\Y}{\mathcal{Y}}
\begin{lemma}
    \label{lem:meas_gk}
    Let $k \in \mathbb{N}$. Then $g_k$ defined by \eqref{eq:g_k} is measurable.
\end{lemma}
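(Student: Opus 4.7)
The plan is to exploit the fact that each $u_k^R$ lies in $\mathcal{D} = C([0,T] \times \bar{\Y}) \cap C^{1,2}([0,T) \times \Y)$, so the integrand in the supremum is jointly continuous in $(t,y)$ for each fixed $R \in \Q$. In the case $n < \infty$, $\Q$ is finite and $g_k$ is simply the pointwise maximum of finitely many continuous functions of $t$, hence continuous; the only genuine content is the case $n = \infty$ with finitely supported $\mu^a, \mu^b$, where $\Q = [-\Qmax, \Qmax]$ is uncountable and joint continuity in $R$ has not yet been established.

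For the general case, I would proceed in two stages. First, for each fixed $R \in \Q$, set
\begin{equation*}
    \phi_R(t) \defeq \sup_{y \in \bar{\Y}} \left| u_{k+1}^R(t,y) - u_k^R(t,y) \right|.
\end{equation*}
Since $u_{k+1}^R - u_k^R$ is continuous on the compact set $[0,T] \times \bar{\Y}$, it is uniformly continuous there, which immediately yields continuity of $\phi_R$ on $[0,T]$ (given $\varepsilon > 0$, pick $\eta > 0$ so that $|(u_{k+1}^R - u_k^R)(t,y) - (u_{k+1}^R - u_k^R)(t',y)| \leqslant \varepsilon$ for all $y$ whenever $|t-t'| \leqslant \eta$; then $|\phi_R(t) - \phi_R(t')| \leqslant \varepsilon$).

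Second, rewrite $g_k(t) = e^{ct}\sup_{R \in \Q}\phi_R(t)$, a supremum of a (possibly uncountable) family of continuous functions of $t$. The key observation is that an arbitrary supremum of continuous functions is lower semi-continuous: for any $a \in \mathbb{R}$,
\begin{equation*}
    \{t \in [0,T] : g_k(t) > a\} = \bigcup_{R \in \Q}\left\{t \in [0,T] : e^{ct}\phi_R(t) > a\right\}
\end{equation*}
is a union of open sets, hence open. Lower semi-continuous functions on $[0,T]$ are Borel measurable, so $g_k$ is measurable. Finiteness of $g_k$ follows from the uniform bound $\sup_{k \in \mathbb{N}}\sup_{R \in \Q}|u_k^R|_{\infty} < \infty$ derived in Section \ref{subsubsec:uniform_majoration}.

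There is essentially no obstacle: the argument rests on two elementary observations (uniform continuity on compact sets and lower semi-continuity of suprema of continuous functions) and does not require any joint regularity in the $Q$ variable, which is exactly what makes it applicable at this point of the construction, before continuity in $Q$ has been proved.
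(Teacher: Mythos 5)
Your argument is correct, and it takes a genuinely different route from the paper. The paper swaps the two suprema the other way: it first restricts $y$ to the countable set $\mathcal{Y}\cap\mathbb{Q}$ (legitimate since each $u^R_{k+1}-u^R_k$ is continuous in $y$), and then shows that for each fixed $y$ the map $t\mapsto \sup_{R}|u^R_{k+1}(t,y)-u^R_k(t,y)|$ is H\"older continuous on compact subsets of $[0,T)$, using the uniform-in-$R$ bound on $|u^R_i|_{\beta}$ coming from the inductive construction; measurability of $g_k$ then follows as a countable supremum of measurable functions. You instead take the supremum over $y$ first, observe that $\phi_R$ is continuous by uniform continuity of $u^R_{k+1}-u^R_k$ on the compact $[0,T]\times\bar{\mathcal{Y}}$, and conclude via lower semi-continuity of an arbitrary supremum of continuous functions. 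Your version needs strictly less regularity (only membership in $C([0,T]\times\bar{\mathcal{Y}})$ for each fixed $R$, and no uniform control in the inventory variable), and it also covers the endpoint $t=T$ directly, whereas the paper's argument trades this for staying entirely within countable suprema. One small caveat: your appeal to Section \ref{subsubsec:uniform_majoration} for finiteness of $g_k$ is circular at this stage, since the uniform bound there is derived from \eqref{eq:cauchy_sum}, which itself uses Lemma \ref{lem:meas_gk}; but this is harmless, because finiteness is not part of the statement (a lower semi-continuous $[0,\infty]$-valued function is Borel measurable in any case), and for each fixed $k$ it already follows from the bound $\sup_{Q\in\mathcal{Q}_n}|u^Q_k|_{2+\beta}<\infty$ built into the iteration, see \eqref{eq:maj_holder_norm_uk}.
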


\begin{proof}
    Let $t \in [0,T]$. Then -- since for fixed $Q$, $u^Q_k$ and $u^Q_{k+1}$ are continuous --
    \begin{equation*}
        g_k(t) = e^{ct}\sup_{Q \in \Q}\sup_{y \in \Y \cap \mathbb{Q}}|u_{k+1}^Q(t,y) - u^Q_k(t,y)|
        = e^{ct}\sup_{y \in \Y \cap \mathbb{Q}} g_{k, y}(t)
    \end{equation*}
where for $y \in \Y$, $g_{k,y}(t) \defeq \sup_{Q \in \Q}|u_{k+1}^Q(t,y) - u^Q_k(t,y)|$. It is sufficient to show that the $g_{k,y}$'s are measurable. We actually show that they are continuous on $[0,T)$. Let $y \in \Y$ and $K$ be a compact included in $[0,T)$. Since $\sup_{Q \in \Q}|u_i^Q|_{\beta} < \infty$, for $i=k,k+1$, there exists a constant $C \in (0, \infty)$ such that
\begin{equation*}
    \sup_{Q \in \Q} |u_k^Q(t,y) - u_k^Q(t',y)| + \sup_{Q \in \Q} |u_{k+1}^Q(t,y) - u_{k+1}^Q(t',y)| \leqslant C |t-t'|^{\frac{\beta}{2}},\quad (t,t') \in K^2.
\end{equation*}
Let $\varepsilon > 0$, $(t,t') \in K^2$, and $Q \in \Q$ be such that $g_{k,y}(t) \leqslant |u_{k+1}^Q(t,y) - u^Q_k(t,y)| + \varepsilon$. Then
\begin{align*}
    g_{k,y}(t) - g_{k,y}(t') &\leqslant |u_{k+1}^Q(t,y) - u^Q_k(t,y)| - |u_{k+1}^Q(t',y) - u^Q_k(t',y)| + \varepsilon\\
     &\leqslant |u_{k+1}^Q(t,y) - u^Q_{k+1}(t',y)| + |u_{k}^Q(t,y) - u^Q_k(t',y)| + \varepsilon \\
     &\leqslant C|t-t'|^{\frac{\beta}{2}} + \varepsilon.
\end{align*}
This being valid for all $\varepsilon > 0$, and for $g_{k,y}(t') - g_{k,y}(t)$, we deduce that $g_{k,y}$ is (Hölder) continuous on $K$. Since $K$ was arbitrary, this yields the result.
\end{proof}

\begin{lemma}
    \label{lem:meas_md}
    Let $\Delta \in [0, \infty)$. Then $m^{\Delta}$ defined by \eqref{eq:modulus_cty_Q} is measurable.
\end{lemma}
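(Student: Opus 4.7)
The strategy is to show that $m^{\Delta}$ is lower semi-continuous on $[0,T]$, which is strictly stronger than Borel measurability. This avoids the obstruction that the supremum in \eqref{eq:modulus_cty_Q} is indexed by the uncountable set $\mathcal{Q}_{\infty}^2 \times \bar{\mathcal{Y}}$: the alternative approach of reducing to a countable supremum (as done in the proof of Lemma \ref{lem:meas_gk}) would require joint continuity in $Q$ of $(t,Q,y)\mapsto u^Q(t,y)$, but establishing this is precisely the goal of Section \ref{subsec:continuity}, for which Lemma \ref{lem:meas_md} is a prerequisite.

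For each $(Q,Q',y)\in \mathcal{Q}_{\infty}^2 \times \bar{\mathcal{Y}}$ with $|Q-Q'|\leqslant \Delta$, set $\phi_{Q,Q',y}(t)\defeq u^Q(t,y) - u^{Q'}(t,y)$ for $t\in[0,T]$. Since each $u^R$ in the given family belongs to $\mathcal{D}\subset C([0,T]\times\bar{\mathcal{Y}})$, the function $\phi_{Q,Q',y}$ is continuous on $[0,T]$. Thus $m^{\Delta}$ is the pointwise supremum of the family $(\phi_{Q,Q',y})$ indexed by $I_{\Delta}\defeq \{(Q,Q',y)\in\mathcal{Q}_{\infty}^2 \times \bar{\mathcal{Y}}:|Q-Q'|\leqslant \Delta\}$. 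For every $c\in\mathbb{R}$,
\begin{equation*}
    \{t\in[0,T]:m^{\Delta}(t)>c\} = \bigcup_{(Q,Q',y)\in I_{\Delta}}\{t\in[0,T]:\phi_{Q,Q',y}(t)>c\}
\end{equation*}
is a union of open subsets of $[0,T]$, hence open. Consequently $m^{\Delta}$ is lower semi-continuous on $[0,T]$, and in particular Borel measurable. Finiteness of $m^{\Delta}$ follows from the bound $|m^{\Delta}|\leqslant 2\sup_{R\in\mathcal{Q}_{\infty}}|u^R|_{\infty}\leqslant 2\sup_{R\in\mathcal{Q}_{\infty}}|u^R|_{\beta}<\infty$, which is part of the standing hypothesis of Section \ref{subsec:continuity}.

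I expect no substantial obstacle once the lower semi-continuity viewpoint is adopted: it entirely bypasses the uncountable indexing of the supremum. In particular, no uniformity of the continuity of $\phi_{Q,Q',y}$ in $(Q,Q',y)$ is required, and the argument relies only on the continuity in $(t,y)$ of each individual function $u^Q$, which is built into the definition of $\mathcal{D}$.
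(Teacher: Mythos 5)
Your proof is correct, and it takes a genuinely different and more elementary route than the paper. The paper first uses continuity in $y$ of each $u^Q$ to replace the supremum over $y \in \bar{\mathcal{Y}}$ by a supremum over the countable set $\mathcal{Y} \cap \mathbb{Q}$, and then, for each fixed $y$, exploits the standing uniform bound $\sup_{Q \in \mathcal{Q}_{\infty}}|u^Q|_{\beta} < \infty$ to show that $t \mapsto \sup_{|Q-Q'|\leqslant \Delta}\bigl(u^Q(t,y) - u^{Q'}(t,y)\bigr)$ is H\"older continuous on compact subsets of $[0,T)$ (an $\varepsilon$-optimal choice of $(Q,Q')$ plus equicontinuity in $t$ uniformly over $Q$), so that $m^{\Delta}$ is a countable supremum of measurable functions. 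Your argument bypasses both steps: the pointwise supremum of the (uncountable) family of continuous functions $t\mapsto u^Q(t,y)-u^{Q'}(t,y)$ is lower semi-continuous because $\{m^{\Delta}>c\}$ is a union of relatively open sets, and lower semi-continuity already gives Borel measurability, which is all that the Gr\"onwall argument of Section \ref{subsec:continuity} needs. What your approach buys is economy: it uses only continuity of each individual $u^Q$ on $[0,T]\times\bar{\mathcal{Y}}$ and no uniform H\"older estimate; what the paper's approach buys is a bit more regularity information (continuity in $t$ of the partial suprema on compacts of $[0,T)$), which is not needed for the lemma. One small remark: your parenthetical claim that the countable-reduction route ``would require joint continuity in $Q$'' slightly mischaracterizes the alternative --- the paper handles the uncountable supremum over $(Q,Q')$ via the uniform-in-$Q$ H\"older bound in $t$, not via joint continuity in $Q$ --- but this does not affect the validity of your own argument. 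The same lower semi-continuity observation would also give a shorter proof of Lemma \ref{lem:meas_gk}.
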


\begin{proof}
    The proof is very similar to the one of Lemma \ref{lem:meas_gk}. Since for fixed $(Q, Q')$, $u^Q$ and $u^{Q'}$ are continuous, then
    \begin{equation*}
        m^{\Delta}(t) = \sup_{Q, Q' \in \mathcal{Q}_{\infty}, |Q' - Q| \leqslant \Delta}\sup_{y \in \Y \cap \mathbb{Q}}(u^Q(t,y) - u^{Q'}(t,y))
        = \sup_{y \in \Y \cap \mathbb{Q}} m^{\Delta, y}(t)
    \end{equation*}
    where for $y \in \Y$, $m^{\Delta, y}(t) \defeq \sup_{Q, Q' \in \mathcal{Q}_{\infty}, |Q' - Q| \leqslant \Delta}|u^Q(t,y) - u^Q_k(t,y)|$. It is sufficient to show that the $m^{\Delta, y}$'s are measurable. Let $K$ be a compact included in $[0,T)$. Since $\sup_{Q \in \mathcal{Q}_{\infty}}|u^Q|_{\beta} < \infty$, there exists a constant $C \in (0, \infty)$ such that
    \begin{equation*}
        \sup_{Q \in \mathcal{Q}_{\infty}} |u^Q(t,y) - u^Q(t',y)|\leqslant C |t-t'|^{\frac{\beta}{2}},\quad (t,t') \in K^2.
    \end{equation*}
Let $\varepsilon > 0$, $(t,t') \in K^2$, and $(Q, Q') \in \mathcal{Q}_{\infty}$ be such that $|Q'-Q| \leqslant \Delta$ and $m^{\Delta,y}(t) \leqslant u^Q(t,y) - u^{Q'}(t,y) + \varepsilon$. Then,
\begin{equation*}
    m^{\Delta,y}(t) - m^{\Delta,y}(t') \leqslant u^Q(t,y) - u^{Q'}(t,y) + u^Q(t',y) - u^{Q'}(t',y) + \varepsilon \leqslant 2C|t-t'|^{\frac{\beta}{2}} + \varepsilon.
\end{equation*}
This being valid for all $\varepsilon > 0$, and for $m^{\Delta,y}(t') - m^{\Delta,y}(t)$, we deduce that $m^{\Delta,y}$ is (Hölder) continuous on $K$. Since $K$ was arbitrary, this yields the result.
\end{proof}
\end{dummyenv}

\section{About concavity}
\begin{dummyenv}
\renewcommand{\Q}{\mathcal{Q}_{\infty}}
\renewcommand{\Qa}{\mathcal{Q}^{+,a}_{\infty}}
\renewcommand{\Qb}{\mathcal{Q}^{+,b}_{\infty}}
\renewcommand{\Qi}{\mathcal{Q}^{+,i}_{\infty}}
In this section, we derive some properties of continuous concave functions, their integrals and the continuous concave enveloppes. We use the notations from Section \ref{sec:uniqueness}.

\subsection{Some lemmas about concave functions}

\begin{lemma}
    \label{lem:basic_max_concave}
    Let $I$ be a compact interval and $\phi$ be a real-valued continuous concave function defined on $I$.\\
    (i) Let $q^* \in I$ be such that $\phi(q^*) = \max \phi$. Then, for all  $z \in \mathbb{R}_+$, $\phi\left(q^* \wedge z\right) = \max_{q \in I} \phi(q \wedge z)$.\\
    (ii) Let $q \in I$ be such that $\phi(q) < \max \phi$. Then, for all $ z \geqslant \max I$, $\phi\left(q \wedge z\right) < \max \phi(\cdot \wedge z)$.
\end{lemma}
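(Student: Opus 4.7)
Both parts reduce to the standard fact that a continuous concave function on $I$ is nondecreasing on the part of $I$ lying to the left of any maximizer. I would state and prove this monotonicity first, then handle (ii) immediately and (i) by a short case split on the position of $z$ relative to $q^{*}$.

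For the monotonicity claim, fix $a,b \in I$ with $a \leqslant b \leqslant q^{*}$ and $a<q^{*}$. Write $b = \lambda a + (1-\lambda) q^{*}$ with $\lambda = (q^{*}-b)/(q^{*}-a) \in [0,1]$. By concavity,
\begin{equation*}
    \phi(b) \;\geqslant\; \lambda \phi(a) + (1-\lambda)\phi(q^{*}) \;\geqslant\; \lambda \phi(a) + (1-\lambda)\phi(a) \;=\; \phi(a),
\end{equation*}
using $\phi(q^{*}) = \max\phi \geqslant \phi(a)$. Thus $\phi$ is nondecreasing on $I \cap (-\infty, q^{*}]$.

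Part (ii) is then essentially immediate: when $z \geqslant \max I$, $q' \wedge z = q'$ for every $q' \in I$, so $\max_{q' \in I}\phi(q' \wedge z) = \max \phi > \phi(q) = \phi(q \wedge z)$, where the strict inequality is the hypothesis on $q$.

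For part (i), fix $z \in \mathbb{R}_{+}$ (so that $q \wedge z \in I$ since $\min I = 0$ in the intended application, or more generally since $q \wedge z \in [\min I, q] \subseteq I$ whenever $z \geqslant \min I$; if $z<\min I$ the claim is vacuous as $\phi$ is not evaluated in $I$). Split on whether $z \geqslant q^{*}$ or $z < q^{*}$. In the first case, $q^{*} \wedge z = q^{*}$ and $\phi(q \wedge z) \leqslant \max \phi = \phi(q^{*} \wedge z)$ for any $q \in I$. In the second case, $q^{*} \wedge z = z$ and we must compare $\phi(z)$ with $\phi(q \wedge z)$: if $q \geqslant z$ the two sides coincide, while if $q < z < q^{*}$ the monotonicity established above gives $\phi(q) \leqslant \phi(z)$, i.e.\ $\phi(q \wedge z) \leqslant \phi(q^{*} \wedge z)$. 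Taking the supremum over $q \in I$ yields (i). There is no serious obstacle here; the only subtle point is the monotonicity argument, which is standard for continuous concave functions.
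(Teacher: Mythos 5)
Your proof is correct and follows essentially the same route as the paper: both arguments split on whether $z \geqslant q^{*}$ or $z < q^{*}$ and rely on the monotonicity of $\phi$ to the left of the maximizer, which the paper simply invokes and you spell out from concavity. Your extra remark about $q \wedge z$ possibly leaving $I$ when $z < \min I$ is a harmless clarification; in the paper's application $\min I = 0$, so the issue does not arise.
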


\begin{proof}
    Point (ii) is immediate since for $q\in I$ and $z \geqslant \max I$, $q\wedge z = q$. We now show (i). Let $z \in \mathbb{R}_+$. If $z \geqslant q^*$, $\phi(q^* \wedge z) = \phi(q^*) = \max \phi \geqslant \max \phi(\cdot \wedge z)$. Suppose that $z < q^*$. Since $\phi$ is increasing on $[0, q^*]$, which contains $[0,z]$, we have $\max \phi(\cdot \wedge z) = \phi(z) = \phi(q^* \wedge z)$.
\end{proof}

We have the following corollary regarding the maximization of an integral by maximizing the integrand. 
\begin{corollary}
    \label{corol:max_integral}
    Let $Q \in \Q$ and $i \in \{a,b\}$. Define $\varepsilon \defeq -1$ if $i = b$ and $\varepsilon \defeq 1$ if $i = a$.
    Let $\phi: \Qi \cap [0, \varepsilon Q + \Qmax] \to \mathbb{R}$ be a continuous concave function. Let $F:\mathbb{R} \to \mathbb{R}$ be a strictly increasing function.\\
    (i) Let $q^* \in \argmax \phi$. Then $q^* \in \argmax\limits_{q \in \Qi \cap [0, \varepsilon Q +\Qmax]} \int_{\Qi} F(\phi(q \wedge z)) \mu^i(\diff z)$.\\
    (ii) Let $\hat{q} \in [0, \varepsilon Q +\Qmax] \setminus \argmax \phi$. Then $\hat{q} \notin \argmax\limits_{q \in \Qi \cap [0, \varepsilon Q +\Qmax]} \int_{\Qi} F(\phi(q \wedge z)) \mu^i(\diff z)$.
\end{corollary}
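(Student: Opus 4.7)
The corollary is a direct consequence of Lemma \ref{lem:basic_max_concave}, the strict monotonicity of $F$, and the assumption that $\qi$ lies in the support of $\mu^i$. Write $I := \Qi \cap [0, \varepsilon Q + \Qmax]$; since $\phi$ is continuous on the compact interval $I$, $\argmax \phi$ is nonempty, and for every $q \in I$ and $z \in \Qi$ the point $q \wedge z$ still lies in $I$, so $\phi(q \wedge z)$ is well-defined. Fix $q^* \in \argmax \phi$.

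\textbf{Point (i).} By Lemma \ref{lem:basic_max_concave}(i), for every $z \in \mathbb{R}_+$ and every $q \in I$, $\phi(q^* \wedge z) \geqslant \phi(q \wedge z)$. Applying $F$ (which is increasing, being strictly increasing) preserves this inequality, and integrating against $\mu^i$ yields $\int_{\Qi} F(\phi(q^* \wedge z))\mu^i(\diff z) \geqslant \int_{\Qi} F(\phi(q \wedge z))\mu^i(\diff z)$ for every admissible $q$, so $q^* \in \argmax$ of the integral.

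\textbf{Point (ii).} Suppose $\hat{q} \in I \setminus \argmax \phi$ (the case $\hat{q} \in [0,\varepsilon Q + \Qmax] \setminus \Qi$ is vacuous). Define $g(z) := \phi(q^* \wedge z) - \phi(\hat{q} \wedge z)$ on $\Qi$; by Lemma \ref{lem:basic_max_concave}(i), $g \geqslant 0$. Since $q^*, \hat{q} \leqslant \qi = \max \Qi$, we have $q^* \wedge \qi = q^*$ and $\hat{q} \wedge \qi = \hat{q}$, so $g(\qi) = \phi(q^*) - \phi(\hat{q}) > 0$. By continuity of $\phi$ and of $z \mapsto q \wedge z$, $g$ is continuous, hence $g > 0$ on some open neighborhood $V$ of $\qi$ relative to $\Qi$. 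The hypothesis $\qi \in \mathrm{supp}(\mu^i)$ forces $\mu^i(V) > 0$. Strict monotonicity of $F$ then upgrades $\phi(q^* \wedge z) > \phi(\hat{q} \wedge z)$ to $F(\phi(q^* \wedge z)) > F(\phi(\hat{q} \wedge z))$ on $V$, while the weak inequality from (i) holds on $\Qi \setminus V$. Integrating gives the strict inequality $\int_{\Qi} F(\phi(q^* \wedge z))\mu^i(\diff z) > \int_{\Qi} F(\phi(\hat{q} \wedge z))\mu^i(\diff z)$, so $\hat{q}$ cannot be a maximizer.

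\textbf{Main obstacle.} The only non-routine step is verifying strict inequality in (ii). Lemma \ref{lem:basic_max_concave}(ii) by itself only delivers strict inequality at points $z \geqslant \max I$, which may carry no $\mu^i$-mass; the crucial ingredient is the identifiability hypothesis $\qi \in \mathrm{supp}(\mu^i)$ (introduced precisely for this purpose in the setup of Section \ref{sec:control_problem}), combined with the continuity of $\phi$ to propagate the strict inequality from the single point $\qi$ to a neighborhood of positive $\mu^i$-measure.
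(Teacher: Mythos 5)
Your proof is correct and follows essentially the same route as the paper: point (i) by the pointwise comparison from Lemma \ref{lem:basic_max_concave}(i), and point (ii) by propagating the strict inequality at $z=\qi$ (where $q\wedge\qi=q$) to a relative neighborhood of positive $\mu^i$-measure via continuity and the assumption $\qi\in\mathrm{supp}(\mu^i)$, exactly as in the paper's split of the integral over $(\qi-\eta,\qi]$ and $[0,\qi-\eta]$. The only cosmetic difference is that you check the strict inequality at $\qi$ directly rather than quoting Lemma \ref{lem:basic_max_concave}(ii), which changes nothing of substance.
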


\begin{proof}
    Point (i) is an immediate consequence of Lemma \ref{lem:basic_max_concave} (i), since the maximum holds pointwise.
    
    We now prove (ii). By Lemma \ref{lem:basic_max_concave}, since $\phi$ is continuous, there exists $\eta > 0$ such that for all $z \in (\qi - \eta, \qi]$, $\phi\left(\hat{q} \wedge z\right) < \max \phi(\cdot \wedge z) = \phi(q^* \wedge z)$. Thus, since $\qi$ is in the support of $\mu^i$,
    \begin{equation}\label{eq:appF1}
        \int_{\Qi \cap (\qi - \eta, \qi]} F(\phi(\hat{q} \wedge z)) \mu^i(\diff z) < \int_{\Qi \cap (\qi - \eta, \qi]} F(\phi(q^* \wedge z)) \mu^i(\diff z).
    \end{equation}
    In addition, by Lemma \ref{lem:basic_max_concave} (i), and since the inequality holds pointwise,  we have
    \begin{equation}\label{eq:appF2}
        \int_{\Qi \cap [0,\qi - \eta]} F(\phi(\hat{q} \wedge z)) \mu^i(\diff z) \leqslant \int_{\Qi \cap [0,\qi - \eta]} F(\phi(q^* \wedge z)) \mu^i(\diff z).
    \end{equation}
    The inequalities \eqref{eq:appF1} and \eqref{eq:appF2} yield the conclusion.
\end{proof}

Recall the definition of the log-Hamiltonians $h^a,h^b$ in \eqref{eq:deflogHamiltonian}.
\begin{lemma}
    \label{lem:same_maximizer}
    Let $k \in \mathbb{R}$, $Q \in \Q$ and $g:\Q \to \mathbb{R}$ be a continuous concave function.\\
    (i) Define $Q^* \defeq \inf \left\{R \in \Q \setminus \{\Qmax\}: g'_+(R) \leqslant k\right\}\wedge \Qmax$.\\
    If $Q^* > Q$, then $h^a(g,k,Q) = -1$.\\
    If $Q^* \leqslant Q$, then $h^a(g,k,Q) = \int_{\Qa} - e^{-k\left((Q-Q^*) \wedge z\right) - g\left(Q - (Q-Q^*)\wedge z\right) + g(Q)}\mu^a(\diff z)$.\\
    (ii) Define $Q^* \defeq \inf \left\{R \in \Q \setminus \{-\Qmax\}: g'_-(R) \leqslant -k\right\}\vee (-\Qmax)$.\\
    If $Q^* < Q$, then $h^b(g,k,Q) = -1$.\\
    If $Q^* \geqslant Q$, then $h^b(g,k,Q) = \int_{\Qa} - e^{-k\left((Q^*-Q) \wedge z\right) - g\left(Q - (Q^*-Q)\wedge z\right) + g(Q)}\mu^b(\diff z)$.
\end{lemma}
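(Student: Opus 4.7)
The key reduction is to observe that, for fixed $z\in\Qa$, the integrand in $h^a(g,k,Q)$ equals $F(\phi(q\wedge z))$ where $F(x):=-e^{-x}$ is strictly increasing and $\phi(q'):=kq'+g(Q-q')-g(Q)$ is continuous and concave on $[0,Q+\Qmax]$ (the concavity comes from the assumed concavity of $g$). Corollary~\ref{corol:max_integral}(i) then tells us that any argmax of $\phi$ on the feasible interval is also a maximizer of the supremum defining $h^a$, so the plan is simply to locate an argmax of $\phi$ and substitute it into the integral.

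To locate the argmax I will compute the one-sided derivatives $\phi'_+(q')=k-g'_-(Q-q')$ and $\phi'_-(q')=k-g'_+(Q-q')$ and split on the two cases of the lemma. In the case $Q^*>Q$, the definition of $Q^*$ forces $g'_+(R)>k$ for every $R\in[-\Qmax,Q]$ (the exclusion $R\neq\Qmax$ in the defining set is immaterial since $Q^*>Q$ forces $Q<\Qmax$); combined with the standard inequality $g'_-\geqslant g'_+$, this gives $\phi'_+(q')<0$ throughout $[0,Q+\Qmax]$. Hence $\phi$ is strictly decreasing, its maximum on the feasible set is $\phi(0)=0$, and the integral evaluates to $\int_{\Qa}F(0)\,\mu^a(\diff z)=-1$.

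In the case $Q^*\leqslant Q$ the candidate maximizer is $q^*:=Q-Q^*\in[0,Q+\Qmax]$. Using the right-continuity of $g'_+$ (which holds for concave $g$) I would verify $g'_+(Q^*)\leqslant k$, and the fact that $g'_+(R)>k$ for $R<Q^*$, combined with the left-continuity of $g'_-$, yields $g'_-(Q^*)\geqslant k$. These inequalities translate directly to $\phi'_+(q^*)\leqslant 0\leqslant\phi'_-(q^*)$, certifying that $q^*$ maximizes $\phi$; substituting $q^*$ into the integral then gives the stated formula. Part~(ii) is proved by the symmetric argument after the substitution $r=Q+q'$, which converts the bid-side sup into maximizing $r\mapsto g(r)+kr$ over $r\in[Q,\Qmax]$; the roles of $g'_+$ and $g'_-$ exchange because differentiating $g(Q+q')$ in $q'$ introduces no sign change, which explains why the definition of $Q^*$ in (ii) uses $g'_-$ together with $-k$ rather than $g'_+$ with $k$.

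The main obstacle I anticipate is the boundary bookkeeping: the corner cases $Q^*\in\{-\Qmax,\Qmax\}$, where only an endpoint optimality condition is available, and the case $Q-Q^*>\qa$, in which the unconstrained argmax of $\phi$ falls outside the feasible interval $[0,\qa]$. The second issue is innocuous because the integrand depends on $q$ only through $q\wedge z$ with $z\leqslant\qa$ $\mu^a$-almost surely, so the formula written with $Q-Q^*$ coincides with the one using $(Q-Q^*)\wedge\qa$ and the supremum is actually attained at $q=\qa$ (which can be checked directly since $\phi$ is nondecreasing on $[0,Q-Q^*]\supseteq[0,\qa]$). The first is handled by invoking one-sided continuity of $g'_\pm$ to verify the appropriate endpoint optimality condition ($\phi'_+(0)\leqslant 0$ or $\phi'_-(Q+\Qmax)\geqslant 0$) directly, whichever is needed.
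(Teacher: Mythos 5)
Your proposal is correct and follows essentially the same route as the paper: both reduce the problem, via Corollary~\ref{corol:max_integral}, to maximizing the concave map $q \mapsto kq + g(Q-q)$ over the feasible interval, identify the maximizer as $Q-Q^*$ (or the endpoint $0$ when $Q^*>Q$, or $\qa$ when $Q-Q^*>\qa$, using that the integrand depends on $q$ only through $q\wedge z$), and substitute into the integral. Your one-sided-derivative and right/left-continuity bookkeeping is just a more explicit version of the paper's one-line identification $\sup\{q:\phi'_-(q)\geqslant 0\}\wedge(Q+\Qmax)=Q-Q^*$, and the corner cases you flag do go through as you sketch.
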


\begin{proof}
    We only prove (i) since (ii) is similar.
    Let $\phi: q \in [Q - \Qmax, Q + \Qmax] \mapsto kq + g(Q- q)$. The function $\phi$ is continuous and concave, and for all $q$, $\phi'_-(q) = k - g'_+(Q - q)$. We have
    \begin{equation*}
        \sup\left\{q : \phi'_{-}(q) \geqslant 0\right\} \wedge (Q + \Qmax) = Q - Q^*
    \end{equation*}
    therefore $Q - Q^*$ maximizes $\phi$.

    If $Q^* > Q$, then $\phi$ is nonincreasing on $[0, Q + \Qmax]$ and $\phi(0) = g(Q) = \max\limits_{\Qa \cap [0,Q + \Qmax]}\phi$. Corollary \ref{corol:max_integral} allows us to conclude.

    If $Q^* \leqslant Q$ and $Q - Q^* \leqslant \qa$, then $\phi(Q- Q^*) = \max\limits_{\Qa \cap [0,Q + \Qmax]}\phi$. The conclusion follows from Corollary \ref{corol:max_integral}

    If $Q^* \leqslant Q$ and $Q - Q^* > \qa$, then $\phi(\qa) = \max\limits_{\Qa \cap [0,Q + \Qmax]}\phi$ and for all $ z \in \Qa$, $(Q-Q^*)\wedge z = z = \qa \wedge z$. By Corollary \ref{corol:max_integral},
    \begin{equation*}
        h^a(g,k,Q) =\int_{\Qa} - e^{-k\left(\qa \wedge z\right) - g\left(Q - \qa\wedge z\right) + g(Q)}\mu^a(\diff z) = \int_{\Qa} - e^{-k\left((Q-Q^*) \wedge z\right) - g\left(Q - (Q-Q^*)\wedge z\right) + g(Q)}\mu^a(\diff z).
    \end{equation*}
\end{proof}

The following lemma shows that over the intervals where $g$ is affine, the log-Hamiltonians $h^a$ and $h^b$ are concave. In addition, it provides sufficient conditions to have a strict concavity inequality.
\begin{lemma}
    \label{lem:bid_ask_max}
    Let $k \in \mathbb{R}$ and $g: \Q \to \mathbb{R}$ be a continuous concave function.\\
    Suppose that for $(Q,Q') \in \Q \times \Q$ and $\lambda \in [0,1]$, $(1-\lambda)g(Q) + \lambda g(Q') = g((1-\lambda)Q + \lambda Q')$. Then,\\
    (ia) $(1-\lambda) h^a(g,k,Q) + \lambda h^a(g,k,Q') \leqslant h^a(g,k,(1-\lambda)Q + \lambda Q')$.\\
    (ib) $(1-\lambda) h^b(g,k,Q) + \lambda h^b(g,k,Q') \leqslant h^b(g,k,(1-\lambda)Q + \lambda Q')$.\\
    (iia) If $\lambda \notin \{0,1\}$, $Q < Q'$, $p \defeq \frac{g(Q')-g(Q)}{Q'-Q} < k$, and
    $g'_+(R) > p$ for all $ R \in \Q \cap (-\infty, Q)$, then\\
    $(1-\lambda) h^a(g,k,Q) + \lambda h^a(g,k,Q') < h^a(g,k,(1-\lambda)Q + \lambda Q')$.\\
    (iib) If $\lambda \notin \{0,1\}$, $Q < Q'$, $p\defeq\frac{g(Q')-g(Q)}{Q'-Q} > -k$, and
    $g'_-(R) < p$ for all $R \in \Q \cap (Q, \infty)$, then\\
    $(1-\lambda) h^b(g,k,Q) + \lambda h^b(g,k,Q') < h^b(g,k,(1-\lambda)Q + \lambda Q')$.
\end{lemma}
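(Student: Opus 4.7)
I treat the ask-side parts (ia) and (iia); parts (ib) and (iib) follow by the symmetric argument on the bid side. The cases $\lambda \in \{0,1\}$ or $Q = Q'$ give trivial equality, so assume $Q < Q'$ and $\lambda \in (0,1)$, and set $\tilde Q := (1-\lambda)Q + \lambda Q'$. The hypothesis $(1-\lambda)g(Q) + \lambda g(Q') = g(\tilde Q)$ combined with concavity of $g$ forces $g$ to be affine on $[Q, Q']$ with slope $p := (g(Q')-g(Q))/(Q'-Q)$. Let $Q^*$ be the threshold from Lemma \ref{lem:same_maximizer}(i). The dichotomy is on the sign of $p - k$: if $p > k$, then $g'_+(R) = p > k$ on $[Q, Q')$ forces $Q^* \geqslant Q'$, so $h^a(g,k,R) = -1$ on $[Q, Q']$ by Lemma \ref{lem:same_maximizer}(i), and (ia) holds with equality. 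In the remaining case $p \leqslant k$, one has $Q^* \leqslant Q$ and Lemma \ref{lem:same_maximizer}(i) gives, for every $R \in [Q, Q']$,
\[ h^a(g, k, R) = -\int_{\Qa} e^{\psi_R(z)}\, \mu^a(\diff z), \qquad \psi_R(z) := g(R) - g(R - s_R(z)) - k\, s_R(z), \]
with $s_R(z) := (R - Q^*) \wedge z$ (any clipping at $\bar q^a$ is automatically absorbed because $z \leqslant \bar q^a$).

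The crux is that for every $z \geqslant 0$ the map $R \mapsto \psi_R(z)$ is convex on $[Q, Q']$. I split into three regimes: (A) $R \leqslant Q^* + z$, giving $\psi_R(z) = g(R) - g(Q^*) - k(R - Q^*)$, affine in $R$ since $g$ is affine on $[Q, Q']$; (B) $R > Q^* + z$ with $R - z \geqslant Q$, giving $\psi_R(z) = (p - k)z$, constant in $R$; and (C) $R > Q^* + z$ with $R - z < Q$, giving $\psi_R(z) = g(R) - g(R - z) - kz$ whose one-sided $R$-derivative $p - g'_+(R - z)$ is non-decreasing in $R$ by concavity of $g$. At the transition $R = Q^* + z$ the one-sided $R$-derivative jumps by $k - g'_+(Q^*) \geqslant 0$ (by definition of $Q^*$), and at $R = Q + z$ it jumps by $g'_-(Q) - p \geqslant 0$ (by concavity). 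Hence the derivative is globally non-decreasing, so $\psi_R(z)$ is convex in $R$. Applying Jensen to the convex function $e^x$ and integrating against $\mu^a$ yields
\[ \int_{\Qa} e^{\psi_{\tilde Q}(z)}\, \mu^a(\diff z) \leqslant (1-\lambda) \int_{\Qa} e^{\psi_Q(z)}\, \mu^a(\diff z) + \lambda \int_{\Qa} e^{\psi_{Q'}(z)}\, \mu^a(\diff z), \]
which, after multiplying by $-1$, is exactly (ia).

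For (iia), the strict assumption $p < k$ together with $g'_+(R) > p$ for every $R \in \mathcal{Q}_\infty \cap (-\infty, Q)$ upgrades the Jensen step. I claim that $\psi_{Q'}(z) < \psi_Q(z)$ strictly for every $z > 0$. In the large-$z$ regime ($z \geqslant Q' - Q^*$) a direct computation gives $\psi_{Q'}(z) - \psi_Q(z) = (p - k)(Q' - Q) < 0$. In the intermediate and small regimes, the strict hypothesis on $g'_+$ implies $g(Q) - g(Q^*) > p(Q - Q^*)$ when $Q > Q^*$, as well as $\int_{Q-z}^{Q'-z} g'_+(u)\, \diff u > p(Q'-Q)$ whenever the integration range meets $(-\infty, Q)$ (where $g'_+ > p$); combined with $p < k$ this forces $\psi_{Q'}(z) - \psi_Q(z) < 0$ strictly. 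By continuity in $z$ and $\bar q^a > 0$, the strict inequality persists on an open neighborhood $V$ of $\bar q^a$, and $\mu^a(V) > 0$ since $\bar q^a \in \operatorname{supp}(\mu^a)$. Combining strict convexity of $e^x$ on $V$ with the convexity of $\psi_R$ in $R$, I obtain $e^{\psi_{\tilde Q}(z)} < (1-\lambda) e^{\psi_Q(z)} + \lambda e^{\psi_{Q'}(z)}$ strictly on $V$; integrating against $\mu^a$ gives the strict inequality required for (iia).

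Parts (ib) and (iib) follow by the bid--ask symmetry, using Lemma \ref{lem:same_maximizer}(ii) in place of (i) and reversing the inventory direction. The main obstacle is the case analysis for $\psi_{Q'}(z) - \psi_Q(z)$ in the intermediate $z$-regime, where one or both of $Q - z$ and $Q' - z$ fall outside the affine piece $[Q, Q']$; the strict sign must then be extracted from the hypothesis $g'_+ > p$ on $(-\infty, Q)$ together with $p < k$. The requirement $\bar q^a \in \operatorname{supp}(\mu^a)$ is used precisely to conclude that the set of $z$ on which strict Jensen applies has positive $\mu^a$-measure, which could fail without it if $\mu^a$ placed mass only at isolated points of coincidence of $\psi_Q$ and $\psi_{Q'}$.
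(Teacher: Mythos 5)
Your argument is correct, and its architecture coincides with the paper's: split according to whether $p>k$ or $p\leqslant k$ (equivalently, whether the threshold $Q^*$ of Lemma \ref{lem:same_maximizer} lies above $Q'$ or below $Q$), use that lemma to write all three log-Hamiltonians as $-\int e^{\psi_R(z)}\mu^a(\diff z)$ with $s_R(z)=(R-Q^*)\wedge z$, prove a pointwise-in-$z$ inequality between the exponents, integrate, and get strictness from a strict pointwise inequality on a set of positive $\mu^a$-measure near $\qa$, using $\qa\in\mathrm{supp}(\mu^a)$. Where you genuinely differ is the pointwise step: you prove the stronger statement that $R\mapsto\psi_R(z)$ is convex on all of $[Q,Q']$ via the piecewise one-sided-derivative analysis, then compose with the increasing convex $e^x$; the paper proves only the needed inequality at the triple $(Q,\tilde Q,Q')$ by chaining the concavity of $x\mapsto -e^{-x}$, the concavity of $g$ with the affinity hypothesis, the concavity of $q\mapsto q\wedge z$, and the monotonicity of $q\mapsto kq+g(\tilde Q-q)$ on $[0,\tilde Q-Q^*]$ (see \eqref{eq:pointwise_concavity}), which avoids all derivative bookkeeping. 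Two small points to tighten in your route: the jump condition at $R=Q^*+z$ requires $g'_+(Q^*)\leqslant k$, which is not literally \enquote{by definition of $Q^*$} but follows from right-continuity of the right derivative of a concave function; and the intermediate-$z$ regime of your claim $\psi_{Q'}(z)<\psi_Q(z)$ must be written out case by case, though it does close with exactly the ingredients you list (this claim, at $z=\qa$, is the paper's \eqref{eq:different_gains}, proved there via the integral representation of $g$ through $g'_+$ and then extended to $(\qa-\eta,\qa]$ by continuity, so your all-$z$ version is stronger than needed). Net effect: your version is slightly more conceptual (convexity of the optimized exponent in the inventory), the paper's chain is more elementary and needs no regularity of $g'_+$.
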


\begin{rem}
    The hypothesis in $(iia)$ states that, while $g$ is affine on $[Q,Q']$, it is not affine on $[R,Q']$ for any $R < Q$.
\end{rem}

\begin{proof}
    We only prove parts (ia) and (iia), since parts (ib) and (iib) are similar. Without loss of generality, suppose $Q < Q'$. Let $Q^* \defeq \inf \left\{R \in \Q \setminus \{\Qmax\}: g'_+(R) \leqslant k\right\}\wedge \Qmax$. Since $g'_+$ is constant on $[Q,Q')$ and equal to $p$, then $Q^* \in \Q \setminus (Q,Q')$. For the rest of the proof we write $\tilde{Q} \defeq (1-\lambda)Q + \lambda Q'.$

    We first prove point (ia). If $Q^* \geqslant Q'$, by Lemma \ref{lem:same_maximizer}, $h^a(g,k,Q) = h^a(g,k,Q') = h^a(g,k,\tilde{Q})=-1$ and the result follows immediately.\\
    Suppose that $Q^* \leqslant Q$. Let $z \in \Qa$. Since the function $x \mapsto -e^{-x}$ is strictly increasing and strictly concave, and since $g$ is concave and affine between $Q$ and $Q'$, 
    \begin{equation}
        \label{eq:pointwise_concavity}
        \begin{split}
            -(1-\lambda)&e^{-k((Q-Q^*)\wedge z) - g(Q - (Q-Q^*)\wedge z) + g(Q)}
        - \lambda e^{-k((Q'-Q^*)\wedge z) - g(Q' - (Q'-Q^*)\wedge z) + g(Q')}\\
        &\leqslant -e^{-k\left[(1-\lambda)((Q - Q^*) \wedge z) + \lambda((Q' - Q^*) \wedge z)\right] - (1- \lambda)g(Q - (Q-Q^*)\wedge z) - \lambda g(Q' - (Q'-Q^*)\wedge z) + (1-\lambda)g(Q) + \lambda g(Q')}\\
        &\leqslant
        -e^{-k\left[(1-\lambda)((Q - Q^*) \wedge z) + \lambda((Q' - Q^*) \wedge z)\right] - g\left(\tilde{Q} - (1-\lambda)((Q-Q^*)\wedge z) - \lambda((Q'-Q^*)\wedge z)\right) + g(\tilde{Q})}.
        \end{split}
    \end{equation}
    We define the function $\phi:q \in [0,\tilde{Q} + \Qmax] \mapsto kq + g(\tilde{Q} - q)$. Reasoning as in the proof of Lemma \ref{lem:same_maximizer}, we have that $\phi$ is concave and maximized at $\tilde{Q} - Q^*$ and is therefore nondecreasing on $[0, \tilde{Q} - Q^*]$. Since, by the concavity of $(\cdot \wedge z)$, we have
    \begin{equation*}
        (1-\lambda)((Q-Q^*)\wedge z) + \lambda ((Q' - Q^*) \wedge z) \leqslant \left(\tilde{Q} - Q^*\right) \wedge z \leqslant \tilde{Q}-Q^*,
    \end{equation*}
    the inequality \eqref{eq:pointwise_concavity} becomes
    \begin{equation}
        \label{eq:pointwise_concavity_end}
        \begin{split}
            -(1-\lambda)e^{-k((Q-Q^*)\wedge z) - g(Q - (Q-Q^*)\wedge z) + g(Q)}
        &- \lambda e^{-k((Q'-Q^*)\wedge z) - g(Q' - (Q'-Q^*)\wedge z) + g(Q')}\\
        &\leqslant
        -e^{-k((\tilde{Q} - Q^*) \wedge z) - g\left(\tilde{Q} - (\tilde{Q}-Q^*)\wedge z\right) + g(\tilde{Q})}.
        \end{split}
    \end{equation}
    Thanks to Lemma \ref{lem:same_maximizer}, the result follows by integrating \eqref{eq:pointwise_concavity_end} with respect $\mu^a(\diff z)$.

    We now prove point (iia). Since $g_+'$ is nonincreasing and $g_+'(Q) = p < k$, we have that $Q^* \leqslant Q$. Suppose that
    \begin{equation}
        \label{eq:different_gains}
        \begin{split}
            k\left((Q-Q^*)\wedge \qa\right) +& g\left(Q - (Q-Q^*) \wedge \qa\right) - g(Q)\\ &<
        k\left((Q'-Q^*)\wedge \qa\right) + g\left(Q' - (Q'-Q^*) \wedge \qa\right) - g(Q').
        \end{split}
    \end{equation}
    Then, since $x \mapsto -e^{-x}$ is strictly concave, the first inequality in \eqref{eq:pointwise_concavity} is strict with $z = \qa$, and consequently so is the one in \eqref{eq:pointwise_concavity_end} (with $z = \qa$). By the continuity of all the functions involved, strict inequality in \eqref{eq:pointwise_concavity_end} holds for $z \in (\qa - \eta ,\qa]$ for some $\eta > 0$. We have the desired result because $\mu^a((\qa - \eta, \qa]) > 0$.
    
    It only remains to show that \eqref{eq:different_gains} is true. Considering the function $q \in [0,Q + \Qmax] \mapsto g(Q-q)$ which is concave and has left derivative $-g_+'(Q - \cdot)$, \parencite[][Proposition 1.6.1]{niculescu_convex_2005} yields
    \begin{equation}\label{eq:Niculescu1}
        k\left((Q-Q^*)\wedge \qa\right) + g\left(Q - (Q-Q^*) \wedge \qa\right) - g(Q) = \int_0^{(Q-Q^*)\wedge \qa}\left(-g_+'(Q - q) + k\right) \diff q
    \end{equation}
    and, similarly,
    \begin{equation}\label{eq:Niculescu2}
        k\left((Q'-Q^*)\wedge \qa\right) + g\left(Q' - (Q'-Q^*) \wedge \qa\right) - g(Q') = \int_0^{(Q'-Q^*)\wedge \qa}\left(-g_+'(Q' - q) + k\right) \diff q.
    \end{equation}
    By the definition of $Q^*$, the terms in the integrals above  are strictly positive. If $Q = Q^*$, \eqref{eq:different_gains} follows from the hypothesis $p<k$ and the fact that $g$ is affine between $Q$ and $Q'$. Assume now that $Q^* < Q$. For $q \in [0, (Q-Q^*)\wedge \qa]$, $-g'_+(Q-q) \leqslant -g'_+(Q'-q)$ since $g'_+$ is nonincreasing. Let $\epsilon > 0$ be such that $\epsilon < (Q-Q^*)\wedge \qa$ and $\epsilon < (Q'-Q)\wedge \qa$. We have
    \begin{equation*}
        \begin{split}
            \int_0^{(Q'-Q^*)\wedge \qa}\left(-g_+'(Q' - q) + k\right) \diff q
        = &\int_0^{\epsilon}\left(-g_+'(Q' - q) + k\right) \diff q
        + \int_{\epsilon}^{(Q-Q^*)\wedge \qa}\left(-g_+'(Q' - q) + k\right) \diff q\\
        &+ \int_{(Q-Q^*)\wedge \qa}^{(Q'-Q^*)\wedge \qa}\left(-g_+'(Q' - q) + k\right) \diff q.
        \end{split}
    \end{equation*}
    For $q \in (0,\epsilon]$, by the condition on $Q$, $g'_+(Q-q) > p = g'_+(Q'-q)$. This implies that the first integral is strictly greater that $\int_0^{\epsilon}\left(-g_+'(Q - q) + k\right) \diff q$. The second integral is greater than or equal to $\int_{\epsilon}^{(Q-Q^*)\wedge \qa}\left(-g_+'(Q - q) + k\right) \diff q$ since $g_+'$ is nonincreasing. The third one is nonnegative by the definition of $Q^*$. The result follows from \eqref{eq:Niculescu1} and \eqref{eq:Niculescu2}.
\end{proof}

\subsection{Some lemmas about the continuous concave envelope}

For a continuous function $g:\Q \to \mathbb{R}$, $\tilde{Q} \in \Q$ and $(Q, Q') \in [-\Qmax,\tilde{Q}] \times [\tilde{Q}, \Qmax]$, we define the quantity
\begin{equation*}
    A_{g, \tilde{Q}}(Q,Q') \defeq g\left(\tilde{Q}\right) - \frac{Q'-\tilde{Q}}{Q'-Q} g\left(Q\right) - \frac{\tilde{Q} - Q}{Q' - Q} g\left(Q'\right)
\end{equation*}
if $Q' > Q$ and $A_{g, \tilde{Q}}(Q,Q') = 0$ otherwise. It is clear that $A_{g,\tilde{Q}}$ is continuous and $\min A_{g, \tilde{Q}} \leqslant 0$. Recall also the definition of $C_g$ in \eqref{eq:defCg}.

\begin{lemma}
    \label{lem:envelope_shape}
    Let $g:\Q \to \mathbb{R}$ be a continuous function and $\tilde{Q} \in \Q$.\\
    (i) If $\min A_{g, \tilde{Q}} = 0$, then $\hat{g}\left(\tilde{Q}\right) = g \left(\tilde{Q}\right)$.\\
    (ii) If $\min A_{g, \tilde{Q}} < 0$ and $(Q, Q') \in [-\Qmax,\tilde{Q}] \times [\tilde{Q}, \Qmax]$ minimizes $A_{g, \tilde{Q}}$, then $\hat{g}(Q) = g(Q)$, $\hat{g}(Q') = g(Q')$, $C_{\hat{g}}\left(Q,Q', \frac{\tilde{Q}-Q}{Q'-Q}\right) = 0 = \min C_{\hat{g}}$ and
    \begin{equation*}
        \hat{g}\left(\tilde{Q}\right) - g\left(\tilde{Q}\right) = -g\left(\tilde{Q}\right) + \frac{Q'-\tilde{Q}}{Q'-Q}g(Q) + \frac{\tilde{Q} - Q}{Q'-Q}g(Q') = -C_{g}\left(Q,Q',\frac{\tilde{Q}-Q}{Q'-Q}\right).
    \end{equation*}
\end{lemma}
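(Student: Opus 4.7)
The proof hinges on the following two-point representation of the continuous concave envelope of a continuous function $g$ on the compact interval $\mathcal{Q}_{\infty} = [-\Qmax, \Qmax]$: for every $x \in \mathcal{Q}_{\infty}$,
\begin{equation*}
    \hat g(x) = \max\left\{g(x),\, \sup\left\{\tfrac{Q'-x}{Q'-Q}\, g(Q) + \tfrac{x-Q}{Q'-Q}\, g(Q') : -\Qmax \leqslant Q < x < Q' \leqslant \Qmax\right\}\right\}.
\end{equation*}
The inequality $\geqslant$ follows at once from $\hat g \geqslant g$ and the concavity of $\hat g$. For the reverse inequality, I would show that the right-hand side is itself a continuous concave majorant of $g$ (concavity relying on a Carath\'eodory-type reduction in $\mathbb{R}^2$ of arbitrary multi-point convex combinations of points on the graph of $g$ to a straddling two-point combination of at least the same value), so that the defining property of $\hat g$ forces the dominated direction. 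Together with the algebraic identity $A_{g,\tilde Q}(Q,Q') = C_g\bigl(Q, Q', \tfrac{\tilde Q - Q}{Q' - Q}\bigr)$ valid for $Q < Q'$, this representation is essentially the only input needed.

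For (i), the hypothesis $\min A_{g,\tilde Q} = 0$ rewrites as $g(\tilde Q) \geqslant \tfrac{Q'-\tilde Q}{Q'-Q} g(Q) + \tfrac{\tilde Q - Q}{Q' - Q} g(Q')$ for every admissible chord, so the supremum in the representation above is bounded by $g(\tilde Q)$; combined with $\hat g(\tilde Q) \geqslant g(\tilde Q)$, this forces $\hat g(\tilde Q) = g(\tilde Q)$.

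For (ii), I would fix a minimizer $(Q,Q')$ of $A_{g,\tilde Q}$ and first observe that $A_{g,\tilde Q}$ vanishes as soon as $Q = \tilde Q$ or $Q' = \tilde Q$, so the hypothesis $\min A_{g,\tilde Q} < 0$ forces $Q < \tilde Q < Q'$. The representation then yields $\hat g(\tilde Q) = \tfrac{Q'-\tilde Q}{Q'-Q} g(Q) + \tfrac{\tilde Q - Q}{Q' - Q} g(Q')$, so that
\begin{equation*}
    \hat g(\tilde Q) - g(\tilde Q) = -A_{g,\tilde Q}(Q, Q') = -C_g\bigl(Q, Q', \tfrac{\tilde Q - Q}{Q' - Q}\bigr),
\end{equation*}
which is identity (d).

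The heart of the proof is to establish $\hat g(Q) = g(Q)$ and $\hat g(Q') = g(Q')$, and here I would argue by contradiction. Suppose $\hat g(Q) > g(Q)$; by the representation, $\hat g(Q) = \mu g(R_1) + (1-\mu) g(R_2)$ for some $R_1 < Q < R_2$ with $\mu = \tfrac{R_2 - Q}{R_2 - R_1}$. Setting $\lambda = \tfrac{Q' - \tilde Q}{Q' - Q}$, the three-point convex combination placing masses $\lambda\mu$, $\lambda(1-\mu)$, $1-\lambda$ on $R_1, R_2, Q'$ has barycenter $\tilde Q$ and value
\begin{equation*}
    \lambda \hat g(Q) + (1-\lambda)\, g(Q') > \lambda g(Q) + (1-\lambda)\, g(Q') = \hat g(\tilde Q),
\end{equation*}
which contradicts the fact that $\hat g(\tilde Q)$ dominates every such convex combination of values of $g$ at its barycenter (by concavity of $\hat g$ together with $\hat g \geqslant g$). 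A symmetric argument gives $\hat g(Q') = g(Q')$. Point (c) then follows at once: using $\hat g(Q) = g(Q)$ and $\hat g(Q') = g(Q')$, the expression for $\hat g(\tilde Q)$ yields $C_{\hat g}\bigl(Q, Q', \tfrac{\tilde Q - Q}{Q' - Q}\bigr) = 0$, while $C_{\hat g} \geqslant 0$ everywhere by concavity of $\hat g$, so this zero is indeed the minimum. The main technical obstacle is the careful proof of the two-point representation itself; once it is in hand, the rest of the argument is a short Carath\'eodory-style manipulation.
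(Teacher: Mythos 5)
Your proposal is correct in substance but follows a genuinely different route from the paper. The paper works directly with affine majorants: for (i) it builds, for each $\epsilon>0$, a line through $(\tilde Q, g(\tilde Q)+\epsilon)$ dominating $g$ by an intermediate-value argument on the slope (the functions $r$ and $l$ cannot both be nonpositive without violating $\min A_{g,\tilde Q}=0$); for (ii) it takes the secant line $\phi$ through $(Q,g(Q))$ and $(Q',g(Q'))$, shows $\phi\geqslant g$ directly from the minimality of $(Q,Q')$, and then sandwiches with the concave majorant $\hat g\wedge\phi$ to get all four conclusions at once. You instead derive everything from the one-dimensional Carath\'eodory characterization of the concave envelope (sup over straddling two-point secant interpolations) together with a Jensen-type barycenter contradiction for $\hat g(Q)=g(Q)$, $\hat g(Q')=g(Q')$; the parts of your argument that you actually write out (the reduction of (i) to the representation, the identification $A_{g,\tilde Q}=C_g$, the three-point contradiction with weights $\lambda\mu$, $\lambda(1-\mu)$, $1-\lambda$, and the final deduction of $C_{\hat g}=0=\min C_{\hat g}$) are all sound. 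The trade-off is that your proof burden is concentrated in the representation lemma, which you only sketch: to use it against the paper's definition of $\hat g$ as an infimum of \emph{continuous} concave majorants you must verify not only concavity of the two-point supremum (your Carath\'eodory reduction in the plane does work, since a point under a triangle with vertices on the graph lies under one of its edges) but also its continuity at the endpoints $\pm\Qmax$, and in part (ii) you implicitly use attainment of the supremum when writing $\hat g(Q)=\mu g(R_1)+(1-\mu)g(R_2)$ exactly (a short compactness argument, or an $\epsilon$-optimal version of the same contradiction, fixes this). Once those routine points are filled in, your argument is complete; the paper's route avoids the representation theorem altogether and is more self-contained, while yours, once the representation is in hand, treats (i) and (ii) more uniformly.
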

\begin{proof}
    (i) Suppose $\min A_{g, \tilde{Q}} = 0$. Let $\epsilon > 0$ and define the functions
    \begin{equation*}
        \begin{array}[b]{rccl}
            r:&\mathbb{R} & \to & \mathbb{R}\\
            & a & \mapsto & \min\limits_{R \in [\tilde{Q}, \Qmax]}
            \left(\epsilon + g(\tilde{Q}) + a (R-\tilde{Q}) - g(R) \right)
        \end{array}
    \end{equation*}
    and
    \begin{equation*}
        \begin{array}[b]{rccl}
            l:&\mathbb{R} & \to & \mathbb{R}\\
            & a & \mapsto & \min\limits_{R \in [-\Qmax, \tilde{Q}]}
            \left(\epsilon + g(\tilde{Q}) + a (R-\tilde{Q}) - g(R) \right)
        \end{array}.
    \end{equation*}
    It is sufficient to prove that there exists $a \in \mathbb{R}$ such that $r(a) \geqslant 0$ and $l(a) \geqslant 0$. Indeed, this would imply that the affine function $\phi:R \mapsto \epsilon + g(\tilde{Q}) + a (R-\tilde{Q})$ is greater or equal than $g$ and thus that $g(\tilde{Q}) \leqslant \hat{g}(\tilde{Q}) \leqslant \phi(\tilde{Q}) = g(\tilde{Q}) + \epsilon$.
    
We have that for $a \in \mathbb{R}$, $l(a) \leqslant 0$ implies $r(a) > 0$ and $r(a) \leqslant 0$ implies $l(a)> 0$, otherwise we would have the existence of $Q < \tilde{Q} < Q'$ such that $A_{g,\tilde{Q}}(Q,Q') \leqslant -\epsilon$, contradicting the main assumption. If $r(0) \geqslant 0$ and $l(0) \geqslant 0$, there is nothing to prove. 

Without loss of generality, suppose that $r(0)<0$ (and thus $l(0)>0$). By the continuity of $g$, there exists $\eta > 0$ such that $g < g(\tilde{Q}) + \epsilon$ on $[\tilde{Q},\tilde{Q}+\eta]$. Let $a > 0$, then
\begin{equation*}
    \min\limits_{R \in [\tilde{Q}, \tilde{Q} + \eta]}
            \left(\epsilon + g(\tilde{Q}) + a (R-\tilde{Q}) - g(R) \right) \geqslant 0
\end{equation*}
and
\begin{equation*}
    \min\limits_{R \in [\tilde{Q} + \eta, \Qmax]}
            \left(\epsilon + g(\tilde{Q}) + a (R-\tilde{Q}) - g(R) \right) \geqslant 
            \left(\epsilon + g(\tilde{Q}) + a \eta - \max g \right) \xrightarrow[a \to \infty]{}\infty.
\end{equation*}
Therefore, there exists $a$ such that $r(a)>0$. Since $r$ is continuous, there exists $a > 0$ such that $r(a) = 0$ (and therefore $l(a) > 0$) which is what we wanted to prove.

(ii) Suppose that $\min A_{g, \tilde{Q}} < 0$ and that $(Q, Q') \in [-\Qmax,\tilde{Q}] \times [\tilde{Q}, \Qmax]$ minimizes $A_{g, \tilde{Q}}$. Since $A_{g, \tilde{Q}}(Q,Q') < 0$, $Q < \tilde{Q} < Q'$. Define $\lambda \defeq \frac{\tilde{Q} - Q}{Q' - Q}$ and $\phi:R \in \Q \mapsto \frac{g(Q')-g(Q)}{Q'-Q}(R-Q) + g(Q)$.

Suppose that $\phi \geqslant g$. Then, $\hat{g} \geqslant \hat{g} \wedge \phi \geqslant g$ and, because $\hat{g} \wedge \phi$ is concave, $\hat{g} = \hat{g} \wedge \phi$. Furthermore, since $\phi(Q) = g(Q)$ and $\phi(Q') = g(Q')$, then $g(Q) = \hat{g}(Q)$ and $g(Q') = \hat{g}(Q')$. In addition,
\begin{equation*}
    0 \leqslant C_{\hat{g}}(Q,Q',\lambda) = \hat{g}(\tilde{Q}) - (1-\lambda)\phi(Q) - \lambda \phi(Q') \leqslant \phi(\tilde{Q})- (1-\lambda)\phi(Q) - \lambda \phi(Q') = 0,
\end{equation*}
the last equality coming from the fact that $\phi$ is affine. Hence, all the above inequalities are equalities and, in particular, $\hat{g}(\tilde{Q}) = \phi(\tilde{Q})$. Thus,
\begin{equation*}
    (\hat{g} - g)(\tilde{Q}) = \phi(\tilde{Q}) - g(\tilde{Q}) = \frac{g(Q')-g(Q)}{Q'-Q}(\tilde{Q}-Q) + g(Q) - g(\tilde{Q}) = -g(\tilde{Q}) + (1-\lambda)g(Q) + \lambda g(Q')
\end{equation*}
which is the desired result.

It remains to show that $\phi \geqslant g$. Let $R \in \Q \setminus \{Q,Q'\}$. Suppose that $R \leqslant \tilde{Q}$ (the other case is treated similarly). By the optimality of $(Q,Q')$, we have $A_{g, \tilde{Q}}(Q,Q')\leqslant A_{g, \tilde{Q}}(R,Q')$, which, by straightforward computations, is equivalent to $g(R)\leqslant\phi(R)$. 
\end{proof}

\begin{corollary}
    \label{corol:envelope_shape}
    Let $g:\Q \to \mathbb{R}$ be a continuous function. Suppose that $(Q,Q', \lambda) \in \Q \times \Q \times [0,1]$ minimizes $C_g$. Then, $g(Q) = \hat{g}(Q)$, $g(Q') = \hat{g}(Q')$, $C_{\hat{g}}(Q,Q',\lambda)=0$ and
    $(\hat{g} - g)((1-\lambda)Q + \lambda Q') = \max (\hat{g} - g)$.
\end{corollary}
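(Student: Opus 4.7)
The plan is to reduce the proof to two applications of Lemma~\ref{lem:envelope_shape}, one at the point $\tilde{Q} \defeq (1-\lambda)Q + \lambda Q'$ and one at a generic $R \in \Q$. First I would dispatch the trivial case $\min C_g = 0$: by Remark~\ref{rem:concave} this forces $g$ to be concave on $\Q$, so $\hat{g} = g$, the identity $C_{\hat{g}} = C_g$ vanishes at $(Q, Q', \lambda)$, and $\hat{g}-g \equiv 0$; all four claims are immediate.

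Henceforth assume $\min C_g < 0$. Since $C_g(Q, Q', 0) = C_g(Q, Q', 1) = 0$ and $C_g(R, R, \mu) = 0$, necessarily $\lambda \in (0,1)$ and $Q \neq Q'$; swapping the roles of $Q, Q'$ if needed (which replaces $\lambda$ by $1-\lambda$ and leaves $C_g$ invariant), I may assume $Q < Q'$, so that $Q < \tilde{Q} < Q'$. The decisive observation is the algebraic identity
\[
    A_{g, \tilde{Q}}(R, R') = g(\tilde{Q}) - (1-\mu)\, g(R) - \mu\, g(R') = C_g(R, R', \mu), \qquad \mu \defeq \tfrac{\tilde{Q}-R}{R'-R},
\]
valid for $(R, R') \in [-\Qmax, \tilde{Q}] \times [\tilde{Q}, \Qmax]$ with $R < R'$, and giving $0$ in the degenerate case $R = R' = \tilde{Q}$. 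In particular $A_{g, \tilde{Q}}(Q, Q') = C_g(Q, Q', \lambda) = \min C_g < 0$, while for any other admissible $(R, R')$ the identity combined with the global minimality of $C_g$ at $(Q, Q', \lambda)$ yields $A_{g, \tilde{Q}}(R, R') \geqslant A_{g, \tilde{Q}}(Q, Q')$. Hence $(Q, Q')$ minimizes $A_{g, \tilde{Q}}$ and Lemma~\ref{lem:envelope_shape}(ii) delivers the first three conclusions, namely $\hat{g}(Q) = g(Q)$, $\hat{g}(Q') = g(Q')$, and $C_{\hat{g}}(Q, Q', \lambda) = 0$, together with the identity $(\hat{g}-g)(\tilde{Q}) = -C_g(Q, Q', \lambda) = -\min C_g$.

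For the fourth claim I would bound $(\hat{g}-g)(R) \leqslant -\min C_g$ pointwise by invoking the same lemma at $R$ in place of $\tilde{Q}$. Fix $R \in \Q$. If $\min A_{g, R} = 0$, Lemma~\ref{lem:envelope_shape}(i) gives $\hat{g}(R) = g(R)$, so $(\hat{g}-g)(R) = 0 \leqslant -\min C_g$ (the right-hand side is nonnegative as $\min C_g \leqslant 0$). If $\min A_{g, R} < 0$, part (ii) provides a minimizer $(R_1, R_2) \in [-\Qmax, R] \times [R, \Qmax]$ of $A_{g, R}$ with $R_1 < R < R_2$, and the same algebraic identity rewrites $(\hat{g}-g)(R) = -C_g(R_1, R_2, (R-R_1)/(R_2-R_1))$; the global minimality of $C_g$ then forces $(\hat{g}-g)(R) \leqslant -\min C_g$. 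Combining, $\max(\hat{g}-g) = (\hat{g}-g)(\tilde{Q})$, which closes the proof. The only book-keeping requiring care is the correspondence between the barycentric weight $(\tilde{Q}-R)/(R'-R)$ and the parameter $\lambda$ in $C_g$; once this is set up, the argument is essentially two mechanical applications of Lemma~\ref{lem:envelope_shape} and I do not anticipate any serious obstacle.
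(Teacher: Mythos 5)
Your proof is correct and follows essentially the same route as the paper: reduce to the case $\min C_g<0$, observe via the identity $A_{g,\tilde{Q}}(R,R')=C_g\bigl(R,R',\tfrac{\tilde{Q}-R}{R'-R}\bigr)$ that $(Q,Q')$ minimizes $A_{g,\tilde{Q}}$, and invoke Lemma~\ref{lem:envelope_shape}. Your pointwise bound $(\hat{g}-g)(R)\leqslant -\min C_g$, obtained by applying the same lemma at each $R$, simply spells out the step the paper leaves implicit for the claim $(\hat{g}-g)(\tilde{Q})=\max(\hat{g}-g)$.
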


\begin{proof}
    Assume that $\min C_g < 0$.
    Define $\tilde{Q} \defeq (1-\lambda)Q + \lambda Q'$ and suppose, without loss of generality, that $Q < Q'$. We have that $(Q,Q')$ minimizes $A_{g,\tilde{Q}}$ and the results follow from Lemma \ref{lem:envelope_shape}.

    If $\min C_g = 0$, then $g$ is concave and, since $\hat{g} = g$, the result is immediate.
\end{proof}

\begin{lemma}
    \label{lem:envelope_inequality}
    Let $g:\Q \to \mathbb{R}$ be a continuous function and $k \in \mathbb{R}$. Suppose that $(Q,Q', \lambda) \in \Q \times \Q \times [0,1]$ minimizes $C_g$. Then,
    \begin{equation*}
        \begin{split}
            (i)\ h^a(g,k,(1-\lambda)Q + \lambda Q') &- (1-\lambda)h^a(g,k,Q) - \lambda h^a(g,k,Q') \\&\geqslant h^a(\hat{g},k,(1-\lambda)Q + \lambda Q') - (1-\lambda)h^a(\hat{g},k,Q) - \lambda h^a(\hat{g},k,Q')
        \end{split}
    \end{equation*}
    and
    \begin{equation*}
        \begin{split}
            (ii)\ h^b(g,k,(1-\lambda)Q + \lambda Q') &- (1-\lambda)h^b(g,k,Q) - \lambda h^b(g,k,Q') \\&\geqslant h^b(\hat{g},k,(1-\lambda)Q + \lambda Q') - (1-\lambda)h^b(\hat{g},k,Q) - \lambda h^b(\hat{g},k,Q').
        \end{split}
    \end{equation*}
\end{lemma}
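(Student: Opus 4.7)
The plan is to dispatch the easy case $\min C_g = 0$ first -- where $g$ is concave so $\hat{g} = g$ and the desired inequality holds trivially with equality -- and then use Corollary \ref{corol:envelope_shape} as the backbone in the nontrivial case $\min C_g < 0$. Setting $\tilde{Q} \defeq (1-\lambda)Q + \lambda Q'$ and $m \defeq \hat{g}(\tilde{Q}) - g(\tilde{Q})$, the corollary supplies four identifications that would be the workhorse of the argument: $g(Q) = \hat{g}(Q)$, $g(Q') = \hat{g}(Q')$, $\hat{g}(\tilde{Q}) = (1-\lambda)\hat{g}(Q) + \lambda\hat{g}(Q')$, and, crucially, $m = \max_{\Q}(\hat{g} - g) \geqslant 0$.

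For any continuous $\phi: \Q \to \mathbb{R}$ and $R \in \Q$, I would factor the log-Hamiltonian as $h^a(\phi, k, R) = e^{\phi(R)} A_\phi(R)$, where
\[
    A_\phi(R) \defeq \sup_{q \in \Qa \cap [0, R + \Qmax]} \int_{\Qa} -e^{-k(q\wedge z) - \phi(R - q\wedge z)} \mu^a(\diff z) \in (-\infty, 0).
\]
Plugging this factorization and the above identifications into (i), the target inequality reduces to
\[
    e^{\hat{g}(\tilde{Q})}\bigl[e^{-m} A_g(\tilde{Q}) - A_{\hat{g}}(\tilde{Q})\bigr] \geqslant (1-\lambda) e^{\hat{g}(Q)}\bigl[A_g(Q) - A_{\hat{g}}(Q)\bigr] + \lambda e^{\hat{g}(Q')}\bigl[A_g(Q') - A_{\hat{g}}(Q')\bigr].
\]
The right-hand side is $\leqslant 0$ by a routine step: $g \leqslant \hat{g}$ pointwise forces each integrand for $g$ to be $\leqslant$ the corresponding integrand for $\hat{g}$ (both being negative), and taking the supremum over a set of $q$'s that is common to $A_g(R)$ and $A_{\hat{g}}(R)$ at each $R \in \{Q,Q'\}$ yields $A_g(R) \leqslant A_{\hat{g}}(R)$. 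The remaining task is thus to show that the left-hand side is $\geqslant 0$, i.e.\ $A_g(\tilde{Q}) \geqslant e^m A_{\hat{g}}(\tilde{Q})$.

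This last inequality is the only delicate step of the proof, and it is precisely where the fact that $m$ is the \emph{global} maximum of $\hat{g} - g$ (and not merely its value at $\tilde{Q}$) would be used in full force. For every $q, z \in \Qa$ the point $\tilde{Q} - q\wedge z$ lies in $\Q$, so $g(\tilde{Q} - q\wedge z) \geqslant \hat{g}(\tilde{Q} - q\wedge z) - m$; exponentiating, multiplying by $-e^{-k(q\wedge z)}$, and integrating against $\mu^a$ would yield the pointwise-in-$q$ estimate
\[
    \int_{\Qa} -e^{-k(q\wedge z) - g(\tilde{Q} - q\wedge z)}\mu^a(\diff z) \geqslant e^m \int_{\Qa} -e^{-k(q\wedge z) - \hat{g}(\tilde{Q} - q\wedge z)}\mu^a(\diff z)
\]
between two negative quantities, and because $e^m > 0$ this survives taking the supremum over the common set $\Qa \cap [0, \tilde{Q} + \Qmax]$, producing $A_g(\tilde{Q}) \geqslant e^m A_{\hat{g}}(\tilde{Q})$ and closing (i). Point (ii) would follow from the identical argument with $a$ and $b$ interchanged: the integrand's argument becomes $R + q\wedge z$ and the admissible set becomes $\Qb \cap [0, -R + \Qmax]$, but the reasoning only uses that $g \leqslant \hat{g}$ pointwise, that the gap $m$ is attained globally at $\tilde{Q}$, and that the admissible set in $q$ depends on $R$ alone, so no step requires adjustment.
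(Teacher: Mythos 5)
Your proof is correct and follows essentially the same route as the paper: both rest on Corollary \ref{corol:envelope_shape} to get $g=\hat{g}$ at $Q,Q'$ (giving $h^a(g,k,Q)\leqslant h^a(\hat{g},k,Q)$ and likewise at $Q'$) and on the global maximality of $\hat{g}-g$ at $\tilde{Q}$ (giving $h^a(\hat{g},k,\tilde{Q})\leqslant h^a(g,k,\tilde{Q})$), then combine the three comparisons. Your factorization $h^a(\phi,k,R)=e^{\phi(R)}A_\phi(R)$ is only a cosmetic repackaging of the paper's direct integrand comparisons, so no substantive difference remains.
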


\begin{proof}
    We only prove (i) since (ii) is similar. Let $q \in \Qa \cap [0, Q + \Qmax]$ and $z \in \Qa$. By Corollary \ref{corol:envelope_shape} $g(Q) = \hat{g}(Q)$. In addition, $g(Q - q \wedge z) \leqslant \hat{g}(Q - q \wedge z)$, hence
    \begin{equation*}
        \int_{\Qa}-e^{-k(q\wedge z) - g(Q - q \wedge z) + g(Q)}\mu^a(\diff z)
        \leqslant
        \int_{\Qa}-e^{-k(q\wedge z) - \hat{g}(Q - q \wedge z) + \hat{q}(Q)}\mu^a(\diff z) \leqslant h^a(\hat{g},k,Q).
    \end{equation*}
    Taking the supremum over $q$, we obtain $h^a(g,k,Q) \leqslant h^a(\hat{g},k,Q)$. Similarly, $h^a(g,k,Q') \leqslant h^a(\hat{g},k,Q')$.
    
    Let $q \in \Qa \cap [0, (1-\lambda)Q + \lambda Q' + \Qmax]$ and $z \in \Qa$. By Corollary \ref{corol:envelope_shape},
    \begin{equation*}
        (\hat{g} - g)((1-\lambda)Q + \lambda Q') \geqslant (\hat{g} - g)((1-\lambda)Q + \lambda Q' - q \wedge z)
    \end{equation*}
    therefore
    \begin{equation*}
        \hat{g}((1-\lambda)Q + \lambda Q') - \hat{g}((1-\lambda)Q + \lambda Q' - q \wedge z) \geqslant
        g((1-\lambda)Q + \lambda Q') - g((1-\lambda)Q + \lambda Q' - q \wedge z). 
    \end{equation*}
    Reasoning as before, we deduce that $h^a(\hat{g}, k, (1-\lambda)Q + \lambda Q') \leqslant h^a(g, k, (1-\lambda)Q + \lambda Q')$, which yields the conclusion.
\end{proof}

\end{dummyenv}

\printbibliography

\end{document}